\newcommand{\defeq}{\stackrel{\mathrm{def}}=}
\newcommand{\sidtemp}[1]{}
\newcommand{\siqitemp}[1]{}
\newcommand{\prasadtemp}[1]{}
\newcommand{\IntroNull}{\mathcal{D}_{\text{null}}}
\newcommand{\IntroPlanted}{\mathcal{D}_{\text{planted}}}
\newcommand{\Tr}{\mathsf{Tr}}
\newcommand{\Model}{\mathrm{M}}
\newcommand{\Null}{\Model^{\times}}
\newcommand{\Planted}{\Model}
\newcommand{\TypeDist}{\mathcal{T}}
\newcommand{\ColDist}{\mathbb{P}}
\newcommand{\arity}{\mathrm{a}}
\newcommand{\FacType}{\theta}
\newcommand{\Modeln}{\Model_{n}}
\newcommand{\Ham}{\mathbf{H}}
\newcommand{\Trans}{\mathbf{M}}
\newcommand{\Dcolor}{\mathbf{D}}
\newcommand{\Facm}{\mathbf{\Psi}}
\newcommand{\Class}{\mathrm{Cl}}
\newcommand{\Cl}{\Class}
\newcommand{\Poi}{\mathrm{Poisson}}
\newcommand{\MatCol}{\mathrm{Bl}}
\newcommand{\numcols}{q}
\newcommand{\BlankCol}{\MatCol}
\newcommand{\Adj}{A}
\newcommand{\CentAdj}{\ul{A}}
\newcommand{\CompG}{\calK}
\newcommand{\Bip}{\mathrm{Bip}}
\newcommand{\NB}{\mathrm{NB}}
\newcommand{\FacDeg}{\phi}
\newcommand{\FacClass}{\chi}
\newcommand{\GModPar}{\Model^{\times}}
\newcommand{\Lkgs}{\mathrm{Lkgs}}
\newcommand{\nobikes}{\bcalE}
\newcommand{\Mean}{\mu}
\newcommand{\Clos}{\mathrm{Clos}}
\newcommand{\Exc}{\mathrm{Exc}}
\newcommand{\Profl}{\Delta}
\newcommand{\Sh}{\mathrm{Sh}}
\newcommand{\Shps}{\mathrm{Shps}}
\newcommand{\Trs}{\mathrm{Trs}}
\newcommand{\Seg}{\mathrm{Seg}}
\newcommand{\UTrs}{\mathrm{UTrs}}
\newcommand{\FTrs}{\mathrm{FTrs}}
\newcommand{\Lm}{\mathrm{Lm}}
\newcommand{\Sing}{S}
\newcommand{\Dup}{D}
\newcommand{\LSh}{\mathrm{LSh}}
\newcommand{\RType}{\FacType}
\newcommand{\Index}{\mathrm{i}}
\renewcommand{\index}{\mathrm{i}}
\newcommand{\goin}{\mathrm{in}}
\newcommand{\goout}{\mathrm{out}}
\newcommand{\innerp}{\mathbf{H}}
\newcommand{\ldist}{\mu}
\newcommand{\minit}{\overline{m}}
\newcommand{\ACFacDeg}[1]{\ol{\phi_{#1}}}
\newcommand{\prior}{\mathbb{P}}
\newcommand{\alg}{\mathsf{A}}
\newcommand{\varvar}[2]{\mathrm{num}_{#1}(#2)}
\newcommand{\taup}{\tau'}
\newcommand{\Tree}{\bbT}
\newcommand{\bTrans}{\overline{\Trans}}
\newcommand{\Diag}{\text{Diag}}
\newcommand{\tot}[1]{\xrightarrow{#1}}
\newcommand{\lab}{\eta}
\newcommand{\boltz}{\mu}
\newcommand{\fix}{\mathrm{fp}}
\newcommand{\support}{\text{support}}
\newcommand{\BPM}[3]{\bTrans_{#1,#2\mid #3}}
\newcommand{\Erdos}{Erd\H{o}s}
\newcommand{\Renyi}{R\'enyi}
\newcommand{\wgr}{\mathrm{wgr}}
\newcommand{\mwgr}{\mathrm{mwgr}}
\newcommand{\varfac}[2]{\deg_{#2}({#1})}
\newcommand{\varfacbin}[2]{b_{#1}^{#2}}
\newcommand{\statvec}{g}
\newcommand{\Path}{p}
\newcommand{\Pathp}{\Tilde{p}}
\newcommand{\expwlk}{\mathbf{w}}
\newcommand{\Chi}{\chi}
\newcommand{\labp}{\Tilde{\eta}}
\newcommand{\SA}{\mathrm{SA}}
\newcommand{\SACentAdj}{\ol{A}}
\newcommand{\Pathu}{p^{\cup}}
\newcommand{\CentBdj}{\ul{B}}
\newcommand{\CentRdj}{\ul{R}}
\newcommand{\nb}{\circ}
\newcommand{\bchat}{\ul{\chi}}
\newcommand{\bmu}{\mathbf{\mu}}
\newcommand{\bfe}{\mathbf{e}}
\newcommand{\myrho}{\left(\sqrt{\lambda_L}\right)}
\newcommand{\ualpha}{\ul{\alpha}}
\newcommand{\ubeta}{\ul{\beta}}
\newcommand{\iprod}[1]{\langle#1\rangle}
\newcommand{\Iprod}[1]{\left\langle#1\right\rangle}
\begin{document}

\title{On statistical inference when fixed points of belief propagation are unstable}

\author{Siqi Liu\thanks{EECS Department, University of California Berkeley.  \texttt{sliu18@berkeley.edu}. Supported in part by the Berkeley Haas Blockchain Initiative and a donation from the Ethereum Foundation.} \and Sidhanth Mohanty\thanks{EECS Department, University of California at Berkeley.  \texttt{sidhanthm@berkeley.edu}.  Supported by Google PhD Fellowship.} \and Prasad Raghavendra\thanks{EECS Department, University of California Berkeley.  \texttt{prasad@cs.berkeley.edu}. Research supported by grants NSF 2007676 and NSF 2023505.}}
\date{\today}

\maketitle

\begin{abstract}
	
	 Many statistical inference problems correspond to recovering the values of a set of hidden variables from sparse observations on them.  For instance, in a planted constraint satisfaction problem such as planted 3-SAT, the clauses are {\it sparse observations} from which the hidden assignment is to be recovered.  In the problem of community detection in a stochastic block model, the community labels are hidden variables that are to be recovered from the edges of the graph.

	Inspired by ideas from statistical physics, the presence of a stable fixed point for belief propogation has been widely conjectured to characterize the computational tractability of these problems.  For community detection in stochastic block models, many of these predictions have been rigorously confirmed.

	In this work, we consider a general model of statistical inference problems that includes both community detection in stochastic block models, and all planted constraint satisfaction problems as special cases.  We carry out the cavity method calculations from statistical physics to compute the regime of parameters where detection and recovery should be algorithmically tractable.  At precisely the predicted tractable regime, we give:
	\begin{enumerate}[(i)]
		\item a general polynomial-time algorithm for the problem of \emph{detection}: distinguishing an input with a planted signal from one without;
		\item a general polynomial-time algorithm for the problem of \emph{recovery}: outputting a vector that correlates with the hidden assignment significantly better than a random guess would.
	\end{enumerate}
	Analogous to the spectral algorithm for community detection \cite{krzakala2013spectral,BLM15}, the detection and recovery algorithms are based on the spectra of a matrix that arises as the derivatives of the belief propagation update rule.  To devise a spectral algorithm in our general model, we obtain bounds on the spectral norms of certain families of random matrices with correlated and matrix valued entries.  We then demonstrate how eigenvectors of various powers of the matrix can be used to partially recover the hidden variables.

\end{abstract}

\thispagestyle{empty}
\newpage

\thispagestyle{empty}
\tableofcontents
\thispagestyle{empty}
\setcounter{page}{0}
\newpage

\newcommand{\ER}{Erd\H{o}s-Renyi\ }
\newcommand{\pin}{p_{in}}
\newcommand{\pout}{p_{out}}
\newcommand{\col}{[q]}
\newcommand{\types}{T}

\section{Introduction}

In the {\sc Planted-$q$-Coloring} problem, a hidden coloring $\bc: [n] \to \{1,\ldots,q\}$ is sampled from the uniform distribution over $[q]^n$.  A random graph $G = ([n], \bE)$ is drawn from the \ER distribution conditioned on $\bc$ being a legitimate coloring.  So every edge $(i,j)$ is included in the graph with probability  $\frac{d}{n} \cdot 1[\bc(i) \neq \bc(j)]$ independently at random.
Given the edges $\bE$ as input, the goal of an inference algorithm is to recover (even partially) the hidden coloring $\bc$.

{\sc Planted-$q$-Coloring} is the archetypal example of a broad class of statistical inference problems where the goal is to recover a set of hidden variables from sparse observations on it (see \cite{andrea2008estimating}).
A large number of inference problems ranging from decoding LDPC codes to community detection in random graphs fall into this broad framework.
Broadly speaking, the setup in these inference problems is as follows.  A set of {\it hidden variables} $\{\bc(1),\ldots \bc(n)\}$ are drawn from a known prior product distribution $\bbP_{\bc}$.   A sequence of {\it observations} (a.k.a. hyperedges) $\bE$ on these hidden variables are revealed to the algorithm.  Each hyperedge $(i_1, \ldots, i_k)$ is included with probability $\frac{1}{n^{k-1}} \cdot \Phi(\bc(i_1),\ldots, \bc(i_k))$ for some constant $\Phi(\bc(i_1),\ldots, \bc(i_k))$ that depends on the values of hidden variables $\bc(i_1),\ldots,\bc(i_k)$.
Thus the inference algorithm receives $\Theta(n)$ observations with high probability and its goal is to partially recover the values of the hidden coloring.

The key computational task 
is to recover the values of the hidden variables.  %
In a sparse setup where the number of observations is linear, it is typically impossible to recover the hidden variables exactly.  Therefore, one settles for the relaxed goal of {\it weak} recovery where the algorithm is required to produce an assignment which correlates better than random with hidden variables.

It is often useful to also define a related decision problem of "detection" .  
Here, the algorithm is required to distinguish between a set of observations consistent with a single fixed assignment to hidden variables (planted distribution) or a set of observations each sampled independently by drawing a new assignment to the hidden variables (null distribution).

%


In this work, we will be considering a more general model that will permit constantly many {\it types} of variables and observations.  The prior distribution of each variable depends on its type, and the probability of  sampling an observation depends on the types and values of variables involved.  We defer the formal description of our general model to \pref{sec:obs-model}, but instead present a few examples of these problems.

\begin{example}(Stochastic Block Models)
A natural generalization of the {\sc Planted-$q$-Coloring} problem is the stochastic block model (SBM).
The stochastic block model is defined by a parameter $\numcols$ (the number of labels), a distribution $\ColDist_{\bc}$ over $[\numcols]$ (the expected fraction of vertices with a specific label), and a matrix $P \in \R^{[\numcols]\times[\numcols]}$ such that $P[c,d]$ gives the probability of an edge between two vertices with labels $c$ and $d$. In the community detection problem, a hidden labelling $\bc:[n]\to\{1,\ldots,\numcols\}$ is sampled from the product distribution $\ColDist_{\bc}^{n}$. Given $\bc$, a random graph $G = ([n],\bE)$ is drawn by including each edge $(u,v)$ independently with probability $P[\bc(u),\bc(v)]$ depending on the labels of the endpoints. The goal of the problem is to recover the labelling $\bc$ from the graph $G$.
\end{example}

\begin{example} (Planted CSPs) In a planted CSP over a domain $[q]$, an assignment $x \in [q]^n$ is chosen at random and clauses are sampled conditioned on being satisfied by the planted assignment $x$.
Depending on the predicate used, one obtains different planted CSPs such as Planted NAE-$k$-SAT and Planted $k$-SAT. 
\end{example}

Many more examples of problems that fit our framework will be presented in the rest of the paper.  Alternatively, this class of problems can be viewed as {\it ``Bayesian CSPs''}.  Traditionally, a constraint satisfaction problem involves variables taking values over finite domain and a set of local constraints on them.  The goal is to find an assignment that satisfies either all the constraints (exact CSPs) or the largest fraction of constraints (approximate CSP).  The key difference in this setup is that there is a prior distribution associated with  assignment on the variables and the constraints.

Constraint satisfaction problems (CSP) lie at the bedrock of worst-case complexity theory tracing back all the way to SAT and NP-completeness and by now there is a rich and comprehensive theory that correctly predicts the computational complexity of the traditional CSPs, with (i) the CSP dichotomy conjecture \cite{Sch78,Zhuk20} for exact CSPs, which cleanly classifies a constraint satisfaction problem as polynomial-time solvable or $\NP$-hard depending on whether a pair of solutions could be combined to form a third solution via a function called a polymorphism, and (ii) the Unique Games Conjecture for approximate CSPs, which characterizes the best approximation ratio possible in polynomial time with an integrality gap of a semidefinite program \cite{Kho02,KKMO07,Rag08}.  There is also a well understood picture of the complexity of refutation of random CSPs
from the lens of the Sum-of-Squares semidefinite programming hierarchy \cite{AOW15,RRS17,KMOW17}.
On the other hand, our understanding of the complexity of Bayesian CSPs is still in its nascent stages.
Bayesian CSPs are a rich and natural class of average case problems, and understanding their complexity would be a good test-bed for average case complexity theory.
Indeed, Goldreich's pseudorandom generator \cite{Gol11} is precisely based on harnessing the computational intractability of certain Bayesian CSPs.

A naive exponential-time algorithm for the problem would be to use the Bayes rule to compute/sample from the conditional distribution $\bc| \bE$.  
The fundamental question here is to understand the limits of efficient algorithms for this class of statistical inference problems.
Furthermore, both exact and approximate versions of traditional CSPs exhibit abrupt transitions wherein the computational complexity of the problem changes from polynomial to exponential.
It is a compelling question whether Bayesian CSPs also exhibit similar abrupt transitions in computational complexity, and whether there exist broadly applicable optimal algorithms for them.

\subsection{Belief Propogation and Cavity Method}

A natural candidate for an optimal algorithm for Bayesian CSPs (especially in the sparse case) is belief propogation (BP).
BP is often hypothesized to be theoretically optimal, and is also very efficient in practice.
There is a vast body of literature on belief propogation (BP) drawing ideas from statistical physics (see \cite[Chapter 14]{montanaribook} and \cite{ZK16} for a comprehensive treatment).
It is often very difficult to analyze BP as a standalone algorithm and we are quite far from demonstrating its optimality among polynomial-time algorithms.
However, there has been a growing body of work in the past decade which suggest a very general and precise theory to predict the computational complexity of Bayesian CSPs.

To the best of our knowledge, it appears that the work of Krzakala and Zdeborova \cite{KrzakalaZ09} is the first to hypothesize a precise computational phase transition for planted problems based on ideas from statistical physics.
Specifically, Krzakala and Zdeborova \cite{KrzakalaZ09} hypothesized that for a broad class of planted distributions, the problem of distinguishing the planted vs null distributions becomes computationally intractable at a well-defined threshold.
In the case of community detection, this threshold coincides with the so-called Kesten-Stigum threshold.  More broadly, in this work, we will often refer to this threshold of intractability for Bayesian CSPs as {\it the stable fixed point barrier} for reasons that will be soon clear.

Building on the ideas from \cite{KrzakalaZ09}, \cite{decelle2011inference,DKMZ} made a fascinating set of conjectures on community detection. For example, they conjectured that the $k$-coloring problem is easy exactly when the average degree of a vertex in the model satisfies $d > k^2$. Their conjectures fuelled a flurry of work, leading to algorithms that match the conjectured computational thresholds \cite{mossel2018proof,M14,BLM15,AS15}.

The {\it stable fixed point barrier} suggested by \cite{KrzakalaZ09,DKMZ} is applicable beyond the setting of community detection.
For instance, Krzakala and Zdeborova point out that this stable fixed point barrier is shared by problems such as hyper-graph bicoloring and locked CSPs.  Here locked CSPs are those wherein every pair of assignments to a predicate have Hamming distance at least $2$ (analogous to pairwise-independence leading to approximation resistance \cite{AM09}).
More broadly, there is a heuristic cavity method calculation to pinpoint the location of the stable fixed point barrier in general (see \pref{sec:stable-barrier} to \pref{sec:neighorhood-of-variable}).

To illustrate the rich and precise predictions of this heuristic calculation yields, we will show three examples here.  

\begin{example}
First, consider the problem of planted NAE3SAT wherein there is a uniformly random assignment in $\{0,1\}^n$ and Not-All-Equal clauses on $3$ variables are sampled so that a $\rho$-fraction of them are satisfied.  As one varies the average constraint-degree of a variable $d$ and the approximation $\rho$, there is an explicit prediction of the region of parameters where the distinguishing/recovery problem is computationally tractable (blue region in \pref{fig:NAE3SAT}).  Interestingly, the spectral (and basic SDP) refutation threshold for regular NAE3SAT was determined to be $13.5$ in \cite{DMOSS19} and similar techniques point to the threshold being $12.5$ for an ``\Erdos-\Renyi'' version of the model.  However, our results imply a distinguishing algorithm between random and planted NAE3SAT at a much smaller degree of $4.5$, which suggests this planting is not ``quiet'' and raises the question of what a quiet planted distribution is.

\myfig{0.3}{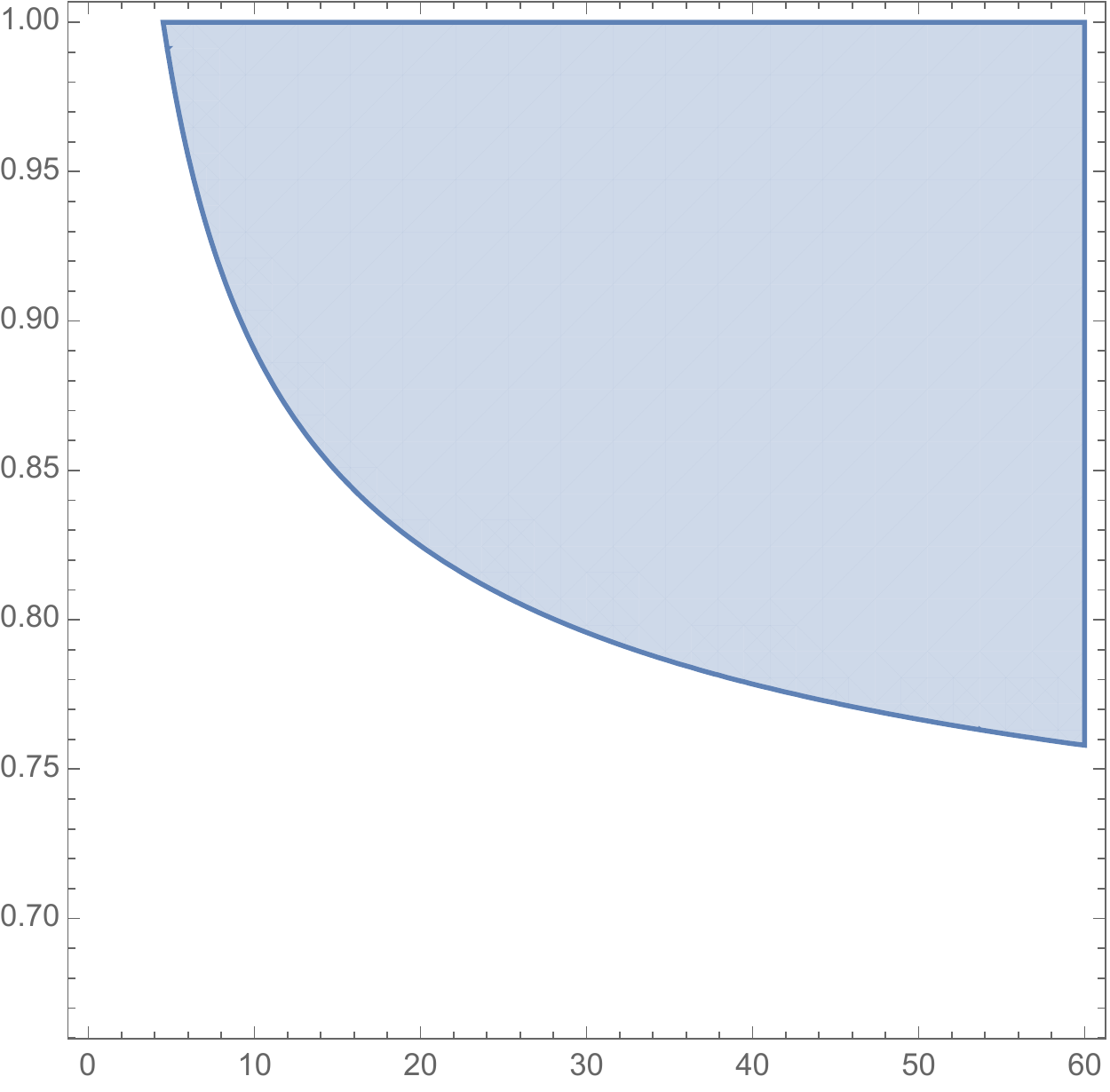}{Easy region for planted $\naethree$ shaded in blue.  Average degree on $x$-axis, fraction of clauses satisfied on $y$-axis.}{fig:NAE3SAT}
\end{example}
 
\begin{example}
	Next, we turn our attention to mixed planted CSPs.  For concreteness, we consider one particular example: planted NAE-$(3,5)$-SAT.  In this example, the variables are given a uniformly random assignment in $\{0,1\}^n$ and Not-All-Equals clauses are sampled to be on $3$ variables with probability $p$ and on $5$ variables with probability $1-p$.  As one varies the constraint-degree of a variable $d$ and the proportion of NAE3SAT clauses $p$, we can plot a precise region of parameters where the distinguishing/recovery problem is computationally tractable (blue region in \pref{fig:NAE35SAT}).
\end{example}

\myfig{0.3}{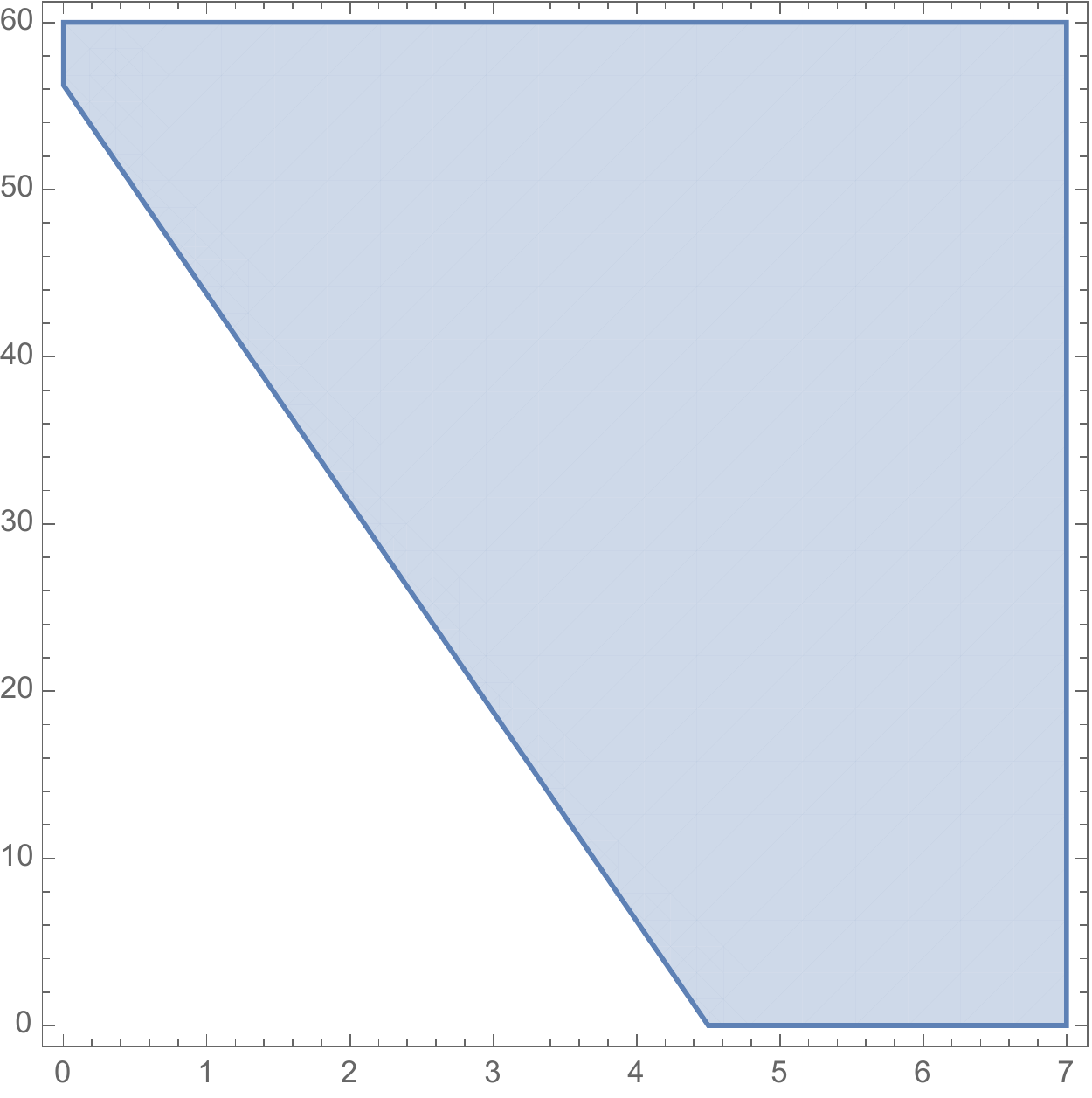}{Easy region for planted NAE-$(3,5)$-SAT shaded in blue.  Average NAE3-degree of vertex on $x$-axis, average NAE5-degree of vertex on $y$-axis.}{fig:NAE35SAT}

\begin{example}
Consider the following version of $4$-community stochastic block model with communities labeled $(0,0)$, $(0,1)$, $(1,0)$ and $(1,1)$ and $3$ parameters $d_0$, $d_1$ and $d_2$.  For a pair of vertices $u$ and $v$ from communities $x$ and $y$ we place an edge between $u$ and $v$ with probability $\frac{d_{\mathrm{dist}(x,y)}}{n}$ where $\mathrm{dist}(x,y)$ is the Hamming distance between $x$ and $y$.  For an additional twist, let us suppose that the first coordinate of the community that every vertex belongs to is also revealed to the algorithm.  What is the region of parameters $d_0,d_1,d_2$ for which an efficient algorithm can partially recover the second coordinate of the community labels?  See \pref{fig:partially-rev-4-com} for the hypothesized transition.
\begin{figure}[h]
	\begin{subfigure}{.3\textwidth}
		\centering
		\includegraphics[width=.9\linewidth]{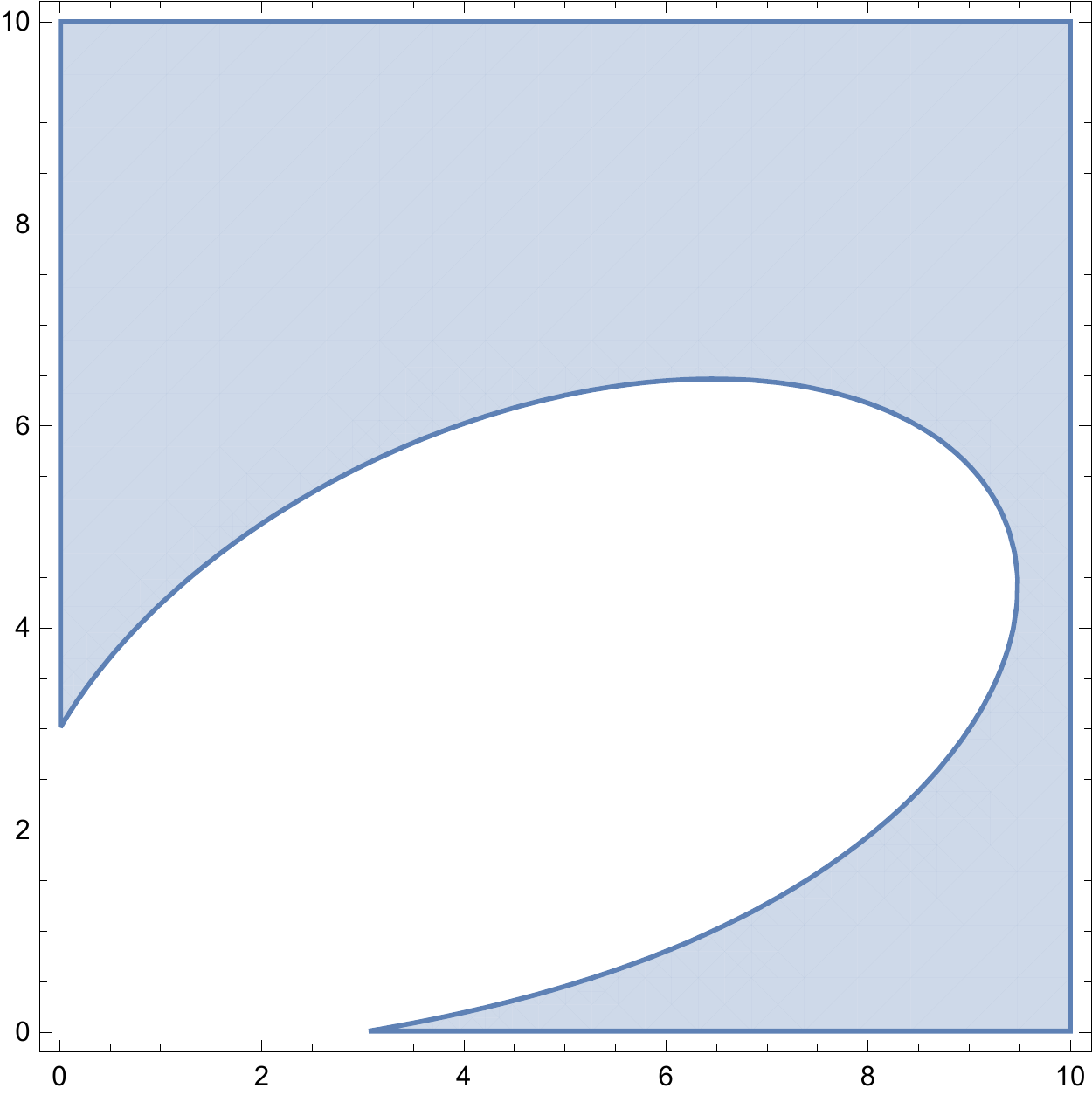}
		\caption{$d_2=1$}
	\end{subfigure}
	\begin{subfigure}{.3\textwidth}
		\centering
		\includegraphics[width=.9\linewidth]{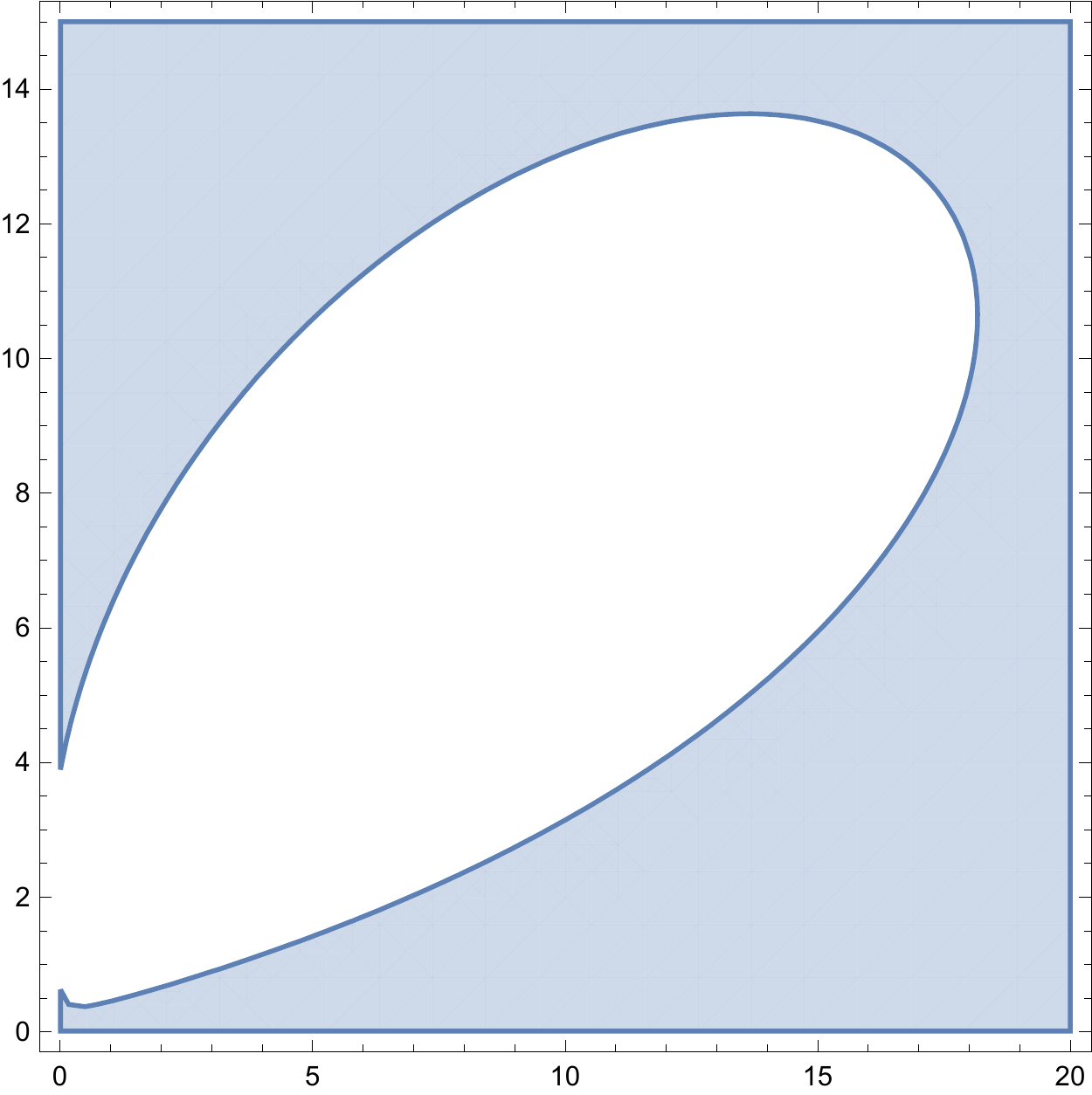}
		\caption{$d_2=5$}
	\end{subfigure}
	\begin{subfigure}{.3\textwidth}
		\centering
		\includegraphics[width=.9\linewidth]{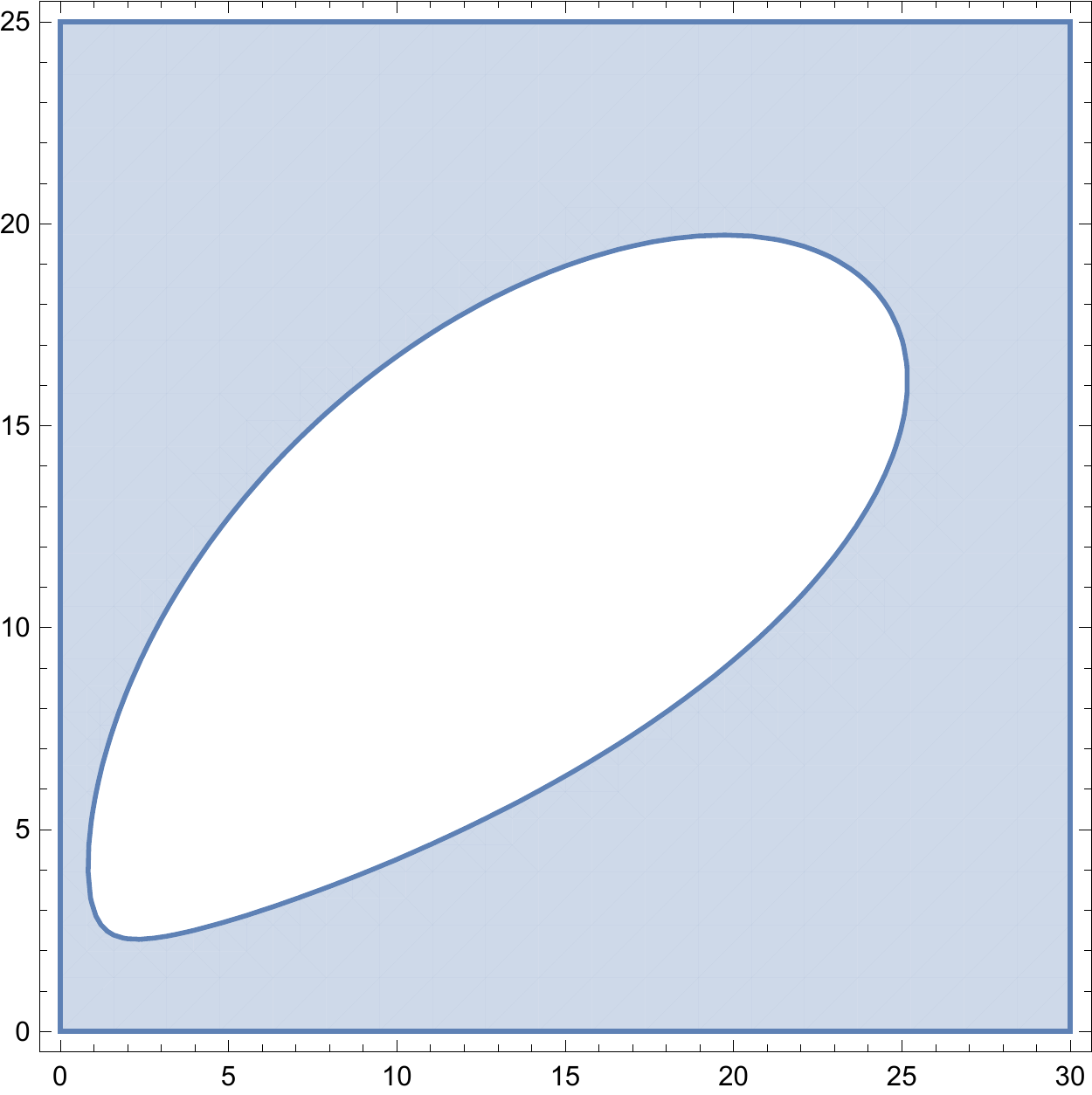}
		\caption{$d_2=9$}
	\end{subfigure}
	\caption{Easy regions for $(d_0,d_1)$ for variety of settings of $d_2$.}
	\label{fig:partially-rev-4-com}
\end{figure}
\end{example}
Unfortunately, we are still far from establishing the veracity of these heuristic predictions.  For most of these problems, BP has not been proven to succeed in the blue region of parameters, nor is any other polynomial time algorithm known.  %
There is no roadmap to establishing intractability of these problems when the parameters are chosen in the white region.

Our main result takes a step towards establishing these predictions by giving a spectral algorithm to partially recover the hidden variables whenever the parameters are in the blue region.
Specifically, we devise a spectral algorithm that uses a linearization of BP, an approach that has been succesfully carried out for the case of community detection in \cite{krzakala2013spectral,BLM15}.
%

%
%
%
%

\subsection{Stable Fixed Point Barrier}
Belief propogation (BP) aims to estimate the marginals of the hidden variables, in our case $\bc(v)$ for $v \in [n]$.
To visualize BP, it will be useful to consider the bipartite graph $\mathcal{H}$ with variables $[n]$ on one side and the factors (a.k.a. observations) $\bE$ on the other. There is an edge between a variable $v$ and an observation $e \in \bE$ if $v \in e$. 
The execution of BP is divided into rounds where in each round, the variable nodes send messages to factor nodes or vice versa.

Let $m^{v \to e}$ denote the message sent by a variable $v$ to a factor $e \in \bE$ and let $m^{e \to v}$ denote the message from a factor $e \in \bE$ to a variable $v$.
All messages exchanged are distributions over the domain $[\numcols]$, i.e., $m^{v \to e} = (m^{v \to e}_1,\ldots, m^{v \to e}_{\numcols})$ and similarly $m^{e \to v} = (m^{e \to v}_1,\ldots, m^{e \to v}_{\numcols})$.
Intuitively speaking, $m^{v\to e}_c$ is an estimate of the marginal probability that $v$ is assigned the color $c$ when the factor $e$ is absent, and $m^{e\to u}_c$ is an estimate of the marginal probability that $u$ has color $c$ when all other factors involving $u$ are absent.

The general schema of a BP algorithm is to start BP with some intialization of the messages
\[
	\{ m^{v \to e}[0], m^{e \to v}[0]\}_{v \in [n],e \in \bE}
\]
and iteratively update the messages as specified by the functions $\Upsilon$, until the messages stabilize into a  fixed point, i.e., a set of messages $\{ \hat{m}^{v \to e}, \hat{m}^{e \to v}\}$ so that,
\begin{align*}
\hat{m}^{v \to e} & = \Upsilon_{v \to e}\left( \{ \hat{m}^{f \to v} \mid f \in \partial v \backslash e \} \right) \\
\hat{m}^{e \to v} & = \Upsilon_{e \to v}\left( \{ \hat{m}^{u \to e}  \mid u \in \partial e \backslash v \} \right)
\end{align*}

There is a canonical starting point $\minit$ for the BP iterations where the messages $m^{e \to v}$ correspond to uniform distribution over the possible values $[q]$.  
%
Conjecturally, this canonical initialization $\minit$ plays a critical role in characterizing the computational complexity of inferring the hidden variables in model $\Model$.
There appear to be three possible cases with regards to this canonical initialization.

\paragraph{Case 1: $\minit$ is not a fixed point} Suppose $\minit$ is not a fixed point for the BP iteration over the model $\Model$, then BP iteration can be expected to make progress, thereby yielding a weak recovery of hidden variables.

In fact, we will present a self-contained algorithm that weakly-recovers the hidden coloring in this case. Formally, we will show the following in \pref{app:easy-cases}:
\begin{lemma}\label{lem:easy-case}
	If $\minit$ is not a fixed point for the BP iteration on model $\Model$, then there is a polynomial time algorithm $\calA$ and an $\epsilon > 0$ such that
	\begin{enumerate}
		\item if $(\bE,\btau)\sim\Model$: $\calA$ outputs a coloring that beats the correlation random guessing achieves with the hidden coloring by $\eps$,
		\item $\calA$ solves the $\Model$ vs.\ $\Model^{\times}$ (the null distribution) distinguishing problem with high probability.
	\end{enumerate}
\end{lemma}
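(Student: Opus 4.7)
The plan is to exploit the fact that the hypothesis ``$\minit$ is not a fixed point'' provides a factor type $\theta$ and a position $i\in[\arity(\theta)]$ for which the centered one-step factor-to-variable BP update from $\minit$, denote it $g_{\theta, i}\in \R^q$, is a nonzero vector. Concretely, $g_{\theta, i}(c)$ is the marginal of the $i$-th coordinate under $\Phi_\theta\cdot\prior^{\otimes k}$ minus its value under no factor; by the failure-of-fixed-point hypothesis, such $(\theta, i)$ exists. Fix one, and for every variable $v\in[n]$ let $N_v$ be the number of type-$\theta$ factors incident to $v$ that place $v$ at position $i$; this is directly observable from $\bE$.

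A short Bayesian computation in the observation model shows that, under $\Model$, $N_v$ conditioned on $\bc(v) = c$ is (asymptotically) Poisson with rate $\lambda(c)$ that is a positive affine transform of $g_{\theta, i}(c) + \prior(c)$, so $\lambda$ varies nontrivially with $c$. Under $\Null$, factors are drawn with a fresh coloring per factor, so $N_v$ is Poisson with the averaged rate $\bar\lambda \seteq \sum_c \prior(c)\lambda(c)$, independent of $v$.

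For weak recovery, let $c^\star = \arg\max_c g_{\theta, i}(c)$ and output $\hat\bc(v) = c^\star$ when $N_v$ exceeds a threshold chosen strictly between $\bar\lambda$ and $\lambda(c^\star)$; otherwise output a uniformly random color, breaking argmax ties uniformly at random. Since $\lambda(c^\star)$ strictly exceeds the averaged rate of the other colors, a Poisson tail calculation gives that the event $\{N_v>\text{threshold}\}$ occurs noticeably more often when $\bc(v) = c^\star$ than when $\bc(v) \neq c^\star$, which translates into correlation $\Omega(1)$ above random guessing with $\bc$. For detection, use the empirical variance statistic $S = \tfrac{1}{n}\sum_{v\in[n]}(N_v - \bar N)^2$. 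Under $\Null$ the counts $N_v$ are (asymptotically) i.i.d.\ Poisson at rate $\bar\lambda$ giving $\mathbb{E}_{\Null}[S] = \bar\lambda$, while under $\Model$ the $\bc$-induced heterogeneity of rates contributes $\mathrm{Var}_c[\lambda(c)] = \Omega(\|g_{\theta, i}\|_2^2)$ additionally, so $\mathbb{E}_{\Model}[S]$ strictly exceeds $\mathbb{E}_{\Null}[S]$; concentration of $S$ around its mean (the cross-correlations between $N_u$ and $N_v$ from shared factors are a lower-order effect in the sparse regime) completes the distinguishing argument.

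The main obstacle I anticipate is ensuring the decision rule for recovery gives a provably positive edge over random guessing in all models covered by the lemma: if $g_{\theta, i}$ has multiple maximizers or $\prior$ is strongly non-uniform, naively outputting $c^\star$ can erase the advantage. This is handled by randomizing among the set of argmaxers and tuning the threshold using $\prior$ and $\lambda$. The remaining work --- the Bayes formula for $\lambda(c)$ in the observation model, the Poisson-tail calculation, and the second-moment concentration of the linear statistic $S$ --- is routine because the expected vertex degree in $\bE$ is a model-dependent constant.
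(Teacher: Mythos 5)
Your overall strategy matches the paper's: extract from the failure of detailed balance a factor type and position where the one-step BP update is nontrivial, observe that the count $N_v$ of such factors incident to $v$ is (asymptotically) Poisson with a constant rate under $\Null$ but a nondegenerate mixture of Poissons under $\Model$, and use the second moment/variance of these counts for detection (the expectation gap comes from convexity of $x\mapsto x^2+x$) and a threshold on $N_v$ for recovery. The paper uses $\sum_v \varfac{v}{i,j}^2$ directly where you center the statistic, but the gap and concentration argument are the same. There is also a minor algebraic slip: since $g_{\theta,i}(c)$ is the difference between the conditional marginal and the prior, one gets $\lambda(c) \propto g_{\theta,i}(c)/\prior(c) + 1$, not an affine transform of $g_{\theta,i}(c)+\prior(c)$; this does not hurt the conclusion that $\lambda$ is non-constant.

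There is, however, a real gap in the recovery step. You set $c^\star = \arg\max_c g_{\theta,i}(c)$ and argue that, because $\lambda(c^\star)$ strictly exceeds the average rate over the other colors, a Poisson tail calculation shows $\Pr[N_v > t \mid \bc(v)=c^\star]$ exceeds $\Pr[N_v > t \mid \bc(v)\neq c^\star]$. This inference does not follow from comparing averages: $q(\lambda) \coloneqq \Pr[\Poi(\lambda) > t]$ is a nonlinear (and non-concave) function of $\lambda$, so $\lambda(c^\star) > \E[\lambda(c)\mid c\neq c^\star]$ does not imply $q(\lambda(c^\star)) > \E[q(\lambda(c))\mid c\neq c^\star]$. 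In fact, since $g_{\theta,i}(c)\propto \prior(c)(\lambda(c)-\bar\lambda)$, the argmax of $g_{\theta,i}$ need not be the argmax of $\lambda$: a small-prior color $c'$ with $\lambda(c')\gg\lambda(c^\star)$ will exceed the threshold with probability near $1$, and whether the weighted average over $c\neq c^\star$ dips below $q(\lambda(c^\star))$ then depends on the full profile of $\lambda$, not just on the average. The fix you propose (randomizing among argmaxers, tuning the threshold) does not resolve this; no threshold $t$ can prevent high-$\lambda$ colors from dominating. The clean fix is to take $c^\star$ to be a color with maximal rate $\lambda(c)$ among those with $\prior(c)>0$ (so $q(\lambda(c^\star))\geq q(\lambda(c))$ for every $c$ by monotonicity), or — as the paper does — fix two colors $c, c'$ with $\lambda(c)\neq\lambda(c')$ and run a pairwise likelihood-ratio test between $\Poi(\lambda(c))$ and $\Poi(\lambda(c'))$, which needs only that the two rates differ.
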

In light of the above lemma, it is natural to restrict our attention to the case where $\minit$ is a fixed point for the BP iteration.  

\paragraph{Case 2: $\minit$ is an unstable fixed point}

$\minit$ is an {\it unstable fixed point} if arbitrarily small perturbations of $\minit$ will lead to the BP iteration moving away from the fixed point $\minit$.
%
This case was marked by the blue region in \pref{fig:NAE3SAT} and \pref{fig:partially-rev-4-com}.  In this case, our main algorithmic result is a spectral algorithm to recover a coloring $\bc'$ that beats the correlation random guessing achieves with the hidden coloring.  Alternatively, the spectral algorithm can be used to distinguish between the planted and the null distributions.

\paragraph{Case 3: $\minit$ is a stable fixed point}

$\minit$ is a {\it stable fixed point} if there exists a neighborhood $U$ around $\minit$ such that for any initialization $\hat{m} \in U$, BP iteration converges to the canonical fixed point $\minit$.
In this case, the canonical fixed point $\minit$ clearly highlights a potential failure of BP algorithm.  The hypothesis of Krzakala and Zdeborova \cite{KrzakalaZ09} asserts that existence of this stable fixed point marks the onset of computational intractability in general. 

\subsection{Related Work}

Ideas from statistical physics have long been brought to bear on inference problems.  
We refer the reader to \cite{nishimori2001statistical,mezard2009information,zdeborova2016statistical,ricci2019typology} for an introduction to the phase transitions that mark changes in statistical and computational properties of these problems.

\paragraph{Planted models}

Special cases of the planted model we consider have appeared extensively in literature.  
The conditional probability of the hidden vector given the noisy observations takes the form of
a graphical model, i.e. factorizes according to an hypergraph whose nodes correspond to variables
and hyperedges correspond to noisy observations. Such graphical models have been studied by
many authors in machine learning \cite{lafferty2001conditional} under the name of {\it conditional random fields}.
We highlight a few among the extensive body of literature on information-theoretic and structural properties of these planted models.  Montanari \cite{andrea2008estimating} characterized the posterior marginals in terms of fixed points of the associated density evolution operator.  Subsequently, Abbe and Montanari \cite{abbe2013conditional} show concentration for the conditional entropy per hidden variable given the observations.
More recently, Coja-Oghlan et. al. \cite{coja2020inference} study the information theoretic limits to recovery and confirm a conjectured formula for the mutual information between the observations and the planted assignment.

\paragraph{Spectral algorithms via non-backtracking operator}
The idea of using the spectra of non-backtracking matrix for recovery in planted problems can be traced back to the seminal work of Krzakala et al. \cite{krzakala2013spectral} in the context of community detection.  While this work provided heuristic arguments supporting the correctness of the algorithm, it was rigorously established in the work of Bordenave et. al. \cite{BLM15}.
Subsequently, \cite{saade2015spectral} devised spectral algorithms for solving the recovery problem in the censored block model, a variant of community detection wherein the edges are weighted and the weights carry the information about the community labels, but the edges don't.  Building on the result of \cite{BLM15}, this work shows that the eigenvectors of non-backtracking matrix can be used to partially recover the communities, right up to the threshold.
Finally, Angelini et al. \cite{angelini2015spectral} consider a model of sparse hypergraphs that includes planted CSPs as a special case.  The paper proposes a spectral algorithm based on a generalization of a non-backtracking matrix to hypergraphs, and gives a heuristic argument that the algorithm solves detection whenever belief propogation succeeds.  Unlike our work, the algorithm proposed in \cite{angelini2015spectral} uses an unweighted non-backtracking matrix that is independent of the prior probabilities.  While it is a desirable feature that the algorithm is {\it non-parametric}, i.e., does not rely on the knowledge of prior distributions generating the instance, it is unclear if such a non-parametric algorithm can achieve detection up to the threshold in general.

Apart from recovery in planted models, the non-backtracking operator and the closely related Bethe-Hessian matrix have also been applied towards computing upper bounds for the log-partition function in ferro-magnetic Ising models on general graphs \cite{saade2017spectral}.

\paragraph{Quiet Planting}

Planted distributions that are indistinguishable from their random counterparts are often referred to as "quiet planting", though the terminology is not often consistent on whether the distributions are computationally or statistically indistinguishable.   

A quiet planting that is statistically indistinguishable from random was used as a technical tool to study random instances in \cite{AchlioptasCO08}.  Krzakala and Zdeborova \cite{KrzakalaZ09} studied the existence of quiet plantings for graph coloring problem and were the first to hypothesize that under the Kesten-Stigum threshold, the planted ensembles are a computationally indistinguishable from random.
Subsequently, the authors \cite{zdeborova2011quiet} considered planted distributions for {\it locked CSPs}, wherein every pair of assignments to a predicate have Hamming distance at least $2$ and showed that problem is easy above a threshold that coincides with the Kesten-Stigum threshold and hypothesize that non-trivial recovery is computationally hard under it.
Finally, a statistically quiet planting for the random $k$-SAT problem has been proposed in \cite{krzakala2012reweighted}.

\paragraph{Community Detection}

Extensive work on community detection for stochastic block models has led to the confirmation \cite{mossel2018proof,M14,BLM15,AS15} of conjectures of Decelle et. al. \cite{decelle2011inference,DKMZ}. 
As predicted, existing algorithms \cite{M14,BLM15,AS15} can partially recover community labels up to the Kesten-Stigum threshold, but no lower.
For $q = 2$ communities, the Kesten-Stigum threshold also matches the information theoretic threshold beyond which recovery is impossible.
However, for $q \geq 3$ communities, the problem is believed to exhibit a statistical-vs-computational gap, in that there is a range of parameters where partial recovery is possible but is computationally intractable. 
The presence of a gap between the Kesten-Stigum threshold and the information theoretic threshold for all $q > 5$ was established in \cite{sly2009reconstruction}. 
More recently, Ricci-Tersenghi et al.\ \cite{ricci2019typology} predicted the existence of such a gap for $q = 4$ communities for some degree distributions, and also identifies a threshold beyond which there is a hard phase in asymmetric SBM.   
Furthermore, this work predicts the existence of hybrid-hard phases where it is computationally easy to reach a non-trivial inference accuracy, but computationally hard to match the information theoretically optimal one.  Specifically, there are stable fixed points for BP that are not the trivial fixed point, but also don't correspond to optimal recovery.

\paragraph{Spectral norm bounds}
Technically, our work draws on ideas from Bordenave, Lelarge and Massoulie \cite{BLM15} who established spectral norm bounds for non-backtracking matrices associated with Erd\H{o}s-Renyi random graphs.
Closer to our own setup, Stephan et al.\ \cite{stephan2020non} show eigenvalue bounds for the non-backtracking matrices of random graphs that have independent and bounded edge weights, and bounded model complexity (measured by the rank of the expected adjacency matrix). However, in our model the edges have correlated matrix weights instead of independent scalar weights, so their eigenvalue bounds do not generally apply to our model.
Another work we draw several ideas from is that of Bordenave and Collins \cite{BC19}, who prove that the spectra of a wide family of random graphs, namely those arising from matrix-weighted noncommutative polynomials of random permutation matrices (see \cite{OW20} for a comprehensive characterization and examples in this family), are roughly contained within the spectrum of an appropriately defined infinite graph.  The key techniques useful in our work are the ones they employ to bound the spectral norms of the non-backtracking matrices of random regular graphs whose the edges are endowed with varying matrix weights.

\subsection{Technical Overview}

We define a general model for sparse observations on a hidden vector, and carry out the cavity method calculations in full generality following \cite{DKMZ}. We obtain a criterion for computational tractability of the recovery and detection problems on this model, and provide spectral algorithms for recovery (\pref{thm:informal-weak-recovery}) and detection (\pref{thm:informal-distinguishing}) in the tractable regime.  The key technical ingredient in our work is tight eigenvalue bounds for nonbacktracking matrices of sparse random hypergraphs with (possibly varying) matrix-valued edge weights (\pref{thm:main-mat-conc-informal}).

In this section, we will attempt a brief technical outline of our result specialized to the case of distinguishing a random NAE$3$SAT instance from one with a hidden satisfying assignment.  Concretely, consider the problem distinguishing $\IntroNull$ from $\IntroPlanted$ where:\footnote{Strictly speaking, this model and the distribution over NAE$3$SAT instances our generic model yields differ slightly.  Nevertheless they are contiguous and so the phenomena in one carry to the other.}
\begin{itemize}
\item An instance $\bcalI\sim\IntroNull$ is obtained by sampling each triple of distinct vertices $(u,v,w)$ in $[n]^3$ independently with probability $\frac{d}{3!n^2}$ and then placing uniformly random negations $(\bsigma_u,\bsigma_v,\bsigma_w)$ on each variable.
\item An instance $\bcalI\sim\IntroPlanted$ is sampled in a two-stage process: (1) sample a hidden assignment $\bx\sim\{\pm1\}^n$, (2) sample each triple of distinct vertices $(u,v,w)$ in $[n]^3$ independently with probability $\frac{d}{3!n^2}$ and place uniformly random negations $(\bsigma_u,\bsigma_v,\bsigma_w)$ \emph{conditioned} on $\mathrm{NotAllEquals}(\bsigma_u \bx_u, \bsigma_v \bx_v, \bsigma_w \bx_w)=1$.
\end{itemize}

First, let us map out the statistical physics prediction of the smallest value of $d$ at which the problem becomes computationally tractable.
In particular, we need to work out the value of $d$ for which the trivial fixed point for belief propogation is unstable.  
To this end, one emulates the cavity method heuristic calculations analogous to the one carried out in \cite{DKMZ} for stochastic block models.
Oversimplifying for the sake of presentation, the cavity method heuristic amounts to carrying out the calculation by treating the neighborhood of each variable to be an infinite tree (see \pref{sec:cavity-method} for more details).

Concretely, the setup in the cavity method calculation is as follows.  The neighborhood of a variable $v$ in the NAE$3$SAT instance is modelled as an infinite tree with alternating layers of variable and NAE$3$SAT constraint nodes.  The tree is generated by a Galton-Watson process where each variable $v$ picks a degree $d_w \sim \mathrm{Poisson}(d)$ from the Poisson distribution, and has $d_w$ NAE$3$SAT constraint nodes as children, and each constraint node has exactly $2$ children.  For each path of length $2$, $u \rightarrow C \rightarrow w$ from a variable $u$ to its constraint node $C$ followed by another variable $w$ in the constraint, there is an associated constant sized matrix $M_{u C w}$ depending on the prior distribution.  For the case of NAE$3$SAT, all of the matrices $M_{u C w}$ are given by $M_{u C w} = \sigma_u \sigma_w M$ where
\[
	M \coloneqq
	\begin{bmatrix}
		-1/6 & 1/6 \\
		1/6 & -1/6
	\end{bmatrix}.
\]
For any depth $t$, consider the following quantity $\rho_t$ where the expectation is over the choice of the infinite tree $\mathcal{T}$,
\[ \rho_t(d) = \E_{\textrm{tree } \mathcal{T}} \left[ \sum_{\textrm{paths} u_0 = v \to C_0 \to u_1 \to C_1 \to u_2 \to \cdots \to u_t} \mathrm{Tr}\left( \left(\prod_{i=0}^{t-1} M_{u_i C_i u_{i+1}}\right) \left(\prod_{i=0}^{t-1} M_{u_i C_i u_{i+1}}\right)^* \right) \right]  \]
The threshold $d^*$ predicted by the cavity method is precisely the smallest value of $d$ for which $\lim_{t \to \infty} \rho_t(d) > 1$.

This characterization of $d^*$ is a little unwieldy in that it is not immediate that the value of the threshold $d^*$ is decidable.  Fortunately, through ideas from the work of Bordenave and Collins \cite{BC19}, the above characterization can be equivalently written in terms of the spectral radius of an associated finite matrix. 
Specifically, for NAE$3$SAT, $d^*$ is the smallest $d$ for which the spectral radius of $L$ exceeds $1$ where:
\[
	L =
	d\cdot\begin{bmatrix}
		1/18 & -1/18 & -1/18 & 1/18 \\
		-1/18 & 1/18 & 1/18 & -1/18 \\
		-1/18 & 1/18 & 1/18 & -1/18 \\
		1/18 & -1/18 & -1/18 & 1/18 \\
	\end{bmatrix}.
\]
(see \pref{sec:from-alpha-to-lambda} for an overview of how to construct $L$ in general, and \pref{sec:cavity-method} for details).
Hence, for NAE$3$SAT, the problem is hypothesized to become algorithmically tractable once $d>4.5$.  Our main results are algorithms for distinguishing the null and planted distributions, and for partially recovering a hidden assignment in the general model we consider whenever the spectral radius $\rho(L)$ of the matrix $L$ corresponding to the model exceeds $1$.  In the case of NAE3SAT, we prove:
\begin{theorem}
	When $d > 4.5$, given $\bcalI\sim\IntroNull$ or $\IntroPlanted$:
	\begin{enumerate}
		\item There is an efficient algorithm to distinguish $\IntroNull$ from $\IntroPlanted$ with probability $1-o(1)$.
		\item There is an efficient algorithm to produce $\Theta(1)$ unit vectors $V$ where $\langle v,x\rangle\ge\Omega(\sqrt{n})$ for some $v\in V$.
	\end{enumerate}
\end{theorem}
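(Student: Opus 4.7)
The plan is to linearize BP around the trivial (uniform-message) fixed point and run a spectral algorithm on the resulting non-backtracking-type operator $B$, adapting \cite{krzakala2013spectral, BLM15}. I will show that under $\IntroNull$ the spectrum of $B$ is confined to a disk of radius $\sqrt{\rho(L)}+o(1)$, while under $\IntroPlanted$ there is an outlier eigenvalue of magnitude close to $\rho(L)$ whose eigenvector encodes $x$. Since $d>4.5$ forces $\rho(L)>1>\sqrt{\rho(L)}$, outlier and bulk are separated, which simultaneously yields (i) a distinguisher that thresholds the top eigenvalue of $B^k$, and (ii) a recovery algorithm that reads off the outlier eigenvectors.

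More concretely, I form the bipartite factor graph of $\bcalI$ and index $B$ by directed variable-to-clause half-edges $(u,C)$, each carrying a copy of $\mathbb{R}^2$ (the local message space). For each length-two non-backtracking continuation $(u,C)\to(u',C')$ — i.e., $u'\in C$, $u'\neq u$, $C'\ni u'$, $C'\neq C$ — I set the block $B_{(u',C'),(u,C)}=\sigma_u\sigma_{u'}M$, which is the Jacobian of the BP update at the uniform fixed point with $M$ as in the excerpt. The bulk norm bound $\|B^k\|_{\mathrm{op}}\le(\sqrt{\rho(L)}+o(1))^k\cdot\mathrm{poly}(n)$ for $k=\Theta(\log n)$ then follows from \pref{thm:main-mat-conc-informal}, confining all but $\Theta(1)$ eigenvalues of $B$ to a disk of radius $\sqrt{\rho(L)}(1+o(1))$.

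For the planted side, I construct a half-edge vector $v^\star$ whose coordinates depend on $x$ and on the leading singular vectors of $L$, and verify by a direct first-moment computation that $\mathbb{E}[Bv^\star\mid x]$ aligns with $v^\star$ up to a factor of magnitude $\rho(L)(1-o(1))$ (for NAE3SAT, the relevant identity is $\mathbb{E}[\sigma_u\sigma_{u'}\mid x]=-x_ux_{u'}/3$ for $u,u'$ in the same clause, which telescopes along a length-$k$ walk to yield a $\rho(L)^k$-scale coherent contribution). Iterating and applying concentration gives $\|B^k v^\star\|\ge\rho(L)^k(1-o(1))\,\|v^\star\|$, forcing $B^k$ to have at least one outlier eigenvalue of magnitude at least $\rho(L)^k(1-o(1))$. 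To distinguish, I threshold the top eigenvalue of $B^k$ at $\rho(L)^{3k/4}$, which sits strictly between the bulk scale $\rho(L)^{k/2}$ and the outlier scale $\rho(L)^k$. For recovery I compute the $\Theta(1)$ outlier eigenvectors, pull each back to $[n]$ by summing half-edge coordinates at every variable (projected onto the appropriate singular direction of $L$), and normalize; the alignment with $v^\star$ translates to $\langle v,x\rangle\ge\Omega(\sqrt n)$ for at least one of the $O(1)$ candidates.

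The main obstacle is the bulk bound supplied by \pref{thm:main-mat-conc-informal}. Unlike the scalar $\{0,1\}$-valued case of \cite{BLM15}, here $B$ has $2\times 2$ matrix-valued blocks whose signs $\sigma_u\sigma_{u'}$ are correlated across half-edges through the shared clause negations, and the planted conditioning further correlates these signs with $x$. Proving $\|B^k\|_{\mathrm{op}}\le(\sqrt{\rho(L)}+o(1))^k\cdot\mathrm{poly}(n)$ therefore requires combining (i) the combinatorial walk-shape classification of \cite{BLM15}, (ii) the matrix-weighted trace-method machinery of \cite{BC19}, and (iii) a careful treatment of the constraint-induced sign correlations so that spurious large-contribution walk shapes cancel and only the intended $\rho(L)^{k/2}$ scaling survives.
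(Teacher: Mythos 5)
Your overall strategy (linearize BP at the uniform fixed point, bound the bulk spectrum of the resulting nonbacktracking-type operator via a matrix-weighted trace method, exhibit a planted test vector that witnesses an outlier, threshold) is the same as the paper's, and the $\mathbb{E}[\sigma_u\sigma_{u'}\mid x]=-x_ux_{u'}/3$ identity you invoke for the planted-side computation is correct. The distinguishing half of the proposal is essentially sound, modulo a parameter choice: the paper's bulk bound (\pref{thm:main-mat-conc-informal}) only holds for walk lengths $\ell$ with $(\log\log n)^2\le\ell\le\log n/(\log\log n)^2$, which is why the paper takes $s=\lfloor\sqrt{\log n}\rfloor$ and gets the clean bound $((1+\eps)\sqrt{\rho(L)})^s$ with no $\mathrm{poly}(n)$ slack; your $k=\Theta(\log n)$ lies outside this range, so the poly$(n)$ factor you carry around is both unnecessary and unsupported by the theorem as stated. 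A second cosmetic difference: you work with the half-edge operator $B$ directly, whereas the paper works with the vertex-indexed matrix $\CentAdj_{\bG}^{(s)}$, which is \emph{self-adjoint} under the weighted inner product $\langle\cdot,\cdot\rangle_{\innerp}$ (\pref{clm:self-adjointness}), giving real eigenvalues and a clean variational argument; $B$ is asymmetric with possibly complex spectrum, so statements like ``confines all but $\Theta(1)$ eigenvalues to a disk'' and ``forcing $B^k$ to have an outlier eigenvalue of magnitude $\ge\rho(L)^k(1-o(1))$'' do not follow from the operator-norm bound alone and would need separate justification.

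The genuine gap is in the recovery step. You propose to ``compute the $\Theta(1)$ outlier eigenvectors, pull each back to $[n]$ \ldots the alignment with $v^\star$ translates to $\langle v,x\rangle\ge\Omega(\sqrt n)$.'' This presupposes a \emph{converse}: that any vector achieving a large Rayleigh quotient under $\CentAdj_{\bG}^{(s)}$ (or a large growth rate under $B$) must correlate with the planted $v^\star$. Nothing in your argument establishes this, and the paper explicitly flags exactly this converse as the hard part it does \emph{not} prove: having $\langle y, \CentAdj_{\bG}^{(s)} y\rangle\ge c\,\rho(L)^s\|y\|^2$ for the planted test vector $y$ tells you the top eigenvalue is large, but since you have no control on the second eigenvalue \emph{in the planted model} (the null-model bulk bound does not transfer, as the two distributions are not contiguous here), the top eigenvector could a priori be nearly orthogonal to $y$. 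The paper bypasses this by an entirely different multi-scale argument (\pref{sec:recovery}): it finds by a descent argument (\pref{claim:whatever}) a special level $\bfm$ at which the top eigenvector $v_{\bfm}$ has controlled norm growth under all smaller powers, then exploits the decomposition $\CentAdj_{\bG}=\CentBdj_{\bG}+\CentRdj_{\bG}$ with $\CentRdj_{\bG}$ an explicit low-rank matrix built from $\bc$, expands $v_{\bfm}^{\top}(\CentAdj^{(\bfm)}-\CentBdj^{(\bfm)})v_{\bfm}$ as a telescoping sum, and shows one of the $O(1)$ vectors $\CentAdj_{\bG}^{(\bfm-\ell)}v_{\bfm}$, $\ell\le C$, must correlate with $\bchat^{\tau,\alpha}$. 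Your proposal, as written, skips all of this; if you want to make the eigenvector-readoff idea rigorous you would need to either prove the converse statement or adopt the paper's multi-scale bootstrap.
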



%

We now describe the distinguishing algorithm, which is spectral in nature, and briefly survey the techniques to analyze the matrix involved.  The matrix we employ is a power of the so-called non-backtracking matrix obtained by linearizing belief propogation.  
For each clause $C$ and pair of variables $u,v$ in the clause signed by $\sigma_u,\sigma_v$ we define matrix $M_{uCv}\coloneqq \sigma_u \sigma_v M$.  The $s$-th nonbacktracking power matrix is a $n\times n$ block matrix where each block is $2\times 2$:
\[
	A^{(s)}[a,b] \coloneqq \sum_{uC_1u_1C_2u_2\dots u_{s-1}C_s v} M_{uC_1u_1} M_{u_1C_2u_2}\cdots M_{u_{s-1}C_s v}.
\]
The algorithm is then fairly simple:
\begin{itemize}
	\item Let $s = \floor{\sqrt{\log n}}$, and let $\kappa$ be strictly between $\sqrt{\rho(L)}$ and $\rho(L)$.
	\item If $\|A^{(s)}\| < \kappa^s$ output $\IntroNull$, otherwise output $\IntroPlanted$.
\end{itemize}
In order to prove that the algorithm is correct, there are two key technical steps: (1) to prove that in the planted model, the operator norm is large, (2) to prove that in the null model, the operator norm is bounded.

The key insight in proving (1) is that the large operator norm of $A^{(s)}$ arises from the hidden assignment to the planted instance $\bcalI$ itself.  In particular, denoting $\by\coloneqq \bx\otimes\begin{bmatrix}1 \\ -1\end{bmatrix}$  we prove:
\begin{lemma} \label{lem:corr}
	With probability $1-o(1)$:
	\[
		\frac{\langle \by, A^{(s)}\by\rangle}{\|\by\|^2} \ge \Omega(\rho(L)^s).
	\]
\end{lemma}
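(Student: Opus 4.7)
\emph{Approach.} The plan is to expand $\langle \by, A^{(s)}\by\rangle$ along non-backtracking paths, use the rank-$1$ structure of $M$ together with the planted-constraint symmetry to cancel the dependence on $\bx$, and then conclude via first- and second-moment calculations over the remaining randomness.

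\emph{Rank-$1$ reduction and $\bx$-cancellation.} Setting $w=(1,-1)^T$ gives $M=-\tfrac{1}{6}ww^T$ and hence $Mw=-\tfrac{1}{3}w$. Iterating along a non-backtracking path $p:a=u_0\to C_1\to u_1\to\cdots\to u_{s-1}\to C_s\to u_s=b$,
\[
\Bigl(\prod_{i=1}^s M_{u_{i-1}C_i u_i}\Bigr)w \;=\; (-1/3)^s\,\Bigl(\prod_{i=1}^s \bsigma_{u_{i-1}}^{C_i}\bsigma_{u_i}^{C_i}\Bigr)\,w.
\]
Write $\bsigma_u^C=\tau_u^C \bx_u$ where $\tau_u^C\in\{\pm 1\}$ is the planted literal evaluation. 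Each interior vertex $u_i$ appears in the two consecutive clauses $C_i$ and $C_{i+1}$, so its $\bx_{u_i}$-factor squares to $1$ and drops out; only $\bx_a\bx_b$ survives. Combined with the $\bx_a,\bx_b$ coming from $\by[a]=\bx_a w$, $\by[b]=\bx_b w$ and with $\|w\|^2=2$, $\|\by\|^2=2n$, this yields
\[
\frac{\langle \by, A^{(s)}\by\rangle}{\|\by\|^2} \;=\; \frac{(-1/3)^s}{n}\sum_p \beta_p, \qquad \beta_p \coloneqq \prod_{i=1}^s \tau_{u_{i-1}}^{C_i}\tau_{u_i}^{C_i},
\]
where the sum ranges over existing length-$s$ non-backtracking paths in the clause-incidence hypergraph. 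Crucially, conditional on the hypergraph structure, the $\tau$'s are independent across clauses and uniform over the six NAE patterns on $\{\pm 1\}^3$, so $\beta_p$ does not depend on $\bx$ at all.

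\emph{First moment.} Direct enumeration of the six NAE triples gives $\E[\tau_u^C\tau_v^C]=-1/3$ for any two variables in a common clause, whence $\E[\beta_p]=(-1/3)^s$ by clause-wise independence. The expected number of length-$s$ non-backtracking paths in the random clause hypergraph is $n(2d)^s(1+o(1))$: each variable has expected incident-clause count $d$, each clause contributes $2$ neighbors, and the non-backtracking correction is negligible for $s=\lfloor\sqrt{\log n}\rfloor$ since the hypergraph is locally treelike on this scale. Multiplying,
\[
\E\!\left[\frac{\langle \by,A^{(s)}\by\rangle}{\|\by\|^2}\right] \;=\; (-1/3)^s(-2d/3)^s(1+o(1)) \;=\; (2d/9)^s(1+o(1)) \;=\; \rho(L)^s(1+o(1)).
\]

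\emph{Concentration, and the main obstacle.} To promote this first-moment computation to a $1-o(1)$-probability lower bound via Chebyshev (or Paley--Zygmund), we bound $\mathrm{Var}[\sum_p \beta_p]$. Pairs of paths with disjoint clause-sets are independent and contribute only the square of the first moment. Pairs sharing $k\ge 1$ clauses produce nonzero covariance, but each shared clause suppresses the pair-count by a factor $\Theta(1/n)$ in vertex freedom while altering $\E[\beta_p\beta_{p'}]$ by at most an $O(1)^k$ factor (as $|\tau|=1$). Because $s=\lfloor\sqrt{\log n}\rfloor$ is so small, the number of possible two-path overlap shapes is $n^{o(1)}$ and the geometric series in $k$ converges, giving $\mathrm{Var}[\sum_p\beta_p]=o\!\left((n\rho(L)^s)^2\right)$, from which the lemma follows. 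The main technical hurdle lies in this combinatorial enumeration of two-path overlap shapes and the sharp bookkeeping of their covariance contributions, mirroring the second-moment analyses of non-backtracking operators in \cite{BLM15,BC19}; here the rank-$1$ structure of $M$ usefully decouples the matrix algebra from the path combinatorics, so that the argument reduces to a purely combinatorial estimate on signed path sums.
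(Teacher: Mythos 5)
Your proposal is correct in outline and reaches the right conclusion, but it follows a genuinely different route from the paper's proof (which lives in \pref{sec:statistics} as \pref{thm:local-statistics-concentrate}). The paper works directly with the local statistics vector $\statvec$ (the color-indicator concatenation) and computes, for each \emph{self-avoiding} non-backtracking walk $\Path$, the identity $\E_{\bG}\langle\statvec^{\Path_v[0]},\expwlk(\bG,\Path)\statvec^{\Path_v[s]}\rangle_{\innerp}=\Pr[\Path\in\bG]\cdot\Tr\left(\bTrans_\Path\bTrans_\Path^*\right)$ (\pref{clm:sa-local-statistics-path}), averaging simultaneously over the hidden coloring and the observation process; it then sums over self-avoiding paths to get $(1-o(1))\lambda_L^s n$, bounds the variance via an enumeration of joined-walk shapes (\pref{lem:sa-local-statistics-variance}), controls the non-self-avoiding contribution separately (\pref{clm:nsa-bounded-contribution}), and finishes with Chebyshev. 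Your route instead exploits the NAE3SAT-specific rank-$1$ structure $M=-\tfrac{1}{6}ww^{\top}$ together with the factorization $\bsigma_u^C=\tau_u^C\bx_u$ to collapse the $q\times q$ matrix algebra entirely: the internal $\bx$-factors telescope, leaving a \emph{scalar} signed path sum $\sum_p\beta_p$ whose summands depend only on the literal evaluations and are, conditional on the hypergraph, independent across clauses and uniform on the six NAE patterns. This is an elegant observation and it makes the NAE3SAT case unusually transparent: the $q\times q$ linear algebra, the detailed-balance adjointness, and the general $\Psi$/$\bTrans$ machinery all disappear. (You can check it recovers the paper's numbers: $\bTrans=M$ here, $\Tr\left(M^s(M^s)^*\right)=(1/9)^s$, and multiplying by the expected path count gives $(2d/9)^s=\rho(L)^s$, in agreement.) The trade-off is that the reduction has no analogue in the general model where $M$ is not rank-$1$ and the prior is not uniform, which is precisely why the paper's argument is phrased the other way. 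Both approaches ultimately punt the hard work to the same place --- the combinatorics of intersecting non-backtracking walks for the variance bound and the negligibility of non-self-avoiding paths --- which you correctly identify but defer. One small point: the paper's $\CentAdj_{\bG}^{(s)}$ is the \emph{centered} power (with $\Ind[e\in\bG]-\Pr[e\in\bG]$ weights), whereas your expansion implicitly treats the uncentered sum over observed clauses; this is consistent with the simplified $A^{(s)}$ notation of the technical overview and costs only lower-order terms in the first moment, but it is worth flagging that the centering shows up explicitly in the paper's variance bookkeeping and in the definition of the self-adjoint quadratic form $\langle\cdot,\cdot\rangle_{\innerp}$.
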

This is proved in full generality in \pref{sec:statistics}.

The main technical difficulty is in proving (2) in the general model, as is done in \pref{sec:eig-bounds}.  We prove:
\begin{lemma}
	With probability $1-o(1)$:
	\[
		\|A^{(s)}\| \le \left((1+o(1))\sqrt{\rho(L)}\right)^s.
	\]
\end{lemma}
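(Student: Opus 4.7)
The plan is to bound $\|A^{(s)}\|$ via the trace method: for a slowly growing even integer $\ell$ chosen so that $s\ell \gg \log n$, apply Markov's inequality to $\E\,\Tr\bigl((A^{(s)}(A^{(s)})^*)^{\ell}\bigr)$ and extract $2\ell$-th roots; the excess factor of $n^{1/(2\ell)}$ then absorbs into $(1+o(1))^{s}$. Expanding the trace yields a sum over closed walks of total combinatorial length $2s\ell$ in an auxiliary hypergraph, each walk consisting of $\ell$ alternating runs of $s$ non-backtracking variable$\to$clause$\to$variable steps followed by $s$ reverse steps. To each walk is attached (i) an indicator that all traversed clauses are present in $\bcalI$, and (ii) the trace of the product of the $M_{u_i C_i u_{i+1}}$ matrices along the walk.

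Following \cite{BLM15,BC19}, I would first restrict to \emph{tangle-free} walks: those whose traversed hyperedges induce a tree-like neighborhood with no short cycles. The tangled contribution is handled separately by a crude enumeration on the number of walks with excess edges, combined with trivial operator-norm bounds on the $M_{uCv}$'s; since $s\ell \gg \log n$ and each extra cycle saves a factor of $n$, this contribution is $o\bigl(\rho(L)^{s\ell}\bigr)$. For tangle-free walks, each walk is encoded by a combinatorial \emph{shape} together with an injective labelling into $[n]$. Because clauses are included independently with probability $\Theta(n^{-2})$, only shapes in which every clause is traversed at least twice survive the expectation; for shapes traversing every clause \emph{exactly} twice the $n^{-2}$ factor per clause cancels cleanly against the $n^{v}$ labellings of a shape with $v$ free vertices.

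The heart of the argument is then to show that the matrix trace attached to a tangle-free, pair-traversed walk reproduces the iterated action of the operator whose spectral radius is $\sqrt{\rho(L)}$ per step. Concretely, whenever the walk revisits a clause $C$, the two copies of $M_{\cdot C\cdot}$ entering the product are paired in a definite way, and the identity from \pref{sec:from-alpha-to-lambda} rewrites each such pair as one application of the finite matrix $L$. Consequently, the contribution of length-$s$ runs telescopes into $\rho(L)^{s}$ up to $(1+o(1))^{s}$ corrections from sub-leading shape enumerations and low-order non-tree matchings, and combining $\ell$ runs gives the desired bound $\E\,\Tr(\cdots) \le n \cdot ((1+o(1))\rho(L))^{s\ell}$.

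The \textbf{main obstacle} is the third step: because the weights $\{M_{uCv}\}_{u,v\in C}$ on pairs inside a common clause are \emph{correlated} (they all share the random signs attached to $C$), the expectation of a walk trace does not factor as a product of independent scalar expectations as in \cite{BLM15}, and one must instead treat it as an expectation of a non-commutative matrix polynomial in the clause signs. Identifying, for each tangle-free shape, which legs of which clauses are matched at each crossing, and checking that the resulting trace is dominated by an iterated application of $L$ (rather than some larger operator), is where the bulk of the technical work lies and is precisely the point at which the non-commutative moment machinery of \cite{BC19} must be adapted to the sparse, hypergraph-structured, matrix-weighted setting at hand.
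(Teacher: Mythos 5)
Your proposal follows the same trace-method blueprint as the paper's proof and the arithmetic at the outer level is right (bound $\E\Tr$ by $n\cdot((1+o(1))\rho(L))^{s\ell}$ with $s\ell\gg\log n$, take $2\ell$-th roots, absorb $n^{1/(2\ell)}$). You also correctly identify the crux: the $\{M_{uCv}\}_{u,v\in C}$ are correlated matrix weights, so the scalar-commutative bookkeeping of \cite{BLM15} fails and one must adapt \cite{BC19}. Where your outline and the paper genuinely part ways is in \emph{how} tangles are controlled and \emph{how} $\rho(L)$ is extracted. You propose to restrict the walks themselves to tangle-free ones and handle tangled walks by a separate enumeration, exploiting that singleton clauses make the expectation vanish exactly (since clauses are included independently in the null model). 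The paper instead multiplies the trace by the indicator $\Ind[\nobikes]$ that the \emph{ambient graph} is $r$-bicycle-free and bounds $\E[\bS\cdot\Ind[\nobikes]]$; the price of that conditioning is that singleton clauses no longer vanish, which is why the paper needs a quantitative bound (Lemma~\ref{lem:singleton-small-exp}) showing their contribution is $n^{-|\Sing|/(2r) + \Exc}$-small. Your route avoids that lemma but takes on the burden of showing the tangled part of the matrix has small operator norm with high probability, which you leave as a one-line "crude enumeration"; this is a genuine omission, since in BLM15-style arguments that step requires a real argument (controlling maximum degree and the rank of the tangled contribution).

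The second divergence is the mechanism for landing on $\rho(L)$. You describe each revisited clause as "one application of $L$," which is the right heuristic at the level of a single quadratic pairing $\bTrans M\bTrans^*$, but the paper does not telescope through $L$ directly. Instead it passes to an operator-norm bound $\Tr(M_W)\le q\|M_W\|$, decomposes each linkage into trails via a landmark/pause construction, peels off $\|M_W\|\le C^{|P|}\prod_{T\in\UTrs_{\ge2}(W)}\|M_T\|^2$, and only then sums the squared trail norms against the combinatorics of shapes to recover $\rho(\BlankCol,\GModPar)^{k\ell}$, finally identifying $\rho(\BlankCol,\GModPar)=\sqrt{\lambda_L}$ through the Galton--Watson quantity $r(\BlankCol,\GModPar)\sqrt{d(\GModPar)}$. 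The trail decomposition is precisely what tames the fact that twice-traversed pieces in a hypergraph linkage are not single clauses but entire non-backtracking runs, and that crossings at high-degree vertices break the naive "pair each clause with itself" picture. Your outline, as written, treats this as an identity to be "checked" rather than as the bulk of the argument; in the paper it occupies the entire second half of Section~\ref{sec:eig-bounds} (Observations~\ref{obs:bound-trace} onward, Remark~\ref{rem:bound-landmark-deg}, Claim~\ref{clm:shape-count}). So: right skeleton, right obstacle, but the paper's proof lives precisely in the two places you compressed to a sentence.
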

Our proof is largely inspired by the works of \cite{BLM15,BC19,stephan2020non}.  On one hand, \cite{BLM15} and \cite{stephan2020non} show tight eigenvalue bounds for the nonbacktracking matrices of sparse (possibly inhomogenous) \Erdos-\Renyi~graphs with scalar edge weights. The proof exploits the commutativity of scalar products, i.e. the product of edge weights along a walk is invariant under reordering.  However, the graphs we consider have matrix-valued weights, which in general don't commute under multiplication.  Therefore the product of edge weights changes depending on the order of multiplication.  On the other hand, random regular graphs with matrix-valued edge weights is handled in the work of \cite{BC19}.  However, the proof in \cite{BC19} heavily exploits the regularity of the model --- each vertex has exactly $d$ adjacent edges and these edges have the exact same set of matrix weights. This leads to every vertex having isomorphic neighborhoods, and simplifies the analysis, which does not occur in our setting due to the lack of regularity.   Our situation is further complicated by the fact that due to hyperedges of size greater than $2$ even the random matrix weights in different blocks are not independent, which introduces mild correlations.  The proof follows the general framework of the trace method and gives a more fine-grained analysis for nonbacktracking walks based on their shapes.

While the spectral radius of non-backtracking powers $A^{(s)}$ serve as a distinguisher, recovering the hidden assignment from the eigenvectors is little more subtle.  In particular, this requires proving a converse of \pref{lem:corr} that the every vector $v$ for which $\langle v, A^{(s)} v\rangle$ is large, is actually correlated with the planted assignment.
Instead, we bypass this issue by collating information from eigenvectors of $A^{(s)}$ for a range of values of $s$ 
(see \pref{sec:recovery} for details).

\subsection{Discussion and Future Work}
In this work, we have shown that for a very general class of planted problems, the problem is computationally tractable whenever the trivial fixed point is unstable.  
This establishes the algorithmic side of the predictions of Krzakala and Zdeborova \cite{KrzakalaZ09} for all these problems.
Several compelling open questions remain, we list a few here.

\paragraph{Reductions.}
From the standpoint of average case complexity, the main open question is to establish or refute the stable fixed point barrier.
Given that all Bayesian CSPs have a uniform onset of intractability as specified by the stable fixed point barrier, perhaps these problems are reducible to one another.
Traditional CSPs are very amenable to reductions, it is compelling to see if there are reductions between Bayesian CSPs, and stable fixed point barrier can be obtained as a consequence of the intractability of a single Bayesian CSP.  The main challenge here is in coming up with reductions between problems that are \emph{distribution-preserving} and we speculate that the ideas in \cite{BBH18,BB20}, which are examples of recent successes in reductions between average case problems, might be useful.

\paragraph{Hardness evidence in restricted computational models.}
Evidence on the stable fixed point barrier would also be very interesting.  \cite{hopkins2017efficient} showed that an algorithm based on low-degree polynomials solves the distinguishing problem in community detection up to Kesten-Stigum threshold, and also proves matching hardness in that low-degree polynomials fail to solve the problem under the Kesten-Stigum threshold.  Recent work introduced the local statistics SDP hierarchy \cite{BMR19} and showed the same algorithmic result for this class of algorithms and proved a negative result for the degree-$2$ SOS version of this algorithm. 
It will be useful to show that low-degree method and local statistics SDP hierarchy fail to solve the detection problem in the general model we consider in the presence of a stable fixed point.
It will also be interesting to see if conditional hardness results for the problem can be obtained in other models such as statistical query algorithms \cite{FGRVY17}.

Another direction in the spirit of the recent work of \cite{BBHLS20} which establishes an equivalence between the predictions of statistical query algorithms and the low-degree polynomials method would be to formally establish the equivalence of the predictions for the stable fixed point based on the cavity method with the other restricted models of computation such as the ones mentioned above.

\paragraph{Goldreich's PRG for 1-wise independent, balanced, local predicates}
Goldreich proposed a construction of pseudorandom generators from random CSPs with balanced local predicates \cite{G00}. The generator mapping $n$ variables to $\{0,1\}^m$ is constructed as follows: let $E_1,\dots,E_m$ be a randomly chosen set of constraints on $n$ input variables, then the $i$-th bit of the output string indicates whether $E_i$ is satisfied by the input or not. 

The constraints in Goldreich's generator can be sampled from the null distribution of the model that we study. Then on any input $\bc$ (analogous to the hidden variables in the model), the output of the generator together with the constraints can be viewed as observations from the model's planted distribution. Roughly speaking, we say that this generator produces pseudorandom strings if and only if the detection problem for this model is intractable. 

For any random CSP with $1$-wise independent predicates, the cavity method yields a concrete predicate density threshold above which the detection problem should be tractable (indeed the threshold is always of order $O(n)$). Our distinguishing algorithm confirms tractability in this regime, and therefore provides a concrete linear upper bound on the stretch of the Goldreich's PRG constructed from the random CSP. Indeed, the upper bounds would be tight if the stable fixed point barrier hypothesis holds.

\paragraph{$\NP$ problem}
If the stable fixed point barrier hypothesis holds, then these Bayesian CSPs are excellent examples of average-case hard problems that are easy to sample. Their intractability can be harnessed to build cryptographic and pseudorandom primitives whose security depends on the existance of average-case hard problems.
To this end, it is important that the underlying intractable problem is in $\NP$, i.e., given the true hidden assignment, an efficient algorithm must be able to recognize it.
Formally, this motivates the following $\NP$-version of the problem:
\begin{problem} ($\NP$ version)
	Devise an efficient verification algorithm $\calA$ that,
	given observations $\bE$ from the planted model $\Model$ and a candidate assignment $\bc: [n] \to [q]$, has the following property:
	\begin{itemize}
		\item If $(\bc, \bE)$ are generated from the model $\Model$, the algorithm $\calA$ accepts $(\bc, \bE)$ with high probability.
		\item If the observations $\bE$ are generated from the null model $\Model^{\null}$, then for every assignment $\bc :[n] \to [q]$, the algorithm rejects $(\bc,\bE)$ with high probability.
\end{itemize}
\end{problem}

\paragraph{Dense models}
The focus of this paper has been the sparse settings, where in the underlying variable-observation graph is constant degree on average.
Stability of trivial fixed point is also hypothesized to indicate computational intractability in dense problems such as spiked Wigner matrix (see \cite{MV17} for some rigorous results).
In this setting, it is the stability of fixed points of the \emph{approximate message passing} (AMP) algorithm.  
A natural open question is whether the spectral algorithm based on linearizing AMP can be shown to generically hold in the dense setting. 

\paragraph{Optimal Recovery}
Finally, in the region where weak recovery is possible, BP is conjectured to achieve the optimal recovery rate, i.e., achieve the maximum possible correlation with the hidden communities.  While spectral algorithms provably achieve weak recovery, there has only been partial progress on the problem of achieving the optimal recovery rate \cite{MNS14opt} --- in particular, optimal recovery even in the $2$-community block model close to the Kesten-Stigum threshold is open.
In analogy with traditional CSPs, stable fixed point barrier marks the onset of {\it "approximation resistance"} for some problems, while the recovery rate corresponds to the approximation ratio.

\newcommand{\dom}{[q]}

\section{Preliminaries}

\subsection{Observation Model}	\label{sec:obs-model}

We will now formally define the {\it observation model} that is used throughout this work.  
The basic setup consists of a set of hidden variables $\bc(1),\ldots,\bc(n)$ taking values over a finite domain $\dom = \{1,\ldots, q\}$.   Borrowing terminology from the {\sc Planted Coloring} problem, we will refer to $[n] = \{1,\ldots,n\}$ as the set of variables, $\dom$ as the set of colors and $\bc: [n] \to [q]$ as the {\it hidden coloring}.

The hidden coloring $\bc: [n] \to [q]$ is drawn from a prior distribution $\prior_{\bc}$.  A sequence of hyperedges $\bE$ on the vertex set $[n]$ are drawn, and we will refer to these hyperedges as {\it observations}.  More precisely, an observation is effectively a hyperedge $e \in \bE$ with a type $\btau(e)$.  

\begin{definition} \label{def:model}
An observation model $\Model=([\numcols], T,\TypeDist,\{\ColDist_{\tau}\}_{\tau\in T},\Phi)$ describes a distribution on $n$-vertex hypergraphs $\Model_n$ for every $n\ge 1$ and is specified by,

\begin{itemize}
	\item {\bf (Variable Types $T$).} A set of types $T$ for the hidden variables and a distribution $\TypeDist$ over them.

		Each variable is assigned a random type sampled from $\TypeDist$ and is described by $\btau : [n] \to T$; in aggregate there are $\approx\TypeDist(\tau) \cdot n$ variables of type $\tau$. 

	\item {\bf (Prior Distributions $\{\ColDist_\tau\}_{\tau \in T}$).} For each variable of type $\tau \in T$, a prior distribution $\ColDist_\tau$.

		The prior distribution of hidden coloring $\bc: [n] \to \dom$ is the product distribution,

			\[\prior_{\bc} = \ColDist_{\btau(1)} \times \ColDist_{\btau(2)} \ldots \times \ColDist_{\btau(n)} \]

	\item {\bf (Observation Types $\Phi$).}  Set of observation types $\Phi = \{\phi_1,\ldots, \phi_F\}$.  The arity of a type $i$ observation is denoted by $\arity(i)$.

		Each observation on the variables is a hyperedge with a type from $\Phi$.  Specifically, the set of all observations is a set of hyperedges $\bE$ partitioned as $\bE = \cup_{i \in [F]} \bE_i$ where $\bE_i$ is a set of $\arity(i)$-tuples of distinct elements in $[n]$. 

	\item {\bf (Observation Distributions).}  For each observation type $\phi_i \in \Phi$, we have a bounded function $\phi_i: T^{\arity(i)} \times [q]^{\arity(i)} \to \R^+$.

	For every $\arity(i)$-tuple $(v_1,\ldots,v_{\arity(i)})$ of distinct elements in $[n]$, the observation $\phi_i(v_1,\ldots,v_{\arity(i)})$ is included independently with probability 

		\[ \Pr\left[ \phi_i(v_1,\ldots,v_{\arity(i)}) \in \bE_i \right] \defeq \frac{\phi_i\left( (\btau(v_1),\bc(v_1)),\ldots,(\btau(v_{\arity(i)}), \bc(v_{\arity(i)}))\right)}{n^{\arity(i) - 1}}\]

	Notice that the probability of drawing an observation $\phi_i (v_{1},\ldots,v_{\arity(i)})$ depends both on the types of the variables and their colors.
\end{itemize}
\end{definition}
We refer the reader to the work of Montanari \cite{andrea2008estimating}, where this model has been previously used for a wealth of concrete examples captured in this framework.  Here we will exhibit a few examples.

\begin{example}
(Stochastic Block Model in semi-supervised setting)

In this variant of community detection, a graph $G= (V,E)$ is drawn from a $[q]$-community SBM and in addition an $\alpha$-fraction of the vertex labels are revealed.  \cite{ZMZ14} study the KS threshold in this model using the cavity method.

To encode this problem into our framework, we will have vertex types $T = [q] \cup \{ \perp \} $ wherein the type of a vertex $v$ is $\btau(v) \in [q]$ if the label of $v$ is revealed, and $\btau(v) = \perp$ if it is unrevealed.

We have a single observation type namely the edges of the SBM, and the probability of an edge $(u,v)$ is clearly $\phi\left((\btau(u),\bc(u)),(\btau(v), \bc(v))\right)/n$ for a function $\phi$ depending on the types and colors of two vertices.

\prasadtemp{what more to add here?}
\end{example}

More generally, the model can encode variants of SBM wherein there is additional attributes revealed about the vertices or edges or both.  For example, SBM with labelled edges \cite{HLM12} are subsumed by different types of observations, while SBM with vertex features \cite{DSMM18} are captured by vertex types.  

Further, the model can also be used to express geometric SBM \cite{SASB18} in restricted cases.  In a geometric SBM, the vertices are distributed on a compact metric space like the sphere, and the probability of including an edge between vertices $u,v$ is a function of the distance between the two.  If the metric space is compact, say a sphere in a constant dimensional space,  then one can use an $\epsilon$-net of the compact set as a finite set of vertex types to model the SBM in our framework.

\prasadtemp{ add another example?}

\subsubsection{Miscellaneous simplifying notation}\label{sec:notations}
\paragraph{Class function $\Cl$:}
For notational convenience, we will make a modification to our \pref{def:model} that does not affect the generality of our results.
We will enforce that each observation type $\phi_i$ have a fixed tuple of variable types on which it applies.  Formally, each observation type $\phi_i$ has an associated {\it class} type $\Cl(i)  \in T^{\arity(i)}$ such that all occurrences of the observation $\phi_i$ have input variable types given by $\Cl(i)$.
It is clear that this restriction is a special case of \pref{def:model} with the additional restriction that,
\[ \phi_i\left( (\tau_1,c_1),\ldots, (\tau_{\arity(i)}, c_{\arity(i)})\right) = 0 \text{ if } (\tau_1,\ldots, \tau_{\arity(i)}) \neq \Cl(i)\]

Conversely, given a general model $\Model$ as per \pref{def:model}, for each observation type $\phi_i$ and each tuple $\btau = (\tau_1,\ldots \tau_{\arity(i)})$, introduce an observation type $\phi'_{i, \btau}$ that is identical to $\phi_i$, but restricted to variable types $\btau$, i.e., set $\Cl(i) = \btau$.  It is easy to see that this transformation creates a model $\Model'$ that is equivalent to $\Model$.  Without loss of generality we will henceforth use $\phi_i(c_1,\dots,c_{\arity(i)})$ to denote $\phi_i((\tau_1,c_1), \dots, (\tau_{\arity(i)}, c_i))$ where $\btau = \Cl(i)$.

\paragraph{Average factor density:} We will use $\ACFacDeg{i}$ to denote the average density of a factor:
\[
	\ACFacDeg{i} \coloneqq \sum_{(c_1,\dots,c_{\arity(i)}) \in [q]^{\arity(i)}} \left(\prod_{k=1}^{\arity(i)}\ColDist_{\Cl(i)_k}\right)\cdot \phi_i(c_1,\dots,c_{\arity(i)}). 
\]

\paragraph{Bipartite view:}  Given a collection of sampled observations $\bE = \cup_{i=1}^F \bE_i$, we associate a bipartite graph $\bG$ where the left vertex set is given by the variables $[n]$ and the right vertex set is given by the collection of all $(i,\gamma)$ for $\gamma$ in $\bE_i$.

\paragraph{Index function:} For $e = (i,(v_1,\dots,v_{\arity(i)}))$ we define $\index_e(v_s)$ as $s$ and $e[s]$ as $v_s$.  When $e$ is clear from context we will drop the $e$ and just use $\index(v_s)$.

\begin{definition}	\label{def:null-model}
	For an observation model $\Model=([q], T,\TypeDist,\{\ColDist_{\tau}\}_{\tau\in T},\Phi)$ the corresponding \emph{null model} $\Model^{\times}$ is the observation distribution where for every $\arity(i)$-tuple $(v_1,\ldots,v_{\arity(i)})$ a hidden coloring $\bc$ is sampled independently, and the observation $\phi_i(v_1,\ldots,v_{\arity(i)})$ is included with probability 
	\[
		\Pr\left[ \phi_i(v_1,\ldots,v_{\arity(i)}) \in \bE_i \right] \defeq \frac{\phi_i\left( (\btau(v_1),\bc(v_1)),\ldots,(\btau(v_{\arity}(i)), \bc(v_{\arity(i)}))\right)}{n^{\arity(i) - 1}}.
	\]
	Equivalently, in the null model for every $(v_1,\dots,v_{\arity(i)})$ the observation $\phi_i(v_1,\ldots,v_{\arity(i)})$ is included independently with probability:
	\[
		\Pr\left[ \phi_i(v_1,\ldots,v_{\arity(i)}) \in \bE_i \right] \defeq \frac{\ACFacDeg{i}}{n^{\arity(i) - 1}}.
	\]
\end{definition}

\begin{remark}
	For a model $\Model$ we will refer to it as the \emph{planted model} and we will refer to $\Null$ as the \emph{null model}.  Two computational problems we are interested in are distinguishing whether a sample is drawn from $\Model$ or $\Null$, and inferring the hidden coloring for a sample drawn from $\Model$.
\end{remark}

\subsection{Bayesian Inference}

Given the variable types $\btau : [n] \to T$ and the observations $\bE$, the canonical algorithm to infer the hidden coloring $\bc$ is to use the Bayes rule to compute the conditional distribution $\mathbb{P}_{\bc| \bE}$.
Formally, the probability that a model $\Model=([\numcols], T,\TypeDist,\{\ColDist_{\tau}\}_{\tau\in T},\Phi)$ generates a hidden coloring $\bc$ and observations $\bE$ is 
\begin{align*}
\Pr[\bE_1,\dots,\bE_F, \bc \mid \btau ] = & \Pr[\bc \mid \btau] \cdot \Pr[\bE_1,\ldots,\bE_F \mid \bc, \btau] \\
= &\left(\prod_{v\in[n]}\ColDist_{\btau(v)}(\bc(v)) \right)\cdot \\
& \prod_{i\in[F]}\left(\prod_{(v_j)_j\in [n]^{\arity(i)}} 
\left[\frac{\phi_i(\bc(v_1),\dots,\bc(v_{\arity(i)}))}{n^{\arity(i)-1}}\right]^{\mathbf{1}_{(v_j)_j\in\bE_i}}
\left[1 - \frac{\phi_i(\bc(v_1),\dots,\bc(v_{\arity(i)}))}{n^{\arity(i)-1}}\right]^{\mathbf{1}_{(v_j)_j\not\in\bE_i}}\right)\enspace.
\end{align*} 
By applying Bayes rule, 

\[\Pr[\bc \mid \bE_1,\dots,\bE_F,\btau] = \frac{\Pr[\bE_1,\dots,\bE_F,\bc \mid \btau]}{\sum_{\bc^*}\Pr[(\bE_1,\dots,\bE_F,\bc^*\mid\btau]} \enspace.\]
Ignoring the normalizing constant, we can write 
\[ \Pr[\bc \mid \bE_1,\dots,\bE_F,\btau] \propto e^{-H(c | \bE_1,\ldots, \bE_F, \btau)} \ ,\]
where
\begin{multline*}
\Ham(\bc \mid \bE_1,\dots,\bE_F,\btau) 
= - \sum_{i\in[F]}\sum_{(v_j)_j\in [n]^{\arity(i)}} \left[\mathbf{1}_{(v_j)_j\in\bE_i}\log{\left(\frac{\phi_i(\bc(v_1),\dots,\bc(v_{\arity(i)}))}{n^{\arity(i)-1}}\right)} + \mathbf{1}_{(v_j)_j\not\in\bE_i}\log{\left(1- \frac{\phi_i(\bc(v_1),\dots,\bc(v_{\arity(i)}))}{n^{\arity(i)-1}}\right)} \right] \\
 -\sum_{v\in[n]} \log{\ColDist_{\btau(v)}(\bc(v))} .
\end{multline*}
The function $\Ham(\bc| \bE_1,\ldots,\bE_F, \btau)$ is referred to as the Hamiltonian, and the distribution is the Boltzmann distribution with Hamiltonian $\Ham$ and inverse temperature $\beta = 1$.  

Since in our setting, the hypergraph is sparse, i.e., $\ACFacDeg{i} = O(1)$, the terms 
\begin{equation}\label{eq:simplify}
\log{\left(1- \frac{\phi_i(\bc(v_1),\dots,\bc(v_{\arity(i)}))}{n^{\arity(i)-1}}\right)} \approx 0 
\end{equation}
for all $(v_j)_j\not\in\bE_i$.   So the these terms can be dropped to simplify the Hamiltonian to 

\begin{equation}\label{eq:Hamiltonian}
\Ham(\bc\mid\bE_1,\dots,\bE_F,\btau) 
= - \sum_{i\in[F]}\sum_{(v_j)_j\in\bE_i} \log{\left(\phi_i(\bc(v_1),\dots,\bc(v_{\arity(i)}))\right)} -\sum_{v\in[n]} \log{\ColDist_{\btau(v)}(\bc(v))}.
\end{equation}  

The Hamiltonian $\Ham$ is a sum of {\it local} terms each depending on a constant number of variables.  The observations $\bE$ and the variables $\bc$ together form what is termed as {\it factor graphs} (see \cite{montanaribook}), where each observation is a {\it factor} of the Boltzmann distribution.  
Recall that the Boltzmann distribution is given by
\[
	\Pr[\bc \mid (\bE_1,\dots,\bE_F,\btau)] = \frac{e^{-\Ham(\bc \mid \bE_1,\dots,\bE_F,\btau)}}{\sum_{\bc^*}e^{-\Ham(\bc^* \mid (\bE_1,\dots,\bE_F,\btau)}}
\]
The normalization term in the denominator is called the partition function of the distribution and is denoted $Z(\Model)$.
Notice that a naive algorithm to infer the hidden coloring via the Bayes rule as described above would take exponential time.

\subsection{Belief Propogation}
The algorithm of choice to infer the hidden variables in a sparse factor model would be belief propogation.
We refer the reader to \cite{montanaribook} for a detailed exposition of belief propogation, and restrict ourselves to a broad outline.

Belief propogation (BP) aims to estimate the marginals of the hidden variables, in our case $\bc(v)$ for $v \in [n]$.
BP draws its inspiration from a dynamic programming algorithm to compute the marginals when the underlying factor graph is a tree, and is broadly applicable to sparse settings where the local neighborhood of a vertex is tree-like.
In particular, while BP computes the marginals exactly on a tree, it is very succesful in practice over sparse factor models that are locally tree-like.

To visualize BP, it will be useful to consider the bipartite graph $\mathcal{H}$ with variables $[n]$ on one side and the factors (a.k.a. observations) $\bE$ on the other. There is an edge between a variable $v$ and an observation $e \in \bE$ if $v \in e$. 
The execution of BP is divided into rounds where in each round, the variable nodes send messages to factor nodes or vice versa.

Let $m^{v \to e}$ denote the message sent by a variable $v$ to a factor $e \in \bE$ and let $m^{e \to v}$ denote the message from a factor $e \in \bE$ to a variable $v$.
All messages exchanged are marginal distributions over the domain $[\numcols]$, i.e., $m^{v \to e} = (m^{v \to e}_1,\ldots, m^{v \to e}_{\numcols})$ and similarly $m^{e \to v} = (m^{e \to v}_1,\ldots, m^{e \to v}_{\numcols})$.
Intuitively speaking, $m^{v\to e}_c$ is an estimate of the marginal probability that $v$ is assigned the color $c$ when the factor $e$ is absent, and $m^{e\to u}_c$ is an estimate of the marginal probability that $u$ has color $c$ when all other factors involving $u$ are absent.

\prasadtemp{unnecessary?}
BP specifies an update rule for every variable/factor node to update its outgoing messages each round, depending on its incoming messages.
Let $\partial e$ denotes the set of variables incident a factor $e$ and let $\partial v$ denote the set of factors incident on a variable $v$.
BP specifies functions $\Upsilon_{v \to e}$, $\Upsilon_{e \to v}$ so that if $\{m^{v \to e}[t], m^{e \to v}[t]$ denote the messages in round $t$, then the updated messages are given by
\begin{align}
	m^{v \to e}[t+1] & = \Upsilon_{v \to e}\left( \{ m^{f \to v}[t] \mid f \in \partial v \backslash e \} \right)  \label{eq:bpupdate1}\\
	m^{e \to v}[t+1] & = \Upsilon_{e \to v}\left( \{ m^{u \to e} [t] \mid u \in \partial e \backslash v \} \right) \label{eq:bpupdate2}
\end{align}

We will describe the specific form of the functions $\Upsilon$ in \pref{app:BP-update-rule}, but there are two salient details that we would like to highlight at this time.
First, the functions $\Upsilon$ are smooth rational functions that map marginals over $[\numcols]$ to a marginal distribution over $[\numcols]$.
Second, the updated outgoing message $m^{v \to e}[t+1]$ depends on all messages incoming to variable $v$ {\bf except} the message $m^{e \to v}[t]$.  Similarly, the updated outgoing message $m^{e \to v}[t+1]$ is independent of the incoming message $m^{v \to e}[t]$.

The general schema of a BP algorithm is to start BP with some intialization of the messages
\[
	\{ m^{v \to e}[0], m^{e \to v}[0]\}_{v \in [n],e \in \bE}
\]
and iteratively update the messages as specified by the functions $\Upsilon$, until the messages stabilize into a  fixed point, i.e., a set of messages $\{ \hat{m}^{v \to e}, \hat{m}^{e \to v}\}$ so that,
\begin{align*}
\hat{m}^{v \to e} & = \Upsilon_{v \to e}\left( \{ \hat{m}^{f \to v} \mid f \in \partial v \backslash e \} \right) \\
\hat{m}^{e \to v} & = \Upsilon_{e \to v}\left( \{ \hat{m}^{u \to e}  \mid u \in \partial e \backslash v \} \right)
\end{align*}

While it can often be difficult at times to show convergence to a fixed point, BP is very succesful in practice over locally tree-like factor models. \prasadtemp{claiming without knowing}

\subsection{Stable Fixed Point Barrier}	\label{sec:stable-barrier}

A natural starting point for BP iteration for a model $\Model$ is given by the following:
\begin{align}
	\minit^{v \to e} & \defeq \text{ prior distribution } \prior_{\btau(v)} \label{eq:v-to-e-upd} \\
	\minit^{e \to v} & \defeq \text{ uniform distribution over support of } \prior_{\btau(v)}	\label{eq:e-to-v-upd}
\end{align}
Conjecturally, this canonical initialization $\minit$ plays a critical role in characterizing the computational complexity of inferring the hidden variables in model $\Model$.

There appear to be three possible cases with regards to this canonical initialization.

\paragraph{Case 1: $\minit$ is not a fixed point} Suppose $\minit$ is not a fixed point for the BP iteration over the model $\Model$, then BP iteration can be expected to make progress, thereby yielding a weak recovery of hidden variables.

In fact, we will present a self-contained algorithm that weakly-recovers the hidden coloring in this case. Formally, we will show the following in \pref{app:easy-cases}:
\begin{lemma}\label{lem:easy-case}
	If $\minit$ is not a fixed point for the BP iteration on model $\Model$, then there is a polynomial time algorithm $\calA$ and an $\epsilon > 0$ such that
	\begin{enumerate}
		\item if $(\bE,\btau)\sim\Model$: $\calA$ outputs a coloring that beats the correlation random guessing achieves with the hidden coloring by $\eps$,
		\item $\calA$ solves the $\Model$ vs.\ $\Model^{\times}$ distinguishing problem with high probability.
	\end{enumerate}
\end{lemma}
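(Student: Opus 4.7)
}
My first step would be to make the hypothesis concrete by plugging $\minit$ into the BP update \eqref{eq:bpupdate2}. Because every variable-to-factor message in $\minit$ equals $\ColDist_{\btau(v)}$, the outgoing message at a type-$i$ factor $e$ with $v$ in position $s$ collapses to a function depending only on the pair $(i,s)$:
\[
	\eta_{i,s}(c) \;\propto\; \sum_{c_{-s}\in\dom^{\arity(i)-1}} \phi_i(c_{-s},c)\prod_{r\neq s}\ColDist_{\Cl(i)_r}(c_r).
\]
The hypothesis that $\minit$ is not a BP fixed point is then equivalent to the existence of some $(i_0,s_0)$ for which $\eta_{i_0,s_0}$ is non-constant on the support of $\ColDist_{\Cl(i_0)_{s_0}}$; fix such a pair.

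I would next exploit the fact that under $\Model$ the rate at which a vertex $v$ participates in type-$i_0$ observations with $v$ at position $s_0$ is, after summing over the $\Theta(n^{\arity(i_0)-1})$ choices of the other positions and taking expectation under their priors, exactly proportional to $\eta_{i_0,s_0}(\bc(v))$. Consequently the $(i_0,s_0)$-degree $D(v)$ of $v$ is asymptotically Poisson with a mean that is a genuinely non-constant function of $\bc(v)$. Since this mean varies by $\Omega(1)$ between at least two colors and $D(v)$ has $O(1)$ variance, a per-vertex threshold on $D(v)$ (and more generally on the vector $T(v)_c \coloneqq \sum_{f\in\partial v}\eta_{\text{type}(f),\text{pos}(v,f)}(c)$, which is a single BP step read in log-space) classifies $\bc(v)$ correctly with probability strictly above the random-guess baseline by some $\epsilon>0$, uniformly in $v$. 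Averaging over the $n$ vertices produces a coloring $\hat\bc$ with the desired correlation.

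For the detection problem I would threshold a quadratic statistic such as $Q = \sum_v (D(v)-\mu_0)^2$, where $\mu_0$ is the common null-model Poisson mean of $D$. Under $\Null$ the $D(v)$'s are exchangeable Poissons with shared mean $\mu_0$, so $Q$ concentrates within $O(\sqrt{n})$ of its mean by a Bernstein-type inequality exploiting the approximate independence of constant-radius neighborhoods in a sparse random hypergraph. Under $\Model$ the law of total variance adds an $\Omega(n)$ contribution to $\E Q$ coming from the $\bc(v)$-dependence of the Poisson mean, which separates the two means by $\omega(\sqrt n)$; a threshold placed between them yields a w.h.p.\ distinguisher.

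The main obstacle I anticipate is the degenerate case in which $\eta_{i_0,s_0}$ is non-constant yet happens to take equal values on two distinct colors $c_1^*\neq c_2^*$ in the support of $\ColDist_{\Cl(i_0)_{s_0}}$: then $D(v)$ alone does not separate those two colors, and the correlation argument for recovery is jeopardized. I would handle this by replacing $D(v)$ with the vector $T(v)$ above and arguing that \emph{some} linear functional of $T(v)$ (explicitly built from $\eta_{i_0,s_0}$) has conditional mean that genuinely distinguishes every pair of colors in the support; a short linear-algebra argument should reduce this to the non-constancy of $\eta_{i_0,s_0}$. Once that is in place, the per-coordinate Poisson concentration and the global Bernstein bound above go through verbatim, giving both items of the lemma.
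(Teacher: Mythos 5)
Your proposal closely tracks the paper's own proof. Both start from the same observation: that $\minit$ failing to be a BP fixed point is equivalent to the function $\eta_{i_0,s_0}(\cdot)$ (the paper writes this as the RHS of the detailed-balance condition \pref{eq:detailed-balance}) being non-constant on the support of $\ColDist_{\Cl(i_0)_{s_0}}$, so that the $(i_0,s_0)$-degree $D(v) = \varfac{v}{i_0,s_0}$ is asymptotically $\Poi(\lambda_{\bc(v)})$ with a color-dependent mean under $\Planted$ and $\Poi(\bar\lambda)$ with the averaged rate $\bar\lambda = \sum_c p_c\lambda_c$ under $\Null$. Both use a second-moment statistic of these degrees for detection and a per-vertex likelihood-ratio threshold on $D(v)$ for recovery. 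The paper's detection statistic $\sum_v D(v)^2$ and your centered $\sum_v(D(v)-\mu_0)^2$ are interchangeable since the first moments of $D(v)$ coincide across the two models; the paper bounds its fluctuation via a direct $O(n)$ variance estimate plus Chebyshev, which is somewhat cleaner than a Bernstein bound since the $D(v)$'s are neither bounded nor independent.

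The one place where you diverge, and introduce an unnecessary worry, is the degenerate case: you aim to build a functional of $T(v)$ that "distinguishes every pair of colors in the support," reducing this to the non-constancy of $\eta_{i_0,s_0}$. That reduction cannot work in general: non-constancy only guarantees that \emph{some} two colors $c \ne c'$ have $\lambda_c \ne \lambda_{c'}$, and it is perfectly possible (for $q\ge 3$) that two other colors remain indistinguishable by any degree statistic. The paper sidesteps this entirely by noting the benchmark only requires correlation with a centered indicator $\bchat^{\tau,\alpha}$ for \emph{some} $(\tau,\alpha)$. Concretely, pick two colors with $d_c > d_{c'}$ and run a two-way likelihood-ratio test per vertex; more than half of each of those two color classes is labeled correctly, which already yields $\langle v, \bchat^{\tau,c}\rangle_{\innerp} = \Omega(\sqrt{n})\,\|v\|$, and the lemma follows. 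If you replace "every pair" with "at least one pair" in your argument, your proof is complete and matches the paper's.
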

In light of the above lemma, it is natural to restrict our attention to the case where $\minit$ is a fixed point for the BP iteration.  $\minit$ being a fixed point of BP is equivalent to a ``detailed balance'' condition holding (in the sense of \pref{eq:detailed-balance}).

\paragraph{Case 2: $\minit$ is an unstable fixed point}

$\minit$ is an {\it unstable fixed point} if arbitrary small perturbations of $\minit$ will lead to the BP iteration moving away from the fixed point $\minit$.
BP is conjectured to succeed in weak-recovery of hidden coloring and distinguishing between $\Model$ vs.\ $\Model^{\times}$ in this case, and this has been extensively demonstrated experimentally \cite{DKMZ,ZMZ14}.

\paragraph{Case 3: $\minit$ is a stable fixed point}

$\minit$ is a {\it stable fixed point} if there exists a neighborhood $U$ around $\minit$ such that for any initialization $\hat{m} \in U$, BP iteration converges to the canonical fixed point $\minit$.
In this case, the canonical fixed point $\minit$ clearly highlights a potential failure of BP algorithm.
A priori, it is conceivable that by using BP with an alternative starting point or an entirely different algorithm,  one could still efficiently infer the hidden coloring in this case.

Surprisingly, it is conjectured that the existence of this canonical fixed point that is stable marks the onset of computational intractability!
Inspired by ideas from statistical physics, Krzakala and Zdeborova \cite{KrzakalaZ09} were the first to hypothesize that the existence of a trivial fixed point that is stable marks computational intractability.
Building on these intuitions, Decelle et. al. \cite{DKMZ} outlined a fascinating set of conjectures on community detection problem which fuelled a flurry of activity, resulting in algorithms matching the conjectured computational thresholds \cite{mossel2018proof,M14,BLM15,AS15}. 

\subsection{Analyzing Stability}
\label{sec:start-stab}
The stability of the canonical fixed point $\minit$ under BP iteration can be analyzed using derivatives of the BP update rule.
Suppose $\Gamma$ denote the map associated with running two rounds of BP iteration to produce the messages, i.e., 
 \[ \{ m^{v \to e}[t+2] \}_{v \in [n], e \ni v} = \Gamma \left( \{ m^{v \to e}[t] \}_{v \in [n], e \ni v} \right) \]
In other words, $\Gamma$ is given by the composition of the functions in \pref{eq:bpupdate1} and \pref{eq:bpupdate2}.
If $\minit$ is a fixed point of BP, then we will have,
\[ \Gamma(\{ \minit^{v \to e}\}) = \{\minit^{v \to e}\}\]

To analyze the stability of the fixed point $\minit$, one uses the linear approximation of $\Gamma$ in a neighborhood of $\minit$, by setting
\[ \Gamma(\minit+ \epsilon) = \minit + B \epsilon\]
where $B$ is the matrix of partial derivatives, i.e., 
\[ B[m^{u \to e},m^{u' \to e'}] = \frac{\partial \Gamma(m)^{u' \to e'}}{\partial m^{u \to e}} \mid_{\minit} \]
With this linear approximation $\Gamma^{\ell}(\minit + \epsilon) \approx \minit + B^{\ell} \epsilon $.  Therefore, the stability of the fixed point is characterized by the spectral radius of the operator $B$.  

Specifically, $\minit$ is a stable fixed point if and only if $\rho(B) \leq 1$ where $\rho(B) \defeq \max_{i} |\lambda_i(B)|$ is the largest magnitude of an eigenvalue of $B$.

Notice that $B$ is an asymmetric random matrix depending on the set of observations $\bE$.  The \emph{cavity method} is a heuristic to guess the spectral radius of a typical derivative matrix $B$ in terms of the spectral radius of some constant sized linear operator $L$. In the rest of the section we first use the cavity method to obtain a precise condition on $\Model$ for $\rho(B) \leq 1$, then state our main theorem that the distinguishing problem and the weak recovery problem are efficiently solvable when $\rho(B) > 1$, and finally define the operator $L$ whose spectral bound $\lambda_L$ satisifies that $\rho(B) = \lambda_L^{1/2}$.

\subsection{The local distributions of $\Model$}	\label{sec:local-dist-model}

Before diving into the calculation, we define a few local distributions of $\Model$ that would be used later.

\paragraph{The color assigment distribution $\ldist_i$} For each factor $\phi_i \in \Phi$, define a local distribution $\ldist_i$ over $[q]^{\arity(i)}$ as,
\begin{align}
	\ldist_i(c_1,\ldots,c_{\arity(i)}) \propto \left( \prod_{j \in [\arity(i)]} \prior_{\Cl(i)_j}(c_j) \right) \cdot \phi_i(\bc)
\end{align}
For each $\phi_i \in \Phi$ and $a,b \in [\arity(i)]$ define a matrix $\Facm_{i, a\mid b} \in \R^{[q] \times [q]}$ by fixing,
\begin{align}\label{eq:bTrans-cond-prob}
	\Facm_{i,a \mid b}(\alpha,\beta) \defeq \Pr_{(c_1,\ldots,c_{\arity(i)}) \sim \mu_i}[ c_a = \alpha | c_b = \beta  ]
\end{align}
This says that conditioned on that $\phi_i$ is in the observations $\bE$, the matrix $\Facm_{i,a \mid b}$ encodes the color distribution of $a$ conditioned on the color of $b$. 
Finally for $\phi_i \in \Phi$ and $a, b \in [\arity(i)]$ we define a matrix that is useful later,
\begin{align}\label{eq:bTrans}
\bTrans_{i,a \mid b} = (\mathbf{I} - \prior_{\Cl(i)_a} \mathbf{1}^T) \Psi_{i,a \mid b}.
\end{align}

\paragraph{The neighbor factor distribution of a variable}
We now take a closer look at a type $\tau$ variable's neighbor factor distribution. Here a variable's neighbor factors refer to all factors that are connected to the variable in the factor graph.

To study this neighborhood distribution, we first define random variables $\varfac{\tau}{i,j}$ for a type $\tau$ variable $v$.

\begin{definition}\label{def:varfac-rv}
For $\tau\in T, \phi_i\in \Phi$, $\varfac{\tau}{i,j}$ is the random variable denoting the number of type $\phi_i$ factors in the neighborhood of the type $\tau$ variable $v$ such that the index of $v$ in all these factor is $j$.
\end{definition}

From the definition, we see that each $\varfac{\tau}{i,j}$ is the sum of many binomial variables each of which indicates whether a specific type $\phi_i$ factor exisits in the factor graph. We formally define these binomial variables.

\begin{definition}\label{def:varfac-rv}
For $\tau\in T, (v_1,\dots,v_{\arity(i)})\in[n]^{\arity(i)}$, $\varfacbin{\tau}{v_1,\dots,v_{\arity(i)}}$ is the indicator variable of whether the type $\phi_i$ factor $e$ whose $j$-th variable is $v_j$ for all $j\in[\arity(i)]$ is in the observations $\bE$. 
\end{definition}

We can compute the probability of $\varfacbin{\tau}{v_1,\dots,v_{\arity(i)}} = 1$ in $\Modeln$. 
\begin{align*}
\Pr_{\Modeln}\left[\varfacbin{\tau}{v_1,\dots,v_{\arity(i)}} = 1\right] 
&= \frac{1}{\TypeDist(\tau)} \cdot \prod_{j=1}^{\arity(i)}\TypeDist(\Cl(i)_j) \cdot \sum_{(c_1,\dots,c_{\arity(i)})\in [\numcols]^{\arity(i)}} \prod_{j=1}^{\arity(i)}\ColDist_{\Cl(i)_j}(c_j) \cdot \frac{\phi_i(c_1,\dots,c_{\arity(i)})}{n^{\arity(i)-1}} \\
&= \frac{1}{\TypeDist(\tau)} \cdot \frac{\ACFacDeg{i}}{n^{\arity(i)-1}} \cdot \prod_{j=1}^{\arity(i)}\TypeDist(\Cl(i)_j) 
\end{align*}
Thus $\varfacbin{\tau}{v_1,\dots,v_{\arity(i)}}$ has distribution $\text{Binomial}\left(\frac{\ACFacDeg{i}}{\TypeDist(\tau)n^{\arity(i)-1}} \cdot \prod_{j=1}^{\arity(i)}\TypeDist(\Cl(i)_j) \right)$.

Now we can express $\varfac{\tau}{i,j}$ as the sum of $n^{\arity(i)-1}$ binomial random variables.
\[\varfac{\tau}{i,j} = \sum_{(v_j)_j\in[n]^{\arity(i)} \mid v_j = v} \varfacbin{\tau}{v_1,\dots,v_{\arity(i)}}.\]

We also note that most of the $\varfacbin{\tau}{v_1,\dots,v_{\arity(i)}}$s are independent. Two random variables $\varfacbin{\tau}{v_1,\dots,v_{\arity(i)}}$ and $\varfacbin{\tau}{v'_1,\dots,v'_{\arity(i)}}$ are not independent only if there exist $j,j'\in[\arity(i)]$ such that $v_j = v'_{j'}$ but $\Cl(i)_j \neq \Cl(i)_{j'}$. That is the two factors share some variable but require the variable to have different types. However, only $O\left(n^{-(\arity(i)-1)}\right)$ fraction of the pairs are correlated. Thus, when $n$ is large we can treat the $n^{\arity(i)-1}$ random variables as being independent. Then each $\varfac{\tau}{i,j}$ has a Poisson distribution.

\begin{claim}\label{clm:varfac-distribution}
$\varfac{\tau}{i,j} \sim \text{Poisson}\left(\frac{\ACFacDeg{i}}{\TypeDist(\tau)} \cdot \prod_{j=1}^{\arity(i)}\TypeDist(\Cl(i)_j)\right)$.
\end{claim}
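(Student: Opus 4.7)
The plan is to establish Claim \ref{clm:varfac-distribution} as an asymptotic statement via the method of factorial moments. For each fixed $k\ge1$ let $M_k\defeq\E\bigl[\varfac{\tau}{i,j}\,\bigl(\varfac{\tau}{i,j}-1\bigr)\cdots\bigl(\varfac{\tau}{i,j}-k+1\bigr)\bigr]$ and let $\lambda\defeq\tfrac{\ACFacDeg{i}}{\TypeDist(\tau)}\prod_{m=1}^{\arity(i)}\TypeDist(\Cl(i)_m)$. Since the Poisson$(\lambda)$ distribution is uniquely determined by the identities $\E[Y(Y-1)\cdots(Y-k+1)]=\lambda^k$ for all $k\ge1$, showing $M_k\to\lambda^k$ as $n\to\infty$ for every fixed $k$ suffices to give convergence in distribution of $\varfac{\tau}{i,j}$ to $\Poi(\lambda)$, which is how I read the claim. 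Note that the class-function convention of \pref{sec:notations} forces $\varfac{\tau}{i,j}\equiv0$ whenever $\tau\ne\Cl(i)_j$, so I work under the assumption $\tau=\Cl(i)_j$, the only case in which the stated parameter is nontrivial.

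Expanding,
\[
	M_k=\sum_{(e^{(1)},\ldots,e^{(k)})}\Pr\bigl[\varfacbin{\tau}{e^{(1)}}=\cdots=\varfacbin{\tau}{e^{(k)}}=1\bigr],
\]
where the sum is over ordered $k$-tuples of \emph{distinct} $\arity(i)$-tuples $e^{(\ell)}=(v_1^{(\ell)},\ldots,v_{\arity(i)}^{(\ell)})$ with $v_j^{(\ell)}=v$. I split the sum into two contributions: a \emph{collision-free} part where the $k(\arity(i)-1)$ variables $\{v_s^{(\ell)}:s\ne j,\,\ell\in[k]\}$ are pairwise distinct and distinct from $v$, and a \emph{colliding} part where some variable is shared between two different $e^{(\ell)}$'s. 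For the collision-free part, the type-matching events attached to different $e^{(\ell)}$'s depend on disjoint sets of vertex-type random variables and hence are jointly independent, and the hyperedge-inclusion coins are independent across distinct hyperedges, so the joint indicator probability factorizes into a product of the marginal probabilities derived just above the claim, each equal to $\tfrac{\ACFacDeg{i}}{n^{\arity(i)-1}}\prod_{m\ne j}\TypeDist(\Cl(i)_m)$. The number of ordered collision-free $k$-tuples is $(n-1)(n-2)\cdots(n-k(\arity(i)-1))=(1-o(1))\,n^{k(\arity(i)-1)}$, so the collision-free contribution is $(1-o(1))\lambda^k$.

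For the colliding part, every shared variable identifies two non-$j$ coordinates across different $e^{(\ell)}$'s, cutting the count of ordered $k$-tuples by $\Theta(n)$ to $O(n^{k(\arity(i)-1)-1})$; the joint inclusion probability is still bounded by the product of the per-hyperedge marginals $O(n^{-(\arity(i)-1)})$ up to a multiplicative $O(1)$ factor absorbing the coupling through the shared variable's type and color. This bucket therefore contributes $O(n^{-1})\cdot\lambda^k=o(1)$, so $M_k=\lambda^k+o(1)$, as required. The main obstacle is precisely the bounding step for colliding tuples: a shared variable $u$ couples the hyperedges containing $u$ both through $\btau(u)\sim\TypeDist$ and through $\bc(u)\sim\ColDist_{\btau(u)}$, so the joint probability does not factor. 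What makes the argument go through is that these correlations can at worst multiply the joint probability by a bounded constant, whereas each collision is penalized by a full factor of $n$ in the combinatorial count, ensuring that colliding configurations are of strictly lower order than the Poisson-producing collision-free leading term.
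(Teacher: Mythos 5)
Your proof is correct but takes a genuinely different route from the paper. The paper gives only a heuristic argument: it computes the per-hyperedge inclusion probability $\lambda/n^{\arity(i)-1}$, asserts the indicators are ``mostly independent'' (claiming dependence arises only when a shared vertex would need two conflicting types), and concludes by invoking the Poisson approximation for a sum of rare, approximately-independent Bernoullis. You instead establish weak convergence to $\Poi(\lambda)$ via the method of factorial moments, splitting the $k$-th factorial moment into a collision-free part (whose count and joint probability factor exactly to give $(1-o(1))\lambda^k$) and a colliding part that is $O(n^{-1})$. This is both more rigorous and, in one respect, more careful than the paper: you correctly note that a shared vertex $u$ couples two hyperedge indicators through both $\btau(u)$ \emph{and} $\bc(u)$, whereas the paper asserts dependence arises only through type conflicts---this is an oversight, since two hyperedges agreeing on $u$'s type are still coupled through $u$'s color via $\phi_i$; the asymptotic conclusion is unaffected in either framing because colliding pairs form an $O(n^{-1})$ fraction of all pairs. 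One small cleanup to your final step: rather than asserting that the coupling through a shared vertex multiplies the joint probability by a bounded constant, it is cleanest to condition on the full type-and-color vector $(\btau,\bc)$, under which the hyperedge indicators become independent coins each of probability at most $\alpha/n^{\arity(i)-1}$ for some constant $\alpha$ (by boundedness of $\phi_i$); the bound on the colliding contribution then follows immediately without having to quantify the correlations.
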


For similar reason as above, when $n$ is large we can treat the random variables $\{\varfac{\tau}{i,j}\}_{i\in [F],j\in[\arity(i)]}$ of a variable $v$ as independent. Therefore for large $n$, the neighbor factor distribution of a type $\tau$ variable $v$ is very close to the product distribution of the random variables $\{\varfac{\tau}{i,j}\}_{i\in F,j\in[\arity(i)]}$.

\subsection{The stability condition}
\label{sec:neighorhood-of-variable}

Now we continue to explore the condition on $\Model$ that makes $\minit$ stable. We focus on sparse models whose average factor degrees $\ACFacDeg{i} = O(1)$ for all $\phi_i\in \Phi$. In such models, a variable node is contained in constant number of factors, and its $o(\log{n})$-neighborhood is locally tree-like with high probability. Set the tree depth $\ell$ be a function such that $\ell(n)\in o(\log{n})$, and consider the distance $(2\ell+1)$ neighborhood of a variable $v_0$. Assume each level-$(2\ell+1)$ factor node $e_\ell$'s outgoing message to some level-$2\ell$ variable $v_{\ell}$ is perturbed to $m^{e_\ell\to v_\ell} = \minit^{e_\ell\to v_\ell} + \epsilon_{e_{\ell}}$. Recall that $\minit^{e_\ell\to v_\ell}$ is the trivial fixed point message, and $\epsilon_{e_{\ell}}$ is the random perturbation that is independent across different edges $e_{\ell}\to v_{\ell}$. We want to compute the expected influence of the perturbations on the messages to the root $v_0$. 

We first consider the distance $2\ell+1$ neighborhood of a variable $v_0$. The treelike neighborhood can be constructed by the following process. 
\begin{enumerate}
\item Sample the type of the root variable $\btau(v_0)$ from $\TypeDist$.
\item Sample the level-$1$ factors: for each type of factor $\phi_i\in \Phi$ sample the number of type $\phi_i$ neighbor factors of $v_0$ by sampling $\{\varfac{\btau(v_0)}{i,j}\}_{i\in F,j\in[\arity(i)]}$ independently. Add $v_0$'s neighbor factors to level $1$. Add the other variables in these factors to the next level of the tree, assuming that there is no shared variables other then $v_0$. Note that these variables already have types.
\item Repeat step 2 for the new variables until we get a depth-$(2\ell+1)$ tree $\Tree_{\ell}$.
\end{enumerate}

We next use the tree $\Tree_{\ell}$ to give a precise condition on $\Model$ for the fixed point $\minit$ to be stable. 

In a $\Tree_{\ell}$, a leaf node $e_{\ell}$ is connected to the root node $v_0$ via a path $e_{\ell},v_{\ell},e_{\ell-1},\dots v_1,e_0,v_0$. A perturbation on the leaf message $m^{e_{\ell}\to v_{\ell}}$ influence the next level message $m^{e_{\ell-1}\to v_{\ell-1}}$ via the partial deriviative matrix $\frac{\partial \Gamma(m)^{e_{\ell-1}\to v_{\ell-1}}}{\partial m^{e_{\ell}\to v_{\ell}}}$.
%
%
We can express the partial deriviative matrix evaluated at the fixed point using the matrix defined in \pref{eq:bTrans}.

\begin{claim}[\pref{clm:transitional-matrix}]
\[\frac{\partial \Gamma(m)^{e_{\ell-1}\to v_{\ell-1}}}{\partial m^{e_{\ell}\to v_{\ell}}}\mid_{\minit} \,= \bTrans_{\FacType(e_{\ell-1}),\index_{e_{\ell-1}}(v_{\ell-1})\mid \index_{e_{\ell-1}}(v_{\ell})} .\]
\end{claim}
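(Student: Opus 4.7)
The map sending $m^{e_\ell\to v_\ell}$ to $m^{e_{\ell-1}\to v_{\ell-1}}$ is a composition of two local BP updates: first the variable-node update at $v_\ell$ producing $m^{v_\ell\to e_{\ell-1}} = \Upsilon_{v_\ell\to e_{\ell-1}}(\{m^{f\to v_\ell}\}_{f\in\partial v_\ell\setminus e_{\ell-1}})$, and then the factor-node update $m^{e_{\ell-1}\to v_{\ell-1}} = \Upsilon_{e_{\ell-1}\to v_{\ell-1}}(\{m^{u\to e_{\ell-1}}\}_{u\in\partial e_{\ell-1}\setminus v_{\ell-1}})$. Since $e_\ell$ feeds into the second update only through the single channel $v_\ell$, the chain rule yields
\[
\frac{\partial \Gamma(m)^{e_{\ell-1}\to v_{\ell-1}}}{\partial m^{e_\ell\to v_\ell}}\bigg|_{\minit} \;=\; \frac{\partial \Upsilon_{e_{\ell-1}\to v_{\ell-1}}}{\partial m^{v_\ell\to e_{\ell-1}}}\bigg|_{\minit}\cdot\frac{\partial \Upsilon_{v_\ell\to e_{\ell-1}}}{\partial m^{e_\ell\to v_\ell}}\bigg|_{\minit},
\]
so the task reduces to computing these two local Jacobians at the trivial fixed point and verifying the product matches the definition $\bTrans_{\FacType,a\mid b}=(\mathbf{I}-\LabDist\mathbf{1}^{T})\Facm_{\FacType,a\mid b}$ from \pref{eq:bTrans}.

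For the variable-node piece, the update has the form $m^{v_\ell\to e_{\ell-1}}(c)\propto \LabDist_{\btau(v_\ell)}(c)\prod_{f\in\partial v_\ell\setminus e_{\ell-1}}m^{f\to v_\ell}(c)$, and at $\minit$ every incoming factor-to-variable message is uniform on the support of $\LabDist_{\btau(v_\ell)}$, so the unnormalized product equals $\LabDist_{\btau(v_\ell)}(c)$ up to a constant. Differentiating in $m^{e_\ell\to v_\ell}$ and then differentiating the normalizer produces two terms: the direct term, which is diagonal, and the correction from the normalizer, which is exactly the rank-one operator $\LabDist_{\btau(v_\ell)}\mathbf{1}^T$ (projection onto the mean under $\LabDist$). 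After normalizing, their difference is precisely $\mathbf{I}-\LabDist_{\btau(v_\ell)}\mathbf{1}^T$. This rank-one correction encodes the fact that a perturbation in any outgoing BP message is constrained to lie in the zero-sum subspace.

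For the factor-node piece, write out the $\Upsilon_{e_{\ell-1}\to v_{\ell-1}}$ update by marginalizing $\phi_{\FacType(e_{\ell-1})}$ against the incoming variable-to-factor messages $\{m^{u\to e_{\ell-1}}\}$; at $\minit$ every such message equals the prior $\LabDist_{\btau(u)}$. Differentiating in $m^{v_\ell\to e_{\ell-1}}$ and extracting the $(\alpha,\beta)$ entry gives a sum of $\phi_{\FacType(e_{\ell-1})}$ against the product of priors with the values of $v_{\ell-1},v_\ell$ pinned to $\alpha,\beta$, i.e.\ the joint $\ldist_{\FacType(e_{\ell-1})}(c_{\index(v_{\ell-1})}=\alpha,\,c_{\index(v_\ell)}=\beta)$ up to the overall factor density $\ACFacDeg{\FacType(e_{\ell-1})}$; the normalizer at $\minit$ produces exactly the marginal $\LabDist$ in $\beta$, converting the joint into the conditional $\Facm_{\FacType(e_{\ell-1}),\index(v_{\ell-1})\mid\index(v_\ell)}$ per \pref{eq:bTrans-cond-prob}. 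Multiplying the two Jacobians gives $(\mathbf{I}-\LabDist\mathbf{1}^T)\Facm_{\FacType(e_{\ell-1}),\index(v_{\ell-1})\mid\index(v_\ell)}=\bTrans_{\FacType(e_{\ell-1}),\index_{e_{\ell-1}}(v_{\ell-1})\mid\index_{e_{\ell-1}}(v_\ell)}$, as claimed. The main obstacle is bookkeeping the normalizations in both updates; the simplification is that $\minit$ satisfies detailed balance, so at the fixed point the normalizers contribute only constants whose derivative is the rank-one subtraction.
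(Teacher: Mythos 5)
The overall plan (chain rule across the two local BP updates) is right, but the individual Jacobians you write down are not what those updates actually produce, and the discrepancy is not just bookkeeping.

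First, the variable-node Jacobian. Differentiating $\Upsilon_{v_\ell\to e_{\ell-1}}$ at $\minit$ does \emph{not} give $\mathbf{I}-\ColDist_{\btau(v_\ell)}\mathbf{1}^T$. Carrying out the differentiation (as in \pref{app:calculus}) gives
\[
\support(\ColDist_{\btau(v_\ell)})\cdot\bigl(\Dcolor_{\btau(v_\ell)}-\ColDist_{\btau(v_\ell)}\ColDist_{\btau(v_\ell)}^{\top}\bigr)
\;=\;\support(\ColDist_{\btau(v_\ell)})\cdot\bigl(\mathbf{I}-\ColDist_{\btau(v_\ell)}\mathbf{1}^{\top}\bigr)\Dcolor_{\btau(v_\ell)},
\]
so there is an essential diagonal factor $\Dcolor_{\btau(v_\ell)}$ (and a scalar $\support(\ColDist_{\btau(v_\ell)})$). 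The diagonal factor comes from the direct term: at $\minit$ the incoming messages $m^{f\to v_\ell}$ are uniform, so the derivative of the unnormalized product with respect to one of them picks up $\ColDist_{\btau(v_\ell)}(c)\cdot\support(\ColDist_{\btau(v_\ell)})$ on the diagonal, not the identity. Similarly, the factor-node Jacobian at $e_{\ell-1}$ is not the plain transition matrix $\Facm_{\FacType(e_{\ell-1}),\index(v_{\ell-1})\mid\index(v_\ell)}$; what actually appears is $\tfrac{1}{\support}(\mathbf{I}-\tfrac{1}{\support}\mathbf{1}\mathbf{1}^{\top})\Facm_{\FacType(e_{\ell-1}),\index(v_\ell)\mid\index(v_{\ell-1})}^{\top}\Dcolor_{\btau(v_\ell)}^{\dagger}$, involving the \emph{transposed} conditional (conditioning the other way) divided by the $\Dcolor^{\dagger}$ marginal, plus the factor update's own normalizer correction.

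Second, even granting your two claimed Jacobians, the chain rule forces the order to be (factor Jacobian at $e_{\ell-1}$)$\cdot$(variable Jacobian at $v_\ell$), which in your notation would be $\Facm\cdot(\mathbf{I}-\ColDist\mathbf{1}^{\top})$, not $(\mathbf{I}-\ColDist\mathbf{1}^{\top})\cdot\Facm$; these are not equal. This points at a conceptual misattribution: the rank-one subtraction $\mathbf{I}-\ColDist_a\mathbf{1}^{\top}$ in $\bTrans_{i,a\mid b}$ acts on the color space of position $a=\index(v_{\ell-1})$ (the \emph{output}), whereas the variable-node update at $v_\ell$ lives on the color space of $v_\ell$ (the \emph{input}). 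So the projector on the left of $\Facm$ cannot be produced by the variable update at $v_\ell$ alone; it emerges only after the diagonal $\Dcolor$'s are pushed through $\Facm^{\top}\Dcolor^{\dagger}$ via Bayes' rule (using detailed balance so that the $\mu_i$-marginals equal the priors), and after the two normalizer corrections combine using $\mathbf{1}^{\top}\bigl(\mathbf{I}-\tfrac{1}{\support}\mathbf{1}\mathbf{1}^{\top}\bigr)=0$ on the support. These cancellations are exactly the non-trivial content of \pref{app:calculus}, and your proof skips them entirely, reaching the stated answer only by matching the definition of $\bTrans$ rather than deriving it.
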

This claim is proved in \pref{app:calculus}. When writing the matrix $\bTrans_{\FacType(e),\index_{e}(v)\mid \index_{e}(v')}$, it's clear that the index function is associated with the factor $e$, so we drop the subscript $e$ in $\index_{e}$ for simplicity.

Using the chain rule, we can compose the partial deriviative matrices along a path, and conclude that each path influences the root message by 

\[\left(\prod_{j = 1}^{\ell} \bTrans_{\FacType(e_{j-1}),\index(v_{j-1})\mid \index(v_{j})}\right)\epsilon_{e_{\ell}} .\]

Thus the influence of all paths in $\Tree_{\ell}$ is
\[\sum_{(e_{\ell},v_{\ell},\dots,e_0,v_0)\in \Tree_{\ell}} \left(\prod_{j = 1}^{\ell} \bTrans_{\FacType(e_{j-1}),\index(v_{j-1})\mid \index(v_{j})}\right)\epsilon_{e_{\ell}} \]

To decide if the fixed point is stable, we compute the variance $\alpha_\ell$ of this influence.

\begin{align*}
 &\E_{\Tree_{\ell},\epsilon}\left[\norm*{\sum_{(e_{\ell},v_{\ell},\dots,e_0,v_0)\in \Tree_{\ell}} \left(\prod_{j = 1}^{\ell} \bTrans_{\FacType(e_{j-1}),\index(v_{j-1})\mid \index(v_{j})}\right)\epsilon_{e_{\ell}}}^2 \right] \\
=& \E_{\Tree_{\ell},\epsilon}\left[ \sum_{(e_{\ell},v_{\ell},\dots,e_0,v_0)\in \Tree_{\ell}} \epsilon_{e_{\ell}}^{\top}\left(\prod_{j = 1}^{\ell} \bTrans_{\FacType(e_{j-1}),\index(v_{j-1})\mid \index(v_{j})}\right)^{*} \left(\prod_{j = 1}^{\ell} \bTrans_{\FacType(e_{j-1}),\index(v_{j-1})\mid \index(v_{j})}\right)\epsilon_{e_{\ell}} \right] \\
=& \E_{\Tree_{\ell}}\left[ \sum_{(e_{\ell},v_{\ell},\dots,e_0,v_0)\in \Tree_{\ell}} \Tr \left ( \left(\prod_{j = 1}^{\ell} \bTrans_{\FacType(e_{j-1}),\index(v_{j-1})\mid \index(v_{j})}\right) \left(\prod_{j = 1}^{\ell} \bTrans_{\FacType(e_{j-1}),\index(v_{j-1})\mid \index(v_{j})}\right)^{*} \right) \right]\cdot\frac{\E_{\epsilon_{e_{\ell}}}[\norm{\epsilon_{e_{\ell}}}^2]}{\numcols}  
\end{align*}

So the squared norm of the perturbations $\epsilon_{e_{\ell}}$ is amplified by 

\[\alpha_\ell \defeq \E_{\Tree_{\ell}}\left[ \sum_{(e_{\ell},v_{\ell},\dots,e_0,v_0)\in \Tree_{\ell}} \Tr \left ( \left(\prod_{j = 1}^{\ell} \bTrans_{\FacType(e_{j-1}),\index(v_{j-1})\mid \index(v_{j})}\right) \left(\prod_{j = 1}^{\ell} \bTrans_{\FacType(e_{j-1}),\index(v_{j-1})\mid \index(v_{j})}\right)^{*} \right) \right] .\]

We note that for a model $\Model$, by definition of $\alpha_\ell$ and the operator $B$ in \pref{sec:start-stab}, $\rho(B) = \lim_{n\to\infty} \alpha_{\ell}^{1/2\ell}$ (recall that $\ell$ is a function of $n$).
Thus when $\lim_{n\to\infty} \alpha_{\ell}^{1/2\ell} \leq 1$, $\minit$ is a stable fixed point.

\subsection{Efficient recovery and detection when the fixed point is unstable}
\label{sec:dist-hard-case}
For a model $\Model$ whose fixed point $\minit$ is unstable, it is conjectured that the BP algorithm can successfully weak-recover the hidden coloring. We provide a BP-inspired spectral algorithm that solves the weak recovery problem in this regime.  We state the result somewhat informally below; the full formal statement can be found in \pref{thm:recovery-full}.  However, before we state the result we go on a small digression on how to set the benchmark for weak recovery.  A first attempt might be:
\begin{displayquote}
	For a fixed  type $\tau\in T$ and color $c\in[q]$,
	produce a vector $\overline{u}\in\R^{n}$ such that $w$ correlates with the following ``centered indicator vector'' of $(\tau,c)$: $\bchat^{\tau,c}$ where $\bchat^{\tau,\alpha}$ is an $n$-dimensional vector with $i$-th coordinate $\Ind[\tau(i) = \tau]\cdot(\Ind[\bc(i)=c]-\prior_{\tau}(c))$.
\end{displayquote}
However, this benchmark is unattainable since for a problem such as {\sc Planted-$q$-Coloring} there is no way to statistically discriminate between a given coloring and a different coloring obtained by permuting the names of the colors.  Thus, to account for this complication we consider the following modification of the above benchmark, which we first state in words.
\begin{displayquote}
	Produce a vector $\overline{u}\in\R^{n}$ such that after some permutation is applied to the names of the colors, for some type $\tau\in T$ and color $c\in [q]$, $\overline{u}$ correlates with the centered indicator vector of $(\tau,c)$.
\end{displayquote}
More formally:
\begin{theorem}\label{thm:informal-weak-recovery}
	If a model $\Model$ has $\lim_{\ell\to\infty} \alpha_{\ell}^{1/2\ell} > 1$, there is a spectral algorithm $\alg$ that solves the weak-recovery problem.  A bit more concretely, for $\bG\sim\Modeln$, the algorithm $\alg(\bG)$ produces $O_{\Model}(1)$ vectors $\{\ol{u}_1,\ldots,\ol{u}_{r}\}$ such that one of these vectors $\ol{u}_j$ has constant correlation with the planted coloring in the following sense:
	\begin{displayquote}
		There is a type $\tau\in T$, and a color $\alpha\in [q]$ such that if we construct $\bchat^{\tau,\alpha} \in \R^n$ as
		\[
			\bchat^{\tau,\alpha} [i] = \Ind[\tau(i) = \tau] (\Ind[\bc(i) = \alpha] - \prior_{\tau}(\alpha) )
		\]
		then
		\[
			\iprod{ \overline{u}_j, \bchat^{\tau,\alpha}} \geq \Omega_{\Model}(1) \cdot \sqrt{n}.
		\]
	\end{displayquote}

\end{theorem}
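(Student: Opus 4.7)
The plan is to run a spectral algorithm on a power of the non-backtracking operator $A^{(s)}$ introduced in the technical overview, for $s = \Theta(\sqrt{\log n})$. The algorithm computes the top $O_\Model(1)$ eigenvectors of $A^{(s)}$ (and possibly of $A^{(s')}$ for a constant range of scales), projects each from the bipartite non-backtracking space down to $\R^n$, and restricts to each vertex type $\tau$; the resulting collection of $O_\Model(1)$ vectors forms the candidate set $\{\overline{u}_1,\ldots,\overline{u}_r\}$. The analysis has three components: a lower bound on the planted quadratic form, an upper bound on the operator norm away from the signal, and an extraction step turning a spectral gap into correlation with $\bchat^{\tau,\alpha}$.

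For the lower bound, I would construct signal tensors $\by^{(\gamma)} = \sum_{\tau,\alpha}\gamma_{\tau,\alpha}\cdot\bchat^{\tau,\alpha}\otimes g^{\tau,\alpha}$ indexed by the top eigenvectors $\gamma$ of the constant-size matrix $L$ from Section~\ref{sec:from-alpha-to-lambda}. Generalizing Lemma~\ref{lem:corr}, I would expand $\langle \by^{(\gamma)}, A^{(s)}\by^{(\gamma)}\rangle$ as a sum over non-backtracking walks of length $s$; each walk's expected contribution is a product of $\bTrans$-matrices whose trace, averaged over the Galton--Watson neighborhoods of Section~\ref{sec:neighorhood-of-variable}, equals $\alpha_s = \Theta(\rho(L)^s)$. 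A concentration argument over the random observations then yields $\langle \by^{(\gamma)},A^{(s)}\by^{(\gamma)}\rangle \geq \Omega(\rho(L)^s\cdot\|\by^{(\gamma)}\|^2)$ with high probability. For the upper bound, I would invoke the matrix concentration bound from Section~\ref{sec:eig-bounds}, which gives $\|A^{(s)}\|\leq ((1+o(1))\sqrt{\rho(L)})^s$ in the null model; by a contiguity and rank-$O(1)$ deflation argument, the same bound transfers to the restriction of $A^{(s)}$ to the orthogonal complement of the span of the $\by^{(\gamma)}$'s in the planted model.

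Combining these, the signal scale $\rho(L)^s$ strictly dominates the bulk scale $\rho(L)^{s/2}$ by a factor that is polynomial in $n$ once $s=\Theta(\sqrt{\log n})$, which produces a spectral gap; hence the top $r = O_\Model(1)$ eigenvectors of $A^{(s)}$ must span the $\by^{(\gamma)}$'s up to $o(1)$ error. Projecting each such eigenvector to $\R^n$ and restricting to type $\tau$ yields an $O_\Model(1)$-sized family of candidates, and by pigeonhole over the decomposition of the signal subspace into $(\tau,\alpha)$-directions, at least one $\overline{u}_j$ satisfies $\langle \overline{u}_j,\bchat^{\tau,\alpha}\rangle \geq \Omega_\Model(1)\cdot\sqrt{n}$ for some $(\tau,\alpha)$. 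The main obstacle, as flagged in the technical overview, is the converse of Lemma~\ref{lem:corr}: an arbitrary large eigenvector of $A^{(s)}$ need not correlate with any $\by^{(\gamma)}$, since spurious localized directions (around atypically high-degree vertices or short cycles) can inflate the spectrum. The standard remedies---pruning vertices of atypical degree before forming $A^{(s)}$, and applying the noise norm bound to $A^{(s)}$ restricted to the orthogonal complement of the candidate top subspace---should suffice, with the freedom to vary $s$ further disambiguating true signal (growing as $\rho(L)^s$) from spurious contributions.
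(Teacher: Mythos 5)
Your proposal captures the correct broad spirit (spectral algorithm on non-backtracking powers, planted quadratic-form lower bound, noise operator-norm upper bound), but the extraction step you flag as the main obstacle is exactly where it fails, and the remedies you offer do not close the gap. Your ``contiguity and rank-$O(1)$ deflation'' argument breaks down twice. First, in the regime $\lambda_L > 1$ the null and planted distributions are provably \emph{not} contiguous (the distinguishing algorithm separates them with probability $1-o(1)$), so the null-model operator-norm bound does not transfer to the planted model for free. Second, $\CentAdj_{\bG}^{(s)}$ in the planted model is not a rank-$O(1)$ perturbation of a null-like matrix: writing $\CentAdj_{\bG} = \CentBdj_{\bG} + \CentRdj_{\bG}$ with $\CentRdj_{\bG}$ the low-rank signal term, the difference $\CentAdj_{\bG}^{(s)} - \CentBdj_{\bG}^{(s)}$ telescopes into a sum of $s$ terms of the form $\CentBdj_{\bG}^{(j)}\nb\CentRdj_{\bG}\nb\CentAdj_{\bG}^{(s-j-1)}$; the rank of this perturbation grows with $s$, and the high powers of $\CentAdj_{\bG}$ that appear still carry the signal, so one cannot isolate a constant-dimensional signal subspace and invoke Davis--Kahan. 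The proposed pruning and orthogonal-complement-projection remedies do not help; the latter is circular since knowing where to project already presupposes having recovered the signal subspace.

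The paper's actual argument (\pref{sec:recovery}) avoids proving a spectral gap. It bounds each summand of the above telescoping decomposition using the planted-model behavior of $\CentBdj_{\bG}$, the entrywise smallness of $\CentRdj_{\bG}$, and \pref{lem:normstatistic}, arriving at the correlation inequality \pref{eq:correq} that lower-bounds a weighted sum of $\left|\left\langle\bfe_{\beta'}\otimes\bchat^{\tau',\alpha'},\CentAdj_{\bG}^{(s-1)}v_t\right\rangle\right|$. Converting this into a guarantee on the output vectors requires knowing that $\|\CentAdj_{\bG}^{(s)}v_{\bfm}\|$ stays controlled for all $s \leq \bfm$, which is not automatic since $v_{\bfm}$ is the top eigenvector of $\CentAdj_{\bG}^{(\bfm)}$ alone and not of the lower powers. \pref{claim:whatever} supplies exactly this control via a descent/bootstrap over the range $[(\log\log n)^{4.5},(\log\log n)^5]$: if the required bound fails at some $s < \bfm$, then $\Lambda_s$ strictly exceeds $\Lambda_{\bfm}$ by a multiplicative factor, and iterating this descent would force $\|\CentAdj_{\bG}^{(t_r)}\|^{1/t_r}$ to become super-constant, contradicting the max-degree bound. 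This descent mechanism is the concrete realization of your vague suggestion to ``vary $s$ to disambiguate true from spurious signal''; without a replacement for it, the proposal has a genuine gap.
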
 

For a model $\Model$ with $\minit$ as an unstable fixed point, it is conjectured that the BP algorithm can successfully distinguish it from the null model $\Null$. We provide a BP-inspired spectral algorithm that solves the detection problem in this regime.  We state the result below but leave the proof sketch to the technical overview section.

\begin{theorem}\label{thm:informal-distinguishing}
If a planted model $\Model$ has $\lim_{\ell\to\infty} \alpha_{\ell}^{1/2\ell} > 1$, there is a spectral algorithm $\alg : \text{factor graphs} \to \{\scp, \scn\}$ that solves the detection problem in the following sense 
\[\Pr_{\Modeln}[\alg(\text{factor graph}) = \scp] = 1 - o_n(1) \quad \text{and} \quad \Pr_{\Null_{n}}[\alg(\text{factor graph}) = \scn] = 1 - o_n(1).\]
\end{theorem}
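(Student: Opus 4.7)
The plan is to instantiate, in the general model, the spectral template laid out for NAE3SAT in the Technical Overview. Set $s = \lfloor\sqrt{\log n}\rfloor$ and pick a threshold $\kappa$ strictly between $\sqrt{\rho(L)}$ and $\rho(L)$; such a $\kappa$ exists because the hypothesis $\lim_{\ell\to\infty}\alpha_\ell^{1/2\ell} > 1$ implies $\rho(L) > 1$ (via the identification $\rho(B) = \lambda_L^{1/2}$ recorded in \pref{sec:start-stab}), and hence $\rho(L) > \sqrt{\rho(L)} > 1$. The algorithm $\alg$ forms the $s$-th nonbacktracking power matrix $A^{(s)}$, whose blocks aggregate products of the transition matrices $\bTrans_{i,a\mid b}$ defined in \pref{eq:bTrans} along length-$s$ nonbacktracking walks in the bipartite factor graph, and outputs $\scp$ if $\|A^{(s)}\| \ge \kappa^s$ and $\scn$ otherwise.

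For correctness on planted inputs I would invoke the general analogue of \pref{lem:corr} established in \pref{sec:statistics}: with probability $1-o_n(1)$ over $\bcalI\sim\Modeln$, there is an explicit test vector $\by$ built from the hidden coloring $\bc$ and the priors $\{\prior_\tau\}$ such that $\langle \by, A^{(s)}\by\rangle / \|\by\|^2 \ge \Omega(\rho(L)^s)$. Since the operator norm upper-bounds the Rayleigh quotient, $\|A^{(s)}\| \ge \Omega(\rho(L)^s)$, and because $\rho(L)/\kappa > 1$ and $s\to\infty$ as $n\to\infty$, this quantity eventually dominates $\kappa^s$, so the algorithm outputs $\scp$ with probability $1-o_n(1)$.

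For correctness on null inputs I would apply the spectral norm bound proved in \pref{sec:eig-bounds}, the general version of the second lemma in the overview: with probability $1-o_n(1)$ over $\bcalI\sim\Null_n$, $\|A^{(s)}\| \le ((1+o(1))\sqrt{\rho(L)})^s$. Since $\sqrt{\rho(L)} < \kappa$, the right-hand side is $o(\kappa^s)$, so $\|A^{(s)}\| < \kappa^s$ with high probability and the algorithm outputs $\scn$. A union bound over the two failure events concludes the proof.

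The hard step is the null-side spectral norm bound. The block entries of $A^{(s)}$ are noncommuting matrix products along nonbacktracking walks, so the scalar commutativity arguments of \cite{BLM15,stephan2020non} do not carry over; the regularity exploited by \cite{BC19} is also unavailable because the Poisson degree distribution and heterogeneous factor types destroy vertex-transitivity, and hyperedges of arity larger than two introduce mild correlations between the matrix weights sharing a common factor that must be controlled separately. My plan is a trace-method computation that groups nonbacktracking walks by combinatorial shape, bounds the expected trace contribution of each shape class by a careful accounting of free versus constrained vertex and edge labels (the contribution of the ``backbone'' shape matching $\rho(L)^s$ by construction of $L$), and then applies the tangle-free truncation trick of \cite{BLM15,BC19} to convert the trace estimate of order $\rho(L)^s$ into an operator-norm bound of order $(\sqrt{\rho(L)})^s$.
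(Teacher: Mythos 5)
Your high-level strategy---threshold the operator norm of a power of a nonbacktracking-walk matrix, lower-bound it in the planted model via a Rayleigh quotient against a test vector built from $\bc$, and upper-bound it in the null model via a shape-decomposed trace method with a bicycle-free truncation---is exactly the paper's approach in \pref{sec:spectral}, \pref{sec:statistics}, and \pref{sec:eig-bounds}, and your use of \pref{lem:from-alpha-to-lambda} to convert the hypothesis on $\alpha_\ell$ into $\lambda_L>1$ and thereby locate a valid $\kappa$ is also correct.

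There is, however, a gap in the choice of matrix. You describe $A^{(s)}$ as aggregating products of $\bTrans_{i,a\mid b}$ along nonbacktracking walks \emph{in the bipartite factor graph} of the observed instance, i.e.\ the uncentered nonbacktracking power from the NAE3SAT warm-up. The paper's general-model algorithm instead uses the \emph{centered} nonbacktracking power $\CentAdj_{\bG}^{(s)}$ of \pref{def:centered-nb-power}, which sums over walks in the \emph{complete} factor graph $\CompG_n$ with each factor contributing the mean-zero term $\Ind[e\in\bG]-\Pr_{\Null}[e\in\bG]$. This centering is load-bearing: the entire null-side trace analysis of \pref{thm:main-spectral-bound} hinges on singleton right vertices contributing mean-zero factors (cf.\ \pref{lem:singleton-small-exp} and the decomposition around \pref{eq:scalar-weight}), and without it $\E[A^{(s)}]$ over the null model is a dense structured matrix whose eigenvalues do not stay below $\kappa^s$. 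The NAE3SAT example secretly sidesteps this because the random negations give $M_{uCv}=\sigma_u\sigma_v M$ with $\E[\sigma_u\sigma_v]=0$ in the null distribution, so the uncentered matrix is already centered in disguise---but that trick does not generalize, since the general-model weights $\bTrans_{i,a\mid b}$ carry no such sign randomness. You also cite the conclusions of \pref{sec:statistics} and \pref{sec:eig-bounds} for correctness, but those theorems are stated and proved for $\CentAdj_{\bG}^{(s)}$, so the reductions are misaligned as written. Once you replace $A^{(s)}$ with $\CentAdj_{\bG}^{(s)}$ and note (as in \pref{clm:self-adjointness}) that $\CentAdj_{\bG}^{(s)}$ is self-adjoint under $\langle\cdot,\cdot\rangle_{\innerp}$---so that its spectrum is real and the Rayleigh quotient genuinely lower-bounds the top eigenvalue, and that eigenvalue can be computed efficiently by power iteration---your argument matches the paper's proof.
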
 

\subsection{The stability condition via a finite linear operator $L$} \label{sec:from-alpha-to-lambda}

In this part we define a finite linear operator $L$ whose spectral radius gives a criterion for when $\lim_{\ell\to\infty} \alpha_{\ell}^{1/2\ell}$ is greater than $1$ or less than $1$.

Naively, computing $\lim_{\ell\to\infty} \alpha_\ell^{1/2\ell}$ requires us to consider trees whose size grows with $\ell$. However we can simplify the expression for $\alpha_\ell^{1/2\ell}$ via an insight of \cite{BC19} by observing that the tree $\Tree_{\ell}$ is constructed recursively. For any even level variable node $v_{k}$ in the tree, the distribution of its children factor nodes depends only on $v_{k}$'s type. Furthermore, the factor node distribution $\{\varfac{\btau(v_k)}{i,j}\}_{i\in F,j\in[\arity(i)]}$ also fully describes the distance $2$ neighborhood of $v_k$. For a type $\tau$ variable $v$, define the random variable $\varvar{\tau}{i,j,j'}$ to be the number of variables $u$ that are connected to $v$ via some type $\phi_i$ factor, and additionally $u$ have index $j'$ in the factor and $v$ has index $j$. These random variables give a way to concisely express the total influence of the distance $2$ type $\taup$ variables to the type $\tau$ variable. This influence is 

\[\sum_{i, j, j'\mid \Cl(i)_{j'} = \taup} \varvar{\tau}{i,j,j'}\cdot\bTrans_{i,j \mid j'} \in \R^{[\numcols]\times[\numcols]}.\]

Then we can build a $2$ step quadratic influence operator $L:\R^{T\cdot[\numcols]\times T\cdot[\numcols]} \to \R^{T\cdot[\numcols]\times T\cdot[\numcols]}$ such that the $(\tau,\tau)$ block of $L(M)$ is

\[L(M)^{\tau,\tau} = \sum_{\taup} \sum_{i\in F, j,j'\in[\arity(i)]} \sum_{i, j, j'\mid \Cl(i)_{j'} = \taup} \varvar{\tau}{i,j,j'}\cdot\bTrans_{i,j \mid j'}\, M^{\taup,\taup}\, \bTrans_{i,j \mid j'}^{*},\]
and the off-diagonal blocks are $L(M)^{\tau,\taup} = \mathbf{0}$.

Suppose the input $M$ is such that each $M^{\taup,\taup}$ captures the quadratic influence of some path $v_1,e_1,\dots,v_{\ell},e_{\ell}$ such that the endpoint $v_1$ has type $\taup$,
\[\left(\prod_{j = 2}^{\ell} \bTrans_{\FacType(e_{j-1}),\index(v_{j-1})\mid \index(v_{j})}\right) \left(\prod_{j = 2}^{\ell} \bTrans_{\FacType(e_{j-1}),\index(v_{j-1})\mid \index(v_{j})}\right)^{*} .\] 
And all other blocks in $M$ are $0$. 
Then after applying the operator, every diagonal block $L(M)^{\tau,\tau}$ captures the expected quadratic influence of all paths $v_0,e_0,v_1,e_1,\dots,v_{\ell},e_{\ell}$ which are $2$-step extensions of the path $v_1,e_1,\dots,v_{\ell},e_{\ell}$ and whose endpoint $v_0$ has type $\tau$.

Using this operator $L$ we can rewrite $\alpha_\ell$ as 
\[\alpha_\ell 
= \E_{\Tree_{\ell}}\left[ \sum_{(e_{\ell},v_{\ell},\dots,e_0,v_0)\in \Tree_{\ell}} \Tr \left ( \left(\prod_{j = 1}^{\ell} \bTrans_{\FacType(e_{j-1}),\index(v_{j-1})\mid \index(v_{j})}\right) \left(\prod_{j = 1}^{\ell} \bTrans_{\FacType(e_{j-1}),\index(v_{j-1})\mid \index(v_{j})}\right)^{*} \right) \right] 
= \Tr\left(L^{\ell}(\Diag(\ColDist))\right) ,
\]
where $\Diag(\ColDist)\in \R^{T\cdot[\numcols]\times T\cdot[\numcols]}$ is a diagonal matrix whose $((\tau,c),(\tau,c))$ entry has value $\ColDist_{\tau}(c)$.

Use $\lambda_L$ to denote the maximum eigenvalue of $L$. Then by standard linear algebra fact, 
\[\lim_{n\to\infty}\Tr\left(L^{\ell}(\Diag(\ColDist))\right)\to \lambda_L^{\ell}.\] Therefore we obtain the following equivalence relation between $\alpha_{\ell}$ and $\lambda_L$.
\begin{lemma}\label{lem:from-alpha-to-lambda}
For a model $\Model$, $\lim_{\ell\to\infty}\alpha_\ell^{1/2\ell} = \lambda_L^{1/2}$. Thus the fixed point $\minit$ of $\Model$ is stable if and only if $\lambda_L \leq 1$. 
\end{lemma}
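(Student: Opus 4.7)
The plan is to prove the identity $\alpha_\ell = \Tr(L^\ell(\Diag(\ColDist)))$ by induction on $\ell$ and then extract the limit by elementary spectral analysis of the finite-dimensional operator $L$. The induction exploits the recursive structure of the Galton--Watson tree $\Tree_\ell$: conditioned on the type of the root $v_0$, each level-$1$ factor spawns an independent subtree distributed as a copy of the analogous tree of smaller depth, rooted at the corresponding neighbor variable. First I would define the block-diagonal matrix $M_\ell$ whose $(\tau,\tau)$ block equals
\[
	M_\ell^{\tau,\tau} = \E_{\Tree_\ell \mid \btau(v_0)=\tau}\!\sum_{\text{paths to leaves}} \left(\prod_{j=1}^{\ell} \bTrans_{\FacType(e_{j-1}),\index(v_{j-1})\mid\index(v_j)}\right)\left(\prod_{j=1}^{\ell} \bTrans_{\FacType(e_{j-1}),\index(v_{j-1})\mid\index(v_j)}\right)^{*},
\]
normalized so that $\alpha_\ell = \Tr(M_\ell)$. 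Unfolding one step of the path at the root and using independence of the subtrees, the summation factors as $\bTrans \, M_{\ell-1}^{\taup,\taup} \, \bTrans^{*}$ summed over neighbor factor types and neighbor indices, weighted by the $\varvar{\tau}{i,j,j'}$ neighborhood counts of \pref{sec:local-dist-model}. Taking expectations collapses this to $M_\ell = L(M_{\ell-1})$. The base case $M_0 = \Diag(\ColDist)$ is a direct calculation: the empty quadratic influence at the leaf is weighted by the prior over the leaf's color, which is exactly the diagonal entry $\ColDist_\tau(c)$.

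Once the identity is in hand, the limit follows from standard spectral theory of $L$ on the finite-dimensional space $\R^{T q \times T q}$. Each block-diagonal action of $L$ is of the Kraus form $M^{\taup,\taup}\mapsto \sum \bTrans M^{\taup,\taup}\bTrans^{*}$, so $L$ is completely positive and in particular maps the PSD cone into itself. By the Krein--Rutman theorem for positive operators on finite-dimensional ordered vector spaces, $\lambda_L$ is real and non-negative, equals the spectral radius of $L$, and admits a PSD eigenvector $V$. Decomposing $\Diag(\ColDist)$ in the Jordan basis of $L$ would yield
\[
	\Tr(L^\ell(\Diag(\ColDist))) = c \cdot \lambda_L^\ell + O\bigl(\mu^\ell \cdot \mathrm{poly}(\ell)\bigr)
\]
for some $\mu < \lambda_L$ and constant $c$ determined by the overlap of $\Diag(\ColDist)$ with the leading eigenspace. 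Provided $c > 0$, this gives $\Tr(L^\ell(\Diag(\ColDist)))^{1/\ell} \to \lambda_L$, and taking square roots would yield $\lim_{\ell\to\infty}\alpha_\ell^{1/2\ell} = \lambda_L^{1/2}$; the stability characterization then follows from the earlier identification $\rho(B) = \lim_{\ell\to\infty}\alpha_\ell^{1/2\ell}$.

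The main obstacle will be ruling out the degenerate case $c = 0$, i.e., showing $\Diag(\ColDist)$ is not contained in an $L$-invariant subspace whose spectral radius is strictly below $\lambda_L$. Since $\Diag(\ColDist)$ is a strictly positive diagonal matrix on the effective support of the priors, and since $L^{*}$ is again completely positive (so its leading eigenvector can be taken PSD), the relevant Hilbert--Schmidt pairing is strictly positive whenever $L$ is irreducible, yielding $c > 0$. If $L$ fails to be irreducible, I would apply the above analysis to each irreducible block independently; at least one such block that achieves the maximum eigenvalue $\lambda_L$ must contain a positive diagonal entry of $\Diag(\ColDist)$, giving the two-sided bound $c_1 \lambda_L^\ell \le \Tr(L^\ell(\Diag(\ColDist))) \le c_2 \lambda_L^\ell \cdot \mathrm{poly}(\ell)$, which is enough to extract the $\ell$-th root limit and conclude.
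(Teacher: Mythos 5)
Your derivation of the identity $\alpha_\ell = \Tr\left(L^\ell(\Diag(\ColDist))\right)$ mirrors the paper's: the paper states $V_{\ell+1} = L(V_\ell)$ by enumerating leaf-to-root path types, and your induction unfolding one level at the root is the same computation rephrased. Where you diverge is the second half. The paper cites \pref{lem:spec-neg-type} (attributed to \cite[Theorem 16(ii)]{BC19}) as a black box: a PSD-cone-preserving operator $L$ applied to any positive definite $M'$ satisfies $\lim_\ell \|L^\ell(M')\|_F^{1/\ell} = \lambda_L$. You instead re-derive this from scratch via Jordan decomposition plus Krein--Rutman, and then spend the bulk of your effort ruling out the degenerate case $c = 0$. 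That is indeed the delicate point, and your irreducibility/duality argument does address it, but the treatment of the reducible case (``at least one such block that achieves $\lambda_L$ must contain a positive diagonal entry'') is left vague --- the dynamics between invariant faces of the PSD cone are not so cleanly block-decoupled, and this step would need more care to be airtight.

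A cleaner route to the same conclusion, which avoids Jordan structure and the $c = 0$ case analysis entirely, is a sandwich argument: since $\Diag(\ColDist)$ is strictly positive definite, $\Diag(\ColDist) \succeq \eps V$ for the PSD Perron eigenvector $V$ of $L$ and some $\eps > 0$. Because $L$ preserves the PSD order, $L^\ell(\Diag(\ColDist)) \succeq \eps \lambda_L^\ell V$, giving the lower bound $\Tr(L^\ell(\Diag(\ColDist))) \geq \eps\lambda_L^\ell\Tr(V)$, and Gelfand's formula $\lim_\ell \|L^\ell\|^{1/\ell} = \rho(L) = \lambda_L$ gives the matching upper bound after converting the Frobenius norm to the trace. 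This is presumably what the cited [BC19] lemma proves. Your approach is correct in spirit and self-contained, which has its own merit, but you should either tighten the reducible case or replace it with the sandwich argument.
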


\section{Technical Overview}
\subsection{Algorithm for distinguishing}
We now describe our algorithm for distinguishing if an instance $\bG$ was sampled from the null distribution $\Null$ from the planted distribution $\Planted$.  Our algorithm constructs a matrix $M_{\bG}$ obtained from linearizing $\ell$ rounds of the belief propagation algorithm at the uninformative fixed point on input $\bG$ and tests if its largest eigenvalue exceeds a chosen threshold $\kappa$.  If it does then the algorithm declares that $\bG$ came from the planted distribution, and otherwise claims $\bG$ was sampled from the null distribution.  A bulk of the technical work is in proving that this particular matrix $M_{\bG}$ has all its eigenvalues bounded by the chosen threshold $\kappa$ when $\bG$ is sampled from the null model, and in illustrating that $M_{\bG}$ has an ``outlier'' eigenvalue exceeding $\kappa$ otherwise.  In this section, we delve more into the description of $M_{\bG}$ and then give a brief description of how we prove the statements about the eigenvalues of $M_{\bG}$ in the null and planted models.

More concretely, given a random instance $\bG$ sampled either from $\Null$ or $\Planted$, we set $M_{\bG}$ as the following matrix $\CentAdj_{\bG}^{(\ell)}$ called the \emph{length-$\ell$ centered nonbacktracking walk power} of $\bG$, for which we provide a slightly informal description below.
\begin{definition}[Centered nonbacktracking power (slightly informal)]	\label{def:centered-nb-power}
	$\CentAdj_{\bG}^{(\ell)}$ is a $nq\times nq$ matrix which we treat as a $n\times n$ grid of $q\times q$ blocks.  The block rows and columns are indexed by $[n]$.  In the $(i,j)$-th block, we place the following $q\times q$ matrix:
	\begin{align*}
		\sum_{\substack{i e_1 v_1 e_2 v_2 \dots e_{\ell} j \in \\ \text{all nonbacktracking walks} \\ \text{from $i$ to $j$ in complete} \\ \text{factor graph}}} \BPM{e_1}{i}{v_1}\cdot\BPM{e_2}{v_1}{v_2}\cdots\BPM{e_{\ell}}{v_{\ell-1}}{j}\cdot (\Ind[e_1\in\bG]-\Pr_{\Null}[e_1\in\bG])\cdots(\Ind[e_\ell\in\bG]-\Pr_{\Null}[e_{\ell}\in\bG]).
	\end{align*}
\end{definition}

Recall the matrix $L$ from \pref{sec:dist-hard-case} which the stability prediction of belief propagation was based on, and let $\lambda_L$ denote its largest eigenvalue.  The two main technical theorems we prove about $\CentAdj_{\bG}^{(\ell)}$ in service of proving that our algorithm is correct with high probability are:
\begin{theorem}[Local statistics in planted model]	\label{thm:eig-lower-bound-plant}
	When $\bG\sim\Planted$, with probability $1-o_n(1)$: $\lambda_{\max}(\CentAdj_{\bG}^{(\ell)}) \ge \frac{\lambda_L^{\ell}}{q}$.
\end{theorem}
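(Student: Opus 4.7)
The plan is to produce a test vector $\by$ built from the planted coloring and show that its Rayleigh quotient against $\CentAdj_{\bG}^{(\ell)}$ is at least $\lambda_L^{\ell}/q$ with high probability, which suffices since $\lambda_{\max}(\CentAdj_{\bG}^{(\ell)})\ge \langle \by,\CentAdj_{\bG}^{(\ell)}\by\rangle/\|\by\|^2$ (interpreting $\lambda_{\max}$ via the operator/singular-value bound that feeds into the distinguishing algorithm). Concretely, I would define $\by\in\R^{nq}$ in block form by setting the $v$-th block equal to the centered indicator $e_{\bc(v)}-\prior_{\btau(v)}\in\R^{q}$. Since the $\bc(v)$ are independent, $\|\by\|^2$ is a sum of $n$ independent bounded contributions and concentrates around a value of order $n$.

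I would then compute the conditional expectation $\E[\langle\by,\CentAdj_{\bG}^{(\ell)}\by\rangle\mid\bc,\btau]$. Expanding via \pref{def:centered-nb-power}, each nonbacktracking walk in the complete factor graph contributes $\by_i^{\top}\BPM{e_1}{i}{v_1}\cdots\BPM{e_\ell}{v_{\ell-1}}{j}\by_j$ times a product of centered edge indicators $\mathbf{1}[e_k\in\bG]-\Pr_{\Null}[e_k\in\bG]$. Crucially, because the walk is nonbacktracking the factors $e_1,\dots,e_\ell$ are distinct, so these centered indicators are mutually independent conditional on $(\bc,\btau)$, and each has conditional expectation equal to the ``excess'' probability of that factor under planting over the null. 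This excess, multiplied by the corresponding $\BPM{}{}{}$ matrix, is exactly the local quantity encoded by the $\bTrans$ entries appearing in the definition of $L$ (see \pref{eq:bTrans-cond-prob} and \pref{eq:bTrans}).

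Combining the local contributions and summing over vertex choices, the expected quadratic form reduces --- modulo negligible corrections from walks that revisit a vertex or use correlated factors --- to a sum over tree-shaped walks in the Galton--Watson local neighborhood of \pref{sec:neighorhood-of-variable}. This tree sum is precisely the quantity $\alpha_\ell$, and by \pref{lem:from-alpha-to-lambda} we have $\alpha_\ell = \Tr(L^{\ell}(\Diag(\prior)))$ growing like $\lambda_L^{\ell}$. After normalizing by $\|\by\|^2=\Theta(n)$, the expected Rayleigh quotient is at least $(1-o(1))\lambda_L^{\ell}/q$, as required.

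To upgrade from expectation to high probability, I would perform a second-moment argument: the variance of $\langle\by,\CentAdj_{\bG}^{(\ell)}\by\rangle$ decomposes over pairs of walks, and by conditional independence of the centered indicators only pairs of walks sharing at least one common factor contribute non-trivially. A careful enumeration of such overlapping pairs shows the variance to be lower order, and Chebyshev yields the desired concentration. The main obstacle throughout will be the combinatorial bookkeeping of walk shapes: identifying which walks contribute to the leading $\lambda_L^{\ell}$ behaviour versus which are negligible, handling the fact that walks whose trajectories overlap induce correlated products of centered indicators, and controlling variance as $\ell$ is allowed to grow with $n$. This parallels in spirit the trace-method analysis in \cite{BLM15,BC19} but must be adapted to the hypergraph setting with multiple variable types and matrix-valued factor weights, which is exactly the combinatorial framework developed for the complementary null-model upper bound.
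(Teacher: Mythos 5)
Your proposal is essentially the paper's own argument. In \pref{sec:statistics} the paper likewise chooses a test vector built from $\bc$, shows via \pref{clm:sa-local-statistics-path} and \pref{lem:sa-local-statistics-expectation} that $\E\bigl\langle\statvec,\CentAdj_{\bG}^{(\ell)}\statvec\bigr\rangle_{\innerp}=(1\pm o(1))\lambda_L^{\ell}n$, separately bounds the contribution of non-self-avoiding walks (\pref{clm:nsa-bounded-contribution}), controls the variance (\pref{lem:sa-local-statistics-variance}, \pref{lem:local-statistics-variance}), and concludes by Chebyshev. The only visible difference is your test vector, the centered indicator $e_{\bc(v)}-\prior_{\btau(v)}$, versus the paper's uncentered $\statvec^{v}=e_{\bc(v)}$; in fact these give the \emph{same} quadratic form, because $\prior_{\tau}^{\top}\innerp^{-1}$ coincides with $\mathbf{1}^{\top}$ on the support of $\prior_{\tau}$ and $\mathbf{1}^{\top}\bTrans_{i,a\mid b}=0$ by the construction in \pref{eq:bTrans}, so the $\prior$-part of your vector is annihilated by every block product of $\CentAdj_{\bG}^{(\ell)}$. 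Your vector has slightly smaller $\innerp$-norm, hence a marginally better Rayleigh quotient, but either choice clears the stated $\lambda_L^{\ell}/q$ bound. One small inaccuracy worth flagging: a nonbacktracking walk need not visit $\ell$ distinct factors (nonbacktracking only forces \emph{consecutive} factors to differ), so the claimed conditional independence of the centered indicators holds cleanly only for the self-avoiding walks --- which is exactly the subclass the paper isolates, and which you implicitly cover in your last paragraph when you distinguish leading-order walks from negligible overlapping ones.
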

The proof of this is carried out in \pref{sec:statistics} and uses two ingredients: the first is recognizing that $\CentAdj_{\bG}^{(\ell)}$ is self-adjoint under a certain inner product $\langle\cdot,\cdot\rangle_{\innerp}$, due to which for any vector $x$:
\[
	\langle x, \CentAdj_{\bG}^{(\ell)} x\rangle_{\innerp} \le \lambda_{\max}\left( \CentAdj_{\bG}^{(\ell)} \right)\cdot\langle x, x\rangle_{\innerp}.
\]
The second ingredient is in identifying a vector $x$ depending on the planted solution the instance $\bG$ was sampled with which makes the above quadratic $\langle x, \CentAdj_{\bG}^{(\ell)} x\rangle_{\innerp}$ larger than the desired lower bound of $\frac{\lambda_L^{\ell}}{q}$ with high probability.

The second main technical theorem, which is proved in \pref{thm:eig-upper-bound-null} is:
\begin{theorem}[Eigenvalue bound in null model]	\label{thm:eig-upper-bound-null}
	When $\bG\sim\Null$ and $(\log\log n)^2\le\ell\le \frac{\log n}{(\log\log n)^2}$, for every constant $\eps > 0$ with probability $1-o_n(1)$, all eigenvalues of $\CentAdj_{\bG}^{(\ell)}$ are bounded in magnitude by $((1+\eps)\sqrt{\lambda_L})^{\ell}$.
\end{theorem}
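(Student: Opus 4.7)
The plan is to prove the eigenvalue bound via the trace method: to control $\|\CentAdj_{\bG}^{(\ell)}\|$ I would study $\E\,\Tr\bigl((\CentAdj_{\bG}^{(\ell)}(\CentAdj_{\bG}^{(\ell)})^{*})^{k}\bigr)$ for a carefully chosen $k = k(n)$ growing slowly (something like $k\asymp (\log n)/\ell$, so that $2k\ell \ll \log n$ and the walks remain in the locally tree-like regime), and then apply Markov's inequality. This reduces the theorem to a combinatorial bound on a sum over closed walks. Concretely, expanding the definition of $\CentAdj_{\bG}^{(\ell)}$ from \pref{def:centered-nb-power}, the trace expands as a sum over closed sequences of $2k$ nonbacktracking walks of length $\ell$ (alternately forward and reversed), each carrying a product of matrix weights $\BPM{e}{u}{v}$ and centered Bernoulli factors $(\Ind[e\in\bG]-\Pr_{\Null}[e\in\bG])$.

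The first structural simplification is the centering: because $\bG\sim\Null$ samples hyperedges independently, and each centered indicator has mean zero, any walk in which some hyperedge $e$ appears exactly once contributes zero in expectation. Therefore I only need to sum over walks in which every hyperedge used appears at least twice. Following \cite{BLM15,BC19,stephan2020non}, I would group walks by their \emph{shape} (the isomorphism class of the underlying labelled multi-hypergraph together with the order in which vertices/edges are visited), and bound the expected contribution of each shape by a product of (i) the number of ways to embed the shape into $[n]$, (ii) the probability that all its edges are present in $\Null$, and (iii) the matrix-norm contribution from the product of $\bTrans$ matrices along the walk. The shape's \emph{excess} (edges minus vertices plus connected components) dictates the $n$-scaling: tree-like shapes dominate, and each unit of excess costs a factor of $n^{-1}$, so shapes with excess $\ge 1$ are subdominant up to the chosen range of $\ell$.

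The main term, corresponding to doubled-tree shapes (each edge used exactly twice, walk retraces a tree), is exactly the quantity that $L^{\ell}$ computes: the matrix product along a walk paired with its reversal contracts, via $\Tr(\bTrans\bTrans^{*})$ and the Poisson neighbourhood distribution from \pref{clm:varfac-distribution}, into an application of the operator $L$ defined in \pref{sec:from-alpha-to-lambda}. Summing over $k$ traversals yields essentially $\Tr(L^{\ell})^{k}\lesssim (\lambda_{L}^{\ell})^{k}\cdot\mathrm{poly}(\ell k)$. Taking $2k$-th roots and invoking Markov recovers the bound $((1+\eps)\sqrt{\lambda_{L}})^{\ell}$: the square root arises because $\CentAdj^{(\ell)}(\CentAdj^{(\ell)})^{*}$ has spectral radius $\lambda_{L}^{\ell}$, so $\CentAdj^{(\ell)}$ has spectral radius $\lambda_{L}^{\ell/2}$.

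The hardest step, and the one that genuinely requires new work beyond \cite{BLM15,stephan2020non}, is controlling the matrix products along walks of non-trivial shape. Because the edge weights $\BPM{e}{u}{v}$ are matrices rather than scalars, products along a walk depend on the order of traversal, so I cannot freely reorder multiplications as in the scalar case. The Bordenave--Collins machinery handles this in the regular setting, but here the Poisson degrees and (critically) the hyperedges of arity $\ge 3$ introduce (a) varying local geometries and (b) mild correlations between the matrix weights assigned to different index pairs of the same hyperedge. I expect the bulk of the proof effort to go into a careful shape-by-shape matrix-norm bound: decomposing each non-tree shape into a tree backbone plus a bounded number of ``defect'' edges, bounding the extra matrix factors incurred at each defect by $O(1)$ operator-norm estimates on the $\bTrans_{i,a\mid b}$'s, and absorbing the resulting $\mathrm{poly}(\ell)^{O(\mathrm{excess})}$ losses into the $(1+\eps)^{\ell}$ slack. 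Provided $\ell\le\log n/(\log\log n)^{2}$, the excess-weighted sum over shapes remains $o(1)$ relative to the leading tree contribution, closing the argument.
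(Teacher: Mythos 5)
Your high-level architecture is right: trace method exploiting self-adjointness under $\langle\cdot,\cdot\rangle_{\innerp}$, expansion over closed linkages of nonbacktracking walks, grouping walks by shape, controlling the combinatorics via the excess, and a genuinely matrix-valued refinement of \cite{BLM15,stephan2020non} in the spirit of \cite{BC19}. You also correctly put your finger on where the new effort goes --- decomposing non-tree shapes into a tree backbone plus defects and paying $O(1)$ operator-norm factors at each defect. This matches the paper's decomposition into trails, landmark vertices, and pause times.

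The proposal has two intertwined gaps, and they are not cosmetic.

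First, you choose $k$ so that $2k\ell\ll\log n$ ``to stay in the locally tree-like regime.'' This cannot give the statement as written, for any constant $\eps>0$. Markov applied to $\Tr((\CentAdj^{(\ell)})^{2k})$ yields a bound of the form $\|\CentAdj^{(\ell)}\| \le \bigl(n\cdot\E[\Tr]\bigr)^{1/(2k)}$, and the spurious $n^{O(1)/(k\ell)}$ factor in front only tends to $1$ when $k\ell=\omega(\log n)$. With $k\ell = O(\log n)$ you are left with a fixed multiplicative constant strictly larger than $1$, which is not absorbed into $(1+\eps)$ for all $\eps$. The paper explicitly takes $k\ell = \Theta(\log n\cdot\log\log n)$, putting the walk length well beyond the graph's locally tree-like radius.

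Second --- and this is what the first gap forces --- once $2k\ell\gg\log n$, your step ``singleton hyperedges have mean zero so only doubled walks contribute'' is true but insufficient. Even restricted to doubled walks, $\E[\Tr((\CentAdj^{(\ell)})^{2k})]$ (taken over $\bG\sim\Null$ with no further conditioning) is dominated not by the intended doubled-tree term $\approx n\lambda_L^{k\ell}$, but by rare configurations in which $\Bip(\bG)$ contains a small \emph{bicycle} (two short cycles in one neighborhood). On such an instance, nonbacktracking walks can remain trapped in a bounded subgraph for the full $2k\ell$ steps, contributing $C^{\Theta(k\ell)}$ to the trace with $C$ the uniform matrix-product bound; since $C$ need not be below $\sqrt{\lambda_L}$ and $k\ell\gg\log n$, the product $\Pr[\text{bicycle}]\cdot C^{\Theta(k\ell)}$ can swamp the main term. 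The paper handles this by introducing $\bU = \bS\cdot\Ind[\nobikes]$, proving $\Bip(\bG)$ is $r$-bicycle-free whp (\pref{lem:is-r-bike-free}), and then bounding $\E[\bU]$ rather than $\E[\bS]$. That conditioning in turn re-couples the singleton edges to the rest of the configuration --- a singleton's centered indicator no longer has conditional mean zero --- and requires a dedicated lemma (\pref{lem:singleton-small-exp}) to show that each singleton still buys a small $n^{-\Theta(1/r)}$ factor, together with the ``profligacy'' accounting to absorb duplicative edges that would create bicycles. Your proposal skips the conditioning entirely, so it cannot reproduce this bookkeeping, and as a result the Markov/trace bound would not close for any range of $k$: either $k\ell$ is too small to kill the $n^{1/(k\ell)}$ slack, or $k\ell$ is large enough that the unconditioned expectation is dominated by tangles.

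Everything else you say --- shapes and excess, the operator $L$, the role of $\mwgr$, the non-commutativity issue, the need to control $\mathrm{poly}(\ell)^{O(\Exc)}$ losses --- is aligned with the paper. The missing piece is the bicycle-free truncation together with the consequences of computing a conditional expectation, which is precisely the mechanism that lets the argument run at $k\ell\gg\log n$ and produce the sharp $(1+\eps)\sqrt{\lambda_L}$ rate.
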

When $\lambda_L > 1$, we choose $\delta>0$ so that $1+\delta<\sqrt{\lambda_L}$, $\ell$ as $(\log\log n)^2$, and $\kappa$ as $((1+\delta)\sqrt{\lambda_L})^{\ell}$.  Then as an immediate consequence of \pref{thm:eig-lower-bound-plant} and \pref{thm:eig-upper-bound-null} we know that the algorithm correctly distinguishes between $\Null$ and $\Planted$ with high probability.

We now elaborate on \pref{thm:eig-upper-bound-null} and elucidate the exact random matrix concentration statement.

\subsection{Matrix concentration vignette}
Consider an Erd\H{o}s-R\'enyi graph $\bH$ sampled from $G\left(n,\frac{d}{n}\right)$.  Pick a random vertex $v$ in $\bH$ and observe a ``large'' radius neighborhood around $v$.  Typically, this neighborhood around $v$ will be a tree, and additionally, and in the large-$n$ limit the distribution of this tree is a \emph{Galton-Watson process} -- a random (possibly infinite) tree $\bbT$ generated by starting at a root vertex $r$, attaching $\Poi(d)$ children to $r$, and then attaching $\Poi(d)$ children to each child of $r$ and so on.\footnote{The reader is advised to not pay too much attention to the fact that the number of children are distributed according to a \emph{Poisson} random variable.  The important property is that a vertex has $d$ children on average.}  So this tells us that there is some sense in which $\bbT$ ``approximates'' the finite random graph.  This intuition is spectrally articulated by a theorem which is (implicitly) due to \cite{BLM15} (see also \cite{FM17} and \cite{BMR19}).  Before we state the theorem, we bring up a natural quantity to associate to the random tree: the \emph{growth rate} which is defined as
\[
	\lim_{\ell\to\infty}\E[\#\text{ of vertices at depth-}\ell]^{1/\ell},
\]
which is equal to $d$ for the aforementioned Galton-Watson process.
\begin{theorem}[{\cite{BLM15,FM17,BMR19}}]	\label{thm:BLM-main}
	Let $\CentAdj_{\bH} \coloneqq \Adj_{\bH} - \E\Adj_{\bH}$ be the centered adjacency matrix of $\bH$. Suppose $d > 1$, then:
	\[
		|\lambda|_{\max}(\CentAdj_{\bH}^{(\ell)}) \le ((1+o(1))\sqrt{d})^{\ell} = ((1+o(1))\sqrt{\text{growth rate of }\bbT})^{\ell}
	\]
	for $\ell\in\left[(\log\log n)^2,\frac{\log n}{\log\log n}\right]$.
\end{theorem}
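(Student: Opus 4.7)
The plan is to carry out a high-trace-method argument on a modified ``tangle-free'' version of $\CentAdj_{\bH}^{(\ell)}$, following the strategy pioneered in \cite{BLM15}. First I would bound the spectral norm by
\[
	\|\CentAdj_{\bH}^{(\ell)}\|^{2m} \;\le\; \Tr\!\left( \CentAdj_{\bH}^{(\ell)}(\CentAdj_{\bH}^{(\ell)})^{*} \right)^{m}
\]
for an integer $m$ that grows slowly with $n$ (e.g.\ $m \asymp \log n / \ell$), and then apply Markov's inequality to the expectation of the right-hand side.  Expanding the trace yields a sum over closed ``double walks'' of length $2\ell m$ on the complete graph, obtained by concatenating $2m$ nonbacktracking length-$\ell$ segments that alternate in direction.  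Because $\CentAdj_{\bH}$ is centered, an edge appearing exactly once in such a walk contributes a factor of $\E[\Ind[e\in\bH]-p] = 0$, so only walks in which every edge is traversed at least twice survive in expectation.

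Before the trace expansion I would replace $\CentAdj_{\bH}^{(\ell)}$ by its tangle-free restriction $\CentAdj_{\bH}^{(\ell),\mathrm{tf}}$, defined by only summing over nonbacktracking walks whose visited subgraph contains at most one cycle (in the relevant range of $\ell$ this is the regime in which a radius-$\ell$ neighborhood of a typical vertex in $\bH$ is a tree with high probability).  A standard union bound argument shows that $\CentAdj_{\bH}^{(\ell)} = \CentAdj_{\bH}^{(\ell),\mathrm{tf}}$ with high probability on the relevant vector space, reducing the problem to bounding the tangle-free operator.  The benefit is that in the trace sum the double walk, together with its constituent segments, now has a very restricted combinatorial structure.

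Next I would classify each contributing double walk by its ``shape,'' i.e.\ its isomorphism class as a labeled walk together with the cyclic order of first visits to vertices and edges.  A shape with $v$ distinct vertices and $e$ distinct edges arises from at most $n^{v-1}$ actual walks in $\bH$ (the $-1$ because one endpoint is fixed by the trace's diagonal), and contributes at most $(d/n)^{e}$ in expectation.  Since every edge is used at least twice, $e \le \ell m$, and the tangle-free constraint forces $v \ge e$ with equality essentially only for shapes whose underlying graph is a tree in which every edge is traversed exactly twice.  These ``tree shapes'' give the leading term
\[
	\sum_{\text{tree shapes}} n^{e+1-1}\cdot (d/n)^{e} \cdot (\text{combinatorial factor}) \;\le\; n\cdot d^{\ell m}\cdot (\ell m)^{O(1)},
\]
and taking the $2m$-th root yields $(\sqrt{d})^{\ell}\cdot n^{1/2m}\cdot \mathrm{poly}(\ell m)^{1/2m} = (1+o(1))^{\ell}(\sqrt{d})^{\ell}$ for the stated range of $\ell$ and $m$.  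Walks whose shape has an extra cycle or reuses an edge more than twice pay an extra factor of $1/n$ relative to the tree bound and are therefore subleading.

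The main obstacle, as in \cite{BLM15}, is the combinatorial shape counting: one must produce an efficient encoding of each tangle-free tree shape that loses only a $\mathrm{poly}(\ell m)$ factor, and then match it against the probabilistic weight $(d/n)^{e}$ and vertex enumeration $n^{v-1}$ so that the $n$-dependent factors cancel tightly.  The tangle-free reduction is what makes this counting tractable, since without it the number of shapes explodes and the walk-counting bound would be too weak to give the sharp $\sqrt{d}$ threshold; the bulk of the technical work in the proof of Theorem~\ref{thm:BLM-main} (and its generalization to matrix-weighted nonbacktracking operators in our setting) is precisely in setting up this encoding, which I would import from \cite{BLM15} with the modifications of \cite{FM17,BMR19} that allow $\ell$ to grow all the way up to $\log n / \log\log n$.
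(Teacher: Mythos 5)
Your plan is the trace-method strategy from BLM15, and your shape-counting intuition is aligned with what actually drives the bound; however, the paper's proof (of the general version, \pref{thm:main-spectral-bound}) follows the BMR19/FM17 route, which differs at a crucial point, and your version of the BLM15 route has a genuine gap.

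The gap is in the claimed reduction ``$\CentAdj_{\bH}^{(\ell)} = \CentAdj_{\bH}^{(\ell),\mathrm{tf}}$ with high probability.''  The operator $\CentAdj_{\bH}^{(\ell)}$ sums over nonbacktracking walks in the \emph{complete} graph with weights $\Ind[e\in\bH]-\Pr[e\in\bH]$, not only walks supported on $\bH$.  A walk whose shape has two cycles still contributes a nonzero entry even when $\bH$ is bicycle-free, because the off-edge factors are $-\Pr[e\in\bH]\ne 0$.  So dropping tangled walks changes the operator, and what you actually need is a separate high-probability operator-norm bound on the tangled remainder $\CentAdj_{\bH}^{(\ell)}-\CentAdj_{\bH}^{(\ell),\mathrm{tf}}$.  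In BLM15 this is handled by a telescoping identity for the nonbacktracking matrix of the actual graph plus a dedicated moment bound on the remainder term; it is one of the harder parts of their argument when $\ell$ is superconstant, not a standard union bound.

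The paper sidesteps this entirely by leaving the matrix alone and instead multiplying the \emph{scalar} quantity $\bS=\Tr\bigl((\CentAdj_{\bH}^{(\ell)})^k\bigr)$ by $\Ind[\nobikes]$, where $\nobikes$ is the event that $\Bip(\bH)$ is $r$-bicycle free.  Since $\Pr[\nobikes]=1-o_n(1)$, the replacement $\bS\mapsto\bU=\bS\cdot\Ind[\nobikes]$ is exact on a high-probability event, with no remainder to control.  The price is that the bicycle-free structure must then be extracted from the conditional expectation inside the moment computation, which is done via \pref{lem:singleton-small-exp}: conditioned on $\nobikes$, the expectation of a product of singleton centered factors is exponentially small in $|S|/r$ minus the excess of the closure.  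That lemma is where the bicycle-free assumption actually does its work in the paper, replacing the ``tangle-free operator equals full operator'' step in your outline.  The downstream shape-counting (tree shapes dominate, excess and profligacy each cost a factor of $1/\sqrt{n}$ or $1/n$, combinatorial overhead is $\mathrm{poly}(\ell m)$) is the same in spirit in both approaches, and your accounting of the tree contribution and of the $n^{1/2m}$ slack is correct.

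So: the high-level route you propose is a real alternative and can be made to work, but as written it is missing the norm bound on the tangled part, which is not a triviality — it is essentially of the same difficulty as the moment computation itself.  If you want to stay closer to the paper, condition the trace rather than the operator, and then prove a conditional singleton-expectation bound of the form in \pref{lem:singleton-small-exp}.
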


Now, let's add a small twist: sample two \Erdos-\Renyi~graphs $\bH_1\sim G\left(n,\frac{d_1}{n}\right)$ and $\bH_2\sim G\left(n,\frac{d_2}{n}\right)$ and consider the weighted graph $\bH = 0.9\bH_1-\bH_2$.  The random tree that $\bH$ locally resembles is the following \emph{different} Galton-Watson process $\bbT'$: start at a root vertex $r$, connect $\Poi(d_1)$ children with edges of weight $0.9$ and $\Poi(d_2)$ children with edges of weight $-1$ to $r$, then repeat the same for each child vertex, and keep going.  The following quantity is the correct generalization of growth rate to weighted graphs, which we call the \emph{weighted growth rate} of the tree:
\[
	\wgr(\bbT') \coloneqq \lim_{\ell\to\infty} \E\left[\sum_{\substack{P\in\text{length-}\ell\text{ paths}\\ \text{starting at root}}} \prod_{e\in P}w_e^2\right].
\]
For illustrative purposes, one subcase of our matrix concentration result is:
\begin{theorem}	\label{thm:special-case}
	Let $\CentAdj_{\bH}\coloneqq \Adj_{\bH}-\E\Adj_{\bH}$ be the centered adjacency matrix of $\bH$.  Suppose $\wgr(\bbT')>1$, then:
	\[
		|\lambda|_{\max}(\CentAdj_{\bH}^{(\ell)}) \le ((1+o(1))\sqrt{\wgr(\bbT')})^{\ell}
	\]
	for $\ell\in\left[(\log\log n)^2,\frac{\log n}{(\log\log n)^2}\right]$
\end{theorem}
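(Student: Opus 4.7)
The plan is to apply the trace method. Since $\CentAdj_{\bH}^{(\ell)}$ is a real (but generally non-symmetric) matrix, I would start from
\[
|\lambda|_{\max}\left(\CentAdj_{\bH}^{(\ell)}\right)^{2m} \;\le\; \Tr\left[ \CentAdj_{\bH}^{(\ell)}\bigl(\CentAdj_{\bH}^{(\ell)}\bigr)^{\top}\right]^{m},
\]
and choose $m$ as large as possible subject to $\ell m = \Theta(\log n)$, so that taking $2m$-th roots at the end converts multiplicative slack of the form $(\text{poly}(n))^{1/(2m)}$ into a $1+o(1)$ factor. Expanding the trace gives a sum over closed sequences $W$ of $2m$ length-$\ell$ nonbacktracking walks in the complete graph $K_n$, chained end-to-start, each with contribution $\prod_{e\in W}(w_e - \E w_e)$ weighted by the probability that the required edges lie in $\bH$.

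Next, a combinatorial reduction. Centering is doing real work: any sequence that uses some edge exactly once has expectation zero, so only sequences in which every distinct edge is traversed at least twice survive. I would group surviving sequences by \emph{shape}, meaning the isomorphism class of the visited multigraph together with the bookkeeping of which traversal lies in which length-$\ell$ block. For a shape $\sigma$ with $v_\sigma$ vertices and $e_\sigma$ distinct edges, the number of embeddings into $[n]$ is at most $n^{v_\sigma}$ while the probability that the required edges are present in the appropriate copy of the underlying ER graph is $O(n^{-e_\sigma})$, producing a net $n$-factor of $n^{v_\sigma - e_\sigma}$. In the \emph{tree-like} regime, where the underlying graph of $\sigma$ is a tree and each distinct edge is traversed exactly twice, summing the weight product $\prod_e w_e^2$ over such shapes is precisely the generating function for $\wgr(\bbT')^{\ell m}$, up to constants determined by how the $2m$ length-$\ell$ blocks can be laid out on the tree.

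To control the total count of shapes, I would adopt the \emph{tangle-free} truncation of Bordenave--Lelarge--Massoulie: replace $\CentAdj_{\bH}^{(\ell)}$ by the analogous matrix restricted to nonbacktracking walks whose $\ell$-balls contain at most one cycle, and argue via a union bound over local neighborhoods that with high probability the two matrices coincide on $\bH$. Within the tangle-free world, each unit of excess (first Betti number of the underlying graph, plus any additional traversals beyond two per edge) costs an extra $n^{-1}$ in the net $n$-factor while gaining at most a factor polynomial in $\ell m$ from the combinatorics of placing the closure or the extra traversal, yielding a total contribution $O(\text{poly}(\ell m)/n)$ times the tree-like bound; this is $o(1)$ under the hypothesis $\ell \le \log n/(\log\log n)^2$. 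Assembling these pieces and taking the $2m$-th root produces $((1+o(1))\sqrt{\wgr(\bbT')})^{\ell}$.

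The main obstacle is verifying that tree-like shapes really dominate once the edge weights are allowed to be signed and heterogeneous across the two copies of ER. One must check that no repeated-edge shape, whose contribution scales like $w_e^{2k}$ for some $k\ge 2$, can exploit an asymmetry between a large weight $w_e$ and a small edge density $d_e/n$ to beat the tree-like sum $\sum w_e^2$ after the sparsity penalty is applied. Equivalently, one needs a structural argument that the generating functional defined by $\wgr$ (with \emph{squared} weights) is the correct limiting quantity, and that higher even moments of $w_e$ do not enter at leading order. This is exactly the point where the techniques of \cite{BC19} come into play, and where, in the full model considered by the paper, the noncommutativity of matrix-valued weights makes the bookkeeping substantially more delicate than in the scalar vignette treated here.
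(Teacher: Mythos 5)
Your outline has the right skeleton---trace method plus a shape count in which excess pays $n^{-1}$ against a $\mathrm{poly}(k\ell)$ combinatorial gain, with the tree-like shapes producing $\wgr(\bbT')$---but the truncation step you propose does not go through, and the gap it opens is exactly where the paper's technical work lives. The Bordenave--Lelarge--Massouli\'e tangle-free truncation applies when the nonbacktracking operator is indexed by directed edges of the \emph{sample} $\bH$, so that its walks live in $\bH$ and inherit $\bH$'s local tangle-freeness with high probability. Here the sum defining $\CentAdj_{\bH}^{(\ell)}$ runs over nonbacktracking walks in the \emph{complete} graph, and each step carries weight $\Ind[e\in\bH]-\Pr[e\in\bH]$, which is nonzero whether or not $e\in\bH$. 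There is therefore no high-probability event under which restricting the matrix to tangle-free walks leaves it unchanged, so the replacement step ``argue the two matrices coincide on $\bH$'' is unavailable. What the paper does instead is keep the full matrix and multiply the scalar $\Tr\bigl((\CentAdj_{\bH}^{(\ell)})^{k}\bigr)$ by the global indicator $\Ind[\nobikes]$ of the bicycle-free event before taking expectations, relying on \pref{lem:is-r-bike-free}.

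That repair removes your key simplification. Once $\Ind[\nobikes]$ sits inside the expectation, your claim that ``any sequence that uses some edge exactly once has expectation zero'' fails: $\E\bigl[(\Ind_e-\mu_e)\Ind[\nobikes]\bigr]\ne 0$ in general, so centered singleton edges no longer kill the term. Controlling the residual is precisely the content of \pref{lem:singleton-small-exp}, which bounds the conditional expectation of a product of centered singletons not by zero but by
\[
\prod_{\gamma\in S\cup L}\mu_\gamma\cdot 2^{|S|}\cdot n^{-\frac{1}{2}\left(\frac{|S|}{r}-\Exc(\Clos(S\cup L))\right)},
\]
a quantity that must then be threaded through the shape count together with the excess parameter and a separate profligacy parameter $\Profl(W)$ accounting for edges visited three or more times. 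You flag higher-multiplicity edges as your ``main obstacle'' but leave them unaddressed; the paper handles them by restricting the sum to subsets $L\subseteq\Dup(W)$ whose closure remains bicycle-free and charging $(\mathrm{const}/n)^{\Profl(W)}$ for the rest, which is another place the event conditioning (rather than a matrix truncation) is load-bearing. In short, the two steps you treat as routine---singletons vanish, repeated edges are benign---are exactly the two steps requiring \pref{lem:singleton-small-exp}, \pref{lem:unlikely-subgraphs}, and \pref{clm:shape-count}, and the truncation mechanism you propose would not let you prove them.
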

We now discuss our full matrix concentration theorem which captures both of the above mentioned theorems.  Before doing so, it is worth noting that the picture for random graphs being spectrally approximated by infinite graphs is far more well understood in the setting of models of random \emph{regular} graphs through works of \cite{Fri03,Bor19,BC19,MOP20a,OW20} but we defer the readers to \cite{OW20} for an extensive discussion of what is known in that setting.

\subsection{Matrix concentration statement}
Let $\langle \cdot,\cdot \rangle_{\nu}$ be an inner product on $\R^{nq}$ and let $M^*$ denote the adjoint of a matrix $M$ under this inner product.  We consider $nq\times nq$ random matrices sampled according to the following model (whose notation the reader should treat independently from the preceding notation related to distinguishing instances from the null and planted distributions).
\begin{definition}[Random matrix model (slightly informal)]	\label{def:random-matrix-model}
	The model has an underlying \emph{left vertex set} which is equal to $[n]$.  First, every vertex $v$ is assigned a \emph{type} $\btau(v)$ in $[T]$ sampled according to a distribution $\pi$.  There are $F$ types of right vertices, given by set $[F]$.  Each right vertex type $i$ comes with an \emph{arity} $k_i$, which is a positive integer, a \emph{profile} $\chi_i$ which is a tuple in $[T]^{k_i}$, a collection of ${k_i(k_i-1)}$ matrices $\{M_{i,(a,b)}\}_{(a,b)\in[k_i]^2:a\ne b}$, and a \emph{density} $\phi_i$.  A random instance $\bH$ is sampled in the following way: for every $(i,(v_1,\dots,v_{k_i}))$ for $i\in[F]$ and tuple $(v_1,\dots,v_{k_i})$ in $[n]^{k_i}$ of distinct elements such that $(\btau(v_1),\dots,\btau(v_{k_i}))=\chi_i$ we add $(i,(v_1,\dots,v_{k_i}))$ as a right vertex with probability $\frac{\phi_i}{n^{k_i-1}}$, connect edges to $v_1,\dots,v_{k_i}$ and mark the edge to $v_t$ with number $t$.  We use $\CompG_n$ to refer to the bipartite graph with left vertex set $[n]$, and the right vertex set containing every potential right vertex.  Now let $\gamma=(i,(v_1,\dots,v_{k_i}))$; for a two-step $v_a \to \gamma \to v_b$  in the complete graph for $a\ne b$ we use $M_{v_a\gamma v_b}$ to denote the matrix $M_{i,(a,b)}$. The random matrix we are interested in, which we denote $\CentAdj_{\bH}^{(\ell)}$, is the matrix where the $uv$ entry contains:
	\[
		\CentAdj_{\bH}^{(\ell)}[i,j] \coloneqq \sum_{\substack{i\gamma_1v_1\dots\gamma_{\ell}j\\ \text{nonbacktracking walks in }\CompG}} M_{i\gamma_1 v_1}\cdots M_{v_{\ell-1}\gamma_{\ell}j}\cdot(\Ind[\gamma_1\in\bH]-\Pr[\gamma_1\in\bH])\cdots(\Ind[\gamma_{\ell}\in\bH]-\Pr[\gamma_{\ell}\in\bH]).
	\]
\end{definition}

\begin{definition}[Galton-Watson tree approximating random matrix (informal)]	\label{def:local-tree-approx}
	For a given setting of parameters for the random model from \pref{def:random-matrix-model}, the bipartite Galton-Watson tree $\bbT$ which ``locally resembles'' an instance $\bH$ sampled from the model is as follows:
	\begin{enumerate}
		\item Start with a left root vertex $r$ and assign it type $\btau(r)\sim\pi$.
		\item \label{step:sample-child} For each $i\in[F]$ and each $j\in[k_i]$ such that $(\chi_i)_j = \btau(r)$ sample
		\[
			n_{i,j}\sim\Poi\left(\frac{\phi_i}{\pi_{\btau(v)}}\prod_{t=1}^{k_i}\pi_{(\chi_i)_t}\right),
		\]
		and attach $n_{i,j}$ right vertices of type $i$ to $r$ and mark the corresponding edge with $j$.  Then to each such right vertex, attach $k_i-1$ (left vertex) children and mark the edges with numbers from $[k_i]\setminus\{j\}$.  To each added child vertex $v$ with edge marked with $t$, assign it type $(\chi_i)_t$.
		\item Repeat \pref{step:sample-child} for each added left vertex child.
	\end{enumerate}
	We define the \emph{matrix weighted growth rate} of $\bbT$ to be the following:
	\[
		\mwgr(\bbT) \coloneqq \lim_{\ell\to\infty} \E\left[\Tr\left(\sum_{\substack{a\gamma_1v_1\dots\gamma_{\ell}b\\ \text{nonbacktracking walks in }\CompG}} M_{a\gamma_1v_1}\dots M_{v_{\ell-1}\gamma_{\ell}b} M_{b\gamma_{\ell}v_{\ell-1}}\dots M_{v_1\gamma_1 a} \right) \right]^{1/\ell}.
	\]
\end{definition}
We prove:
\begin{theorem}[Main matrix concentration theorem (slightly informal)]	\label{thm:main-mat-conc-informal}
	Let $\bH$ be a random instance of a setting of parameters for the model from \pref{def:random-matrix-model} and let $\bbT$ be the tree which locally approximates $\bH$ in the sense of \pref{def:local-tree-approx}.  Suppose:
	\begin{enumerate}
		\item For every right vertex $\gamma=(i,(v_1,\dots,v_{k_i}))$ in $\CompG_n$ and every $1\le a,b\le k_i$ for distinct $a,b$ the $nq\times nq$ matrix obtained by placing $M_{v_a\gamma v_b}$ in the $(v_a,v_b)$ block and zeros everywhere else is adjoint to the $nq\times nq$ matrix obtained by placing $M_{v_b\gamma v_a}$ in the $(v_b,v_a)$ block and zeros everywhere else under $\langle \cdot,\cdot \rangle_{\nu}$.
		\item There is a constant $C$ such that for any nonbacktracking walk $v_0\gamma_1 v_1\dots \gamma_s v_s$ in $\CompG_n$:
		\[
			\|M_{v_0\gamma_1v_1}\dots M_{v_{s-1}\gamma_sv_s}\|\le C.
		\]
		\item $\mwgr(\bbT)\ge1$.
	\end{enumerate}
	Then if $(\log\log n)^2\le \ell \le \frac{\log n}{(\log\log n)^2}$, with probability $1-o_n(1)$:
	\[
		|\lambda|_{\max}(\CentAdj_{\bH})^{(\ell)} \le ((1+o(1))\sqrt{\mwgr(\bbT)})^{\ell}.
	\]
\end{theorem}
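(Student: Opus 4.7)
The plan is to use the moment (trace) method in the spirit of \cite{BLM15} and \cite{BC19}, adapted to our setting of sparse hyperedges of varying arities carrying matrix-valued weights. Concretely, I would bound $|\lambda|_{\max}(\CentAdj_{\bH}^{(\ell)})^{2m}$ by $\Tr\big((\CentAdj_{\bH}^{(\ell)}(\CentAdj_{\bH}^{(\ell)})^{*})^{m}\big)$ (working with the Gram operator under the inner product $\langle\cdot,\cdot\rangle_{\nu}$, which is well-defined by hypothesis~1), and take $m$ of order $\log n / \ell$. The $2m$-th trace moment expands as a sum, indexed by concatenations of $2m$ nonbacktracking walks of length $\ell$ in $\CompG_{n}$ that glue into a closed tour, of a matrix-trace $\Tr\big(\prod M_{\cdot\cdot\cdot}\big)$ against the product of $2m\ell$ centered indicators $(\Ind[\gamma\in\bH]-\Pr[\gamma\in\bH])$.

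Next, I would partition these tours by their \emph{shape}, i.e.\ the isomorphism class of the bipartite multigraph traced out together with the cyclic order in which edges are crossed. Because each factor is centered, every right vertex $\gamma$ appearing in the tour must be visited at least twice in expectation; so shapes contribute only when every right vertex has multiplicity $\geq 2$. For a shape $S$ with left-vertex set of size $v$ and right-vertex multiset $\{\gamma_{i}^{m_{i}}\}$, the number of labellings is at most $n^{v}$, the probability that all its right vertices belong to $\bH$ contributes $\prod_{i}(\phi_{i}/n^{k_{i}-1})^{r(S)}$ where $r(S)$ is the number of distinct right vertices, and the matrix-trace contribution is uniformly bounded via hypothesis~2. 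The classical ``tangle-free'' tours, i.e.\ those whose underlying graph is a tree, dominate: for each such tour the net $n$-power is $0$, and summing the matrix weights reproduces exactly the tree moment whose $\ell m$-th root is $\mwgr(\bbT)$. Tours with $k$ ``excess'' edges beyond a tree lose a factor $n^{-k}$, but gain only a polynomial in $\ell,m$ from extra combinatorial choices, so they are negligible provided $\ell m \ll \log n$.

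The two main difficulties I foresee, and the way I would address them:
\begin{enumerate}
\item \emph{Non-commutativity.} Unlike the scalar setting of \cite{BLM15}, one cannot factor $\prod M$ along each edge independently; rearranging the order of edges traversed changes the matrix product. Following \cite{BC19}, I would keep the trace intact throughout and bound it via sub-multiplicativity of operator norms, then exploit hypothesis~2 to obtain a uniform $C$-bound per walk, and only \emph{evaluate} the full sum at the level of tree shapes, where the matrix-trace structure collapses cleanly into the tree quantity $\mwgr(\bbT)$ after averaging with the Poisson offspring distribution.
\item \emph{Tangled tours and non-backtracking constraint.} As in \cite{BLM15}, one must truncate $\CentAdj_{\bH}^{(\ell)}$ to its tangle-free part and bound the correction separately. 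For this I would import the BLM ``path decomposition with few excess edges'' encoding: parameterise each tour by its set of \emph{first-appearance} edges (forming a spanning subgraph) plus the locations of its excess ``short loops'', and sum over shapes in that order. The varying arities $k_{i}$ and correlated matrix weights across a hyperedge complicate the bookkeeping, but each right vertex of arity $k_{i}$ still contributes a factor $\phi_{i}/n^{k_{i}-1}$ that exactly cancels the $n^{k_{i}-1}$ left vertex choices it forces, as in the scalar Erd\H{o}s--R\'enyi proof.
\end{enumerate}

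Combining these, for $(\log\log n)^{2}\le \ell \le \log n/(\log\log n)^{2}$ and $m$ on the order of $\log n/\ell$, the tree contribution gives $\big(\mwgr(\bbT)^{\ell/2}\big)^{2m}\cdot\mathrm{poly}(\ell,m)$, and Markov's inequality then yields $|\lambda|_{\max}(\CentAdj_{\bH}^{(\ell)}) \le \big((1+o(1))\sqrt{\mwgr(\bbT)}\big)^{\ell}$ with probability $1-o_{n}(1)$. The hardest step is step~1: making the non-commutative matrix bookkeeping tight enough that the ``tree-shape'' contribution matches the definition of $\mwgr(\bbT)$ without a lossy constant, which is exactly where the techniques of \cite{BC19} appear necessary and need to be adapted to our irregular (Poisson-offspring) model.
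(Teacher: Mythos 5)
Your approach is essentially the same as the paper's: bound eigenvalues via the trace method on a self-adjoint power, condition away bicycle-dense neighborhoods, decompose closed linkages by shape, and follow \cite{BC19} in controlling the non-commutative matrix weights via sub-multiplicativity over a trail decomposition, using the uniform walk-norm bound of hypothesis 2. One subtlety you gloss over: the claim that ``shapes contribute only when every right vertex has multiplicity $\geq 2$'' is correct unconditionally, but once you truncate to the tangle-free/bicycle-free part the centered factors are no longer exactly mean-zero, so singleton right vertices \emph{do} contribute; the paper handles this with a dedicated lemma showing that, conditioned on the bicycle-free event, each singleton still buys a small gain of order $n^{-\Theta(1/r)}$ per singleton (with $r$ the bicycle-free radius), rather than vanishing outright, and this is one of the load-bearing technical steps.
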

The full formal set-up for \pref{thm:main-mat-conc-informal} along with its proof is in \pref{sec:eig-bounds}.

\section{A conjectured detection/recovery threshold} 
\label{sec:cavity-method}

In \pref{sec:from-alpha-to-lambda} we see the connection between $\lambda_L$ and $\alpha_{\ell}$ of a model $\Model$. In this section we prove this connection more rigorously, and conclude with a conjectured weak-recovery threshold in terms of $\lambda_L$. We start by quickly going through the definitions of the partial derivative matrices $\bTrans_{\FacType(e_j),\index_{e_j}(v_j)\mid\index_{e_j}(v_{j+1})}$, the color distribution matrix $\Dcolor_{\tau}$, the influence variance $\alpha_{\ell}$, and their connections. 

\begin{claim}\label{clm:transitional-matrix}
	The following matrices satisfy:
	\begin{enumerate}
		\item $\bTrans_{\FacType(e_j),\index_{e_j}(v_j)\mid\index_{e_j}(v_{j+1})} =\left(\mathbf{I} - \ColDist_{\btau(v_j)}\mathbf{1}^{\top} \right)\Facm_{\FacType(e_j),\index_{e_j}(v_j)\mid \index_{e_j}(v_{j+1})}$.
		\item \label{part:self-adj} Let $\Dcolor_{\tau} := \text{Diag}(\ColDist_{\tau})$.  Then $\Dcolor_{\btau(v_j)}^{\dagger}\BPM{\FacType(e_j)}{\index(v_j)}{\index(v_{j+1})} = \BPM{\FacType(e_j)}{\index(v_{j+1})}{\index(v_{j})}^{\top}\Dcolor_{\btau(v_{j+1})}^{\dagger}$ where the $\dagger$ in the superscript denotes the pseudoinverse of the matrix.
	\end{enumerate}
\end{claim}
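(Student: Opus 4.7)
The proof will consist of two essentially independent computations, one for each part of the claim, both of which reduce to direct algebraic manipulations with the BP update rule and the definitions of $\Psi_{i,a\mid b}$ and $\bTrans_{i,a\mid b}$.

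For Part 1, the plan is to write out $\Gamma$ explicitly as the composition of the variable-to-factor update at $v_j$ followed by the factor-to-variable update at $e_{j-1}$, and compute $\frac{\partial \Gamma(m)^{e_{j-1}\to v_{j-1}}}{\partial m^{e_j\to v_j}}$ evaluated at $\minit$ by the chain rule. At the canonical fixed point, the factor-to-variable incoming messages at a variable node are uniform and the variable-to-factor outgoing messages are $\prior_{\btau(v_j)}$, which trivializes most products occurring in the update. The only two non-trivial contributions left are: (i) the factor-update derivative, which by inspection equals $\Facm_{\FacType(e_{j-1}),\,\index(v_{j-1})\mid\index(v_j)}$ (this is essentially the definition of $\Facm_{i,a\mid b}$ as a conditional probability matrix under $\mu_i$), and (ii) the normalization term from the variable-to-factor update, whose derivative contributes the rank-one projection $-\prior_{\btau(v_j)}\mathbf{1}^{\top}$ acting on the left. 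Combining these yields $(\mathbf{I} - \prior_{\btau(v_j)}\mathbf{1}^{\top})\Facm_{\FacType(e_{j-1}),\,\index(v_{j-1})\mid \index(v_j)}$, which is $\bTrans_{\FacType(e_{j-1}),\,\index(v_{j-1})\mid\index(v_j)}$ by \eqref{eq:bTrans}. The bookkeeping with normalization constants is the only step that needs care; the rest is routine and can be deferred to \pref{app:calculus}.

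For Part 2, the key observation is the following reversibility/detailed-balance property of $\Facm$: since $\mu_i(c_1,\ldots,c_{\arity(i)}) \propto \left(\prod_j \prior_{\Cl(i)_j}(c_j)\right)\phi_i(\bc)$ and (assuming $\minit$ is a fixed point so that marginals of $\mu_i$ agree with the priors) Bayes' rule gives
\[
\Facm_{i,a\mid b}(\alpha,\beta)\,\prior_{\Cl(i)_b}(\beta) \;=\; \Facm_{i,b\mid a}(\beta,\alpha)\,\prior_{\Cl(i)_a}(\alpha).
\]
Rewriting this in matrix form yields $\Dcolor_{\btau(v_j)}^{\dagger}\Facm_{i,a\mid b} = \Facm_{i,b\mid a}^{\top}\Dcolor_{\btau(v_{j+1})}^{\dagger}$. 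From here, the plan is to substitute $\bTrans_{i,a\mid b} = (\mathbf{I} - \prior_{\btau(v_j)}\mathbf{1}^{\top})\Facm_{i,a\mid b}$ into both sides of the desired identity and simplify using two elementary facts: $\Dcolor_{\tau}^{\dagger}\prior_{\tau} = \mathbf{1}$ (on the support of $\prior_\tau$) and $\mathbf{1}^{\top}\Facm_{i,a\mid b} = \mathbf{1}^{\top}$ (since the columns of a conditional probability matrix sum to one). A quick expansion shows both $\Dcolor_{\btau(v_j)}^{\dagger}\bTrans_{i,a\mid b}$ and $\bTrans_{i,b\mid a}^{\top}\Dcolor_{\btau(v_{j+1})}^{\dagger}$ reduce to $\Facm_{i,b\mid a}^{\top}\Dcolor_{\btau(v_{j+1})}^{\dagger} - \mathbf{1}\mathbf{1}^{\top}$, completing the proof.

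The only step that is not fully routine is verifying that, whenever $\minit$ is a BP fixed point, the single-variable marginals of $\mu_i$ are indeed $\prior_{\Cl(i)_a}$. This is the "detailed balance" condition alluded to near \eqref{eq:detailed-balance} in the main text, and it is the essential content that lets Part 2 go through; once that fact is in hand, the rest of Part 2 is a few lines of matrix algebra. I expect the BP-calculus computation in Part 1 to be the most notationally heavy, but there is no conceptual obstacle — only the careful tracking of normalization constants — and it is well-suited for the appendix \pref{app:calculus}.
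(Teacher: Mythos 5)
Your plan matches the paper's proof (carried out in \pref{app:calculus}): both parts are established by differentiating the two BP update maps, evaluating at $\minit$, and simplifying, with Part~2 obtained directly from the chain of equalities derived while simplifying Part~1.  One small correction to your characterization: the detailed-balance/marginal-matching fact that you flag as essential only for Part~2 is already used in Part~1 --- the raw factor-to-variable derivative at $\minit$ comes out as a multiple of $\left(\mathbf{I}-\tfrac{1}{\support(\ColDist_{\btau(v_j)})}\mathbf{1}\mathbf{1}^{\top}\right)\Facm^{\top}_{\FacType(e_j),\index(v_{j+1})\mid\index(v_j)}\Dcolor^{\dagger}_{\btau(v_{j+1})}$, not $\Facm_{\FacType(e_j),\index(v_j)\mid\index(v_{j+1})}$ ``by inspection,'' and the conversion between them is exactly the Bayes-reversibility identity $\Dcolor_{\btau(v_j)}\Facm^{\top}_{\FacType(e_j),\index(v_{j+1})\mid\index(v_j)}\Dcolor^{\dagger}_{\btau(v_{j+1})}=\Facm_{\FacType(e_j),\index(v_j)\mid\index(v_{j+1})}$, which holds precisely because the single-coordinate marginals of $\ldist_i$ agree with the priors.
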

This claim is proved in \pref{app:calculus}. From here on we drop the subscript ${e}$ from the index function $\index_{e}$ whenever it is clear from the context which factor is being considered.

Using these notations we can express the influence variance, or the \emph{amplification factor}, $\alpha_{\ell}$ as follows:
\[
	\alpha_\ell \coloneqq \E_{\Tree}\bracks*{\Tr\left(\sum_{(e_\ell,v_\ell,\dots,e_0,v_0)\in\Tree} \left(\prod_{i=1}^{\ell} \bTrans_{\FacType(e_{i-1}),\index(v_{i-1})\mid \index(v_{i})} \right) \left(\prod_{i=1}^{\ell} \bTrans_{\FacType(e_{i-1}),\index(v_{i-1})\mid \index(v_{i})} \right)^*\right)}.
\]
 The key property of the above amplification factor we are interested in is its limiting behavior as $k$ goes to infinity. We say that the uninformative fixed point of the belief propagation update rule is \emph{stable} if $\lim_{n\to\infty}\alpha_\ell^{1/2\ell} \leq 1$, and \emph{unstable} otherwise. Furthermore, if the fixed point is stable, the problem of weak-recovering the hidden coloring of $\Model$ is conjectured to be hard, and if the fixed point is unstable, this problem is conjectured to be easy.

In the remainder of this section, we focus on obtaining a simpler criterion for stability by simplifying the expression for the amplification factor.  In particular, we give a constant dimensional linear transformation whose top eigenvalue is greater than $1$ if the fixed point is stable and is less than $1$ if the fixed point is unstable.  For a leaf-to-root path $e_\ell v_\ell e_{\ell-1} v_{\ell-1}\dots e_0 v_0$ we say its type is
\[
	\FacType_\ell \to \goout_\ell \to \tau_\ell \to \goin_{\ell-1} \to \FacType_{\ell-1} \to \goout_{\ell-1} \to \dots \to \goout_{1} \to \tau_1 \to \goin_{0} \to \FacType_0 \to \goout_0 \to \tau_0
\]
where $\FacType_t$ is the factor type of $e_t$, $\tau_t$ is the variable type of $v_t$, $\goout_{t}$ is the index of $v_t$ in $e_{t}$, and $\goin_t$ is the index of $v_{t+1}$ in $e_{t}$.  The amplification factor can then be written as:
\begin{align*}
	\alpha_\ell = \Tr\left(\sum_{\FacType_\ell\to\dots\to\tau_0} M(\FacType_\ell\to\dots\to\tau_0)\right)
\end{align*}
where the sum is enumerated over all leaf-to-root path types and $M(\FacType_\ell\to\dots\to\tau_0)$ is defined as follows:
\begin{align*}
	M(\FacType_\ell\to\dots\to\tau_0) &= \TypeDist(\tau_0) \cdot \left(\prod_{t=0}^{\ell-1}\frac{\FacDeg_{\FacType_t}}{\TypeDist(\tau_{t})}\cdot \prod_{s=1}^{\arity(\FacType_t)} \TypeDist(\Cl(\FacType_t)_s)\cdot\Ind[\Cl(\FacType_t)_{\goout_t}=\tau_t]\cdot\Ind[\Cl(\FacType_t)_{\goin_{t}}=\tau_{t+1}]\right)\cdot\\
	&\left(\prod_{t=0}^{\ell-1} \bTrans_{\FacType_t,\goout_t\mid\goin_t} \right)\left(\prod_{t=0}^{\ell-1} \bTrans_{\FacType_t,\goout_t\mid\goin_t} \right)^*\cdot\frac{\FacDeg_{\FacType_\ell}}{\TypeDist(\tau_\ell)}\prod_{s=1}^{\arity(i_\ell)}\TypeDist(\Cl(\FacType_\ell)_s)\cdot\Ind[\Cl(\FacType_\ell)_{\goout_\ell}=\tau_\ell].
\end{align*}
Now, let's define $V_{\ell,\tau}$ as:
\begin{align*}
	V_{\ell,\tau} \coloneqq \sum_{\substack{\FacType_\ell\to\dots\to\tau_0 \\ \tau_0 = \tau}} M(\FacType_\ell\to\dots\to\tau_0)
\end{align*}
and $V_\ell$ as the following $|T|\cdot\numcols\times|T|\cdot\numcols$ block diagonal matrix comprised of $\numcols\times \numcols$-dimensional blocks with block rows and columns indexed by $T$:
\[
	V_\ell[\tau,\tau] = V_{\ell,\tau}.
\]
Finally, we define a linear transformation $L$ on the space of $|T|\cdot\numcols\times|T|\cdot\numcols$ matrices.  To define $L(M)$ we treat $M$ as a block matrix comprised of $\numcols\times\numcols$-dimensional blocks with blocks rows and columns indexed by $T$.
\begin{align*}
	L(M)[\tau,\tau] \coloneqq ~& \TypeDist(\tau)\sum_{\tau'\to\goin\to\FacType\to\goout\to\tau} \frac{\FacDeg_{\FacType}}{\TypeDist(\tau)} \prod_{s=1}^{\arity(\FacType)} \TypeDist(\Cl(\FacType)_s)\cdot\Ind[\Cl(\FacType)_{\goout}=\tau]\cdot\Ind[\Cl(\FacType)_{\goin}=\tau'] \cdot \\
	&\bTrans_{\FacType,\goout\mid\goin} \cdot M[\tau',\tau'] \cdot \bTrans_{\FacType,\goout\mid\goin}^*.\\
	L(M)[\tau_1,\tau_2] \coloneqq ~& 0 &\text{for $\tau_1\ne\tau_2$.}
\end{align*}
Now observe that $V_{\ell+1} = L(V_\ell)$; consequently $V_\ell = L^\ell(V_0)$ and $\alpha_\ell = \Tr\left(L^\ell(V_0)\right)$.

We now connect the limiting behavior of the amplification factor $\alpha_\ell$ to the eigenvalues of $L$.  We start by making a few observations:
\begin{observation}	\label{obs:L-nonneg-type}
	If $M$ is a positive semidefinite matrix, then $L(M)$ is also a positive semidefinite matrix.
\end{observation}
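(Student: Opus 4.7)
The plan is to unpack the definition of $L(M)$ and observe that every ingredient preserves positive semidefiniteness, so the conclusion reduces to three standard linear-algebra facts that I will combine.

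First, I would note that $L(M)$ is block-diagonal in the $T$-indexed block structure: by definition $L(M)[\tau_1,\tau_2]=0$ whenever $\tau_1\neq\tau_2$. Since a block-diagonal matrix is positive semidefinite if and only if each of its diagonal blocks is positive semidefinite, the task reduces to showing that for every $\tau\in T$ the block $L(M)[\tau,\tau]$ is PSD.

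Next, I would examine the formula for the diagonal block,
\[
	L(M)[\tau,\tau] = \TypeDist(\tau)\sum_{\tau'\to\goin\to\FacType\to\goout\to\tau} c(\tau',\goin,\FacType,\goout,\tau)\cdot \bTrans_{\FacType,\goout\mid\goin}\, M[\tau',\tau']\, \bTrans_{\FacType,\goout\mid\goin}^{*},
\]
where $c(\tau',\goin,\FacType,\goout,\tau)$ is a product of type probabilities, an average factor density $\FacDeg_{\FacType}$, and indicator functions, and is therefore a nonnegative real number. The outer multiplier $\TypeDist(\tau)$ is also nonnegative. Thus $L(M)[\tau,\tau]$ is a nonnegative real-linear combination of matrices of the form $B\, M[\tau',\tau']\, B^{*}$ with $B=\bTrans_{\FacType,\goout\mid\goin}$.

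Finally, I would invoke the following well-known facts in sequence: (i) if $M$ is PSD then every principal submatrix, and in particular every diagonal block $M[\tau',\tau']$, is PSD; (ii) if $N$ is PSD and $B$ is any matrix of compatible dimensions, then $BNB^{*}$ is PSD, since writing $N=A A^{*}$ gives $BNB^{*}=(BA)(BA)^{*}$; and (iii) a nonnegative linear combination of PSD matrices is PSD. Applying (i) to each term, then (ii) to conjugate by $\bTrans_{\FacType,\goout\mid\goin}$, and then (iii) to sum over the finitely many tuples $(\tau',\goin,\FacType,\goout)$, yields that each $L(M)[\tau,\tau]$ is PSD, and combined with the first paragraph this shows $L(M)$ is PSD. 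There is no real obstacle here; the statement is essentially a structural check that $L$ is a completely positive map on the block-diagonal subspace, and the only thing to be careful about is confirming that every scalar coefficient appearing in the definition is indeed nonnegative, which is immediate from the definitions of $\TypeDist$, $\FacDeg$, and the indicator factors.
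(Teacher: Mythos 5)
The paper states this as an \emph{observation} without supplying an explicit proof, treating it as a routine structural check, so there is no competing argument to compare against. Your proof is correct and is exactly the verification the authors implicitly rely on: reduce to the diagonal blocks by block-diagonality, note that each scalar coefficient in the defining sum is a product of densities, type probabilities, and indicators (hence nonnegative), and then combine the three standard facts that principal blocks of a PSD matrix are PSD, that $B N B^*$ is PSD when $N$ is PSD, and that the PSD cone is closed under nonnegative linear combinations. One small point worth making explicit: the $*$ in $\bTrans_{\FacType,\goout\mid\goin}^*$ denotes the adjoint with respect to the weighted inner product $\langle\cdot,\cdot\rangle_{\innerp}$ introduced in \pref{sec:spectral} (cf.\ \pref{part:self-adj} of \pref{clm:transitional-matrix}), not the Euclidean transpose, so the notion of positive semidefiniteness being preserved should be read consistently with that inner product. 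Since $BNB^*$ is PSD for any adjoint operation and the corresponding inner product, your argument is unaffected, but flagging this would make the writeup airtight.
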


\begin{observation}	\label{obs:V0-pd}
	$V_0$ is a diagonal matrix with strictly positive entries on its diagonal and hence is positive definite.
\end{observation}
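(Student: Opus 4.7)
The plan is to simply unfold the definition of $V_0$ and observe that all nontrivial matrix structure collapses when $\ell=0$. The block-level structure is immediate from the definition of $V_\ell$: since $V_\ell[\tau_1,\tau_2] = V_{\ell,\tau_1}$ when $\tau_1 = \tau_2$ and equals $0$ otherwise, $V_0$ is automatically block diagonal with $\numcols\times\numcols$ diagonal blocks indexed by $T$, so it suffices to analyze each $V_{0,\tau}$ separately.

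Next I would substitute $\ell=0$ into the formula $V_{0,\tau} = \sum_{\FacType_0 \to \goout_0 \to \tau_0,\, \tau_0 = \tau} M(\FacType_0 \to \goout_0 \to \tau_0)$. The key point is that the matrix products $\prod_{t=0}^{\ell-1}\bTrans_{\FacType_t,\goout_t\mid\goin_t}$ and their adjoints appearing inside $M$ are empty when $\ell=0$ and reduce to $I_\numcols$. Consequently each summand collapses to the nonnegative scalar
\[
\TypeDist(\tau)\cdot \tfrac{\FacDeg_{\FacType_0}}{\TypeDist(\tau)}\cdot \prod_{s=1}^{\arity(\FacType_0)}\TypeDist(\Cl(\FacType_0)_s)\cdot\Ind[\Cl(\FacType_0)_{\goout_0}=\tau]
\]
times the identity $I_\numcols$. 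Summing over the admissible pairs $(\FacType_0,\goout_0)$ with $\Cl(\FacType_0)_{\goout_0} = \tau$ yields $V_{0,\tau} = c_\tau \cdot I_\numcols$ for some constant $c_\tau \ge 0$ that depends only on $\tau$ and the model.

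Finally, I would invoke the mild non-degeneracy assumption that every variable type $\tau\in T$ with $\TypeDist(\tau)>0$ appears as some index of at least one observation type $\FacType\in\Phi$ with $\FacDeg_\FacType>0$; any type not satisfying this plays no role in the model and can be discarded. Under this assumption at least one summand contributing to $c_\tau$ is strictly positive, so $c_\tau>0$ for every $\tau\in T$, and hence $V_0 = \bigoplus_{\tau\in T} c_\tau I_\numcols$ is diagonal with strictly positive entries on its diagonal, and therefore positive definite. There is no substantive obstacle here: the argument is purely a matter of bookkeeping, with the only minor subtlety being the non-degeneracy statement, which is essentially automatic.
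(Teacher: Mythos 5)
Your proof is correct and is essentially the only way to verify this: the paper states it as an unproved \emph{Observation}, and your unfolding of the definition of $V_{0,\tau}$ (empty matrix products collapse to $I_q$, scalar prefactors combine, the $\TypeDist(\tau_0)$ and $1/\TypeDist(\tau_\ell)$ cancel at $\ell=0$) is exactly the bookkeeping that justifies it. One point worth emphasizing: you are right that strict positivity of the diagonal is \emph{not} automatic from the definitions alone — it requires the non-degeneracy condition you state, namely that each type $\tau \in T$ appears in at least one slot of some observation type $\phi_i$ with $\ACFacDeg{i} > 0$ and with all the other slot types $\Cl(i)_s$ having $\TypeDist(\Cl(i)_s) > 0$. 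The paper leaves this assumption implicit (such a type plays no role and can be removed from the model), but your proof correctly surfaces it rather than glossing over it. The only cosmetic difference from what the paper presumably has in mind is that you spell the cancellation out rather than asserting the conclusion; substantively there is nothing to distinguish.
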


\begin{observation}	\label{obs:frob-norm-trace}
	Since $L^\ell(V_0)$ is positive semidefinite:
	\[
		\|L^\ell(V_0)\|_F \le \Tr\left(L^\ell(V_0)\right) \le \sqrt{\numcols\cdot|T|}\|L^\ell(V_0)\|_F.
	\]
\end{observation}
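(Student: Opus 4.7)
The plan is to reduce both inequalities to elementary statements about the eigenvalues of $L^\ell(V_0)$, using the fact that positive semidefiniteness guarantees nonnegative eigenvalues. By \pref{obs:L-nonneg-type} and \pref{obs:V0-pd}, the matrix $L^\ell(V_0)$ is positive semidefinite; let $\lambda_1,\dots,\lambda_d\ge 0$ be its eigenvalues, where $d = \numcols\cdot|T|$ is the ambient dimension. Then $\Tr(L^\ell(V_0)) = \sum_i \lambda_i$ and $\|L^\ell(V_0)\|_F = \sqrt{\sum_i \lambda_i^2}$ (since the Frobenius norm equals the $\ell_2$ norm of the singular values, and singular values of a PSD matrix coincide with its eigenvalues).

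For the lower bound $\|L^\ell(V_0)\|_F \le \Tr(L^\ell(V_0))$, I would square both sides and observe
\[
	\Tr(L^\ell(V_0))^2 = \left(\sum_i \lambda_i\right)^2 = \sum_i \lambda_i^2 + 2\sum_{i<j}\lambda_i\lambda_j \ge \sum_i \lambda_i^2 = \|L^\ell(V_0)\|_F^2,
\]
where the inequality uses $\lambda_i,\lambda_j\ge 0$ so that all cross terms are nonnegative. For the upper bound $\Tr(L^\ell(V_0)) \le \sqrt{\numcols\cdot|T|}\,\|L^\ell(V_0)\|_F$, I would invoke Cauchy-Schwarz applied to the vectors $(1,\dots,1)$ and $(\lambda_1,\dots,\lambda_d)$ in $\R^d$:
\[
	\sum_i \lambda_i \le \sqrt{d}\cdot\sqrt{\sum_i \lambda_i^2} = \sqrt{\numcols\cdot|T|}\,\|L^\ell(V_0)\|_F.
\]

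Neither step is the main obstacle since both are standard; the only thing worth emphasizing is that positive semidefiniteness (guaranteed by \pref{obs:L-nonneg-type} and \pref{obs:V0-pd}) is what lets us drop absolute values and treat the eigenvalues as nonnegative reals, which is essential for the first inequality.
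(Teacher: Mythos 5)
Your proof is correct and is exactly the standard argument the paper leaves implicit: the paper states this as an unproved ``observation,'' and the intended justification is precisely the eigenvalue argument you give, where positive semidefiniteness gives $\lambda_i\ge0$ so that $\|L^\ell(V_0)\|_F=\sqrt{\sum_i\lambda_i^2}\le\sum_i\lambda_i=\Tr(L^\ell(V_0))$ by dropping cross terms, and Cauchy--Schwarz in dimension $d=\numcols\cdot|T|$ gives the upper bound. Nothing is missing.
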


Our final ingredient is a lemma that appears in \cite[Theorem 16, part (ii)]{BC19}.
\begin{lemma}	\label{lem:spec-neg-type}
	If $L$ is a linear operator on $r\times r$ matrices with spectral radius $\lambda_L$ such that for any positive semidefinite matrix $M$, $L(M)$ is also positive semidefinite, then for any positive definite $M'$,
	\[
		\lambda_L = \lim_{\ell\to\infty} \|L^\ell(M')\|_F^{1/\ell}.
	\]
\end{lemma}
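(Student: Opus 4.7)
The plan is to invoke a Perron--Frobenius type theorem for cone-preserving linear operators. The PSD cone is a proper cone (closed, convex, pointed, with nonempty interior) in the space of real symmetric $r\times r$ matrices, and by hypothesis $L$ maps this cone into itself. The classical Krein--Rutman / Perron--Frobenius theorem for such cone-preserving operators then yields a nonzero PSD matrix $V^*$ with $L(V^*) = \lambda_L V^*$. I would invoke this as a black box; it can itself be proved by applying Brouwer's fixed point theorem to the continuous self-map $M \mapsto L(M)/\Tr(L(M))$ on a compact cross-section of the cone, perturbing $L$ to $L + \epsilon I$ if needed to ensure the denominator never vanishes and passing to a limit as $\epsilon \to 0$.

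With $V^*$ in hand, the proof splits into matching upper and lower bounds. The upper bound $\limsup_{\ell\to\infty}\|L^\ell(M')\|_F^{1/\ell} \le \lambda_L$ is immediate from Gelfand's formula applied to $L$ as an abstract linear operator on the Frobenius inner product space of $r \times r$ matrices: $\|L^\ell(M')\|_F \le \|L^\ell\|_{\mathrm{op}} \cdot \|M'\|_F$ and $\|L^\ell\|_{\mathrm{op}}^{1/\ell} \to \lambda_L$. For the lower bound, I would use the positive definiteness of $M'$: since $M'$ lies in the interior of the PSD cone, there is a constant $c > 0$ such that $M' \succeq c V^*$. Positivity preservation of $L^\ell$ then gives
\[
    L^\ell(M') \,\succeq\, c L^\ell(V^*) \,=\, c \lambda_L^\ell V^*,
\]
and since both sides are PSD, taking traces yields $\Tr(L^\ell(M')) \ge c \lambda_L^\ell \Tr(V^*)$. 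Finally, for any PSD matrix $A$ of size $r$, Cauchy--Schwarz applied to its eigenvalues gives $\Tr(A) \le \sqrt{r}\,\|A\|_F$, so
\[
    \|L^\ell(M')\|_F \,\ge\, \frac{c\,\Tr(V^*)}{\sqrt{r}} \cdot \lambda_L^\ell,
\]
whence $\liminf_{\ell\to\infty}\|L^\ell(M')\|_F^{1/\ell} \ge \lambda_L$. Combining the two bounds proves the claim.

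The only nontrivial ingredient is the Perron--Frobenius assertion that the spectral radius $\lambda_L$ is actually attained by an eigenvector inside the PSD cone; without this, one could not convert operator norm information into PSD-dominance information, and the sandwich $M' \succeq c V^*$ would be unavailable. Everything else in the argument is routine: Gelfand's formula for the upper bound, and for the lower bound the elementary fact that PSD-dominance is preserved by $L$ (by hypothesis) and that trace and Frobenius norm of PSD matrices are equivalent up to a $\sqrt{r}$ factor. If one prefers to keep the proof self-contained rather than cite Krein--Rutman, the Brouwer fixed-point sketch above is short and standard.
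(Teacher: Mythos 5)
The paper does not actually prove this lemma — it cites it as a black box from \cite[Theorem 16, part (ii)]{BC19}, so there is no in-text argument to compare against. Your proof is a correct, self-contained derivation via the standard finite-dimensional Perron--Frobenius theorem for cone-preserving operators, and every step is sound: the PSD cone is proper in the real symmetric matrices, positivity preservation yields a PSD eigenvector $V^*$ at the dominant eigenvalue, positive definiteness of $M'$ gives the sandwich $M' \succeq cV^*$ for $c = \lambda_{\min}(M')/\lambda_{\max}(V^*)$, and the trace/Frobenius equivalence for PSD matrices closes the loop. The Brouwer sketch for Krein--Rutman (normalize on the trace-one cross-section, perturb to strict positivity, pass to a limit) is also the standard one.

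One subtlety worth a sentence if you write this up carefully: the lemma as stated says ``$L$ is a linear operator on $r\times r$ matrices with spectral radius $\lambda_L$,'' but the Perron--Frobenius eigenvalue your argument produces is the spectral radius of the \emph{restriction} of $L$ to the symmetric matrices (since the PSD cone is a proper cone in the symmetric subspace, not in all of $\R^{r\times r}$). These can differ in general — e.g. an operator that acts as the identity on the symmetric part but scales the skew-symmetric part by $2$ preserves PSD and has full spectral radius $2$, yet $\|L^\ell(M')\|_F^{1/\ell}\to 1$ for any symmetric $M'$. So either $\lambda_L$ should be read as $\rho(L|_{\mathrm{sym}})$, or one should observe (as is true in the paper's application) that $L$ is a completely positive map $M\mapsto\sum_i A_i M A_i^*$, for which the two spectral radii coincide. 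Accordingly, your Gelfand upper bound should also be applied to $L|_{\mathrm{sym}}$ rather than to $L$ on all of $\R^{r\times r}$, which is harmless since $M'$ and all iterates $L^\ell(M')$ are symmetric. With that interpretation fixed, your two bounds match and the proof is complete.
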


By \pref{obs:L-nonneg-type}, \pref{obs:V0-pd} and \pref{lem:spec-neg-type}: $\lambda_L = \lim_{\ell\to\infty} \|L^\ell(V_0)\|_F^{1/\ell}$.  \pref{obs:frob-norm-trace} lets us conclude:
\[
	\lambda_L = \lim_{\ell\to\infty} \Tr\left(L^\ell(V_0)\right)^{1/\ell} = \lim_{\ell\to\infty} \alpha_\ell^{1/\ell}.
\]
Thus we prove \pref{lem:from-alpha-to-lambda}, and make the following conjecture:
\begin{conjecture}	\label{conj:L-matrix-eigvals}
	If $\lambda_L > 1$, then $\lim_{\ell\to\infty}\alpha_\ell$ goes to $\infty$ we conjecture that recovery is easy and if $\lambda_L \leq 1$, then $\lim_{\ell\to\infty}\alpha_\ell = 0$ and we conjecture that it is hard.
\end{conjecture}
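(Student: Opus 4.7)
The conjecture actually packages together two claims of quite different character: an analytic statement about the limiting behavior of $\alpha_\ell$, and a computational statement about weak recovery. The analytic half is essentially immediate from \pref{lem:from-alpha-to-lambda}. Since $\alpha_\ell = \Tr(L^\ell(V_0))$ with $V_0$ positive definite, \pref{obs:frob-norm-trace} sandwiches $\alpha_\ell$ between $\|L^\ell(V_0)\|_F$ and $\sqrt{\numcols |T|}\|L^\ell(V_0)\|_F$, and \pref{lem:spec-neg-type} identifies the exponential growth rate of the right-hand side as $\lambda_L$. Hence $\alpha_\ell \to \infty$ whenever $\lambda_L > 1$ and $\alpha_\ell \to 0$ whenever $\lambda_L < 1$. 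The borderline case $\lambda_L = 1$ needs care: here one would invoke Perron--Frobenius on the cone-preserving $L$ (\pref{obs:L-nonneg-type}) to rule out Jordan blocks of size larger than $1$ on the spectral radius, which suffices to keep $L^\ell(V_0)$ bounded.

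The easy half of the computational claim is the content of \pref{thm:informal-weak-recovery} and \pref{thm:informal-distinguishing}, for which the plan is exactly the spectral recipe described in the technical overview. I would form the centered nonbacktracking power $\CentAdj_{\bG}^{(\ell)}$ of \pref{def:centered-nb-power} for $\ell \in [(\log\log n)^2, \log n / (\log\log n)^2]$ and compare $|\lambda|_{\max}(\CentAdj_{\bG}^{(\ell)})$ to the threshold $\kappa = ((1+\delta)\sqrt{\lambda_L})^\ell$, where $\delta$ is chosen small enough that $\kappa < \lambda_L^\ell / \numcols$. \pref{thm:eig-lower-bound-plant} then provides an outlier eigenvalue above $\lambda_L^\ell / \numcols$ under $\Planted$, while \pref{thm:eig-upper-bound-null} bounds all eigenvalues by $((1+o(1))\sqrt{\lambda_L})^\ell$ under $\Null$; the two intervals are separated, yielding detection. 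For recovery, the same spectral data is used but the eigenvectors are collated across a range of $\ell$ as sketched in \pref{sec:recovery}, using the fact that the quadratic form $\langle x, \CentAdj_{\bG}^{(\ell)} x\rangle_{\innerp}$ is maximized in the direction of the centered planted indicator.

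The hard half, namely that weak recovery is computationally intractable when $\lambda_L \le 1$, is the main obstacle and is a genuinely open problem. A plausible attack would proceed through one of three routes: (a) a distribution-preserving reduction from an average-case hard problem in the spirit of \cite{BBH18,BB20}; (b) an unconditional lower bound in a restricted computational model such as low-degree polynomials, the local statistics or SOS hierarchies \cite{BMR19}, or statistical query algorithms \cite{FGRVY17}; or (c) a direct failure analysis showing that belief propagation initialized anywhere in a neighborhood of $\minit$ is attracted back to $\minit$ and thus cannot escape to an informative fixed point. I would first try route (b) by computing the low-degree likelihood ratio between $\Planted$ and $\Null$: the leading contributions should come from closed walks of length $\approx \ell$ whose weights are governed exactly by $\Tr(L^\ell(\Diag(\ColDist)))$, mirroring the cavity calculation that produced $\lambda_L$. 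Showing that this low-degree norm stays bounded when $\lambda_L \le 1$ would yield a matching lower bound. The main difficulty is that the correlated matrix-valued weights on hyperedges make the enumeration of self-avoiding/nonbacktracking closed structures substantially more intricate than in the community detection setting, where only scalar weights and pairwise factors appear, and even there the relevant low-degree lower bounds required significant care.
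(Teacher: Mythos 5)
This is labeled a conjecture in the paper, not a theorem, and the paper itself provides no proof of it; it is stated immediately after \pref{lem:from-alpha-to-lambda} as the natural conjectural interpretation of the cavity calculation. Your breakdown into three pieces (the analytic limit claim, the algorithmic ``easy'' direction, and the computational ``hard'' direction) is exactly the right way to read it, and your assessment of what is established versus what is open matches the paper's own stance: the identity $\alpha_\ell = \Tr(L^\ell(V_0))$ together with \pref{obs:frob-norm-trace} and \pref{lem:spec-neg-type} pins down the growth rate (so $\alpha_\ell \to \infty$ for $\lambda_L > 1$ and $\alpha_\ell \to 0$ for $\lambda_L < 1$ follow, with the boundary case $\lambda_L = 1$ genuinely requiring an extra argument as you note), the algorithmic claim for $\lambda_L > 1$ is what \pref{thm:informal-weak-recovery}, \pref{thm:informal-distinguishing}, and the machinery of \pref{sec:spectral}--\pref{sec:recovery} establish, and the hardness claim for $\lambda_L \le 1$ is left open. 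Your proposed low-degree attack on the hard direction is a reasonable program but is not something the paper attempts; the paper explicitly lists it as future work in the discussion section. In short, you have correctly recognized that there is nothing here to prove in full, and your decomposition mirrors the paper's own organization.
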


\section{A spectral distinguishing algorithm}	\label{sec:spectral}
We now describe the spectral distinguisher we use, which is based on linearizing the belief propagation algorithm outlined in \pref{sec:cavity-method}.  Recall that given $\bG$ sampled from either $\Null$ or $\Planted$ our goal is to output ``null'' if $\bG\sim\Null$ and ``planted'' if $\bG\sim\Planted$ with probability $1-o(1)$.  Further, the messages given by \pref{eq:v-to-e-upd} and \pref{eq:e-to-v-upd} are a fixed point for the BP update rule for $\Planted$ (which is equivalent to the detailed balanced condition \pref{eq:detailed-balance} holding).  The sample $\bG$ is given by the tuple $([n],\bE_1,\dots,\bE_F,\btau)$.  Our algorithm constructs a matrix called the \emph{null-centered nonbacktracking power matrix} and thresholds on its largest eigenvalue against a particular value $t$ which is a function of the null and planted models and outputs ``planted'' if the largest eigenvalue exceeds $t$ and ``null'' otherwise.

Recall the definition of the matrix $\CentAdj_{\bG}^{(\ell)}$.
\begin{align*}
		\sum_{\substack{i e_1 v_1 e_2 v_2 \dots e_{\ell} j \in \\ \text{all nonbacktracking walks} \\ \text{from $i$ to $j$ in complete} \\ \text{factor graph}}} \BPM{e_1}{i}{v_1}\cdot\BPM{e_2}{v_1}{v_2}\cdots\BPM{e_{\ell}}{v_{\ell-1}}{j}\cdot (\Ind[e_1\in\bG]-\Pr_{\Null|\btau}[e_1\in\bG])\cdots(\Ind[e_\ell\in\bG]-\Pr_{\Null|\btau}[e_{\ell}\in\bG])
\end{align*}
where the complete factor graph is defined as follows.
\begin{definition}	\label{def:comp-fac-graph}
	We define the \emph{complete factor graph} $\CompG_{n}=([n],E_1,\dots,E_F)$ where $E_i$ denotes the collection of all potential type-$i$ factors that could appear in $\bG$.
\end{definition}

We now describe our algorithm.\footnote{The details for why each step can be carried out efficiently are briefly discussed at the end of this section.}
\begin{itemize}
	\item Compute a matrix representation of the linear operator $L$ from the statement of \pref{conj:L-matrix-eigvals}.
	\item Let $\lambda_L$ be the spectral radius of $L$.
	\item Choose $\kappa$ strictly in between $\sqrt{\lambda}_L$ and $\lambda_L$.
	\item Let $s = \ceil{\sqrt{\log n}}$.  Compute $\CentAdj_{\bG}^{(s)}$ and compute its largest eigenvalue $\rho$.
	\item If $\rho > \kappa^s$, output ``planted'', otherwise output ``null''.
\end{itemize}

To prove that the above algorithm works it suffices to prove that when $\bG$ is sampled from the null distribution, all its eigenvalues are all less $\kappa^s$ and when $\bG$ is sampled from the planted distribution there is an eigenvalue greater than $\kappa^s$.  Henceforth we assume $\lambda_L > 1$.  To prove both of these facts under the hypothesis that $\lambda_L>1$, one ingredient we need is that the matrix $\CentAdj_{\bG}^{(s)}$ is self-adjoint under an appropriate inner product.

Given a vector in $\R^{nq}$ we treat it as a block vector comprising of $n$ blocks of dimension $q$ each where each block corresponds to a vertex in $[n]$.  Now, we define a $nq\times nq$-dimensional positive diagonal matrix $\innerp_{\btau}$ where the $(v,v)$ block is equal to:
\[
	\innerp_{\btau,(v,v)}[c,c] \coloneqq
	\begin{cases}
		\ColDist_{\btau(v)}(c) & \text{if $\ColDist_{\btau(v)}(c)>0$} \\
		1 & \text{otherwise.}
	\end{cases}
\]
We will use the following inner product on $\R^{nq}$:
\[
	\langle x, y\rangle_{\innerp} \coloneqq x^{\top}\innerp_{\btau}^{-1}y.
\]
\begin{remark}
	Ideally, we would like to simply place $\ColDist_{\btau(v)}(c)$ in every diagonal entry $[(v,c),(v,c)]$ and use the pseudoinverse of $\innerp$ instead.  But doing so leads to some complications of the defined bilinear form not necessarily satisfying strict positive definiteness required of an inner product.  The choice of $1$ is arbitrary and does not influence any of the statements since all the vectors we work with have zeros in the $(v,c)$ coordinates where $\ColDist_{\btau(v)}(c)$ is $0$, and also that coordinate subspace is in the kernel of every matrix we work with.
\end{remark}

\begin{claim}	\label{clm:self-adjointness}
	The matrix $\CentAdj_{\bG}^{(s)}$ is self-adjoint under $\langle\cdot,\cdot\rangle_{\innerp}$.
\end{claim}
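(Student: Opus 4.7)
The plan is to verify self-adjointness blockwise. Unrolling the definition of $\langle\cdot,\cdot\rangle_{\innerp}$, the claim $\CentAdj_{\bG}^{(s)} = (\CentAdj_{\bG}^{(s)})^{*}$ under this inner product is equivalent to the matrix identity $\innerp_{\btau}^{-1}\CentAdj_{\bG}^{(s)} = (\CentAdj_{\bG}^{(s)})^{\top}\innerp_{\btau}^{-1}$, and since $\innerp_{\btau}$ is block diagonal, this reduces to showing the blockwise identity
\[
\CentAdj_{\bG}^{(s)}[i,j] \;=\; \Dcolor_{\btau(i)}\, \bigl(\CentAdj_{\bG}^{(s)}[j,i]\bigr)^{\top} \Dcolor_{\btau(j)}^{-1}
\]
for every pair of variables $i,j\in[n]$ (where on coordinates corresponding to zero-probability colors, $\Dcolor$ is replaced by the identity, which is harmless for the reason the author explains in the remark).

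The next step is to set up the natural bijection that reverses nonbacktracking walks. For any walk $i\!=\!v_0\,e_1\,v_1\,e_2\cdots e_s\,v_s\!=\!j$ in $\CompG_n$, the reversed walk $j\!=\!v_s\,e_s\,v_{s-1}\cdots e_1\,v_0\!=\!i$ is again nonbacktracking, uses exactly the same factors $\{e_1,\dots,e_s\}$, and the centered-indicator weight $\prod_{t}(\Ind[e_t\in\bG]-\Pr_{\Null\mid\btau}[e_t\in\bG])$ is therefore invariant under the bijection. Consequently, a single term of $(\CentAdj_{\bG}^{(s)}[j,i])^{\top}$ corresponding to the walk is
\[
\BPM{e_1}{v_1}{v_0}^{\top}\,\BPM{e_2}{v_2}{v_1}^{\top}\cdots \BPM{e_s}{v_s}{v_{s-1}}^{\top}
\]
times the same centered-indicator weight appearing in $\CentAdj_{\bG}^{(s)}[i,j]$.

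The heart of the proof is then to telescope using part \ref{part:self-adj} of \pref{clm:transitional-matrix}, which rearranges to
\[
\Dcolor_{\btau(v_t)}\,\BPM{e_{t+1}}{v_{t+1}}{v_t}^{\top} \;=\; \BPM{e_{t+1}}{v_t}{v_{t+1}}\,\Dcolor_{\btau(v_{t+1})}.
\]
Applying this identity successively from the leftmost factor outward, the $\Dcolor_{\btau(v_0)}=\Dcolor_{\btau(i)}$ on the far left gets ``pushed through'' each transposed $\BPM$ block, flipping orientation and advancing the variable type, until it emerges as $\Dcolor_{\btau(v_s)}=\Dcolor_{\btau(j)}$ on the right. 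Multiplying by $\Dcolor_{\btau(j)}^{-1}$ exactly cancels this trailing factor, and the resulting walk-term is precisely $\BPM{e_1}{v_0}{v_1}\,\BPM{e_2}{v_1}{v_2}\cdots\BPM{e_s}{v_{s-1}}{v_s}$, which is the matching term in $\CentAdj_{\bG}^{(s)}[i,j]$. Summing over all nonbacktracking walks gives the desired identity.

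The only subtlety I anticipate is the pseudoinverse versus inverse distinction in the definition of $\innerp_{\btau}$: when $\ColDist_{\btau(v)}(c)=0$, \pref{clm:transitional-matrix} was stated with pseudoinverses, but $\innerp_{\btau}^{-1}$ uses genuine inverses with the replacement of zeros by $1$. This is purely a bookkeeping matter since any coordinate with $\ColDist_{\btau(v)}(c)=0$ is simultaneously in the kernel of every $\BPM$ block and in the kernel of every relevant vector, so the telescoping identity still holds after restricting to the support of the nontrivial color distributions and the $1$'s play no role.
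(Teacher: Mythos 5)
Your proposal is correct and, modulo presentation, takes the same route as the paper: both proofs reduce to the same telescoping argument using part \ref{part:self-adj} of \pref{clm:transitional-matrix} to push the diagonal factor $\Dcolor_{\btau(\cdot)}$ (equivalently $\innerp^{-1}$) from one end of a nonbacktracking walk-matrix product to the other, and both rely implicitly on the bijection between walks $i\to j$ and their reversals $j\to i$. The only cosmetic difference is that the paper works directly with the bilinear form $\langle x,\CentAdj_{\bG}^{(s)}y\rangle_{\innerp}$ rather than first phrasing the target as a blockwise matrix identity; you also spell out the pseudoinverse-vs.-replaced-by-$1$ bookkeeping, which the paper handles by the remark preceding the claim.
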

\begin{proof}
	For any vectors $x,y$:
	\begin{align*}
		\langle x, \CentAdj_{\bG}^{(s)} y\rangle_{\innerp} &= \sum_{u,v\in[n]} \sum_{\substack{u e_1 v_1 e_2 v_2 \\\dots e_{\ell} v \in \CompG_{n}}} x[u]^{\top} \innerp^{-1} \BPM{e_1}{u}{v_1}\cdot\BPM{e_2}{v_1}{v_2}\cdots\BPM{e_{\ell}}{v_{\ell-1}}{v} y[v] \cdot \\ & (\Ind[e_1\in\bG]-\Pr_{\Null|\btau}[e_1\in\bG])\cdots(\Ind[e_\ell\in\bG]-\Pr_{\Null|\btau}[e_{\ell}\in\bG])
	\end{align*}
	From \pref{part:self-adj} of \pref{clm:transitional-matrix},
	\begin{align*}
		x[u]^{\top}\innerp^{-1}[u,u]\BPM{e_1}{u}{v_1}\cdots\BPM{e_{\ell}}{v_{\ell-1}}{v}y[v] = x[u]^{\top}(\BPM{e_{\ell}}{v}{v_{\ell-1}}\cdots\BPM{e_1}{v_1}{u})^{\top}\innerp^{-1}[v,v]y[v].
	\end{align*}
	Plugging this back into the above gives:
	\begin{align*}
		&= \sum_{u,v\in[n]} \sum_{\substack{u e_1 v_1 e_2 v_2 \\\dots e_{\ell} v \in \CompG_{n}}} x[u]^{\top}(\BPM{e_{\ell}}{v}{v_{\ell-1}}\cdots\BPM{e_1}{v_1}{u})^{\top}\innerp^{-1}[v,v]y[v] \cdot \\
		&(\Ind[e_1\in\bG]-\Pr_{\Null|\btau}[e_1\in\bG])\cdots(\Ind[e_\ell\in\bG]-\Pr_{\Null|\btau}[e_{\ell}\in\bG]) \\
		&= \sum_{u,v\in[n]} \sum_{\substack{u e_1 v_1 e_2 v_2 \\\dots e_{\ell} v \in \CompG_{n}}} (\BPM{e_{\ell}}{v}{v_{\ell-1}}\cdots\BPM{e_1}{v_1}{u} x[u])^{\top}\innerp^{-1}[v,v]y[v] \cdot \\
		&(\Ind[e_1\in\bG]-\Pr_{\Null|\btau}[e_1\in\bG])\cdots(\Ind[e_\ell\in\bG]-\Pr_{\Null|\btau}[e_{\ell}\in\bG]) \\
		&= \langle \CentAdj_{\bG}^{(s)} x, y \rangle_{\innerp}
	\end{align*}
	which proves the claim.
\end{proof}

We first focus on obtaining spectral norm bounds on $\CentAdj_{\bG}^{(s)}$ in the null model.  We obtain these bounds from \pref{thm:main-spectral-bound} so we verify that the matrix $\CentAdj_{\bG}^{(s)}$ indeed meets the hypothesis of the theorem statement.  \pref{cond:self-adj} is satisfied due to \pref{clm:self-adjointness}.  As a consequence of the first part of \pref{clm:transitional-matrix}, all the matrices $M(\Path)$ are Markov transition matrices with an eigenspace projected away, and hence have all their entries bounded by $1$.  Since these matrices have dimension $q\times q$, their operator norm is bounded by some constant $C$ depending only on $q$, and hence \pref{cond:matrix-product-small} is also satisfied.  Next, $\rho(\BlankCol, \GModPar)$ is exactly equal to $\sqrt{\lambda_L}$, which by our assumption is greater than $1$.  Finally, we chose $s$ in the range handled by the theorem statement and thus \pref{thm:main-spectral-bound} implies:
\begin{theorem}
	Suppose $\bG\sim\Null$.  For every constant $\eps > 0$, with probability $1-o(1)$:
	\[
		|\lambda|_{\max}\left(\CentAdj_{\bG}^{(s)}\right) \le ((1+\eps)\sqrt{\lambda_L})^s.
	\]
\end{theorem}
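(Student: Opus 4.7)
The statement will follow by invoking the general matrix concentration result \pref{thm:main-spectral-bound} (the formal counterpart of \pref{thm:main-mat-conc-informal}) applied to $\CentAdj_{\bG}^{(s)}$, so my plan is simply to verify each of the theorem's hypotheses in this setting. First, I would check the self-adjointness hypothesis: this is precisely the content of \pref{clm:self-adjointness}, which shows $\CentAdj_{\bG}^{(s)}$ is self-adjoint under $\langle\cdot,\cdot\rangle_{\innerp}$ by combining the per-block identity $\Dcolor^{\dagger}_{\btau(v)}\BPM{\FacType(e)}{\index(v)}{\index(v')} = \BPM{\FacType(e)}{\index(v')}{\index(v)}^{\top}\Dcolor^{\dagger}_{\btau(v')}$ from \pref{clm:transitional-matrix} with a telescoping over the nonbacktracking walks that contribute to each block.

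Next I would verify the uniform boundedness hypothesis on products of block matrices along nonbacktracking walks. By the first part of \pref{clm:transitional-matrix}, each $\BPM{\FacType(e)}{\index(v)}{\index(v')}$ equals $(I - \ColDist_{\btau(v)}\mathbf{1}^{\top})\,\Facm_{\FacType(e),\index(v)\mid\index(v')}$, i.e.\ the composition of a $q\times q$ Markov transition matrix with the orthogonal projector away from its stationary direction. Each such matrix has entries bounded by $1$ in absolute value and operator norm bounded by a constant $C=C(q)$, and the projected Markov structure ensures that products along any nonbacktracking walk in $\CompG_{n}$ remain uniformly bounded by a constant depending only on the model rather than growing exponentially in the walk length.

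The third hypothesis $\mwgr(\bbT)\ge 1$ follows from \pref{lem:from-alpha-to-lambda}: the matrix-weighted growth rate of the Galton--Watson tree that locally approximates $\bG\sim\Null$ coincides (up to the square-root convention built into \pref{def:local-tree-approx}) with $\lim_{\ell\to\infty}\alpha_\ell^{1/\ell}=\lambda_L$, which is strictly greater than $1$ under the standing assumption $\lambda_L>1$. The walk length $s=\lceil\sqrt{\log n}\rceil$ chosen by the algorithm also lies in the admissible range $[(\log\log n)^{2},\,\log n/(\log\log n)^{2}]$ for all sufficiently large $n$. Invoking \pref{thm:main-spectral-bound} then yields $|\lambda|_{\max}(\CentAdj_{\bG}^{(s)}) \le ((1+\eps)\sqrt{\lambda_L})^{s}$ with probability $1-o(1)$, as desired.

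The main obstacle in this argument is entirely deferred to \pref{thm:main-spectral-bound} itself, whose proof (a fine-grained trace-method analysis that organizes nonbacktracking walks according to their shape, in the spirit of \cite{BLM15,BC19,stephan2020non}) is the technical heart of the paper. The one genuine subtlety at the present level is bookkeeping: the ``edge weights'' $\BPM{\cdot}{\cdot}{\cdot}$ coming from the same factor $e$ are correlated (they all depend on the single indicator $\Ind[e\in\bG]$ and its centering under $\Null$), so one has to match our $\CentAdj_{\bG}^{(s)}$ to the abstract model of \pref{def:random-matrix-model} by treating each factor as a right vertex carrying an entire bundle of matrix weights indexed by pairs of incident variables, rather than as a collection of independently weighted edges.
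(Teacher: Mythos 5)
Your proposal is correct and follows essentially the same route as the paper: invoke the matrix concentration theorem \pref{thm:main-spectral-bound} and verify its three hypotheses via \pref{clm:self-adjointness}, \pref{clm:transitional-matrix}, and the identity $\rho(\BlankCol,\GModPar)=\sqrt{\lambda_L}$ (which is what the remark following the definition of $\rho$ supplies; this is effectively the content of \pref{lem:from-alpha-to-lambda} restated in the notation of \pref{sec:eig-bounds}, so your citation is in the right spirit). The one place you underargue is \pref{cond:matrix-product-small}: it is not automatic that a product of individually bounded matrices stays bounded over walks of growing length, and asserting that ``the projected Markov structure ensures'' it hides the mechanism. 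What actually makes this work is a telescoping enabled by the detailed-balance fixed point: writing $\bTrans_j = (\mathbf{I}-\prior_j\mathbf{1}^{\top})\Facm_j$ and using $\Facm_j\prior_{j+1}=\prior_j$, one finds $\Facm_j(\mathbf{I}-\prior_{j+1}\mathbf{1}^{\top}) = \Facm_j - \prior_j\mathbf{1}^{\top}$ and $(\mathbf{I}-\prior_j\mathbf{1}^{\top})\prior_j=0$, so the intermediate projectors cancel and $\bTrans_1\cdots\bTrans_s = (\mathbf{I}-\prior_1\mathbf{1}^{\top})\Facm_1\cdots\Facm_s$ --- a single projector applied to a product of column-stochastic matrices, hence entrywise bounded by $1$ in magnitude and of operator norm $O(q)$. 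This is precisely what the paper means when it says $M(\Path)$ is ``a Markov transition matrix with an eigenspace projected away,'' and it is worth spelling out since the naive bound $\|\bTrans\|^{s}$ on the product would blow up and render the hypothesis false.
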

We can choose $\eps$ small enough so that $|\lambda|_{\max}\left(\CentAdj_{\bG}^{(s)}\right)\le\kappa^s$ for $\bG\sim\Null$ whp.

Finally, to prove that there is an eigenvalue greater than $\kappa^s$ when $\bG\sim\Planted$, by \pref{clm:self-adjointness} it suffices to illustrate a vector $x\in\R^{nq}$ such that $\frac{\langle x, \CentAdj_{\bG} x\rangle_{\innerp}}{\langle x,x\rangle_{\innerp}} \ge \kappa^s$.  Then as a direct consequence of \pref{thm:local-statistics-concentrate}:
\begin{theorem}
	Suppose $\bG\sim\Planted$.  There is an absolute constant $C$ such that with probability $1-o(1)$:
	\[
		|\lambda|_{\max}\left(\CentAdj_{\bG}^{(s)}\right) \ge C\lambda_L^s.
	\]
\end{theorem}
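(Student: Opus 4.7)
The plan is to prove the theorem as a quick corollary of the self-adjointness of $\CentAdj_{\bG}^{(s)}$ (\pref{clm:self-adjointness}) together with the statistics-concentration statement \pref{thm:local-statistics-concentrate}. Since $\CentAdj_{\bG}^{(s)}$ is self-adjoint with respect to $\langle\cdot,\cdot\rangle_{\innerp}$, the variational principle gives
\[
	|\lambda|_{\max}\!\left(\CentAdj_{\bG}^{(s)}\right)\;\geq\;\frac{\bigl|\langle x,\,\CentAdj_{\bG}^{(s)} x\rangle_{\innerp}\bigr|}{\langle x,x\rangle_{\innerp}}
\]
for any test vector $x\in\R^{nq}$ in the support of $\innerp^{-1}$. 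So it suffices to exhibit an $x$ built from the planted assignment $\bc$ for which the right-hand side is at least $C\,\lambda_L^{s}$ with probability $1-o(1)$.

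The natural choice of $x$ is a centered indicator of $\bc$ whose blocks align with the leading eigenvector of the operator $L$ defined in \pref{sec:from-alpha-to-lambda}. Concretely, let $v^\star\in\R^{|T|\cdot q}$ be (a suitable normalization of) a top eigenvector of $L$, written in block form as $v^\star=(v^\star_\tau)_{\tau\in T}$ with $v^\star_\tau\in\R^q$. Then I take $x\in\R^{nq}$ defined block-wise by
\[
	x[i]\;\coloneqq\;v^\star_{\btau(i)}\bigl(\bc(i)\bigr)\;-\;\bigl\langle v^\star_{\btau(i)},\ColDist_{\btau(i)}\bigr\rangle,
\]
i.e.\ a per-vertex centered version of the entries of $v^\star$ under the color $\bc(i)$. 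This vector generalizes the choice $\by=\bx\otimes(1,-1)^{\top}$ used in \pref{lem:corr} for \NAE$3$SAT, where the relevant top eigenvector of $L$ is proportional to $(1,-1)^{\top}$ in each type block.

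With this test vector in hand the two things to establish are (i) the denominator $\langle x,x\rangle_{\innerp}$ concentrates around $\Theta(n)$, which is immediate from the law of large numbers since the block-diagonal weights are bounded and the types $\btau(i)$ are i.i.d.; and (ii) the numerator $\langle x,\CentAdj_{\bG}^{(s)}x\rangle_{\innerp}$ is at least $\Omega(\lambda_L^{s}\cdot n)$ with probability $1-o(1)$. For (ii) I first compute $\E[\langle x,\CentAdj_{\bG}^{(s)}x\rangle_{\innerp}]$: expanding the sum over nonbacktracking walks in $\CompG_n$ and using the definition of the planted model, the expectation of each summand attached to a walk $i e_1 v_1\cdots e_s j$ becomes a product of the matrices $\bTrans$ applied along the walk, paired by the $v^\star$ coordinates at the endpoints. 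Summing over walks and averaging over random types/colors assembles exactly the trace of $L^{s}$ applied to the rank-one matrix $v^\star (v^\star)^{\top}$ embedded in the type-block diagonal, which by choice of $v^\star$ equals $\lambda_L^{s}\cdot\|v^\star\|^2\cdot\Theta(n)$; this matches the cavity-method calculation in \pref{sec:neighorhood-of-variable} and \pref{sec:from-alpha-to-lambda}.

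The main obstacle is step (ii)'s concentration: we need $\langle x,\CentAdj_{\bG}^{(s)}x\rangle_{\innerp}$ to stay close to its mean even though $x$ and $\CentAdj_{\bG}^{(s)}$ are correlated (both depend on $\bc$ and on $\bG$). My plan is to decompose the walk sum by the isomorphism type of the subgraph traced out in $\CompG_n$, identify the ``tree-like'' contribution as the dominant term (which is where the $\lambda_L^s$ growth comes from and which exhibits good independence structure by locally-tree-like neighborhoods from \pref{sec:local-dist-model}), and bound the ``tangled'' contributions via a second-moment/trace-method estimate analogous to the one developed in \pref{sec:eig-bounds}. The key technical input here is exactly the content of \pref{thm:local-statistics-concentrate}, proved in \pref{sec:statistics}: it packages the expectation computation together with a variance bound for this family of local statistics, allowing us to conclude that $\langle x,\CentAdj_{\bG}^{(s)}x\rangle_{\innerp}\ge C\lambda_L^{s}\,n$ with probability $1-o(1)$. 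Combining with the self-adjoint Rayleigh-quotient bound and the $\Theta(n)$ estimate on $\langle x,x\rangle_{\innerp}$ yields the claimed lower bound on $|\lambda|_{\max}(\CentAdj_{\bG}^{(s)})$.
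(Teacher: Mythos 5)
Your overall route is the paper's route: use self-adjointness of $\CentAdj_{\bG}^{(s)}$ under $\langle\cdot,\cdot\rangle_{\innerp}$ (\pref{clm:self-adjointness}) to lower-bound $|\lambda|_{\max}$ by a Rayleigh quotient, plug in a test vector built from the planted coloring, and appeal to \pref{thm:local-statistics-concentrate} for concentration. The paper makes this a one-liner. However, your proposal contains a genuine gap in the choice of test vector.

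You define a ``top eigenvector of $L$'' as a vector $v^\star\in\R^{|T|q}$, but $L$ is not a $|T|q\times|T|q$ matrix: as defined in \pref{sec:from-alpha-to-lambda} it is a \emph{superoperator} that maps $|T|q\times|T|q$ matrices $M$ to matrices $L(M)$. Its eigenvectors are therefore matrices, and there is no reason the leading one is rank-one of the form $v^\star(v^\star)^{\top}$ (this does happen for \NAE$3$SAT, which is why $\by = \bx\otimes(1,-1)^{\top}$ works there, but it is a special coincidence). Relatedly, the formula $x[i] := v^\star_{\btau(i)}(\bc(i)) - \langle v^\star_{\btau(i)},\ColDist_{\btau(i)}\rangle$ produces a scalar per block, not an element of $\R^q$, so $x$ does not even live in $\R^{nq}$ as claimed. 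More importantly, \pref{thm:local-statistics-concentrate} is proved for the specific vector $\statvec$ with blocks $\statvec^v = u_{\bc(v)}$ (the plain color indicator), not for an arbitrary $\bc$-dependent local statistic; invoking it for your $x$ would require redoing the moment computations of \pref{sec:statistics} (e.g.\ \pref{clm:sa-local-statistics-path}, whose key step is that averaging the indicator pairing over endpoint colors produces exactly $\Tr(\bTrans_{\Path}\bTrans_{\Path}^*)$, hence $\Tr(L^s(V_0))$ for the positive-definite $V_0=\Diag(\ColDist)$, and hence $\lambda_L^s$ growth without needing to identify any eigenmatrix of $L$). The fix is simple and removes the over-engineering: take $x=\statvec$ exactly as the paper does, at which point \pref{clm:self-adjointness} and \pref{thm:local-statistics-concentrate} immediately give $|\lambda|_{\max}(\CentAdj_{\bG}^{(s)})\ge\gamma\lambda_L^s$ with probability $1-o_n(1)$.
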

Since $s$ is super-constant and $\kappa$ is strictly less than $\lambda_L$, it is indeed true that $|\lambda|_{\max}\left(\CentAdj_{\bG}^{(s)}\right) \ge \kappa^s$ for $\bG\sim\Planted$ whp.  Consequently, we can summarize our main theorem on distinguishing the null distribution from the planted distribution:
\begin{theorem}
	When $\lambda_L > 1$, the task of distinguishing $\Null$ from $\Planted$ with high probability can be done in polynomial time.
\end{theorem}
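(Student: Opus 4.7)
The plan is a routine assembly of the two preceding theorems. Since $\lambda_L > 1$, I first fix a choice of $\kappa \in (\sqrt{\lambda_L}, \lambda_L)$ as in the algorithm, and then choose $\eps > 0$ small enough that $(1+\eps)\sqrt{\lambda_L} < \kappa$; this is possible because $\sqrt{\lambda_L} < \kappa$.

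Under the null distribution, applying the first preceding theorem with this $\eps$ gives $|\lambda|_{\max}(\CentAdj_{\bG}^{(s)}) \le ((1+\eps)\sqrt{\lambda_L})^s < \kappa^s$ with probability $1-o(1)$, so the algorithm outputs ``null'' correctly. Under the planted distribution, the second preceding theorem gives $|\lambda|_{\max}(\CentAdj_{\bG}^{(s)}) \ge C\lambda_L^s$ with probability $1-o(1)$ for an absolute constant $C$. Since $s = \lceil \sqrt{\log n}\rceil$ tends to infinity and $\lambda_L / \kappa > 1$, we have $C\lambda_L^s/\kappa^s = C(\lambda_L/\kappa)^s \to \infty$, so $C\lambda_L^s > \kappa^s$ for all sufficiently large $n$, and the algorithm outputs ``planted'' correctly. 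A union bound over these two events completes the correctness claim.

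For the polynomial running time, $L$ acts on a space whose dimension depends only on the model and not on $n$, so $\lambda_L$ and $\kappa$ can be computed in constant time. The matrix $\CentAdj_{\bG}^{(s)}$ has size $nq \times nq$ and can be built iteratively in $s$ rounds by extending centered nonbacktracking walks of length $t$ by one more hyperedge at a time using the complete factor graph $\CompG_n$. Since $s = O(\sqrt{\log n})$, each hyperedge has constant arity, the block weights are products of constant-sized matrices, and each per-round update touches $\mathrm{poly}(n)$ entries, this iterative construction takes time polynomial in $n$. Computing the largest eigenvalue of the resulting matrix is then standard linear algebra.

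The main obstacle for this theorem is not in its own proof but in the two ingredients it invokes: the spectral norm upper bound under the null model (carried out via the trace method and shape analysis of nonbacktracking walks) and the quadratic-form lower bound under the planted model (built from the centered indicator of the hidden coloring). The only subtlety in the present step is the interplay between the threshold $\kappa$ and the exponent $s$: by choosing $\kappa$ strictly between $\sqrt{\lambda_L}$ and $\lambda_L$ and exploiting that $s\to\infty$, both the null-side ratio $((1+\eps)\sqrt{\lambda_L}/\kappa)^s$ and the planted-side ratio $(\kappa/\lambda_L)^s$ tend to $0$, so the two cases separate by a super-constant multiplicative gap rather than by a constant factor, which is what makes the thresholding decision robust.
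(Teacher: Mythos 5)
Your proposal is correct and follows essentially the same route as the paper: fix $\kappa$ strictly between $\sqrt{\lambda_L}$ and $\lambda_L$, invoke the null-model eigenvalue bound and the planted-model quadratic-form lower bound, and observe the super-constant gap opens because $s\to\infty$. The one place the paper is more careful is the claim that $\lambda_L$ and $\kappa$ ``can be computed in constant time'': $\lambda_L$ is the spectral radius of a possibly non-normal operator and may be irrational, so the paper instead computes the approximant $\|L^{\log n}(\mathbf{I})\|_F^{1/\log n}$ (justified by its Lemma on $\lim_\ell\|L^\ell(M')\|_F^{1/\ell}$), which is accurate enough to land $\kappa$ in the open interval $(\sqrt{\lambda_L},\lambda_L)$ for large $n$; this is a minor polish rather than a gap in your argument.
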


\subsection{Implementation details}
Our first goal is to explain how to efficiently choose $\kappa$ which strictly between $\sqrt{\lambda_L}$ and $\lambda_L$ when $\lambda_L > 1$.  First note that there is a small enough $\eps$ such that if $\wt{\lambda}_L$ is an additive $\eps$-approximation of $\lambda_L$, then $\frac{\wt{\lambda_L}+\sqrt{\wt{\lambda}_L}}{2}$ lies strictly in between $\sqrt{\lambda_L}$ and $\lambda_L$.  By \pref{lem:spec-neg-type} there is large enough constant $C$ such that $\|L^C(\mathbf{I})\|_F^{1/C}$ is $\eps$-close to $\lambda_L$.  Thus, if our estimator $\wt{\lambda}_L$ is $\|L^{\log n}(\mathbf{I})\|_F^{1/\log n}$, then for large enough $n$, choosing $\kappa$ as $\frac{\wt{\lambda_L}+\sqrt{\wt{\lambda}_L}}{2}$ would give us a number strictly between $\sqrt{\lambda_L}$ and $\lambda_L$.

Our second goal is to explain how to efficiently compute the matrix $\CentAdj_{\bG}^{(s)}$.  Towards doing so, define the \emph{nonbacktracking walk generator matrix} as the matrix with rows and columns indexed by $vev'$ for variable vertices $v$ and $v'$ and constraint vertex $e'$:
\[
	B_{\bG}[(v_1e_1v_2),(v_3e_2v_4)] =
	\begin{cases}
		\BPM{e_2}{v_3}{v_4}\cdot \left(\mathbf{1}[e_2\in\bG] - \Pr_{\bG\sim\Planted|\btau}[e_2\in\bG]\right) &\text{if $e_1\ne e_2$, $v_2=v_3$}\\
		0 &\text{otherwise.}
	\end{cases}
\]
Let $S_{\bG}$ be the matrix with rows indexed by variables in $[n]$ and columns indexed by all $vev'$ where the $(v,vev')$ entry contains $\BPM{e}{v}{v'}\cdot (\mathbf{1}[e\in\bG] - \Pr_{\bG\sim\Planted|\btau}[e\in\bG])$ and the remaining entries contain $0$, and let $T_{\bG}$ be the matrix with rows indexed by all $vev'$ and columns indexed by variables in $[n]$ where the $(vev',v')$ entry is $\mathbf{I}$ and the remaining entries contain $0$.  Then $S_{\bG}B_{\bG}^{s-1}T_{\bG} = \CentAdj_{\bG}^{(s)}$, and it is apparent that the LHS can be computed efficiently.

Finally, since $\CentAdj_{\bG}^{(s)}$ is self-adjoint under $\langle\cdot,\cdot\rangle_{\innerp}$ by \pref{clm:self-adjointness} its largest eigenvalue can be efficiently computed via standard methods such as the power iteration method to a precision necessary for the distinguishing algorithm.


\section{Statistics for the planted model}	\label{sec:statistics}

Consider the planted model $\Model_n=(n,T,\TypeDist,C,\ColDist,\phi)$. Use $\bG=([n],\bE_1,\dots,\bE_F,\btau,\bc)$ to denote a sample from $\Model_{\scp}^{[n]}$.  Recall that in this section our goal is to prove a lower bound on the spectral radius of $\CentAdj_{\bG}^{(s)}$ for $s \le \frac{\log n}{(\log\log n)^2}$ by illustrating a ``witness'' vector with large quadratic form.

We define the \emph{local statistics vector} associated to $\bG$ denoted $\statvec\in\R^{\numcols \cdot n}$ to be the concatenation of vectors $\statvec^{v} = u_{\bc_v}$ for all $v\in[n]$, where $u_{c}\in\R^{q}$ is the indicator of vector of color $c$.  We will shorten $\langle g, g\rangle_{\innerp}$ to $\norm*{\statvec}^2$ in this section.

In this section we prove:
\begin{theorem}\label{thm:local-statistics-concentrate}
	Let $s \le \frac{\log n}{(\log\log n)^2}$.  There exists a constant $\gamma$ such that with probability $1-o_n(1)$:
	\[
		\frac{\left\langle \statvec, \CentAdj_{\bG}^{(s)}\statvec \right\rangle_{\innerp}}{\norm*{\statvec}^2} = \gamma \lambda_L^s.
	\]
\end{theorem}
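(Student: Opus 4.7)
The plan is to expand the bilinear form $\langle \statvec, \CentAdj_{\bG}^{(s)}\statvec\rangle_{\innerp}$ as a sum over length-$s$ nonbacktracking walks in the complete factor graph $\CompG_n$, compute its expectation under the planted law, and establish concentration via a second-moment calculation. The underlying combinatorics will mirror the cavity-method computation of $\alpha_s$ in \pref{sec:cavity-method}, with the infinite tree $\Tree$ there replaced here by walks in $\CompG_n$ of the same length.

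First, for the denominator, I would observe that $\|\statvec\|^2_{\innerp} = \sum_{v\in[n]} 1/\ColDist_{\btau(v)}(\bc(v))$ is, conditional on $\btau$, a sum of $n$ bounded independent terms; by Chebyshev's inequality it concentrates to $(1\pm o(1))\cdot c_0\cdot n$ for an explicit positive constant $c_0$. For the numerator, unrolling $\CentAdj_{\bG}^{(s)}$ gives
\[
\langle \statvec, \CentAdj_{\bG}^{(s)}\statvec\rangle_{\innerp} = \sum_{P = i\, e_1 v_1\cdots e_s\, j} u_{\bc(i)}^{\top}\innerp^{-1}[i,i]\, M(P)\, u_{\bc(j)}\,\prod_{t=1}^{s}\bigl(\Ind[e_t\in\bG]-\Pr_{\Null|\btau}[e_t\in\bG]\bigr),
\]
where $M(P)=\prod_t \BPM{e_t}{v_{t-1}}{v_t}$ and the outer sum ranges over length-$s$ nonbacktracking walks in $\CompG_n$. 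Conditional on $(\btau,\bc)$, the indicators $\Ind[e_t\in\bG]$ are independent Bernoullis with planted means $\phi_{\FacType(e_t)}(\bc(\cdot))/n^{\arity(e_t)-1}$, so for any walk whose $s$ factor-nodes are distinct (all but a $1/n$-fraction), the centered product has conditional expectation $\prod_t (\phi_{\FacType(e_t)}(\bc(\cdot))-\ACFacDeg{\FacType(e_t)})/n^{\arity(e_t)-1}$.

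Taking the remaining expectation over $\bc$, the color-dependent weight at each factor node combines with $M(P)$ via \pref{eq:bTrans-cond-prob} to produce precisely the matrix $\bTrans_{\FacType(e_t),\index(v_{t-1})\mid\index(v_t)}$ in each slot, together with the endpoint color probabilities contracted against the $u_{\bc(i)}$, $u_{\bc(j)}$ indicators. The per-walk expectation thus collapses to a trace of the form $\Tr\bigl(\prod_t \bTrans_{\dots}(\prod_t\bTrans_{\dots})^{*}\bigr)$, and summing over the $\sim n\cdot(n^{\arity-1})^s$ tree-like walks (each weighted by $1/n^{s(\arity-1)}$) yields $(1+o(1))\cdot n\cdot\alpha_s$. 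By \pref{lem:from-alpha-to-lambda}, $\alpha_s = (c_1+o(1))\cdot\lambda_L^s$ for a positive constant $c_1$, so dividing by the denominator gives ratio $(c_1/c_0)\lambda_L^s$ in expectation, identifying $\gamma = c_1/c_0$.

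Finally, I would establish concentration by computing $\E[(\text{numerator})^2]$ as a sum over pairs of walks $(P_1,P_2)$. Edge-disjoint pairs contribute $\E[\text{numerator}]^2(1+o(1))$. A pair sharing $r$ factor-nodes has $\Theta(n^{r(\arity-1)})$ fewer configurations, but each shared factor contributes $\E[(\Ind-p_{\Null})^2]=\Theta(1/n^{\arity-1})$ instead of $\Theta(1/n^{2(\arity-1)})$, so the two effects nearly cancel; a careful accounting -- using the $O(1)$ operator-norm bound on products of $\BPM{}{}{}$'s from \pref{clm:transitional-matrix} and the constraint $s\le\log n/(\log\log n)^2$ -- shows each additional overlap costs at most a subpolynomial factor and the total variance is $o((n\lambda_L^s)^2)$. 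The hard part of the proof will be exactly this second-moment computation: enumerating the overlap patterns between two nonbacktracking walks and bounding each pattern's contribution in the spirit of the trace-method analyses of \cite{BLM15,BC19}, though here we only need a second moment rather than the full high-moment bound used for the null-model spectral estimate.
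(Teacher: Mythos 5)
Your proposal is correct and follows essentially the same route as the paper's proof: expand the quadratic form over length-$s$ nonbacktracking walks in $\CompG_n$, show that each self-avoiding walk contributes $\Pr_{\bG\sim\Model|\btau}[\Path\in\bG]\cdot\Tr(\bTrans_\Path\bTrans_\Path^*)$ in expectation so the total is $(1\pm o(1))\,n\lambda_L^s$, bound the non-self-avoiding walks' contribution separately, and control the variance via a second-moment bound over overlapping walk pairs, exactly the plan you flag as "the hard part." The one small imprecision to note is that where you write "walks whose $s$ factor-nodes are distinct," the paper's argument requires the stronger \emph{self-avoidance} condition (all incident variable nodes distinct, i.e.\ excess $\Chi(\Path)=1$), which is what actually makes the color expectation factorize into the trace formula.
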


To prove \pref{thm:local-statistics-concentrate} we will introduce and recall some notation to streamline the proofs.
First recall that we defined $\CompG_n$ in \pref{def:comp-fac-graph} as the instance on variable set $[n]$ with all potential factors that could appear in a graph sampled from $\Model_n$.
Let $\Path$ be some length-$2s$ walk $(v_0\to{e_{0}}\to v_{1} \dots v_{s-1}\to{e_{s-1}}\to v_s)$ in $\CompG_n$ starting and ending at a variable node.  Use $\Path_e$ to denote the factor nodes in $\Path$ and $\Path_v$ to denote the variable nodes in $\partial\Path_e$.\footnote{We would like to stress that $\Path_v$ also includes vertices that are \emph{not} walked on, but are incident to factor nodes which are walked on.}  We use $\Path_v[i]$ to denote $v_i$ and $\Path_e[i]$ to denote $e_i$.  Recall that we use $\theta(e)$ to denote the type of a factor node $e$.  We use $\Chi(\Path)$ to denote the number of excess edges in $\Path$.  Concretely:
\begin{definition}
	$\Chi(\Path) = \abs{\Path_v} - \sum_{j\in[s]}\left(\arity(\FacType(p_e[j])\right)$.
\end{definition}

\begin{definition}
	Let $\NB(s)$ denote the set of all length-$s$ nonbacktracking walks in $\CompG_n$.
\end{definition}

\begin{definition}
	$\bTrans_{\Path} \coloneqq \prod_{j=0}^{s-1} \bTrans_{\FacType(\Path_e[j]),\index(\Path_v[j])\mid \index(p_v[j+1])}$.
\end{definition}

\begin{definition}
	$\Facm_{\Path} \coloneqq \prod_{j=0}^{s-1} \Facm_{\FacType(\Path_e[j]),\index(\Path_v[j])\mid \index(p_v[j+1])}$.
\end{definition}

For $\bG\sim\Model_n$, we define the following notation:
\begin{definition}
	$\expwlk(\bG,\Path) := \bTrans_{\Path} \prod_{j=0}^{s-1}\left(\mathbf{1}[\Path_e[j]\in\bG] - \Pr_{\bG\sim\Planted|\btau}[\Path_e[j]\in\bG]\right)$.
\end{definition}

\begin{definition}
	We will use $\mathrm{wt}(\Path)$ to denote $\prod_{j=0}^{s-1}\left(\mathbf{1}[\Path_e[j]\in\bG] - \Pr_{\bG\sim\Planted|\btau}[\Path_e[j]\in\bG]\right)$.
\end{definition}

\begin{remark}
	In the above language, the centered nonbacktracking power matrix of $\bG$ is:
	\[
		\CentAdj_{\bG}^{(s)}[u,v] = \sum_{p\in\NB(s)|p_v[0]=u,p_v[s]=v} \expwlk(\bG,\Path).
	\]
\end{remark}

Recall that a block $(u,v)$ in the centered nonbacktracking power matrix $\CentAdj_{\bG}^{(s)}$ captures the matrix weight of all length-$2s$ nonbacktracking walks from $u$ to $v$.  Although the paths are nonbacktracking, there could be variables that are visited multiple times and there could also be off-path variables in $p_v$ connected to multiple factors.  It is hard to pinpoint the statistics precisely for this kind of walks, but luckily their contribution is negligible.  Thus, we first remove these ``bad'' walks from $\CentAdj_G^{(s)}$ and analyze the resulting matrix $\SACentAdj_{\bG}^{(s)}$, and then bound the contribution of these bad walks.

We give a formal definition of the ``nice'' walks that are kept in $\SACentAdj_{\bG}^{(s)}$.
\begin{definition}
	A path $\Path$ is \emph{self-avoiding} if 
	$\Chi(\Path) = \abs{\Path_v} - \sum_{j\in[s]}\left(\arity(\Class_{\Path}(j)) - 1\right) = 1$.
\end{definition}

In any self-avoiding path, every variable node in the interior of the path has degree $2$ (i.e. is contained in $2$ factors in $\Path$) and each of the other variable nodes has degree $1$  (i.e. is contained in $1$ factor in $\Path$).

\begin{definition}
	Let $\SA(s)$ denote the set of all length-$2s$ self-avoiding walks in $\CompG_n$.
\end{definition}

\begin{definition}	\label{def:sa-matrix}
	The centered self-avoiding-walk matrix of $\bG$ is $\SACentAdj_{\bG}^{(s)}\in \R^{\numcols n\times \numcols n}$ where the $(u,v)$-th block of the matrix is
	\[
		\SACentAdj_{\bG}^{(s)}[u,v] = \sum_{\Path\in\SA(s)\mid \Path_v[0]=u, \Path_v[s]=v} \expwlk(\bG,\Path).
	\]
\end{definition}

\subsection{Statistics for the centered self-avoiding-walk matrix}


\begin{claim}	\label{clm:sa-local-statistics-path}
For any self-avoiding path $\Path$ in $\CompG_n$, 
\[
	\E_{\bG}\left[\left\langle \statvec^{\Path_v[0]}, \expwlk(\bG,\Path)\statvec^{\Path_v[s]} \right\rangle_{\innerp}\right] = \Pr_{\bG\sim\Model|\btau}[\Path\in\bG]\cdot\Tr\left(\bTrans_{\Path}\bTrans_{\Path}^*\right). 
\]
\end{claim}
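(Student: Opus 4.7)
The plan is to take $\E_{\bG}$ in two stages: first condition on the hidden coloring $\bc$ and types $\btau$ and integrate over the edge inclusions in $\bG$, then average over $\bc$ against the product prior. Self-avoidance of $\Path$ (i.e., $\Chi(\Path)=1$) implies that the factors $\Path_e[0],\ldots,\Path_e[s-1]$ are all distinct, so their presences in $\bG$ are conditionally independent given $(\bc,\btau)$, giving
\[
	\E_{\bG\mid\bc,\btau}[\mathrm{wt}(\Path)] = \prod_{j=0}^{s-1}\frac{\phi_{\FacType(\Path_e[j])}(\bc|_{\Path_e[j]}) - \ACFacDeg{\FacType(\Path_e[j])}}{n^{\arity(\FacType(\Path_e[j]))-1}}.
\]
Unfolding the quadratic form via $u_{c_{v_0}}^{\top}\innerp^{-1}\bTrans_{\Path}u_{c_{v_s}} = \bTrans_{\Path}[c_{v_0},c_{v_s}]/\ColDist_{\btau(v_0)}(c_{v_0})$ and averaging over $\bc$, I would rewrite the LHS as a sum over colorings of $V=V_{\mathrm{on}}\sqcup\bigsqcup_j W_j$, the set of variables touched by $\Path$, where $W_j$ is the set of side variables of $\Path_e[j]$ (and these are disjoint across $j$ by self-avoidance).

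Next I would sum out the side variables $c_{W_j}$ factor-by-factor using the per-factor identity
\[
	\sum_{c_{W_j}}\prod_{w\in W_j}\ColDist_{\btau(w)}(c_w)\bigl[\phi_{i_j}(c|_{e_j}) - \ACFacDeg{i_j}\bigr] = \ACFacDeg{i_j}\cdot\frac{\bTrans_{i_j,\index(v_j)\mid\index(v_{j+1})}[c_{v_j},c_{v_{j+1}}]}{\ColDist_{\btau(v_j)}(c_{v_j})},
\]
which is the crux of the calculation. It follows because detailed balance implies that the marginal of $\ldist_{i_j}$ onto positions $(v_j,v_{j+1})$ equals $\Facm_{i_j,\index(v_j)\mid\index(v_{j+1})}[c_{v_j},c_{v_{j+1}}]\cdot\ColDist_{\btau(v_{j+1})}(c_{v_{j+1}})$, so integrating the $\phi_{i_j}$ part against $\prod_{w\in W_j}\ColDist_{\btau(w)}$ yields $\ACFacDeg{i_j}\Facm_{i_j}/(\ColDist_{\btau(v_j)}\ColDist_{\btau(v_{j+1})})$, and subtracting the constant $\ACFacDeg{i_j}$ converts $\Facm_{i_j}$ into $\bTrans_{i_j}=(I-\ColDist\mathbf{1}^{\top})\Facm_{i_j}$.

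With this identity substituted, the outer sum over $c_{V_{\mathrm{on}}}$ telescopes: each interior on-path variable $v_\ell$ contributes a $\ColDist_{\btau(v_\ell)}(c_{v_\ell})$ from the outer product that exactly cancels the $1/\ColDist_{\btau(v_\ell)}(c_{v_\ell})$ from the adjacent $\bTrans_{i_\ell}$, leaving a net weight $\ColDist_{\btau(v_s)}(c_{v_s})/\ColDist_{\btau(v_0)}(c_{v_0})$. Summing the intermediate indices $c_{v_1},\ldots,c_{v_{s-1}}$ collapses $\prod_{j}\bTrans_{i_j,\index(v_j)\mid\index(v_{j+1})}$ to $\bTrans_{\Path}[c_{v_0},c_{v_s}]$, and so
\[
	\E[\text{LHS}] = \frac{\prod_j\ACFacDeg{i_j}}{n^{\sum_j(\arity(i_j)-1)}}\sum_{c_{v_0},c_{v_s}}\frac{\ColDist_{\btau(v_s)}(c_{v_s})}{\ColDist_{\btau(v_0)}(c_{v_0})}\bTrans_{\Path}[c_{v_0},c_{v_s}]^2.
\]
The inner sum equals $\Tr(\bTrans_{\Path}\bTrans_{\Path}^*)$ by part (ii) of \pref{clm:transitional-matrix}, and the same telescoping argument applied to $\E_{\bc}[\prod_j\phi_{i_j}(\bc|_{e_j})]$ (the LHS with the $\bTrans_\Path$ insertion removed) produces $\prod_j\ACFacDeg{i_j}$ for self-avoiding $\Path$, so the prefactor equals $\Pr_{\Planted\mid\btau}[\Path\in\bG]$. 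The main subtlety, though I do not expect it to be a serious obstacle, is the compatibility between the centering $\phi_{i_j}-\ACFacDeg{i_j}$ of the observations and the centering $\bTrans=(I-\ColDist\mathbf{1}^{\top})\Facm$ of the transition matrices, which holds precisely because detailed balance is assumed.
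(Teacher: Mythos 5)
Your argument is correct and follows essentially the same route as the paper's own proof: condition on $(\bc,\btau)$, use conditional independence of the distinct factors (valid because self-avoidance ensures distinctness), integrate out off-path side variables to convert centered factor densities into $\bTrans$ matrices, and telescope the on-path priors against the $1/\ColDist$ factors to identify $\Tr(\bTrans_{\Path}\bTrans_{\Path}^*)$. The paper's presentation differs only cosmetically, invoking the self-adjointness identity from \pref{clm:transitional-matrix} at the start of the chain and expressing the centered probabilities via the local measures $\mu_{e_j}$ before summing out off-path colors, rather than isolating the per-factor identity you state explicitly; both rest on the detailed-balance assumption in the same place.
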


\begin{proof}
	The proof is via a chain of equalities.  To lighten notation we use $e_j$ to denote $\Path_e[j]$ and $v_j$ to denote $\Path_v[j]$.  We use $\mathrm{Int}(\Path_v)$ to denote the interior vertices of $\Path$.
	\begin{align*}
		\E_{\bG|\btau}\left[\left\langle \statvec^{v_0}, \expwlk(\bG, \Path) \statvec^{v_s} \right\rangle_{\innerp}\right] &= \sum_{c:\Path_v\to[\numcols]} \prod_{w\in\Path_v} \ColDist_{\btau(w)}(c(w)) \cdot
		\E_{\bG|\btau,\bc} \bigg[\innerp^{-1}[(v_0,c(v_0)),(v_0,c(v_0))] \cdot \\
		&\prod_{j=0}^{s-1} (\Ind[e_j\in\bG]-\Pr_{\bG\sim\Model|\tau}[e_j\in\bG]) \bTrans_{\Path}[c(v_0),c(v_s)] \bigg] \\
		&= \sum_{c:\Path_v\to[q]} \prod_{w\in\Path_v} \ColDist_{\btau(w)}(c(w)) \cdot \prod_{j=0}^{s-1} (\Pr_{\bG\sim\Model|\btau,\bc}[e_j\in\bG] - \Pr_{\bG\sim\Model|\btau}[e_j\in\bG]) \cdot \\
		& \frac{1}{\ColDist_{\btau(v_s)}(c(v_s))} \cdot
		\bTrans_{\Path}^*[c(v_s),c(v_0)] \\
		&= \sum_{c:\Path_v\to[q]} \prod_{w\in\mathrm{Int}(\Path_v)} \frac{1}{\ColDist_{\btau(w)}(c(w))} \cdot
		\prod_{j=0}^{s-1} \Pr_{\bG\sim\Model|\btau}[e_j\in\bG] \cdot \left(\mu_{e_j}(c(\partial e_j)) - \prod_{w\in\partial e_j} \ColDist_{\btau(w)}(c(w)) \right) \cdot \\
		& \frac{1}{\ColDist_{\btau(v_s)}(c(v_s))} \cdot
		\bTrans_{\Path}^*[c(v_s),c(v_0)] \\
		&= \Pr_{\bG\sim\Model|\btau}[p\in\bG] \sum_{c:\{v_0,\dots,v_s\}\to[q]} \prod_{j=0}^{s-1} (\Facm_{e_j,v_j|v_{j+1}}[c(v_j),c(v_{j+1})]-\ColDist_{\btau(v_j)}) \cdot \bTrans_{\Path}^*[c(v_s),c(v_0)] \\
		&= \Pr_{\bG\sim\Model|\btau}[p\in\bG] \sum_{c(v_0),c(v_s)} \bTrans_{\Path}[c(v_0),c(v_s)] \cdot \bTrans_{\Path}^*[c(v_s),c(v_0)] \\
		&= \Pr_{\bG\sim\Model|\btau}[p\in\bG] \cdot \Tr\left(\bTrans_{\Path}\bTrans_{\Path}^*\right)
	\end{align*}
\end{proof}

Now we give precise estimates for the statistics of the centered self-avoiding walk matrix.
\begin{lemma}\label{lem:sa-local-statistics-expectation}
	\[
		\E_{\bG}\left[\left\langle \statvec, \SACentAdj_{\bG}^{(s)}\statvec \right\rangle_{\innerp}\right] = (1-o_n(1))\lambda_{L}^s n.
	\]
\end{lemma}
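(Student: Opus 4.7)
The plan is to apply \pref{clm:sa-local-statistics-path} termwise, then match the resulting sum of local path statistics against the amplification factor $\alpha_s$, which by \pref{lem:from-alpha-to-lambda} grows like $\lambda_L^s$.

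First I would expand, using linearity and \pref{clm:sa-local-statistics-path}, as
\[
	\E_{\bG}\left[\left\langle \statvec, \SACentAdj_{\bG}^{(s)}\statvec \right\rangle_{\innerp}\right] = \E_{\btau}\bracks*{\sum_{v_0\in [n]}\,\sum_{\substack{\Path\in\SA(s) \\ \Path_v[0]=v_0}} \Pr_{\bG\sim\Model\mid\btau}[\Path\in\bG]\cdot\Tr\left(\bTrans_\Path\bTrans_\Path^*\right)}.
\]
The crux will be to show that for a typical $v_0$ of type $\tau$, the inner sum equals $(1+o(1))\alpha_s^{(\tau)}$, where $\alpha_s^{(\tau)}$ denotes the amplification factor of \pref{sec:neighorhood-of-variable} with the Galton-Watson root conditioned to type $\tau$.

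To see this, I would group self-avoiding paths starting at $v_0$ by their \emph{profile}: the sequence of factor types, variable types, and in/out indices encountered along the walk. For a fixed profile, the number of self-avoiding realizations in the complete graph starting at $v_0$ is $(1-o(1))\prod_{j}(n\TypeDist(\tau_{j+1}))$, where the self-avoidance penalty is only $O(s^2/n) = o(1)$ because $s \le \log n / (\log\log n)^2 \ll \sqrt{n}$. Multiplying by the inclusion probability $\prod_j \phi_{\FacType(\Path_e[j])}/n^{\arity(\FacType(\Path_e[j]))-1}$ the $n$ factors cancel, and what remains are exactly the Poisson means $\frac{\ACFacDeg{\FacType(\Path_e[j])}}{\TypeDist(\tau_j)}\prod_k \TypeDist(\Cl(\FacType(\Path_e[j]))_k)$ identified in \pref{clm:varfac-distribution}. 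These are precisely the weights governing the recursive construction of $\Tree_s$, so summing over profiles reconstructs $\alpha_s^{(\tau)}$ up to relative error $o(1)$.

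Summing over $v_0\in[n]$ and using a Chernoff bound to replace $|\{v:\btau(v)=\tau\}|$ by $(1 \pm o(1))n\TypeDist(\tau)$ yields
\[
	\E_{\bG}\left[\left\langle \statvec, \SACentAdj_{\bG}^{(s)}\statvec \right\rangle_{\innerp}\right] = (1+o(1))\cdot n \cdot \sum_{\tau}\TypeDist(\tau)\,\alpha_s^{(\tau)} = (1+o(1))\cdot n \cdot \alpha_s,
\]
and applying the identification $\alpha_s = \Tr(L^s(\Diag(\ColDist)))$ together with \pref{lem:spec-neg-type} gives $\alpha_s = (1+o(1))\lambda_L^s$, finishing the proof. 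The main obstacle I anticipate is in the profile-matching step: first, the small correlations among the indicators $\varfacbin{\tau}{\cdot}$ flagged in \pref{sec:local-dist-model} must be shown to contribute only $o(1)$ error; second, care is needed to verify that the profile-indexed sum reassembles the \emph{matrix-valued} expectation $\E_{\Tree}[\Tr(\bTrans_\Path\bTrans_\Path^*)]$ faithfully—tracking the in/out index structure of each factor through the product—rather than a scalar proxy, so that the full $\Tr(L^s(\cdot))$ identity (and not merely a crude norm bound) is what emerges.
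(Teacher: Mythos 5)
Your proposal takes essentially the same route as the paper: apply \pref{clm:sa-local-statistics-path} termwise and then recognize the resulting sum over self-avoiding paths as $(1\pm o(1))n\cdot\alpha_s$, hence $(1\pm o(1))n\lambda_L^s$ via \pref{lem:from-alpha-to-lambda}. The paper's own proof is considerably more terse — it collapses everything after the claim into the single line ``$= n\cdot(1-o_n(1))\lambda_L^s$ by the definition of $\lambda_L$'' — so your profile-grouping argument, self-avoidance correction of order $O(s^2/n)$, and concentration of type counts are precisely the details the paper leaves implicit, not a different method.
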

\begin{proof}
	Expanding out $\SACentAdj_{\bG}^{(s)}$ as a sum by its definition in \pref{def:sa-matrix} gives:
	\begin{align*}
		\E_{\bG}\left[\left\langle \statvec, \SACentAdj_{\bG}^{(s)}\statvec \right\rangle_{\innerp}\right]
		=& \sum_{\Path\in \SA(s)} \E_{\bG}\left[\left\langle \statvec^{\Path_v[0]}, \expwlk(\bG,\Path)  \statvec^{\Path_v[s]}\right\rangle_{\innerp}\right] \\
		=& \sum_{\Path\in \SA(s)} \Pr_{\bG\sim\Model|\btau}[\Path\in\bG] \cdot\Tr\left(\bTrans_{\Path}\bTrans_{\Path}^*\right) \quad(\text{via \pref{clm:sa-local-statistics-path}}) \\
		=& n\cdot(1-o_n(1))\lambda_L^s \quad(\text{by the definition of }\lambda_L).
	\end{align*}
\end{proof}

We next bound the variance of the local statistics.

\begin{lemma}\label{lem:sa-local-statistics-variance}
There is an absolute constant $C$ such that:
\[
	\E_{\bG}\left[\left\langle \statvec, \SACentAdj_{\bG}^{(s)}\statvec \right\rangle_{\innerp}^2\right] -  \E_{\bG}\left[\left\langle \statvec, \SACentAdj_{\bG}^{(s)}\statvec \right\rangle_{\innerp}\right]^2 \leq  n(Cs)^{Cs}.
\]
\end{lemma}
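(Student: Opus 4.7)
The plan is to expand the second moment as
\[
\E_\bG\left[\left\langle \statvec, \SACentAdj_\bG^{(s)} \statvec\right\rangle_\innerp^2\right] = \sum_{\Path, \Pathp \in \SA(s)} \E_\bG[X_\Path X_\Pathp],
\]
where $X_\Path \coloneqq \langle \statvec^{\Path_v[0]}, \expwlk(\bG,\Path)\statvec^{\Path_v[s]}\rangle_\innerp$, subtract $\sum_{\Path,\Pathp} \E[X_\Path]\E[X_\Pathp]$, and bound the resulting covariance sum by a trace-method-style enumeration over pairs of walks. The first observation is that a pair $(\Path,\Pathp)$ with $\Path_v \cap \Pathp_v = \emptyset$ contributes zero: since $\bc$ is sampled as a product distribution across vertices and, conditional on $(\bc,\btau)$, the factor indicators $\Ind[e\in\bG]$ are independent across distinct potential factors, the variables $X_\Path$ and $X_\Pathp$ are independent whenever the walks share no variable node, so $\E[X_\Path X_\Pathp] = \E[X_\Path]\E[X_\Pathp]$. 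Hence only vertex-overlapping pairs contribute to the variance.

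For an overlapping pair I would bound $|\E[X_\Path X_\Pathp]| + |\E[X_\Path]\E[X_\Pathp]|$ by conditioning on $(\bc,\btau)$ and using (a) $|\Pr_{\Model|\bc,\btau}[e\in\bG] - \Pr_{\Model|\btau}[e\in\bG]| = O(n^{-(\arity(e)-1)})$ for factors appearing in exactly one of the walks, and (b) $\E_{\bG|\bc,\btau}[(\Ind[e\in\bG]-\Pr_{\Model|\btau}[e\in\bG])^2] = O(n^{-(\arity(e)-1)})$ for factors shared by both walks, together with $\|\bTrans_\Path\| \le C^s$ and $\|\statvec^v\|_\innerp = O(1)$. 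This yields the pointwise estimate
\[
|\E[X_\Path X_\Pathp]| + |\E[X_\Path]\E[X_\Pathp]| \le C^{2s} \prod_{e \in \Path_e \cup \Pathp_e} n^{-(\arity(e)-1)}.
\]

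Next I would enumerate overlapping pairs by the isomorphism type of the joint shape $H_{\Path,\Pathp} \coloneqq \Path \cup \Pathp$, together with the data specifying which edges form $\Path$, which form $\Pathp$, and in what order each is traversed. Each walk has $s$ hyperedges and $H_{\Path,\Pathp}$ has $O(s)$ vertices and edges, so a standard tree-counting argument gives at most $(Cs)^{Cs}$ distinct isomorphism types, and for a fixed type the number of embeddings into $\CompG_n$ is $n^{|V(H_{\Path,\Pathp})|}$ up to combinatorial constants. The final ingredient is the excess inequality
\[
|V(H_{\Path,\Pathp})| - \sum_{e \in E(H_{\Path,\Pathp})}(\arity(e)-1) \le 1,
\]
which follows because each self-avoiding walk individually satisfies $|V(\Path)| - \sum_{e\in\Path_e}(\arity(e)-1) = 1$, and gluing two such trees along at least one common vertex cannot increase this quantity, while cycles in the union strictly decrease it. Plugging into the per-shape bound gives $n\cdot C^{2s}$ per isomorphism type, and summing over the $(Cs)^{Cs}$ types yields the claimed bound $n(Cs)^{Cs}$.

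The main obstacle is the combinatorial book-keeping in the shape enumeration: the shape data must be rich enough that the pair $(\Path,\Pathp)$ is determined by the isomorphism type together with the vertex labeling, yet structured enough that the total number of types is $(Cs)^{Cs}$ and not larger. Verifying the hypergraph excess inequality in the variable-arity setting with self-avoiding walks, and confirming that the extremal case of gluing along a single shared vertex saturates the $n$ factor while all other cases are strictly sub-leading, is the central technical step. Once that is in place the per-pair probabilistic estimate from the second paragraph combines mechanically with the combinatorial count to give the stated bound.
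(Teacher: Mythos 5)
Your proposal is correct and follows essentially the same route as the paper's proof: cancel vertex-disjoint pairs, bound each overlapping pair by $(\alpha')^s n^{-\sum_e(\arity(e)-1)}$, enumerate by isomorphism type of the union $\Pathu$, and conclude via the excess bound $\Chi(\Pathu)\le 1$. The only cosmetic difference is that you explicitly keep the $|\E[X_\Path]\E[X_\Pathp]|$ term in the pointwise bound, whereas the paper implicitly absorbs it; both are handled by the same estimate.
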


\begin{proof}
If two walks $\Path$ and $\Pathp$ do not share any vertices, then their contribution to $\E_{\bG}\left[\left\langle \statvec, \SACentAdj_{\bG}^{(s)}\statvec \right\rangle_{\innerp}^2\right]$ satisifies 
\begin{align*}
\text{contribution of }\Path,\Pathp
=&\E_{\bG}\left[ \left\langle\statvec^{\Path_v[{0}]}, \expwlk(\bG,\Path)  \statvec^{\Path_v[{s}]}\right\rangle_{\innerp}   \left\langle\statvec^{\Pathp_v[{0}]}, \expwlk(\bG,\Pathp)  \statvec^{\Pathp_v[{s}]}\right\rangle_{\innerp} \right] \\=& \E_{\bG}\left[ \left\langle\statvec^{\Path_v[{0}]}, \expwlk(\bG,\Path)  \statvec^{\Path_v[{s}]}\right\rangle_{\innerp}\right]  \E_{\bG} \left[\left\langle\statvec^{\Pathp_v[{0}]}, \expwlk(\bG,\Pathp)  \statvec^{\Pathp_v[{s}]}\right\rangle_{\innerp} \right].
\end{align*}
So this contribution cancels out with the identical term in $\E_{\bG}\left[\left\langle \statvec, \SACentAdj_{\bG}^{(s)}\statvec \right\rangle_{\innerp}\right]^2$. Thus it suffices for us to consider self-avoiding walks $\Path$ and $\Pathp$ that share some vertices.

We write $\Path \parallel \Pathp$ if they share some variable or factor nodes and the shared nodes have consistent types and use $\Pathu$ to denote the union of the two walks.  Now, note that:
\begin{enumerate}
	\item \label{item:bound-subgraph-prob} Conditioned on $\btau$ and $\bc$ every factor node with arity $k$ is chosen independently with probability at most $\frac{\alpha}{n^{k-1}}$ for some constant $\alpha$.  Thus, for any subgraph of $\CompG_n$ on $e$ edges and $r$ factor nodes, the probability of it occurring in $\bG$ is at most $\alpha^{r} n^{e-r}$.
	\item \label{item:matrix-weight-entry-bound} The matrix weight $\bTrans_{\Path}$ of any self-avoiding path has entries bounded in magnitude by $1$ since it is a product of projected stochastic matrices.
\end{enumerate}
Using the above facts, a straightforward calculation tells us:
\[
	\left|\E_{\bG}\left[\left\langle \statvec^{\Path_v[0]}, \expwlk(\bG,\Path)\statvec^{\Path_v[s]}\right\rangle_{\innerp} \left\langle \statvec^{\Pathp_v[0]}, \expwlk(\bG,\Pathp)\statvec^{\Pathp_v[s]}\right\rangle_{\innerp}\right]\right| \le
	\left(\alpha'\right)^s n^{-(\sum_{e\in \Pathu_e}\arity(e)-1)}. \numberthis \label{eq:bound-prod-quadform}
\]
We say $\Pathu_1\sim\Pathu_2$ if the subgraphs induced by them are isomorphic.  $\sim$ partitions the space of all $\Pathu$ into equivalence classes.  We use $[\Pathu]$ to denote the equivalence class of $\Pathu$.  The number of equivalence classes can be bounded by $(C's)^{C's}$ for some constant $C'>1$ (since the graph representing the equivalence class of $\Path\cup\Pathp$ is on $O(s)$ vertices can be specified by a list of $O(s)$  edges).  Due to the shared vertices, $\Pathu$ is connected and $\Chi(\Pathu)\le 1$.  Thus, we now bound the variance as follows:
\begin{align*}
	&\E_{\bG}\left[\left\langle \statvec, \SACentAdj_{\bG}^{(s)}\statvec \right\rangle_{\innerp}^2\right] -  \E_{\bG}\left[\left\langle \statvec, \SACentAdj_{\bG}^{(s)}\statvec \right\rangle_{\innerp}\right]^2 \\
	=&\sum_{\Path \parallel \Pathp} \E_{\bG}\left[ \left\langle\statvec^{\Path_v[{0}]}, \expwlk(\bG,\Path)  \statvec^{\Path_v[{s}]}\right\rangle_{\innerp}   \left\langle\statvec^{\Pathp_v[{0}]}, \expwlk(\bG,\Pathp)  \statvec^{\Pathp_v[{s}]}\right\rangle_{\innerp} \right] \\
	\le& \sum_{\Pathu}\sum_{\Path\parallel\Pathp:\Path\cup\Pathp=\Pathu} (\alpha')^s n^{-(\sum_{e\in \Pathu_e}\arity(e)-1)} \\
	\le& \sum_{[\Pathu]} \sum_{\Pathu_{i}\in[\Pathu]} (9\alpha's)^s  n^{-(\sum_{e\in \Pathu_e}\arity(e)-1)} \\
	=& \sum_{[\Pathu]} (9\alpha's)^s n^{|\Pathu_v|-(\sum_{e\in \Pathu_e}\arity(e)-1)} \\
	=& \sum_{[\Pathu]} (9\alpha's)^s n^{\Chi(\Pathu)} \\
	\le& (9C'\alpha's^2)^{C's} n. 
\end{align*}
Thus, the claim follows.
\end{proof}

\subsection{Bounding contribution of non-self-avoiding walks}
In comparison to the self-avoiding walks, the non-self-avoiding walks have negligible contributions to the expectation and the variance of the statistics. We prove this statement using the following claim.

\begin{claim}\label{clm:nsa-bounded-contribution}
We can bound the statistics of non-self-avoiding walks as follows:
\[
	\left|\E_{\bG}\left[\sum_{\Path~\text{non-self-avoiding}}\left\langle \statvec^{\Path_v[{0}]}, \expwlk(\bG,\Path)\statvec^{\Path_v[{s}]} \right\rangle_{\innerp}\right]\right| \le (Cs)^{Cs}.
\]
\[
	\E_{\bG}\left[\sum_{\substack{\Path\parallel\Pathp \\ \text{$\Path$ or $\Pathp$ non-self-avoiding}}}\left\langle \statvec^{\Path_v[{0}]}, \expwlk(\bG,\Path)\statvec^{\Path_v[{s}]} \right\rangle_{\innerp}\cdot \left\langle \statvec^{\Pathp_v[{0}]}, \expwlk(\bG,\Pathp)\statvec^{\Pathp_v[{s}]} \right\rangle_{\innerp}\right] \le (Cs)^{Cs}
\]
for some absolute constant $C$.
\end{claim}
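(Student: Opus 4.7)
The plan is to prove both inequalities via the same two-step strategy used in the proof of \pref{lem:sa-local-statistics-variance}: a sharp per-walk expectation bound that captures the sparsity of $\Modeln$, followed by enumeration of walks grouped by their subgraph isomorphism class (``shape''). Throughout, let $D(\Path)$ denote the set of \emph{distinct} factor nodes appearing in a walk $\Path$.

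The first step is to show that for any length-$s$ nonbacktracking walk $\Path$ in $\CompG_n$, there is a constant $C_1$ depending only on the model such that
\[
	\left|\E_{\bG}\left[\left\langle \statvec^{\Path_v[0]}, \expwlk(\bG,\Path) \statvec^{\Path_v[s]} \right\rangle_{\innerp}\right]\right| \le C_1^{s}\cdot n^{-\sum_{e\in D(\Path)}(\arity(e)-1)}.
\]
The matrix part contributes $C_1^s$ because $\bTrans_{\Path}$ is a product of $s$ bounded $q\times q$ matrices $\bTrans_{\FacType,a\mid b}$, each with entries in $[-1,1]$ (being $\Facm-\prior\mathbf{1}^\top$). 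For the scalar weight $\mathrm{wt}(\Path)=\prod_j X_j$ with $X_j=\Ind[\Path_e[j]\in\bG]-\Pr_{\bG\sim\Planted\mid\btau}[\Path_e[j]\in\bG]$, the key observation is that conditional on the planted coloring $\bc$ the edge indicators are mutually independent. Grouping the product by distinct edge and using $|X_j|^m\le|X_j|$ for $m\ge 1$ gives, for each distinct edge $e$ with multiplicity $m_e$,
\[
	\E_{\bG\mid\bc}\left[|X_e|^{m_e}\right]\le \E_{\bG\mid\bc}\left[|X_e|\right]\le \Pr[e\in\bG\mid\bc]+p_e = O\left(n^{-(\arity(e)-1)}\right),
\]
and averaging over $\bc$ produces the stated bound on $\E|\mathrm{wt}(\Path)|$.

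The second step is to partition non-self-avoiding walks by the shape of their underlying simple sub-hypergraph. The number of shapes is at most $(C_2 s)^{C_2 s}$ since a shape is specified by $s$ sequential combinatorial choices (a factor type, together with an identification/creation of each non-pivot variable slot in it), and each shape has at most $n^{|\Path_v|}$ embeddings into $\CompG_n$. The per-shape contribution is thus
\[
	n^{|\Path_v|}\cdot C_1^s \cdot n^{-\sum_{e\in D(\Path)}(\arity(e)-1)} = C_1^s\cdot n^{|\Path_v|-\sum_{e\in D(\Path)}(\arity(e)-1)}.
\]
A structural fact I would invoke is that for a nonbacktracking walk on a hypergraph, $\Path$ is self-avoiding if and only if its underlying simple sub-hypergraph is a tree; consequently every non-self-avoiding walk satisfies $|\Path_v|\le\sum_{e\in D(\Path)}(\arity(e)-1)$, so each shape contributes at most $C_1^s$, and summing over shapes yields the $(Cs)^{Cs}$ bound.

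The argument for the pair sum is entirely analogous, applied to the union $\Pathu=\Path\cup\Pathp$. Conditional independence of edges given $\bc$ gives $|\E[\mathrm{wt}(\Path)\mathrm{wt}(\Pathp)]|\le C^{2s}\cdot n^{-\sum_{e\in D(\Pathu)}(\arity(e)-1)}$; the condition $\Path\parallel\Pathp$ forces $\Pathu$ to be connected, and if either $\Path$ or $\Pathp$ is non-self-avoiding, then the simple sub-hypergraph of $\Pathu$ contains a cycle, so $|\Pathu_v|\le\sum_{e\in D(\Pathu)}(\arity(e)-1)$. Enumerating shapes of $\Pathu$ (again $(Cs)^{Cs}$ of them) and bounding each by $C^{2s}$ closes out the second inequality. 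The main technical content lies in the per-walk bound; the only subtlety there is handling repeated edges correctly via the conditional-independence trick above, and the remainder is clean shape-counting bookkeeping essentially identical to the variance calculation in \pref{lem:sa-local-statistics-variance}.
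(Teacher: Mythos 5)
Your proposal is correct and follows essentially the same route as the paper's proof: a per-walk expectation bound from the sparsity of the model, followed by grouping walks into $O(s)^{O(s)}$ isomorphism classes and using a combinatorial deficit for non-self-avoiding walks to kill the $n^{|\Path_v|}$ embedding count. The paper reaches the per-walk bound by invoking the subgraph-probability estimate (item (1) in the proof of \pref{lem:sa-local-statistics-variance}) and expressing the deficit as $\Chi(\Path)\le 0$, whereas you derive it directly via conditional independence given $\bc$ and phrase the deficit as $|\Path_v|\le\sum_{e\in D(\Path)}(\arity(e)-1)$; these are the same inequality, since in the bipartite factor graph a cycle (equivalently a repeated factor or variable along the nonbacktracking walk) is exactly what makes $\Chi\le 0$. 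Two small points worth being precise about if you write this up: (i) in the per-walk step you should note that the $\bc$-dependence also enters through the indicator vectors $\statvec^{\Path_v[0]},\statvec^{\Path_v[s]}$, which you handle by pulling out an $O(1)$ bound on the bilinear form before applying the conditional-independence bound on $|\mathrm{wt}(\Path)|$ (your $\E[|X_e|^{m_e}]\le\E[|X_e|]$ step is fine since $|X_e|\le 1$); and (ii) your claim that every non-self-avoiding nonbacktracking walk has a cycle in its underlying bipartite graph deserves a one-line justification (revisiting a variable or a factor between times $i<j$ closes a loop because the walk is connected), since that is the structural fact on which the entire counting rests.
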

\begin{proof}
The first part is derived by applying \pref{item:bound-subgraph-prob} in the proof of \pref{lem:sa-local-statistics-variance}.
\begin{align*}
	\left|\E_{\bG}\left[\sum_{\Path~\text{non-self-avoiding}}\left \langle \statvec^{\Path_v[0]}, \expwlk(\bG,\Path)\statvec^{\Path_v[{s}]} \right\rangle_{\innerp}\right]\right|
\le& \sum_{\Path~\text{non-self-avoiding}} \left|\E_{\bG}\left[\left \langle \statvec^{\Path_v[0]}, \expwlk(\bG,\Path)\statvec^{\Path_v[{s}]} \right\rangle_{\innerp}\right]\right| \\
\le& \sum_{\Path~\text{non-self-avoiding}} \Pr_{\bG}[\Path\in\bG] \cdot O(1) \\
\le& \sum_{\Path~\text{non-self-avoiding}} (\alpha')^{s} n^{\Chi(\Path)-\abs{\Path_v}} \\
\le& \sum_{\Path~\text{non-self-avoiding}} (\alpha')^s n^{-\abs{\Path_v}}\\
\le&\, (Cs)^{Cs}.
\end{align*}
where the equality from the second to third line is a consequence of \pref{item:bound-subgraph-prob} in the proof of \pref{lem:local-statistics-variance}, and the last inequality is due to $\Chi(p)\le 0$ for a non-self-avoiding walk.

The second expression is equal to:
\begin{align*}
	\sum_{\substack{\Path\parallel\Pathp \\ \text{$\Path$ or $\Pathp$ non-self-avoiding}}} \E_{\bG}\left[\left\langle \statvec^{\Path_v[{0}]}, \expwlk(\bG,\Path)\statvec^{\Path_v[{s}]} \right\rangle_{\innerp}\cdot \left\langle \statvec^{\Pathp_v[{0}]}, \expwlk(\bG,\Pathp)\statvec^{\Pathp_v[{s}]} \right\rangle_{\innerp}\right].
\end{align*}
By \pref{eq:bound-prod-quadform} the above can be bounded by:
\begin{align*}
	\sum_{\substack{\Path\parallel\Pathp \\ \text{$\Path$ or $\Pathp$ non-self-avoiding}}} \left(\alpha'\right)^s n^{\Chi(\Pathu)-|\Pathu_v|}
\end{align*}
where $\Pathu$, recall, is the union of $\Path$ and $\Pathp$.
Since $\Path$ and $\Pathp$ share vertices, $\Chi(\Pathu)\le\min\{\Chi(\Path),\Chi(\Pathp)\}$, and since at least one of the two walks is non-self-avoiding, $\Chi(\Pathu)\le 0$.  This lets us bound the above by:
\begin{align*}
	\sum_{\substack{\Path\parallel\Pathp \\ \text{$\Path$ or $\Pathp$ non-self-avoiding}}} \left(\alpha'\right)^s n^{-|\Pathu_v|} &\le \sum_{\Pathu} \sum_{\Path\parallel\Pathp:\Path\cup\Pathp=\Pathu} \left(\alpha'\right)^s n^{-|\Pathu_v|} \\
	&= \sum_{[\Pathu]} \sum_{\Pathu_i\in[\Pathu]} (9\alpha's)^s n^{-|\Pathu_v|} \\
	&= \sum_{[\Pathu]} (9\alpha's)^s\\
	&\le (9C'\alpha's^2)^{C's}
\end{align*}
which gives us the desired bound in the second part of the statement.
\end{proof}

\subsection{Wrapping up estimates of statistics}

The following claim about the statistics of the centered non-backtracking-walk matrix $\CentAdj_{\bG}^{(s)}$ is an immediate consequence of combining \pref{lem:sa-local-statistics-expectation} and the first part of \pref{clm:nsa-bounded-contribution}.
\begin{lemma}\label{lem:local-statistics-expectation}
For $s \le \frac{\log n}{(\log\log n)^2}$:
\[
	\E_{\bG}\left[\left\langle \statvec, \CentAdj_{\bG}^{(s)}\statvec \right\rangle_{\innerp}\right] = (1\pm o_n(1))\lambda_{L}^s n.
\]
\end{lemma}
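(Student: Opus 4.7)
The plan is to split the centered nonbacktracking power matrix into its self-avoiding part and its non-self-avoiding part. Every length-$2s$ nonbacktracking walk contributing to $\CentAdj_{\bG}^{(s)}$ is either counted in $\SACentAdj_{\bG}^{(s)}$ or else has $\Chi(\Path)\le 0$, so by linearity of expectation,
\[
\E_{\bG}\left[\left\langle \statvec, \CentAdj_{\bG}^{(s)}\statvec \right\rangle_{\innerp}\right] = \E_{\bG}\left[\left\langle \statvec, \SACentAdj_{\bG}^{(s)}\statvec \right\rangle_{\innerp}\right] + \E_{\bG}\left[\sum_{\Path\ \text{non-self-avoiding}} \left\langle \statvec^{\Path_v[0]}, \expwlk(\bG,\Path)\statvec^{\Path_v[s]}\right\rangle_{\innerp}\right].
\]

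For the main term I would invoke \pref{lem:sa-local-statistics-expectation}, which already delivers $(1-o_n(1))\lambda_L^s n$. For the error term I would invoke the first estimate in \pref{clm:nsa-bounded-contribution}, which bounds its absolute value by $(Cs)^{Cs}$ for an absolute constant $C$.

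The only remaining step is to confirm that this error is negligible compared to the main term throughout the specified range of $s$. Using $s \le \log n / (\log\log n)^2$,
\[
(Cs)^{Cs} = \exp\bigl(Cs\log(Cs)\bigr) \le \exp\!\left(O\!\left(\frac{\log n}{\log\log n}\right)\right) = n^{o(1)},
\]
while $\lambda_L^s n \ge n$ under the standing assumption $\lambda_L \ge 1$ relevant to the distinguishing algorithm. Hence the non-self-avoiding contribution is an $o_n(1)$ multiplicative perturbation of $\lambda_L^s n$, which yields the desired $(1\pm o_n(1))\lambda_L^s n$. I do not anticipate any substantive obstacle: the two required probabilistic estimates are already established in \pref{lem:sa-local-statistics-expectation} and \pref{clm:nsa-bounded-contribution}, so the proof amounts to assembling these bounds and checking the asymptotic comparison of $(Cs)^{Cs}$ against $\lambda_L^s n$ for $s$ in the prescribed range.
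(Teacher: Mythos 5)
Your proof is correct and follows exactly the route the paper takes: the paper proves this lemma by declaring it ``an immediate consequence'' of \pref{lem:sa-local-statistics-expectation} and the first part of \pref{clm:nsa-bounded-contribution}, which is precisely the decomposition and the two ingredients you use. You additionally spell out the asymptotic check $(Cs)^{Cs} = n^{o(1)} \ll \lambda_L^s n$, which the paper leaves implicit; note this comparison holds for any fixed $\lambda_L>0$ in the given range of $s$, so the $\lambda_L\ge 1$ assumption you invoke is convenient but not actually needed.
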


\begin{lemma}\label{lem:local-statistics-variance}
For some absolute constant $C$:
\[
	\E_{\bG}\left[\left\langle \statvec, \CentAdj_{\bG}^{(s)}\statvec \right\rangle_{\innerp}^2\right] -  \E_{\bG}\left[\left\langle \statvec, \CentAdj_{\bG}^{(s)}\statvec \right\rangle_{\innerp} \right]^2 \leq n(Cs)^{Cs}.
\]
\end{lemma}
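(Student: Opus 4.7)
The plan is to split $\langle \statvec, \CentAdj_{\bG}^{(s)}\statvec\rangle_{\innerp}$ into its self-avoiding and non-self-avoiding contributions and bound the variance of each separately. Define
\[
X_{SA} \coloneqq \langle \statvec, \SACentAdj_{\bG}^{(s)}\statvec\rangle_{\innerp}, \qquad X_{NSA} \coloneqq \sum_{\Path \text{ non-self-avoiding}} \langle \statvec^{\Path_v[0]}, \expwlk(\bG,\Path)\statvec^{\Path_v[s]}\rangle_{\innerp},
\]
so that $\langle \statvec, \CentAdj_{\bG}^{(s)}\statvec\rangle_{\innerp} = X_{SA} + X_{NSA}$. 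Using $\mathrm{Var}(X_{SA}+X_{NSA}) \leq 2\mathrm{Var}(X_{SA}) + 2\mathrm{Var}(X_{NSA})$, it suffices to bound each variance by $n(Cs)^{Cs}$. The first is immediate from \pref{lem:sa-local-statistics-variance}.

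For the second, I abbreviate $w_\Path \coloneqq \langle \statvec^{\Path_v[0]}, \expwlk(\bG,\Path)\statvec^{\Path_v[s]}\rangle_{\innerp}$ and expand
\[
\mathrm{Var}(X_{NSA}) = \sum_{\Path,\Pathp \text{ both non-sa}} \big(\E[w_\Path w_{\Pathp}] - \E[w_\Path]\,\E[w_{\Pathp}]\big).
\]
Whenever $\Path$ and $\Pathp$ share no variable node, the relevant factor indicators are drawn independently and the relevant hidden colors are drawn on disjoint variable subsets, so $w_\Path$ and $w_{\Pathp}$ are independent and the corresponding summand vanishes. Hence the sum collapses to pairs $\Path\parallel\Pathp$, and the triangle inequality yields
\[
\mathrm{Var}(X_{NSA}) \leq \sum_{\substack{\Path\parallel\Pathp \\ \text{both non-sa}}} |\E[w_\Path w_{\Pathp}]| \,+\, \sum_{\substack{\Path\parallel\Pathp \\ \text{both non-sa}}} |\E[w_\Path]|\cdot|\E[w_{\Pathp}]|.
\]

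The first sum is at most $(Cs)^{Cs}$ via the second part of \pref{clm:nsa-bounded-contribution}, which in fact bounds the larger sum where at least one walk is non-self-avoiding. The second sum is at most $\left(\sum_{\Path \text{ non-sa}} |\E[w_\Path]|\right)^2$, and the proof of the first part of \pref{clm:nsa-bounded-contribution} (which applies the triangle inequality before taking expectations) already establishes $\sum_{\Path \text{ non-sa}} |\E[w_\Path]| \leq (Cs)^{Cs}$. Combining these bounds and absorbing constants into $C$ gives $\mathrm{Var}(X_{NSA}) \leq (Cs)^{Cs}$, which together with \pref{lem:sa-local-statistics-variance} yields the claimed inequality. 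The only conceptual step beyond the two referenced results is the independence cancellation for vertex-disjoint pairs, which prevents a factor of $n^{|\Pathu_v|}$ from appearing; no new combinatorial or probabilistic estimates are needed.
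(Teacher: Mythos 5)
Your proof is correct and follows essentially the same route as the paper: split the quadratic form into self-avoiding and non-self-avoiding contributions, kill vertex-disjoint pairs via independence, and then bound what remains using \pref{lem:sa-local-statistics-variance} and \pref{clm:nsa-bounded-contribution}. The only organizational difference is that you invoke $\mathrm{Var}(A+B)\le 2\mathrm{Var}(A)+2\mathrm{Var}(B)$ and then control $\mathrm{Var}(X_{\mathrm{NSA}})$ separately, whereas the paper writes the full variance directly as $\mathrm{Var}(X_{\mathrm{SA}})$ plus the covariance contribution of pairs with at least one non-self-avoiding walk. Your version is arguably the more careful of the two: the paper's displayed identity silently drops the $-\E[w_{\Path}]\E[w_{\Pathp}]$ part of the cross-term, while you explicitly bound that piece by $\big(\sum_{\Path}|\E[w_{\Path}]|\big)^2\le (Cs)^{2Cs}$ via the proof (not merely the statement) of the first part of \pref{clm:nsa-bounded-contribution} --- correctly noting that the statement alone bounds only $|\E\sum w_\Path|$, not $\sum|\E w_\Path|$. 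Both your first sum over $\Path\parallel\Pathp$ with both non-self-avoiding and your product-of-expectations sum sit inside the $n(Cs)^{Cs}$ budget after absorbing constants, so the final bound goes through.
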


\begin{proof}
Using the observation that only pairs of walks that share some vertices contribute to the variance, we have:
\begin{align*}
 	&\E_{\bG}\left[\left\langle \statvec, \CentAdj_{\bG}^{(s)}\statvec \right\rangle_{\innerp}^2\right] -  \E_{\bG}\left[\left\langle \statvec, \CentAdj_{\bG}^{(s)}\statvec \right\rangle_{\innerp}\right]^2 \\
	=& \, \E_{\bG}\left[\left\langle \statvec, \SACentAdj_{\bG}^{(s)}\statvec \right\rangle_{\innerp}^2\right] -  \E_{\bG}\left[\left\langle \statvec, \SACentAdj_{\bG}^{(s)}\statvec \right\rangle_{\innerp}\right]^2  \\
 	&+ \sum_{\substack{\Path \parallel \Pathp \\ \text{$\Path$ or $\Pathp$ non-self-avoiding}}} \E_{\bG}\left[\left\langle \statvec^{\lab(\Path_v^{0})}, \expwlk(\bG,\Path,\lab)\statvec^{\lab(\Path_v^{s})} \right\rangle_{\innerp}\cdot \left\langle \statvec^{\lab(\Pathp_v^{0})}, \expwlk(\bG,\Pathp,\labp)\statvec^{\labp(\Pathp_v^{s})} \right\rangle_{\innerp}\right]
\end{align*}
We can conclude the desired bound immediately from \pref{lem:sa-local-statistics-variance} and the second part of \pref{clm:nsa-bounded-contribution}.
\end{proof}

Finally, to establish \pref{thm:local-statistics-concentrate} we first use Chebyshev's inequality to conclude that when $s \le \frac{\log n}{(\log\log n)^2}$,
\[
	\left\langle \statvec, \CentAdj_{\bG}^{(s)}\statvec \right\rangle_{\innerp} = (1\pm o_n(1))\lambda_L^s n
\]
with probability $1-o_n(1)$.  Now, since the $\|\statvec\|^2$ is $(1\pm o_n(1))\frac{n}{\gamma}$ with probability $1-o_n(1)$ for some constant $\gamma$, we can conclude that with probability $1-o_n(1)$,
\[
	\frac{\left\langle \statvec, \CentAdj_{\bG}^{(s)}\statvec \right\rangle_{\innerp}}{\|\statvec\|^2} = (1\pm o_n(1)) \gamma \lambda_L^s
\]
thereby finishing the proof.

\section{Eigenvalue bounds}	\label{sec:eig-bounds}

In this section we show an eigenvalue upper bound for the centered nonbacktracking-walk matrix in the null model. We first describe the matrix distribution in detail. 

Recall the definition of a null model $\Model^{\times}$ (\pref{def:null-model}). Let $\bH\coloneqq \bigcup_{i=1}^{F}\bE_i$ be an observation sampled from $\Model^{\times}$. As discussed in \pref{sec:notations}, the observation $\bH$ has an associated bipartite graph which we denote $\Bip(\bH)$. The left vertex set is given by the variables $[n]$ and the right vertex set is given by the factors $\gamma\in \bH$. We will use $L(\bH)$ and $R(\bH)$ to denote the left and right vertex sets of $\Bip(\bH)$.

Next associate with each triple $v\gamma u$, where $v,u\in\gamma$ and $v\neq u$, a $q\times q$ matrix $M_{v\gamma u}$. Like before, the value of the matrix only depends on the factor type $\FacType(\gamma)$ and the two variables' indices $\index(v),\index(u)$ in $\gamma$. We use $\MatCol$ to denote the collection of these $q\times q$ matrices $\{M_{a\phi_ib}\}_{i\in[F],a\neq b\in[\arity(i)]}$. Now we are ready to define the matrix distribution.

\begin{definition}
	The matrix distribution is defined as follows. First sample an observation $H$ from some null model $\Model^{\times}$.  We define the \emph{length-$\ell$ $\Model^{\times}$-centered nonbacktracking power} $\CentAdj_{\bH}^{(\ell)}$ is the $n\times n$ block matrix where the $(i,j)$-block as the following $q\times q$ matrix:
	\[
		\CentAdj_{\bH}^{(\ell)}[i,j] \coloneqq \sum_{\substack{(v_0\gamma_1v_1\dots v_{\ell-1}\gamma_{\ell}v_{\ell}) \\ \in\NB(\CompG_{n},\ell,i,j)}}  \prod_{t=1}^{\ell}M_{v_{2t-2}\gamma_t v_{2t-1}}\left(\Ind[\gamma_t \in \bH]-\Pr_{\Model^{\times}}[\gamma_t \in\bH]\right),
	\]
	where $\NB(\CompG_{n},\ell,i,j)$ denote the set of all length-$2\ell$ nonbacktracking walks in the complete bipartite factor graph $\Bip(\CompG_{n})$ starting at variable $i$ and ending at variable $j$.
\end{definition}

We are interested in obtaining a high probability upper bound on $\|\CentAdj_{\bH}^{(\ell)}\|$ in terms of a particular quantity depending on $\MatCol$ and $\Model^{\times}$, which we denote by $\rho(\MatCol,\Model^{\times})$. Before giving its definition, we simplify the notation a bit:

\begin{definition}	\label{def:walk-matrix}
	Given a length-$2\ell$ nonbacktracking walk $W = v_0\gamma_1v_1\gamma_2v_2\dots v_{\ell-1}\gamma_{\ell}v_{\ell}$ in $\Bip(\bH)$, we define the weight of the walk with $M_W$ to be
	\[
		M_W \coloneqq M_{v_0\gamma_1v_1}M_{v_1\gamma_2v_2}\dots M_{v_{\ell-1}\gamma_{\ell}v_{\ell}}.
	\]
	We use $\gamma_i(W)$ to denote the $i$-th factor visited by $W$.
\end{definition}

We now define $\rho(\MatCol,\Model^{\times})$.

\begin{definition} 
	For a positive integer $m$, construct a length-$m$ path $P$ with random matrix weights in the following way.  Start with a vertex $v_0$ and assign to it a random label $\btau(v_0)$ sampled from $\TypeDist$.  We iteratively construct a path with edges weighed by matrices until its length is equal to $m$.  Suppose we already have a path $v_0 v_1 \dots v_{t}$ where each $v_i$ has a label $\btau(v_i)$, and for an edge $\{v_i,v_{i+1}\}$ its two directed edges $(v_i, v_{i+1})$ and $(v_{i+1},v_i)$ have matrix weights $\bW_{i,i+1}$ and $\bW_{i+1,i}$ respectively.  To grow this path, we sample $(s,a)$ where $s\in[F]$ and $a\in[\arity(s)]$ with probability proportional to $\ACFacDeg{s}\cdot\Ind[\Cl(s)_a=\btau(v_t)]\cdot\prod_{j=1,j\ne a}^t \TypeDist(\Cl(s)_j)$, followed by a uniformly random $b$ in $[\arity(s)]\setminus\{a\}$.  Add vertex $v_{t+1}$ and set $\btau(v_{t+1})=\Cl(s)_b$.  Then add edge $\{v_t,v_{t+1}\}$ and let the matrix weight of the directed edge $(v_t,v_{t+1})$ be $\bW_{t,t+1} = M_{a\phi_sb}\in \MatCol$, and matrix weight of directed edge $(v_{t+1},v_t)$ be $\bW_{t+1,t} = M_{b\phi_sa}\in \MatCol$.\footnote{For the sake of intuition the reader should think of the distribution of $v_{t+1}$ as first sampling $\bH\sim \GModPar$, then choosing a random vertex $v$ with label $\btau(v_t)$ and finally choosing as a random neighbor $w$ of $v$ within $\bH$.  $\btau(v_{t+1})$ is then set to the label of $w$ and the matrix weight on edge $(v_t,v_{t+1})$ is chosen as the matrix weight on $(v,w)$.}
	We then define $w_{m}$ as:
	\[
		w_{m} \coloneqq \E\Tr(\bW_{0,1}\bW_{1,2}\dots\bW_{m-1,m}\bW_{m,m-1}\dots \bW_{2,1}\bW_{1,0}).
	\]
	Now define $r(\MatCol,\Model^{\times})$ as
	\[
		r(\MatCol,\Model^{\times}) \coloneqq  \limsup_{m\to\infty} w_m^{\frac{1}{2m}}.
	\]
	Next, define $d(\Model^{\times})$ (which, intuitively, is the average degree of a vertex in a sample from $\Model^{\times}$) as
	\[
		d(\Model^{\times}) \coloneqq \sum_{t=1}^T \TypeDist(t) \sum_{i=1}^F \sum_{j=1}^{k_i} \ACFacDeg{i}\Ind[\Cl(i)_j=t].
	\]
	Finally, we define $\rho(\MatCol,\Model^{\times})$ as
	\[
		\rho(\MatCol,\Model^{\times}) \coloneqq r(\MatCol,\Model^{\times})\sqrt{d(\Model^{\times})}.
	\]
\end{definition}

We remark that if the weight collection $\BlankCol$ is defined such that $M_{b\phi_i a} = \bTrans_{i,a \mid b}$ (defined in \pref{eq:bTrans}) for all $i\in[F], a\neq b\in[\arity(i)]$, then $\rho(\MatCol,\Model^{\times}) = \sqrt{\lambda_L}$.

The main result of this section is that $\|\CentAdj_{\bH}^{(\ell)}\| \le \left((1+o_n(1))\rho(\BlankCol,\GModPar)\right)^{\ell}$ for a wide range of $\ell$ when $\bH\sim \GModPar$.
\begin{theorem}	\label{thm:main-spectral-bound}
	Suppose:
	\begin{enumerate}
		\item \label{cond:self-adj} There is an inner product $\langle\cdot,\cdot\rangle_{\nu}$ on $\R^{nq}$ such that for every right vertex $\gamma=(v_1,\dots,v_{\arity(i)})$ in $\CompG_{\BlankCol, n}$, for all $1\le a,b\le \arity(i)$, the $nq\times nq$ matrix obtained by placing $M_{v_a\gamma v_b}$ in the $(v_a,v_b)$ block and zeros everywhere else is the adjoint of the $nq\times nq$ matrix obtained by placing $M_{v_b\gamma v_a}$ in the $(v_b,v_a)$ block and zeroes everywhere else under $\langle\cdot,\cdot\rangle_{\nu}$.
		\item \label{cond:matrix-product-small} There is a constant $C\ge 1$ such that the weight $M_W$ every nonbacktracking walk $W$ in $\CompG_{n}$ satisfies:
		\[
			\|M_W\| \le C
		\]
		where $\|\cdot\|$ is the operator norm induced by $\langle\cdot,\cdot\rangle$.
		\item \label{cond:inf-spec-large} $\rho(\BlankCol,\GModPar) \ge 1$.
	\end{enumerate}
	Then for every $\eps > 0$ and $(\log\log n)^2\le \ell \le \frac{\log n}{(\log\log n)^2}$, with probability $1-o_n(1)$:
	\[
		|\lambda|_{\max}\left(\CentAdj_{\bH}^{(\ell)}\right) \le \left((1+\eps)\rho(\BlankCol,\GModPar)\right)^{\ell}.
	\]
\end{theorem}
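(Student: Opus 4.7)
The plan is to apply the moment method to $\CentAdj_{\bH}^{(\ell)}$. Hypothesis \pref{cond:self-adj} combined with the argument of \pref{clm:self-adjointness} shows that $\CentAdj_{\bH}^{(\ell)}$ is self-adjoint under $\langle\cdot,\cdot\rangle_{\nu}$, so it is similar to a symmetric matrix and has real eigenvalues. Hence for any positive integer $k$,
\[
|\lambda|_{\max}(\CentAdj_{\bH}^{(\ell)})^{2k}\le \Tr\bigl((\CentAdj_{\bH}^{(\ell)})^{2k}\bigr).
\]
I would bound this trace in expectation and upgrade to a high-probability statement via Markov's inequality. The parameter $k$ will be chosen so that $k\ell \asymp \log n$; this keeps $k\ell$ small enough to do a union bound over walk shapes while making $k$ large enough that the $(2k)$-th root squeezes the eigenvalue estimate down to $(1+\eps)\rho(\BlankCol,\GModPar)$.

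Following \cite{BLM15,stephan2020non}, I would next replace $\CentAdj_{\bH}^{(\ell)}$ with its \emph{tangle-free} variant by restricting the defining sum to walks whose underlying hypergraph in $\Bip(\bH)$ has at most one cycle; a local-structure calculation for sparse $\GModPar$-hypergraphs shows the two matrices coincide on a $1-o_n(1)$ event. Expanding $\E\Tr$ of the $2k$-th power then gives a sum over closed walks of length $4k\ell$ in $\Bip(\CompG_{n})$ that concatenate $2k$ tangle-free nonbacktracking segments of length $2\ell$. Each walk $W$ contributes a trace of a matrix product multiplied by $\E\prod_{\gamma}(\Ind[\gamma\in\bH]-\Pr[\gamma\in\bH])^{m_{\gamma}(W)}$, where $m_{\gamma}(W)$ counts the number of times $W$ traverses the hyperedge $\gamma$. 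Since the centered indicators have mean zero and distinct hyperedges are independent in $\GModPar$, only walks in which every visited hyperedge is traversed at least twice contribute. I would then group these walks by \emph{shape}, namely the isomorphism class of the decorated bipartite subgraph they trace together with the traversal order and index markings, and use the fact that a shape with $V$ distinct variable vertices admits roughly $n^V$ walks while the normalization contributes $n^{\sum_{\gamma}(1-\arity(\gamma))}$, so that each shape class contributes roughly $n^{\Chi}$ times its per-walk weight.

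The dominant contribution comes from ``doubled tree'' shapes: doubly-traversed nonbacktracking paths whose underlying graph is a tree. For each such shape, the product of matrix weights around the closed walk collapses (via hypothesis \pref{cond:self-adj}) to a trace of the form $\Tr(\bW_{0,1}\cdots\bW_{m-1,m}\bW_{m,m-1}\cdots\bW_{1,0})$; summing over such trees using the Galton-Watson structure of \pref{def:local-tree-approx} assembles precisely into the generating expression for $w_m$, contributing $n\cdot d(\GModPar)^{k\ell}\cdot r(\BlankCol,\GModPar)^{2k\ell}(1+o(1))^{k\ell}=n\cdot\rho(\BlankCol,\GModPar)^{2k\ell}(1+o(1))^{k\ell}$. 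Shapes with strictly positive excess (extra cycles, triply-traversed hyperedges, or tangles formed at the concatenation seams) lose a factor of at least $1/n$ per unit of excess, and with $k\ell = O(\log n)$ their aggregate contribution is $o_n(1)$ times the main term, using hypothesis \pref{cond:matrix-product-small} to uniformly bound matrix weights along each segment. Taking the $(2k)$-th root yields the claimed bound $((1+\eps)\rho(\BlankCol,\GModPar))^{\ell}$.

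The hardest step will be the non-commutative shape analysis. In the scalar-weighted settings of \cite{BLM15,stephan2020non} the weight of a walk depends only on the multiset of traversed edges, so shape counting decouples cleanly from weight counting; here, because matrices do not commute, the order of traversal matters and doubled tree shapes come in many non-equivalent orientations. Adapting the framework of \cite{BC19} for matrix-weighted random regular graphs, I must additionally handle (i) the inhomogeneous, non-regular degree distribution, which forces careful tracking of Poisson child counts and variable types in the tree-counting argument, (ii) hyperedges of arity greater than $2$, which introduce correlations between the matrix weights on different directed edges incident to the same factor and complicate the shape-counting since each factor contributes multiple matrix weights per visit, and (iii) consistent preservation of variable types across the walk. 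The most delicate part will be matching the matrix product over a doubled tree shape with the exact trace expression defining $w_m$, so that the final sum reproduces the spectral radius $r(\BlankCol,\GModPar)$ coming from the infinite Galton-Watson tree rather than a crude operator-norm bound.
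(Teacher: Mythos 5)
Your high-level plan (trace method, self-adjointness from \pref{cond:self-adj}, shape counting, doubled trees dominating, reassembly into $w_m$ and hence $\rho(\BlankCol,\GModPar)$) matches the paper's strategy, and you correctly flag the three major non-commutative obstructions. But there is a concrete gap in the middle step, precisely at the point you treat as routine.

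You propose to truncate $\CentAdj_{\bH}^{(\ell)}$ to a tangle-free variant, claim the two matrices agree on a $1-o_n(1)$ event, and then assert that ``only walks in which every visited hyperedge is traversed at least twice contribute'' because the centered indicators have mean zero and are independent. These two claims are incompatible. If the truncation is the deterministic one (keep walks in $\CompG_n$ whose traced subgraph of $\CompG_n$ is tangle-free), then yes, singletons vanish by independence --- but the resulting matrix is \emph{not} equal to $\CentAdj_{\bH}^{(\ell)}$ on a high-probability event, since the excluded tangled walks always contribute through their $-\Pr_{\GModPar}[\gamma\in\bH]$ factors, regardless of whether $\bH$ has bicycles; you would need a separate (matrix-norm) argument that this deterministic remainder is small, which is not a local-structure calculation. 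If instead the truncation is the random one you describe (keep walks whose trace inside $\Bip(\bH)$ has at most one cycle), then you do get agreement with $\CentAdj_{\bH}^{(\ell)}$ on the bicycle-free event, but the truncation indicator is now a function of $\bH$ and is correlated with the $\Ind[\gamma\in\bH]-\Pr[\gamma\in\bH]$ factors. Singletons no longer integrate to zero. This is precisely why the paper does not simply discard singletons: it multiplies the trace by $\Ind[\nobikes]$ and then has to bound $\left|\E_{\bH\mid\btau}\left[\prod_{\gamma\in\Sing(W)}(\Ind_\gamma-\Mean_\gamma)\prod_{\gamma\in L}\Ind_\gamma\Ind[\nobikes]\right]\right|$ via \pref{lem:singleton-small-exp}, which produces the nonzero but decaying factor $\prod\Mean_\gamma\cdot 2^{|\Sing(W)|}n^{-(|\Sing(W)|/r-\Exc(\Clos(\Sing(W)\cup L)))/2}$. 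That lemma, together with the expansion over subsets $L\subseteq\Dup(W)$ and the induced $\Profl(W)$ discount, is a substantial part of the proof and your proposal has no replacement for it. The rest of the paper's work --- the pause-time/trail decomposition of $\Tr(M_W)$ with $|P|\le O(k+\Exc_W)$, and the shape-count bound \pref{clm:shape-count}, which itself exploits bicycle-freeness to contract the walk graph to $O(k\log k\ell)$ vertices --- is roughly what you anticipate under ``non-commutative shape analysis,'' but your accounting of the singleton and seam contributions is where the argument currently fails.

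One smaller point: $k\ell\asymp\log n$ is not quite large enough. When you take the $(2k)$-th root of the trace bound after Markov, the factor $n^{O(1)/k\ell}$ must be $1+o(1)$ per $\ell$, which requires $k\ell=\omega(\log n)$; the paper takes $k\ell=\Theta(\log n\cdot\log\log n)$ with $\ell=\omega(\log k\ell)$. With $k\ell\asymp\log n$ you would only reach $\left(e^{O(1)}\rho(\BlankCol,\GModPar)\right)^{\ell}$ rather than $\left((1+\eps)\rho(\BlankCol,\GModPar)\right)^{\ell}$ for arbitrary $\eps$.
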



\subsection{Proof of \pref{thm:main-spectral-bound}}
The proof of \pref{thm:main-spectral-bound} is via the \emph{trace method}.  One preliminary observation is that \pref{cond:self-adj} implies that $\CentAdj_{\bH}^{(\ell)}$ is self-adjoint and hence all its eigenvalues are real.  Consequently, for any positive even integer $k$:
\[
	\|\CentAdj_{\bH}^{(\ell)}\|^k \le \Tr\left(\left(\CentAdj_{\bH}^{(\ell)}\right)^k\right).	\numberthis \label{eq:trace-bounds-op-norm}
\]
Our goal is now to obtain a handle on $\bS\coloneqq\Tr\left(\left(\CentAdj_{\bH}^{(\ell)}\right)^k\right)$ and obtain a high probability bound on it.  We borrow some terminology from \cite{MOP20a}:
\begin{definition}[Linkages]
	A $(k\times 2\ell)$-nonbacktracking $\BlankCol$-linkage is a length-$2k\ell$ closed walk in $\Bip(\CompG_{n})$ that starts and ends in the left vertex set and can be expressed as a concatenation of $k$ nonbacktracking walks of length-$2\ell$ each.  Each length-$2\ell$ nonbacktracking segment is called a \emph{link}.  We use $\Lkgs(\BlankCol, n, k, \ell)$ to denote the collection of all $(k\times 2\ell)$-nonbacktracking $\BlankCol$-linkages.
\end{definition}

\begin{definition}
	Given a $(k\times 2\ell)$-nonbacktracking $\BlankCol$-linkage $W$, we use $L(W)$ to denote the set of left vertices visited by $W$, $R(W)$ to denote the set of right vertices visited by $W$, $V(W)$ to denote $L(W)\cup R(W)$, $E(W)$ to denote the set of edges visited by $W$, and $G(W)$ to denote the graph $(V(W),E(W))$ induced by $W$.
\end{definition}

With the above terminology and notation in hand, we can write $\bS$ as:
\[
	\bS = \sum_{W\in\Lkgs(\BlankCol,n,k,\ell)} \Tr(M_W) \prod_{t=1}^{k\ell}\left(\Ind[\gamma_t(W)\in \bH]-\E_{\bH|\btau}\Ind[\gamma_t(W)\in\bH]\right).
\]
A natural strategy to obtaining a high probability bound on $\bS$ is to bound $\E[\bS]$ by some $Z$ and use Markov's inequality to conclude that $\bS$ is bounded by, say, $nZ$ with high probability.  However, $\E[\bS]$ is not as small as we would hope due to blowing up in magnitude owing to the occurrences of certain rare and problematic subgraphs in $\Bip(\bH)$.  So this suggests a natural tweak of conditioning away these rare subgraphs and trying to carry out the same strategy.  This tweak is an idea that occurs in many previous papers in the line of work on getting eigenvalue bounds on sparse random matrices \cite{Fri03,BLM15,Bor19,BC19,MOP20a}.  These problematic subgraphs all share one common trait -- having multiple cycles in a small neighborhood.
\begin{definition}	\label{def:bicycle}
	We say a graph $\Gamma$ is \emph{$r$-bicycle free} if for every vertex $v$, the radius-$r$ ball around $v$ contains at most one cycle.  We say $\Gamma$ is an $r$-bicycle if it has at most $r$ edges and has at least two cycles, and we say $\Gamma$ is an $r$-bicycle frame if it is an $r$-bicycle such that no subgraph of it is an $r$-bicycle.
\end{definition}

\begin{lemma}	\label{lem:is-r-bike-free}
	With probability $1-o_n(1)$, $\Bip(\bH)$ is $r$-bicycle free for $r = \frac{\log n}{\log \log n}$.
\end{lemma}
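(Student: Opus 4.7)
The plan is to apply the first-moment method: show that the expected number of small bicycle frames contained in $\Bip(\bH)$ is $o_n(1)$ and then invoke Markov's inequality.

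First I would reduce to counting bicycle frames with $O(r)$ edges. If some radius-$r$ ball in $\Bip(\bH)$ contains two cycles, then (a restriction of) the BFS tree from the ball's center, augmented by two back-edges corresponding to the two cycles, yields a connected subgraph of $\Bip(\bH)$ with two independent cycles and $O(r)$ total edges; extracting a minimal such subgraph gives a bicycle frame (of theta, dumbbell, or figure-8 topology) on at most $Cr$ edges for some absolute constant $C$.  Thus it suffices to show $\E[\#\text{bicycle frames on }\le Cr\text{ edges contained in }\Bip(\bH)] = o_n(1)$.

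Next I would bound the expected count for each fixed shape $S$.  Let $S$ have $L$ variable vertices, $R$ factor vertices, and total edge count $E = \sum_{\gamma}\arity(\FacType(\gamma))$, where the sum ranges over factor vertices of $S$.  There are at most $n^L$ ways to embed the variable vertices of $S$ into $[n]$; conditioned on the embedding, each of the $R$ factors appears in $\bH$ independently in the null model with probability $O(n^{-(\arity-1)})$, so all appear simultaneously with probability $O(n^{-(E-R)})$.  Since any bicycle has cyclomatic number $\ge 2$, giving $E \ge L + R + 1$, the expected count per shape is at most $n^{L+R-E} = O(n^{-1})$.

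Then I would enumerate the shapes.  Each of the three bicycle topologies is rigid, being a constant number of paths joined at $\le 2$ endpoints, so the segment lengths contribute only $r^{O(1)}$ configurations.  On top of this, each of the $R \le Cr$ factor vertices must be assigned a type in $[F]$, and its incident edges must be assigned distinct positions within the factor; since $F$ and $k_{\max}\coloneqq\max_i\arity(i)$ are absolute constants, these choices contribute at most $(F\cdot k_{\max}!)^{Cr} = 2^{O(r)}$.  Combining, the expected frame count is $2^{O(r)}\cdot n^{-1}$, which equals $n^{O(1/\log\log n)-1} = o_n(1)$ for $r = \log n/\log\log n$, and Markov's inequality then finishes the job.

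The main technical care lies in the shape-enumeration step: the bound on the number of shapes must be genuinely $2^{O(r)}$ rather than $r^{O(r)}$ for the argument to close at $r = \log n/\log\log n$, and this is precisely what the rigidity of the three bicycle topologies affords.
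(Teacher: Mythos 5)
Your proposal is correct and takes a genuinely different route from the paper.  The paper first proves a more general statement, \pref{lem:unlikely-subgraphs}, which bounds the probability that the excess of the radius-$r$ ball around any fixed set of vertices is at least $t$; \pref{lem:is-r-bike-free} then follows as the special case $t=2$ (applied to a single vertex) together with a union bound over all $n$ variable vertices.  You instead run a first-moment count directly on bicycle frames: if a radius-$r$ ball has two cycles then $\Bip(\bH)$ contains some bicycle frame on $O(r)$ edges, and the expected number of such frames is $2^{O(r)}\cdot O(n^{-1}) = o_n(1)$ once $r = \log n/\log\log n$.  Both arguments are first-moment at their core.  The paper's extra generality is not wasted --- \pref{lem:unlikely-subgraphs} is the workhorse behind \pref{lem:singleton-small-exp} as well --- whereas your argument is more elementary and self-contained for the purpose of this one lemma.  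One imprecision worth tidying up in the per-shape count: you define $E = \sum_\gamma \arity(\FacType(\gamma))$ and bound $\E[\#\text{copies of }S] \le n^L \cdot n^{-(E-R)}$, but this quietly conflates a minimal bicycle frame (whose factor vertices typically have degree $2$ or $3$, strictly less than their arity) with its \emph{closure}.  If $S$ is the frame itself, the correct exponent is $n^{L+R-E_S}$ where $E_S$ is the actual edge count of $S$ (each factor $\gamma$ of degree $d_\gamma < \arity$ contributes a factor of $n^{\arity-d_\gamma}$ from the free choice of its remaining variables, partially cancelling the $n^{-(\arity-1)}$); if $S$ is the closure, then $L$ must include the $\sum_\gamma(\arity-d_\gamma)$ leaf variables added by closure.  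Either way the cyclomatic inequality gives exponent at most $-1$ --- closure adds only degree-one variables, so it preserves the cyclomatic number --- and the conclusion stands, but as written the displayed $n^{L+R-E}$ understates the true expectation rather than upper-bounding it.
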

We refer the reader to \pref{cor:bike-free} for a proof of this fact.

Henceforth, we use $\nobikes$ to denote the event that $\bH$ is $r$-bicycle free for $r = \frac{\log n}{\log\log n}$.  Now define $\bU\coloneqq \bS \cdot \Ind[\nobikes]$.  By \pref{lem:is-r-bike-free} with probability $1-o_n(1)$, $\bS = \bU$ so if we can prove that $\bU\le Z$ with probability $1-o_n(1)$, we can also show that $\bS \le Z$ with probability $1-o_n(1)$.  Thus, we turn our attention to bounding $\E[\bU]$.
\[
	\bU = \sum_{W\in\Lkgs(\BlankCol,n,k,\ell)} \Tr(M_W) \prod_{t=1}^{k\ell}\left(\Ind[\gamma_t(W)\in \bH]-\E_{\bH|\btau}\Ind[\gamma_t(W)\in\bH]\right)\Ind[\nobikes].	\numberthis \label{eq:U-formula}
\]
We now study the quantity $\prod_{t=1}^{k\ell}\left(\Ind[\gamma_t(W)\in \bH]-\E_{\bH|\btau}\Ind[\gamma_t(W)\in\bH]\right)$ more closely.
\begin{definition}
	For a given right vertex $\gamma$ of $\CompG_{n}$ the \emph{multiplicity} $m_W(\gamma)$ of $\gamma$ in $W$ is the number of times $\gamma$ is visited by $W$.  $\Sing(W)$ denotes the set of all right vertices that are visited exactly once and are called \emph{singleton right vertices}.  $\Dup(W)$ denotes the set of all right vertices that are visited more than once, and are called \emph{duplicative right vertices}.
\end{definition}
Henceforth, we shorten $\Ind[\gamma\in\bH]$ to $\Ind_{\gamma}$ and $\E_{\bH|\btau}\Ind[\gamma\in\bH]$ to $\Mean_{\gamma}$.  Thus:
\begin{align*}
	\prod_{t=1}^{k\ell}\left(\Ind_{\gamma_t(W)}-\Mean_{\gamma_t(W)}\right) &= \prod_{\gamma\in \Sing(W)}(\Ind_{\gamma}-\Mean_{\gamma}) \prod_{\gamma\in \Dup(W)}(\Ind_{\gamma}-\mu_{\gamma})^{m_W(\gamma)} \\
	&= \prod_{\gamma\in \Sing(W)}(\Ind_{\gamma}-\Mean_{\gamma}) \prod_{\gamma\in \Dup(W)}\left(\Ind_{\gamma} \cdot \sum_{i=1}^{m_W(\gamma)}(-\Mean_{\gamma})^{m_{W}(\gamma)-i}{m_W(\gamma)\choose i}+(-\Mean_{\gamma})^{m_W(\gamma)}\right)\\
	\intertext{To lighten notation, we use $\alpha_{\gamma}$ to denote $\sum_{i=1}^{m_W(\gamma)}(-\Mean_{\gamma})^{m_W(\gamma)-i}{m_W(\gamma)\choose i}$.}
	&= \prod_{\gamma\in \Sing(W)} (\Ind_{\gamma}-\Mean_{\gamma}) \prod_{\gamma\in \Dup(W)} (\Ind_{\gamma}\alpha_{\gamma} + (-\Mean_{\gamma})^{m_W(\gamma)})\\
	&= \prod_{\gamma\in \Sing(W)} (\Ind_{\gamma}-\Mean_{\gamma}) \sum_{L\subseteq \Dup(W)} \prod_{\gamma\in L}\Ind_{\gamma}\alpha_{\gamma} \prod_{\gamma\in \Dup(W)\setminus L} (-\Mean_{\gamma})^{m_W(\gamma)}\\
	&= \sum_{L\subseteq \Dup(W)} \prod_{\gamma\in L} \alpha_{\gamma} \prod_{\gamma\in \Dup(W)\setminus L} (-\Mean_{\gamma})^{m_W(\gamma)}\prod_{\gamma\in \Sing(W)}(\Ind_{\gamma}-\Mean_{\gamma}) \prod_{\gamma\in L}\Ind_{\gamma}. \numberthis \label{eq:scalar-weight}
\end{align*}
Plugging in \pref{eq:scalar-weight} into \pref{eq:U-formula} gives:
\begin{align*}
	\bU = \sum_{W\in\Lkgs(\BlankCol,n,k,\ell)} \Tr(M_W) \sum_{L\subseteq \Dup(W)} \prod_{\gamma\in L} \alpha_{\gamma} \prod_{\gamma\in \Dup(W)\setminus L} (-\Mean_{\gamma})^{m_W(\gamma)}\prod_{\gamma\in \Sing(W)}(\Ind_{\gamma}-\Mean_{\gamma})\prod_{\gamma\in L}\Ind_{\gamma} \Ind[\nobikes]
\end{align*}
We are interested in understanding $\E[\bU]$.  Note that this is equal to $\E_{\btau}\E_{\bH|\btau}[\bU]$.  We will first focus our attention on understanding $\E_{\bH|\btau}[\bU]$.  We have:
\begin{align*}
	\E_{\bH|\btau}[\bU] &= \sum_{W\in\Lkgs(\BlankCol,n,k,\ell)} \Tr(M_W) \sum_{L\subseteq \Dup(W)} \prod_{\gamma\in L} \alpha_{\gamma} \prod_{\gamma\in \Dup(W)\setminus L} (-\Mean_{\gamma})^{m_W(\gamma)}\E_{\bH|\btau}\left[\prod_{\gamma\in \Sing(W)}(\Ind_{\gamma}-\Mean_{\gamma})\prod_{\gamma\in L}\Ind_{\gamma} \Ind[\nobikes]\right]\\
	&\le \sum_{W\in\Lkgs(\BlankCol,n,k,\ell)} \Tr(M_W) \sum_{L\subseteq \Dup(W)} \prod_{\gamma\in L}|\alpha_{\gamma}| \prod_{\gamma\in \Dup(W)\setminus L} \Mean_{\gamma}^{m_W(\gamma)} \left|\E_{\bH|\btau}\left[\prod_{\gamma\in \Sing(W)}(\Ind_{\gamma}-\Mean_{\gamma})\prod_{\gamma\in L}\Ind_{\gamma} \Ind[\nobikes]\right]\right|.
\end{align*}
Notice that $\alpha_{\gamma} = (1-\Mean_{\gamma})^{m_W(\gamma)} - (-\Mean_{\gamma})^{m_W(\gamma)}$ and hence $|\alpha_{\gamma}|\le (1-\Mean_{\gamma})^{m_W(\gamma)} + \Mean_{\gamma}^{m_W(\gamma)}\le (1-\Mean_{\gamma})+\Mean_{\gamma} = 1$ where the second inequality is a consequence of $\Mean_{\gamma}\in[0,1]$.  The result is:
\[
	\E_{\bH|\btau}[\bU] \le \sum_{W\in\Lkgs(\BlankCol,n,k,\ell)} \Tr(M_W) \sum_{L\subseteq \Dup(W)} \prod_{\gamma\in \Dup(W)\setminus L} \Mean_{\gamma}^{m_W(\gamma)} \left|\E_{\bH|\btau}\left[\prod_{\gamma\in \Sing(W)}(\Ind_{\gamma}-\Mean_{\gamma})\prod_{\gamma\in L}\Ind_{\gamma} \Ind[\nobikes]\right]\right|.	\numberthis \label{eq:red-to-singleton-bound}
\]
Next, we would like to obtain a bound on $\left|\E_{\bH|\btau}\left[\prod_{\gamma\in \Sing(W)}(\Ind_{\gamma}-\Mean_{\gamma})\prod_{\gamma\in L}\Ind_{\gamma} \Ind[\nobikes]\right]\right|$.

Towards doing so, we first set up a couple of definitions and an observation.
\begin{definition}[Closure of subgraph]
	Given a subgraph $\Gamma$ of $\Bip(\CompG_{n})$ with right vertex set $R(\Gamma)$, we define its \emph{closure} $\Clos(\Gamma)$ as the induced subgraph on the vertex set $\left(\bigcup_{\gamma\in R(\Gamma)}N(\gamma)\right) \bigcup R(\Gamma)$.  We say $\Gamma$ is \emph{closed} if $\Clos(\Gamma) = \Gamma$.
\end{definition}

\begin{definition}[Excess]
	Given a graph $\Gamma$ on $e$ edges, $v$ vertices and $c$ connected components, we define the \emph{excess} of $\Gamma$, denoted $\Exc(\Gamma)$, to be $e-v+c$.
\end{definition}

The following is immediate from the observation that the excess of a graph cannot decrease on adding a new vertex or a new edge.
\begin{lemma}	\label{lem:exc-cant-grow}
	If $\Gamma=(V,E)$ and $\Gamma'=(V',E')$ are two graphs such that $\Gamma$ is a subgraph of $\Gamma'$, i.e. $V\subseteq V'$ and $E\subseteq E'$, then $\Exc(\Gamma')\ge\Exc(\Gamma)$.
\end{lemma}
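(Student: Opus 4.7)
The plan is to build up $\Gamma'$ from $\Gamma$ by a sequence of elementary augmentations and to check that the excess $\Exc(\cdot)=e-v+c$ is non-decreasing under each such step. Concretely, first adjoin the vertices in $V'\setminus V$ one at a time as isolated vertices, and then add the edges in $E'\setminus E$ one at a time; the resulting sequence of graphs starts at $\Gamma$ and ends at $\Gamma'$ (this ordering is legal because every edge of $\Gamma'$ has both endpoints in $V'$, which are already present once the vertex-adding phase is complete).

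The bulk of the proof is then a case analysis on a single elementary step applied to an intermediate graph $H=(V_H,E_H)$ with excess $\Exc(H)=|E_H|-|V_H|+c(H)$:

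\begin{itemize}
\item \emph{Adding an isolated vertex:} $|V_H|$ increases by $1$, $c(H)$ increases by $1$, and $|E_H|$ is unchanged, so $\Exc$ is unchanged.
\item \emph{Adding an edge $\{a,b\}$ whose endpoints lie in the same connected component of $H$:} $|E_H|$ increases by $1$, and both $|V_H|$ and $c(H)$ are unchanged, so $\Exc$ increases by exactly $1$.
\item \emph{Adding an edge $\{a,b\}$ whose endpoints lie in different connected components of $H$:} $|E_H|$ increases by $1$, $|V_H|$ is unchanged, and $c(H)$ decreases by $1$ because the two components merge, so $\Exc$ is unchanged.
\end{itemize}

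In every elementary step $\Exc$ either stays the same or goes up by $1$; composing the finitely many steps linking $\Gamma$ to $\Gamma'$ then gives $\Exc(\Gamma')\ge\Exc(\Gamma)$. There is no real obstacle here: the only thing one has to be slightly careful about is ordering the steps so that each added edge has its endpoints already present (handled by adding vertices before edges), and correctly tracking the change in $c$ in the two edge-addition subcases.
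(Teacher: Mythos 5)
Your proof is correct and takes essentially the same approach as the paper, which simply observes that excess cannot decrease upon adding a vertex or an edge and calls the claim immediate; you have made that observation explicit via the three-case analysis.
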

If $\Clos(L)$ is not $r$-bicycle free, then $\prod_{\gamma\in L}\Ind_{\gamma}\Ind[\nobikes]$ is equal to $0$.  Otherwise \pref{lem:singleton-small-exp} then shows that:
\[
	\left|\E_{\bH|\btau}\left[\prod_{\gamma\in \Sing(W)}(\Ind_{\gamma}-\Mean_{\gamma})\prod_{\gamma\in L}\Ind_{\gamma} \Ind[\nobikes]\right]\right| \le \prod_{\gamma\in S\cup L}\Mean_{\gamma} \cdot 2^{|\Sing(W)|} \left(\frac{1}{n^{.5}}\right)^{\frac{|\Sing(W)|}{r}-\Exc(\Clos(\Sing(W)\cup L))}.
\]
Plugging the above into \pref{eq:red-to-singleton-bound} tells us:
\[
	\E_{\bH|\btau}[\bU] \le \sum_{W\in\Lkgs(\BlankCol,n,k,\ell)} \Tr(M_W) \sum_{\substack{L\subseteq \Dup(W) \\ \Clos(L)~r\text{-bicycle free}}} \prod_{\gamma\in R(W)}\Mean_{\gamma} \prod_{\gamma\in \Dup(W)\setminus L} \Mean_{\gamma}^{m_W(\gamma)-1} 2^{|\Sing(W)|} \left(\frac{1}{n^{.5}}\right)^{\frac{|\Sing(W)|}{r}-\Exc(\Clos(\Sing(W)\cup L))}.
\]
 Henceforth, we will shorten $\Exc(\Clos(\Sing(W)\cup \Dup(W)))$ to $\Exc_W$ for simplicity of notation.  Since $\Clos(\Sing(W)\cup L)$ is a subgraph of $\Clos(\Sing(W)\cup \Dup(W))$, by \pref{lem:exc-cant-grow} we have $\Exc_W\ge \Exc(\Clos(\Sing(W)\cup L))$, which means:
\begin{align*}
	\E_{\bH|\btau}[\bU] &\le \sum_{W\in\Lkgs(\BlankCol,n,k,\ell)} \Tr(M_W) \sum_{\substack{L\subseteq \Dup(W) \\ \Clos(L)~r\text{-bicycle free}}} \prod_{\gamma\in R(W)}\Mean_{\gamma} \prod_{\gamma\in \Dup(W)\setminus L} \Mean_{\gamma}^{m_W(\gamma)-1} 2^{|\Sing(W)|} \left(\frac{1}{n^{.5}}\right)^{\frac{|\Sing(W)|}{r}-\Exc_W} \\
	&= \sum_{W\in\Lkgs(\BlankCol,n,k,\ell)} \Tr(M_W) \cdot 2^{|\Sing(W)|} \left(\frac{1}{n^{.5}}\right)^{\frac{|\Sing(W)|}{r}-\Exc_W} \prod_{\gamma\in R(W)}\Mean_{\gamma} \sum_{\substack{L\subseteq \Dup(W) \\ \Clos(L)~r\text{-bicycle free}}} \prod_{\gamma\in \Dup(W)\setminus L} \Mean_{\gamma}^{m_W(\gamma)-1}	\numberthis \label{eq:about-to-bound-profligate}
\end{align*}
Next, we focus on bounding $\sum_{\substack{L\subseteq \Dup(W) \\ L~r\text{-bicycle free}}} \prod_{\gamma\in \Dup(W)\setminus L} \Mean_{\gamma}^{m_W(\gamma)-1}$.  For starters, observe that:
\[
	\sum_{\substack{L\subseteq \Dup(W) \\ \Clos(L)~r\text{-bicycle free}}} \prod_{\gamma\in \Dup(W)\setminus L} \Mean_{\gamma}^{m_W(\gamma)-1} \le \sum_{\substack{L\subseteq \Dup(W) \\ \Clos(L)~r\text{-bicycle free}}} \prod_{\gamma\in \Dup(W)\setminus L} \left(\frac{{\ACFacDeg{}}_{\max}}{n}\right)^{m_W(\gamma)-1}
\]
We proceed to bound this in a manner identical to \cite{BMR19}.  We define a weight function $w$ on subsets of $\Dup(W)$ as follows: $w(K) = \sum_{\gamma\in K} m_W(\gamma)-1$.  Choose $D^*(W)$ as a maximum weight subset (according to $w$) of $W$ such that $\Clos(D^*(W))$ is $r$-bicycle free, and let $\Profl(W)\coloneqq w(\Dup(W))-w(D^*(W))$.  Note that for any $L\subseteq \Dup(W)$ such that $\Clos(L)$ is $r$-bicycle free, $\Profl(W)\le w(\Dup(W)\setminus L)$.
\begin{align*}
	\sum_{\substack{L\subseteq \Dup(W) \\ \Clos(L)~r\text{-bicycle free}}} \prod_{\gamma\in \Dup(W)\setminus L} \left(\frac{{\ACFacDeg{}}_{\max}}{n}\right)^{m_W(\gamma)-1} &\le \sum_{\substack{L\subseteq \Dup(W) \\ \Clos(L)~r\text{-bicycle free}}} \left(\frac{{\ACFacDeg{}}_{\max}}{n}\right)^{w(\Dup(W)\setminus L)}
	\intertext{Since $L\subseteq \Dup(W)$ every $m_W(\gamma)-1\ge 1$.  Using this along with $\Profl(W)\le w(\Dup(W)\setminus L)$ we can bound the above by:}
	&\le \sum_{L\subseteq \Dup(W)} \left(\frac{{\ACFacDeg{}}_{\max}}{n}\right)^{\max\{|\Dup(W)\setminus L|,\Profl(W)\}}\\
	&= \sum_{i\le\Profl(W)} \left(\frac{{\ACFacDeg{}}_{\max}}{n}\right)^{\Profl(W)}\cdot{|\Dup(W)|\choose i} + \sum_{i>\Profl(W)} \left(\frac{{\ACFacDeg{}}_{\max}}{n}\right)^i {|\Dup(W)|\choose i}\\
	&\le (\Profl(W)+1)\left(\frac{{\ACFacDeg{}}_{\max}}{n}\right)^{\Profl(W)} + \sum_{i>\Profl(W)} \left(\frac{{\ACFacDeg{}}_{\max}|\Dup(W)|}{n}\right)^i \\
	&\le (\Profl(W)+2)\left(\frac{{\ACFacDeg{}}_{\max}}{n}\right)^{\Profl(W)} \\
	&\le 2\left(\frac{2{\ACFacDeg{}}_{\max}}{n}\right)^{\Profl(W)}
\end{align*}
Plugging this back into \pref{eq:about-to-bound-profligate} gives us:
\begin{align*}
	\pref{eq:about-to-bound-profligate} &\le 2\sum_{W\in\Lkgs(\BlankCol,n,k,\ell)} \Tr(M_W) \cdot 2^{|\Sing(W)|} \cdot\left(\frac{1}{n^{.5}}\right)^{\frac{|\Sing(W)|}{r}-\Exc_W} \cdot \left(\frac{2{\ACFacDeg{}}_{\max}}{n}\right)^{\Profl(W)} \cdot \prod_{\gamma\in R(W)}\Mean_{\gamma}
\end{align*}
As a first step towards simplifying the above quantity we make the following definition:
\begin{definition}[Shape of a linkage]	\label{def:linkage-shape}
	Given a $(k\times 2\ell)$-nonbacktracking linkage $W=v_0v_1\dots v_{2k\ell}$ that visits $v$ distinct vertices, we say the \emph{shape} of $W$ denoted $\Sh(W)$ is the $(k\times 2\ell)$-nonbacktracking linkage on graph on vertex set $[v]$ obtained by first constructing map $\xi:V(W)\to[2k\ell]$ where $\xi(v) = i$ where $v$ is the $i$-th distinct vertex visited by $W$ and defining the $t$-th step of the walk $\Sh(W)$ to be $\xi(v_{t-1})\xi(v_t)$.  We say the left vertex set of $\Sh(W)$ is $\xi(L(W))$ and the right vertex set is $\xi(R(W))$.  For $a\in V(\Sh(W))$ we will use the notation $W[a]$ to denote $\xi^{-1}(a)$.

	We use $\Shps(k,\ell)$ to denote the set of all distinct shapes of linkages in $\Lkgs(\BlankCol, n, k, \ell)$, and $\Shps(k,\ell,v,e)$ to denote the set of all shapes in $\Shps(k,\ell)$ on $v$ vertices and $e$ edges.
\end{definition}
We can rewrite the bound on \pref{eq:about-to-bound-profligate} as:
\begin{align*}
	\pref{eq:about-to-bound-profligate} &\le 2\sum_{\Sh\in\Shps(k,\ell)} 2^{|S(\Sh)|} \cdot\left(\frac{1}{n^{.5}}\right)^{\frac{|S(\Sh)|}{r}-\Exc_{\Sh}} \cdot \left(\frac{2{\ACFacDeg{}}_{\max}}{n}\right)^{\Profl(\Sh)}
	\sum_{\substack{W:W\in\Lkgs(\BlankCol,n,k,\ell) \\ \Sh(W)=\Sh}} \Tr(M_W) \cdot \prod_{\gamma\in R(W)}\Mean_{\gamma}.	\numberthis \label{eq:sep-into-shapes}
\end{align*}
For a given linkage $W$, we begin by deriving an upper bound on $\Tr(M_W)$.  A preliminary observation is:
\begin{observation}	\label{obs:bound-trace}
	$\Tr(M_W)\le q\|M_W\|$.
\end{observation}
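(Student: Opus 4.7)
The proof is a direct application of the elementary linear algebra fact that $|\Tr(A)| \le q\|A\|$ for any $q \times q$ matrix $A$. Since each factor $M_{v\gamma u}$ in $\BlankCol$ is a $q\times q$ matrix, the product $M_W$ along any nonbacktracking walk $W$ is itself a $q \times q$ matrix. The inequality may be derived directly from the singular value decomposition $A = U \Sigma V^*$: one writes
\[ \Tr(A) = \Tr(\Sigma V^* U) = \sum_{i=1}^{q} \sigma_i(A)\,(V^* U)_{ii}, \]
and since $V^* U$ is unitary, each diagonal entry $(V^* U)_{ii}$ has magnitude at most $1$, so
\[ |\Tr(A)| \le \sum_{i=1}^{q} \sigma_i(A) \le q\,\sigma_1(A) = q\,\|A\|. \]

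The only point worth checking is that the operator norm in the statement is the one induced by the non-standard inner product $\langle\cdot,\cdot\rangle_\nu$ rather than the Euclidean inner product. Writing $\langle x,y\rangle_\nu = x^* G y$ for a positive definite Gram matrix $G$ on the $q$-dimensional subspaces attached to the endpoints of $W$, the $\nu$-operator norm of $M_W$ equals the Euclidean spectral norm of $\widetilde{M}_W \coloneqq G^{1/2} M_W G^{-1/2}$. Cyclicity of the trace gives $\Tr(M_W) = \Tr(\widetilde{M}_W)$, so applying the elementary inequality above to $\widetilde{M}_W$ transfers the bound and yields $\Tr(M_W) \le q\,\|M_W\|$ as claimed.

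There is no real obstacle here; the statement is a routine linear algebra fact, recorded as a convenient lemma so that, in the shape-by-shape breakdown of the trace-method sum that follows, the contribution of each linkage can be estimated by controlling $\|M_W\|$ (which in turn is bounded by the constant $C$ from \pref{cond:matrix-product-small}) rather than by reasoning about $\Tr(M_W)$ directly.
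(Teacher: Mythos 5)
Your proof is correct and supplies the standard linear-algebra argument for a fact the paper records as an unproved observation; the SVD bound $|\Tr(A)|\le\sum_i\sigma_i(A)\le q\sigma_1(A)$ is exactly the right tool. Your care about the non-Euclidean inner product is warranted and the conjugation argument goes through cleanly precisely because $W$ is a \emph{closed} walk (the trace method concerns linkages that start and end at the same left vertex), so both endpoints carry the same Gram block $G$ and $\Tr(G^{1/2}M_W G^{-1/2})=\Tr(M_W)$.
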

Our next step is to decompose $W$ into simpler ``subwalks''.  This segment of the argument follows \cite{BC19,OW20}
\begin{definition}
	We call a vertex $v$ in $L(W)$ a \emph{landmark} of $W$ if it satisfies at least one of the following conditions: (i) is an endpoint of a link, (ii) $\deg_{G(W)}(v)\ge 3$, (iii) $\deg_{G(W)}(w)\ge 3$ for some $w\in R(W)$ which is incident to $v$ within $G(W)$.  We refer to the set of all landmark vertices in $W$ as $\Lm(W)$ We call any path between two landmark vertices $v_1$ and $v_2$ with no intermediate landmark vertices a \emph{trail}.  We call a trail a \emph{forked trail} if it has an intermediate vertex $w$ in $R(W)$ such that $\deg_{G(W)}(v)\ge 3$, and an \emph{unforked trail} otherwise.  We use $\Trs(W)$ to denote the collection of all trails in $W$, $\UTrs(W)$ to denote the collection of all unforked trails in $W$ and $\FTrs(W)$ to denote the collection of all forked trails in $W$.
\end{definition}
\begin{observation}
	Any forked trail must be a single two-step of the form $u\gamma v$ where $u$ and $v$ are left vertices and $\gamma$ is a right vertex.
\end{observation}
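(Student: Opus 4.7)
The plan is to directly read off the conclusion from the defining conditions of ``landmark'' and ``trail''. Since trails live in the bipartite graph $\Bip(\CompG_n)$ and have landmark (hence left) vertices as their endpoints, a general trail has the alternating form
\[
  u_0 \, \gamma_1 \, u_1 \, \gamma_2 \, u_2 \, \cdots \, \gamma_k \, u_k,
\]
where $u_0, \ldots, u_k \in L(W)$, $\gamma_1, \ldots, \gamma_k \in R(W)$, $u_0$ and $u_k$ are landmarks, and $u_1, \ldots, u_{k-1}$ are not. The goal is to show $k = 1$.

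First I would invoke the definition of ``forked'' to select an intermediate right vertex, which must be one of $\gamma_1, \ldots, \gamma_k$, say $\gamma_i$, satisfying $\deg_{G(W)}(\gamma_i) \geq 3$. Next, I would observe that both left neighbors of $\gamma_i$ along the trail, namely $u_{i-1}$ and $u_i$, are adjacent to $\gamma_i$ in $G(W)$, since the edges $\{u_{i-1}, \gamma_i\}$ and $\{\gamma_i, u_i\}$ are edges traversed by $W$ and thus lie in $E(W)$. Applying condition (iii) in the definition of a landmark (a left vertex is a landmark whenever it is incident to some right vertex of $G(W)$-degree at least $3$) to $u_{i-1}$ and $u_i$ shows that both of them must be landmarks of $W$.

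Combining this with the ``no intermediate landmarks'' clause of the definition of a trail forces $u_{i-1}$ and $u_i$ to coincide with the trail's endpoints, i.e.\ $i - 1 = 0$ and $i = k$. Hence $k = 1$ and the trail is exactly $u_0 \, \gamma_1 \, u_1$, which is the claimed two-step form $u \gamma v$.

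The only ``obstacle'' is bookkeeping about the indexing conventions — one has to be a little careful that the definition of landmark applies only to left vertices (so the $\gamma_i$'s themselves are never landmarks and can freely be intermediate in a trail) and that edges on the trail are indeed present in $G(W)$. Once those two points are noted the argument is essentially a direct unfolding of the definitions and requires no further combinatorial work.
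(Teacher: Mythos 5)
Your unfolding of the definitions is correct and complete. The paper states this observation without proof (treating it as immediate from the definitions of landmark and trail), and your argument supplies exactly the reasoning being implicitly relied on: a forked trail by definition has an intermediate right vertex $\gamma_i$ with $\deg_{G(W)}(\gamma_i)\ge 3$; both left neighbors $u_{i-1},u_i$ of $\gamma_i$ on the trail lie in $G(W)$ and so are landmarks by condition (iii); since a trail has no intermediate landmarks, both must be endpoints, forcing $i-1=0$ and $i=k$, i.e.\ $k=1$. Your bookkeeping remarks (that landmarks are only left vertices, that trail edges are in $E(W)$) are the right things to check and hold here. There is no gap.
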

Any $W\in\Lkgs(\BlankCol, n, k, \ell)$ is a sequence of nonbacktracking walks on trails.  $W$ can be written as the sequence of vertices visited $v_0\gamma_1v_1\dots \gamma_{k\ell} v_{k\ell}$.  Let $T$ be the set of all times $t$ such that $v_t$ is a landmark.  Using $T$, we construct a set of \emph{pause} times $P$ in the following way:
\begin{displayquote}
	For each $t\in T$, if the trail starting or ending at time $t$ is visited for the first or second time, we add $t$ to $P$.
\end{displayquote}
Recall from \pref{def:walk-matrix} that $M_W$ is the product of $k\ell$ matrices $M_{v_0\gamma_1v_1}\dots M_{v_{k\ell-1}\gamma_{k\ell}v_{k\ell}}$.  Let $p_1<\dots<p_s$ be the sequence of all pause times.  By submultiplicativity of the operator norm,
\[
	\|M_W\| \le \|M_{v_0\gamma_1v_1}\cdots M_{v_{p_1-1}\gamma_{p_1}v_{p_1}}\|\cdot\|M_{v_{p_1}\gamma_{p_1+1}v_{p_1+1}}\cdots M_{v_{p_2-1}\gamma_{p_2}v_{p_2}}\|\cdots\|M_{v_{p_{s}}\gamma_{p_{s}+1}v_{p_{s}+1}}\cdots M_{v_{k\ell-1}\gamma_{k\ell}v_{k\ell}}\|.
\]
Each segment between consecutive pauses $p_i$ and $p_{i+1}$ falls into one of the following categories:
\begin{itemize}
	\item $\Seg_{\le 2}(W)$: the segment is composed of exactly one trail and is the first or second visit to the trail.
	\item $\Seg_{>2}(W)$ the segment is a union of trails and each of these trails has already been visited at least twice before.
\end{itemize}
Rewriting the above upper bound, we now have:
\[
	\|M_W\| \le \prod_{\omega\in\Seg_{\le 2}(W)}\|M_{\omega}\|\cdot\prod_{\omega\in\Seg_{>2}(W)}\|M_{\omega}\|.
\]
Given a trail $T$ with endpoints $u$ and $v$, there are two nonbacktracking walks $\omega_1$ and $\omega_2$ that cover $T$, one from $u$ to $v$ and another from $v$ to $u$.  By \pref{cond:self-adj}, $M_{\omega_1}^* = M_{\omega_2}$ where the $*$ in the superscript refers to the adjoint induced by the inner product $\langle \cdot, \cdot\rangle_v$ and so $\|M_{\omega_1}\| = \|M_{\omega_2}\|$.  Henceforth we use $\|M_T\|$ to denote $\|M_{\omega_1}\|=\|M_{\omega_2}\|$.
Using the notation $\UTrs_{\ge 2}(W)$ for unforked trails that are visited more than once, we can write the above as:\sidtemp{modify this to make it more readable}
\[
	\|M_W\| \le \prod_{T\in\UTrs_{\ge 2}(W)} \|M_T\|^2 \cdot \prod_{\omega\in\Seg_{>2}(W)\cup\Seg_1(W)\setminus\UTrs_{\ge 2}(W)}\|M_{\omega}\|.
\]
\pref{cond:matrix-product-small} further lets us get the following bound:
\[
	\|M_W\| \le C^{|\Seg_{1}(W)|+|\Seg_{\ge 2}(W)|} \cdot \prod_{T\in\UTrs_{\ge 2}(W)} \|M_T\|^2 \le C^{|P|+1}\cdot\prod_{T\in\UTrs_{\ge 2}(W)} \|M_T\|^2	\numberthis \label{eq:separated-squared-walks}
\]
We now turn our attention to bounding $|P|$.  Given a landmark vertex $v$ and an edge $e$ incident to it, there are at most $\arity_{\max}$ trails that start at $v$ and tread on $e$ on their first step, and hence the number of distinct trails starting at $v$ is at most $\arity_{\max}\cdot\deg_{G(W)}(v)$.  By the construction of $P$, the number of pauses at vertex $v$ is at most twice the number of distinct trails starting at $v$, and hence is at most $2\arity_{\max}\cdot\deg_{G(W)}(v)$.  Thus:
\[
	|P| \le 2\arity_{\max} \sum_{v\in\Lm(W)} \deg_{G(W)}(v) \le 2\arity_{\max} \left(2|\Lm_{\le 2}(W)| + \sum_{v\in\Lm_{\ge 3}(W)} \deg_{G(W)}(v)\right)
\]
where $\Lm_{\le 2}(W)$ and $\Lm_{\ge 3}(W)$ denote the sets of landmark vertices of degree-$\le 2$ and degree-$\ge 3$ respectively.  Each vertex in $\Lm_{\le 2}(v)$ is either an endpoint of a link or a neighbor of a degree-$\ge 3$ right vertex.  There are exactly $k+1$ endpoints of links, and at most $\sum_{v\in G(W):\deg_{G(W)}(v)\ge 3}\deg_{G(W)}(v)$ landmark vertices induced as neighbors of degree-$\ge 3$ right vertices, and hence:
\[
	|P| \le 2\arity_{\max} \left(2(k+1) + 2\sum_{v\in G(W):\deg_{G(W)}(v)\ge 3} \deg_{G(W)}(v) + \sum_{v\in G(W):\deg_{G(W)}(v)\ge 3} \deg_{G(W)}(v) \right)	\numberthis \label{eq:bound-on-pauses}
\]
It remains to bound $\sum_{v\in G(W):\deg_{G(W)}(v)\ge 3} \deg_{G(W)}(v)$.  Let $X$ be a set of edges of size $\Exc_W$ such that $T(W,X)\coloneqq (V(W),E(W)\setminus X)$ is a tree.  Since $G(W)$ has at most $k$ leaves, $T(W,X)$ has at most $k+2\Exc_W$ leaves.  We now state the following well known fact about trees and refer the reader to \cite[Fact 6.35]{BMR19} for a proof.
\begin{fact}	\label{fact:tree-leaves-deg}
	Let $T$ be a tree with $l$ leaves.  Then $3l \ge \sum_{v\in V(T):\deg_{T}(v)\ge 3}\deg_T(v)$.
\end{fact}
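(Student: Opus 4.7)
The plan is to prove this by the standard handshake/degree-counting argument for trees. First I would set up the notation: let $n_i$ denote the number of vertices of $T$ of degree exactly $i$, so $n_1 = l$ (the number of leaves), and let $n \coloneqq |V(T)|$. I would use two identities that hold for any tree: $\sum_i n_i = n$ and $\sum_i i \cdot n_i = 2(n-1)$, the latter coming from the handshake lemma combined with the fact that a tree on $n$ vertices has exactly $n-1$ edges.

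Subtracting twice the first identity from the second yields the key relation
\[
	\sum_{i \ge 1} (i - 2) n_i = -2,
\]
which I would rearrange as $\sum_{i \ge 3}(i - 2) n_i = n_1 - 2 = l - 2$, using that the $i=2$ term vanishes and the $i=1$ term contributes $-n_1$. This is the only nontrivial content of the argument.

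From here the bound falls out by writing
\[
	\sum_{v : \deg_T(v) \ge 3} \deg_T(v) = \sum_{i \ge 3} i \cdot n_i = \sum_{i \ge 3}(i-2) n_i + 2 \sum_{i \ge 3} n_i,
\]
then observing that $\sum_{i \ge 3} n_i \le \sum_{i \ge 3}(i-2) n_i$ since $i - 2 \ge 1$ whenever $i \ge 3$. Both terms are thus at most $l - 2$, which gives $\sum_{v : \deg_T(v) \ge 3} \deg_T(v) \le 3(l-2) \le 3l$. I do not anticipate any obstacle here; the only subtlety is the edge case of trees with $n \le 2$, which is handled by noting that such trees have no vertices of degree at least $3$ and the bound is trivial.
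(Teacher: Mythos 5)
Your proof is correct. The paper itself states this fact without proof, deferring to \cite[Fact 6.35]{BMR19}; your degree-counting argument (combining the handshake lemma with $|E(T)| = |V(T)| - 1$ to obtain $\sum_{i\ge 3}(i-2)n_i = l-2$ and then bounding $\sum_{i\ge 3} i\, n_i \le 3\sum_{i\ge 3}(i-2)n_i$) is the standard, self-contained way to establish it, and you correctly flag that trees on at most two vertices have no degree-$\ge 3$ vertices so the bound is trivial there.
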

As a consequence of \pref{fact:tree-leaves-deg}:
\[
	\sum_{v\in V(T(W,X)):\deg_{T(V,X)}(v)\ge 3} \deg_{T(V,X)}(v) \le 3(k+2\Exc_W).
\]
Now observe that for any graph $\Gamma$ and graph $\Gamma'$ obtained by adding a single edge to $\Gamma$:
\[
	\sum_{v\in V(\Gamma'):\deg_{\Gamma'}(v)\ge 3} \deg_{\Gamma'}(v) \le \left(\sum_{v\in V(\Gamma):\deg_{\Gamma}(v)\ge 3} \deg_{\Gamma}(v)\right) + 6,
\]
and thus
\[
	\sum_{v\in G(W):\deg_{G(W)}(v)\ge 3} \deg_{G(W)}(v) \le 3(k+2\Exc_W) + 6\Exc_W = 3k + 12\Exc_W.	\numberthis \label{eq:total-edges-deg-3}
\]
Plugging this back into \pref{eq:bound-on-pauses} and using $k\ge 2$ gives:
\[
	|P| \le 2\arity_{\max}\left(12k+36\Exc_W\right) = 24\arity_{\max}\left(k+3\Exc_W\right).
\]
\begin{remark}	\label{rem:bound-landmark-deg}
Observe that the above also proved the following, which will be of utility later in the proof:
\[
	\sum_{v\in\Lm(W)} \deg_{G(W)}(v) \le 12k+36\Exc_W.
\]
\end{remark}
Next, plugging the bound on $|P|$ back into \pref{eq:separated-squared-walks} along with \pref{obs:bound-trace} gives:
\[
	\Tr(M_W) \le qC^{24\arity_{\max}(k+3\Exc_W)+1}\cdot\prod_{T\in\UTrs_{\ge 2}(W)} \|M_T\|^2 \le qC^{25\arity_{\max}(k+3\Exc_W)} \cdot \prod_{T\in\UTrs_{\ge 2}(W)} \|M_T\|^2.
\]
And via \pref{eq:sep-into-shapes} and introducing the expected value over the randomness of $\btau$:
\begin{align*}
	\E_{\btau}\E_{\bH|\btau}[\bU] \le 2q\sum_{\Sh\in\Shps(k,\ell)} \left(\frac{2^r}{n^{.5}}\right)^{\frac{|S(\Sh)|}{r}} \cdot \left(n^{.5}C^{75\arity_{\max}}\right)^{\Exc_{\Sh}} C^{25\arity_{\max}k} \cdot \left(\frac{2{\ACFacDeg{}}_{\max}}{n}\right)^{\Profl(\Sh)}\\
	\sum_{\substack{W:W\in\Lkgs(\BlankCol,n,k,\ell) \\ \Sh(W)=\Sh}} \prod_{T\in\UTrs_{\ge 2}(W)} \|M_T\|^2 \cdot \E_{\btau} \prod_{\gamma\in R(W)}\Mean_{\gamma}	\numberthis \label{eq:bound-sep-trace}
\end{align*}
Thus, for a fixed shape $\Sh$ we now restrict our attention to bounding:
\begin{align*}
	\sum_{\substack{W:W\in\Lkgs(\BlankCol,n,k,\ell) \\ \Sh(W)=\Sh}} \prod_{T\in\UTrs_{\ge 2}(W)} \|M_T\|^2 \cdot \E_{\btau} \prod_{\gamma\in R(W)}\Mean_{\gamma}.	\numberthis \label{eq:restrict-to-shape}
\end{align*}

We now define the notion of $\Cl$-consistent.
\begin{definition}	\label{def:consistent-subgraph}
	Let $\Gamma$ be a subgraph of the complete bipartite factor graph $\Bip(\CompG_{n})$.  We say $\Gamma$ is $\Cl$-consistent if there exists a $\tau$ such that every $\gamma = (v_1,\dots,v_{\arity(i)})\in R(\Gamma)$ satisfies $(\tau(v_1),\dots,\tau(v_{\arity(i)})) = \Cl(\FacType(\gamma))$. 
	If $\Gamma$ is $\Cl$-consistent we use $\tau_{\Gamma}$ to refer to the unique $\tau$ such that $\Gamma$ is $(\tau,\Cl)$-consistent.
\end{definition}

\begin{observation}	\label{obs:inconsistent-0}
	For $W\in\Lkgs(\BlankCol, n, k, \ell)$, $\prod_{\gamma\in R(W)}\Mean_{\gamma}$ is equal to $0$ if $W$ is not $\Cl$-consistent.
\end{observation}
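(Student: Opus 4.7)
The plan is to directly unfold the definition of $\Mean_\gamma$ and invoke the normalization convention on the factor functions $\phi_i$; no nontrivial step is involved and the observation is essentially a bookkeeping check.

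First, I would recall from the simplifying notation introduced in \pref{sec:notations} that each factor type $\phi_i \in \Phi$ satisfies
\[
\phi_i\bigl((\tau_1,c_1),\ldots,(\tau_{\arity(i)},c_{\arity(i)})\bigr) = 0 \quad\text{whenever}\quad (\tau_1,\ldots,\tau_{\arity(i)}) \ne \Cl(i).
\]
Combined with the null-model sampling rule in \pref{def:null-model}, this implies that for a right vertex $\gamma = (i,(v_1,\ldots,v_{\arity(i)}))$ of $\CompG_n$, the conditional inclusion mean
\[
\Mean_\gamma = \E_{\bH \mid \btau}\Ind[\gamma \in \bH] = \E_{\bc \mid \btau}\frac{\phi_i\bigl((\btau(v_1),\bc(v_1)),\ldots,(\btau(v_{\arity(i)}),\bc(v_{\arity(i)}))\bigr)}{n^{\arity(i)-1}}
\]
vanishes as soon as $(\btau(v_1),\ldots,\btau(v_{\arity(i)})) \ne \Cl(i)$, since then the integrand is identically zero.

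Next, I would invoke the contrapositive of the definition of $\Cl$-consistency (\pref{def:consistent-subgraph}): if $W$ is not $\Cl$-consistent, then \emph{no} type assignment on $V(W)$ agrees with $\Cl(\FacType(\gamma))$ on every $\gamma \in R(W)$. In particular, the actual sampled assignment $\btau$ restricted to $V(W)$ must fail this condition, so there exists some $\gamma^* = (i^*,(v_1^*,\ldots,v_{\arity(i^*)}^*)) \in R(W)$ with $(\btau(v_1^*),\ldots,\btau(v_{\arity(i^*)}^*)) \ne \Cl(i^*)$. By the previous paragraph $\Mean_{\gamma^*}=0$, and hence the whole product $\prod_{\gamma \in R(W)} \Mean_\gamma$ vanishes pointwise in $\btau$, which is all that is asserted.

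There is no obstacle worth flagging: the observation is entirely definitional, and its real role is downstream, letting the enumeration over shapes in \pref{eq:restrict-to-shape} be restricted to those admitting at least one valid type labeling on $V(W)$, which is what later enables tight counting of the contributing $W$'s by shape.
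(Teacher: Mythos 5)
Your proof is correct and takes the evidently intended route: the paper leaves this observation unproved because it follows at once from the class-function convention in \pref{sec:notations} (that $\phi_i$ vanishes off of $\Cl(i)$) together with the definition of $\Mean_\gamma$, and the contrapositive reading of \pref{def:consistent-subgraph} is exactly the right bookkeeping step that makes the product vanish pointwise in $\btau$. No gaps, and the remark at the end correctly identifies the downstream role of the observation in restricting the sum \pref{eq:restrict-to-shape} to $\Cl$-consistent linkages.
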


Thus:
\begin{align*}
	\pref{eq:restrict-to-shape} &= \sum_{\substack{W:W\in\Lkgs(\BlankCol,n,k,\ell) \\ \Sh(W)=\Sh \\ W~\Cl\text{-consistent}}} \prod_{T\in\UTrs_{\ge 2}(W)} \|M_T\|^2 \cdot \prod_{\gamma\in R(W)} \frac{\ACFacDeg{\FacType(\gamma)}}{n^{\arity(\FacType(\gamma))-1}} \cdot \prod_{v\in L(\Clos(W))} \TypeDist(\tau_W(v)) .
\end{align*}

\begin{definition}	\label{def:clos-equiv}
	We call two walks $W_1$ and $W_2$ \emph{equivalent} denoted $W_1\sim W_2$ if
	\begin{itemize}
		\item $\Sh(W_1)=\Sh(W_2)=:\Sh$,
		\item for any $a\in R(\Sh)$, $\RType(W_1[a])=\RType(W_2[a])$,
		\item for any edge $\{v,\gamma\}$ in $\Sh$ for $v\in L(\Sh)$ and $\gamma\in R(\Sh)$, $\Index(W_1[\gamma], W_1[v]) = \Index(W_2[\gamma], W_2[v])$.
	\end{itemize}
	We say $W_1$ and $W_2$ are \emph{closure equivalent} denoted $W_1\sim_{\Clos} W_2$ if $W_1\sim W_2$ and the graphs induced by $\Clos(W_1)$ and $\Clos(W_2)$ are isomorphic.
\end{definition}
The relationship $\sim$ partitions the space of all $\Cl$-consistent $W$ in $\Lkgs(\BlankCol, n, k, \ell)$ with shape $\Sh$ into a collection of equivalence classes $\calC$.  We use $[W]$ to denote the equivalence class it is contained in. $\sim_{\Clos}$ further partitions each equivalence class $[W]\in\calC$ into a collection of sub-equivalence classes $\calC_{[W]}$, and we denote the sub-equivalence class of $W$ with $[[W]]$.  We use $J(W)$ to denote $|E(\Clos(W))|-|E(W)|$.  With these definitions and notation in hand, we can write:
\begin{align*}
	\pref{eq:restrict-to-shape} &= \sum_{[W]\in\calC} \sum_{[[W]]\in\calC[W]} \sum_{W'\in[[W]]} \prod_{T\in\UTrs_{\ge 2}(W)} \|M_T\|^2 \cdot \prod_{\gamma\in R(W)} \frac{\ACFacDeg{\FacType(\gamma)}}{n^{\arity(\FacType(\gamma))-1}} \cdot \prod_{v\in L(\Clos(W))} \TypeDist(\tau_W(v))\\
	&\le \sum_{[W]\in\calC} \sum_{[[W]]\in\calC[W]} \prod_{T\in\UTrs_{\ge 2}(W)} \|M_T\|^2 \cdot \prod_{\gamma\in R(W)} \frac{\ACFacDeg{\FacType(\gamma)}}{n^{\arity(\FacType(\gamma))-1}} \cdot \prod_{v\in L(\Clos(W))} \TypeDist(\tau_W(v)) \cdot n^{|L(\Clos(W))|}\\
	&= \sum_{[W]\in\calC} \sum_{t=0}^{J(W)} \sum_{\substack{[[W]]\in\calC[W] \\ |L(\Clos(W))|=|L(W)|+t}} \prod_{T\in\UTrs_{\ge 2}(W)} \|M_T\|^2 \cdot \prod_{\gamma\in R(W)} \frac{\ACFacDeg{\FacType(\gamma)}}{n^{\arity(\FacType(\gamma))-1}} \cdot \prod_{v\in L(\Clos(W))} \TypeDist(\tau_W(v)) \cdot \frac{n^{|L(W)|+|J(W)|}}{n^{|J(W)|-t}}
	\intertext{To enumerate the innermost sum, first observe that any $W$ can be changed to $\Clos(W)$ where $|L(\Clos(W))|-|L(W)|=t$ via the following procedure.  First add $J(W)$ new vertices and to each $\gamma\in R(W)$ attach an edge from $\gamma$ to $\arity(\FacType(\gamma))-\deg_W(\gamma)$ of the new vertices.  There exists a sequence of $J(W)-t$ ``merge'' operations on the left vertices, and a labeling of the newly added left vertices in $[n]$ that would result in $\Clos(W)$.  The number of possible sequences of merge operations is at most $(\arity_{\max}k\ell)^{2(|J(W)|-t)}$.  Thus, the above sum can be bounded by:}
	&\le \sum_{[W]\in\calC} \sum_{t=0}^{J(W)} \prod_{T\in\UTrs_{\ge 2}(W)} \|M_T\|^2 \cdot \prod_{\gamma\in R(W)} \frac{\ACFacDeg{\FacType(\gamma)}}{n^{\arity(\FacType(\gamma))-1}} \cdot \prod_{v\in L(\Clos(W))} \TypeDist(\tau_W(v)) \cdot n^{|L(W)|+|J(W)|}\cdot\left(\frac{(\arity_{\max}k\ell)^2}{n}\right)^{|J(W)|-t}\\
	&= \sum_{[W]\in\calC} \sum_{t=0}^{J(W)} \prod_{T\in\UTrs_{\ge 2}(W)} \|M_T\|^2 \cdot \prod_{\gamma\in R(W)} \left(\frac{\ACFacDeg{\FacType(\gamma)}}{n^{\arity(\FacType(\gamma))-1}} \cdot \prod_{v\in L(\Clos(W))\setminus L(W)} \TypeDist(\tau_W(v)) \right) \cdot\\
	&\prod_{v\in L(W)} \TypeDist(\tau_W(v)) \cdot \prod_{v\in L(\Clos(W))\setminus L(w)}\frac{1}{\TypeDist(\tau_W(v))^{\deg_W(v)-1}} \cdot n^{|L(W)|+|J(W)|}\cdot\left(\frac{(\arity_{\max}k\ell)^2}{n}\right)^{|J(W)|-t}\\
	&\le \sum_{[W]\in\calC} \sum_{t=0}^{J(W)} \prod_{T\in\UTrs_{\ge 2}(W)} \|M_T\|^2 \cdot \prod_{\gamma\in R(W)} \left(\frac{\ACFacDeg{\FacType(\gamma)}}{n^{\arity(\FacType(\gamma))-1}} \cdot \prod_{v\in L(\Clos(W))\setminus L(W)} \TypeDist(\tau_W(v)) \right) \cdot\\
	&\prod_{v\in L(W)} \TypeDist(\tau_W(v)) \cdot n^{|L(W)|+|J(W)|}\cdot\left(\frac{(\arity_{\max}k\ell)^2}{\TypeDist_{\min}n}\right)^{|J(W)|-t}
	\intertext{We can use the bound $\sum_{t=0}^{J(W)} \left(\frac{(\arity_{\max}k\ell)^2}{\TypeDist_{\min}n}\right)^{|J(W)|-t} \le 2$ to deduce:}
	&\le 2\sum_{[W]\in\calC} \prod_{T\in\UTrs_{\ge 2}(W)} \|M_T\|^2 \cdot \prod_{\gamma\in R(W)} \left(\frac{\ACFacDeg{\FacType(\gamma)}}{n^{\arity(\FacType(\gamma))-1}} \cdot \prod_{v\in L(\Clos(W))\setminus L(W)} \TypeDist(\tau_W(v)) \right)\cdot \prod_{v\in L(W)} \TypeDist(\tau_W(v)) \cdot n^{|L(W)|+|J(W)|}\\
	&= 2\sum_{[W]\in\calC} \prod_{T\in\UTrs_{\ge 2}(W)} \|M_T\|^2 \cdot \prod_{\gamma\in R(W)} \left(\ACFacDeg{\FacType(\gamma)} \cdot \prod_{v\in L(\Clos(W))\setminus L(W)} \TypeDist(\tau_W(v)) \right) \cdot \prod_{v\in L(W)} \TypeDist(\tau_W(v)) \cdot n^{1-\Exc_W}. \numberthis \label{eq:about-to-move-into-UTrs2}
\end{align*}

We decompose $\prod_{\gamma\in R(W)} \left(\ACFacDeg{\eta(\gamma)}\cdot\prod_{v\in L(\Clos(W))\setminus L(W)}\TypeDist(\tau_W(v))\right)$ into three parts: the contribution of singleton right vertices $\prod_{\gamma\in\Sing(W)} \left(\ACFacDeg{\eta(\gamma)}\cdot\prod_{v\in L(\Clos(W))\setminus L(W)}\TypeDist(\tau_W(v))\right)$ which can be bounded by ${\ACFacDeg{}}_{\max}^{|\Sing(\Sh)|}$, the contribution of duplicative degree-$\ge 3$ right vertices $\prod_{\substack{\gamma\in\Dup(W) \\ \deg_{W}(\gamma)\ge 3}}\left(\ACFacDeg{\eta(\gamma)}\cdot\prod_{v\in L(\Clos(W))\setminus L(W)}\TypeDist(\tau_W(v))\right)$ which via \pref{eq:total-edges-deg-3} can be bounded by ${\ACFacDeg{}}_{\max}^{3k+12\Exc_{\Sh}}$, and finally the contribution of duplicative degree-$2$ right vertices $\prod_{\substack{\gamma\in\Dup(W) \\ \deg_{W}(\gamma) = 2}} \left(\ACFacDeg{\eta(\gamma)}\cdot\prod_{v\in L(\Clos(W))\setminus L(W)}\TypeDist(\tau_W(v))\right)$.  For each $\gamma$ considered in the final case we can identify a unique $T\in\UTrs_{\ge 2}(W)$ such that $\gamma$ is in $T$, and likewise for every $T\in\UTrs_{\ge 2}(W)$, every right vertex $\gamma$ in $T$ is duplicative and has degree exactly $2$ and hence appears in the product.  Thus, the third product can be written as
\[
	\prod_{T\in\UTrs_{\ge2}(W)} \prod_{\gamma\in R(T)} \left(\ACFacDeg{\FacType(\gamma)}\cdot\prod_{v\in L(\Clos(T))\setminus L(T)} \TypeDist(\tau_W(v))\right).
\]
Next, observe that using the facts that each $\pi_i\in[0,1]$ and every interior left vertex of a trail occurs in no other trail, $\prod_{v\in L(W)}\TypeDist(\tau_W(v))$ can be upper bounded by $\prod_{T\in\UTrs_{\ge 2}(W)}\prod_{v\in L(T)}\TypeDist(\tau_W(v)) \cdot \prod_{v\in\Lm(W)}\frac{1}{\TypeDist(\tau_W(v))^{\deg_{G(W)}(v)}}$, which by \pref{rem:bound-landmark-deg} is at most $\prod_{T\in\UTrs_{\ge 2}(W)}\prod_{v\in L(T)}\TypeDist(\tau_W(v)) \cdot \left(\frac{1}{\TypeDist_{\min}}\right)^{12k+36\Exc_{\Sh}}$.
\begin{align*}
	\pref{eq:about-to-move-into-UTrs2} &\le 2{\ACFacDeg{}}_{\max}^{|\Sing(\Sh)|+3k+12\Exc_{\Sh}} \left(\frac{1}{\TypeDist_{\min}}\right)^{12k+36\Exc_{\Sh}} n^{-\Exc_{\Sh}+1} \cdot \\ &\sum_{[W]} \prod_{T\in\UTrs_{\ge 2}(W)} \left(\|M_{T}\|^2 \cdot \prod_{\gamma\in R(T)}\ACFacDeg{\FacType(\gamma)} \cdot\prod_{v\in L(\Clos(T))\setminus L(T)} \TypeDist(\tau_W(v)) \cdot \prod_{v\in L(T)}\TypeDist(\tau_W(v))\right)\\
	&\le 2{\ACFacDeg{}}_{\max}^{|\Sing(\Sh)|+3k+12\Exc_{\Sh}} \left(\frac{1}{\TypeDist_{\min}}\right)^{12k+36\Exc_{\Sh}} n^{-\Exc_{\Sh}+1} \cdot \\ & \prod_{T\in\UTrs_{\ge 2}(\Sh)} \sum_{[U]:\Sh(U)=T} \left(\|M_{U}\|^2 \cdot \prod_{\gamma\in R(U)}\ACFacDeg{\FacType(\gamma)} \cdot\prod_{v\in L(\Clos(U))\setminus L(U)} \TypeDist(\tau_W(v)) \cdot \prod_{v\in L(U)}\TypeDist(\tau_W(v))\right)
	\intertext{For any constant $\eps$, there exists a constant $C_{\eps}$ such that the above is at most:}
	&\le 2{\ACFacDeg{}}_{\max}^{|\Sing(\Sh)|+3k+12\Exc_{\Sh}} \left(\frac{1}{\TypeDist_{\min}}\right)^{12k+36\Exc_{\Sh}} n^{-\Exc_{\Sh}+1} \prod_{T\in\UTrs_{\ge2}(\Sh)} C_{\eps}\cdot\left((1+\eps)\rho(\BlankCol,\GModPar)\right)^{2|T|}
	\intertext{Since $|\UTrs_{\ge2}(\Sh)|$ is at most the sum of degrees of landmark vertices on which we have an upper bound by \pref{rem:bound-landmark-deg}, and since $\sum_{T\in\UTrs_{\ge2}(\Sh)} |T| \le \frac{k\ell}{2}$:}
	&\le 2{\ACFacDeg{}}_{\max}^{|\Sing(\Sh)|+3k+12\Exc_{\Sh}} \left(\frac{1}{\TypeDist_{\min}}\right)^{12k+36\Exc_{\Sh}} n^{-\Exc_{\Sh}+1} C_{\eps}^{12k+36\Exc_{\Sh}}\cdot\left((1+\eps)\rho(\BlankCol,\GModPar)\right)^{k\ell}.
\end{align*}
Since \pref{eq:about-to-move-into-UTrs2} is an upper bound on \pref{eq:restrict-to-shape}, rearranging the terms in the above gives:
\[
	\pref{eq:restrict-to-shape} \le 2n\cdot{\ACFacDeg{}}_{\max}^{|S(\Sh)|}\cdot\left(\frac{{\ACFacDeg{}}_{\max}^3C_{\eps}^{12}}{\TypeDist_{\min}^{12}}\right)^k\cdot\left(\frac{{\ACFacDeg{}}_{\max}^{12}C_{\eps}^{36}}{\TypeDist_{\min}^{36}n}\right)^{\Exc_{\Sh}}\cdot\left((1+\eps)\rho(\BlankCol,\GModPar)\right)^{k\ell}.
\]
Plugging this upper bound on \pref{eq:restrict-to-shape} into \pref{eq:bound-sep-trace} gives us:
\begin{align*}
	\E_{\btau}\E_{\bH|\btau}[\bU] \le&
	4nq\left((1+\eps)\rho(\BlankCol,\GModPar)\right)^{k\ell}\cdot \\
	&\sum_{\Sh\in\Shps(k,\ell)}
	\left(\frac{2^r{\ACFacDeg{}}_{\max}^r}{n^{.5}}\right)^{\frac{|S(\Sh)|}{r}}
	\left(\frac{C^{75\arity_{\max}}{\ACFacDeg{}}_{\max}^{12}C_{\eps}^{36}}{\TypeDist_{\min}^{36}n^{.5}}\right)^{\Exc_{\Sh}}
	\left(\frac{C^{25\arity_{\max}}{\ACFacDeg{}}_{\max}^3C_{\eps}^{12}}{\TypeDist_{\min}^{12}}\right)^k \left(\frac{2{\ACFacDeg{}}_{\max}}{n}\right)^{\Profl(\Sh)}.
\end{align*}
To notationally lighten the above, we choose $\beta$ as a constant larger than $2{\ACFacDeg{}}_{\max}$, $\frac{C^{75\arity_{\max}}{\ACFacDeg{}}_{\max}^{12}C_{\eps}^{36}}{\TypeDist_{\min}^{36}}$ and $\frac{C^{25\arity_{\max}}{\ACFacDeg{}}_{\max}^3C_{\eps}^{12}}{\TypeDist_{\min}^{12}}$.  Then:
\[
	\E_{\btau}\E_{\bH|\btau}[\bU] \le
	4nq\left((1+\eps)\rho(\BlankCol,\GModPar)\right)^{k\ell} \beta^k \cdot
	\sum_{\Sh\in\Shps(k,\ell)}
	\left(\frac{\beta^r}{n^{.5}}\right)^{\frac{|S(\Sh)|}{r}}
	\left(\frac{\beta}{n^{.5}}\right)^{\Exc_{\Sh}}
	\left(\frac{\beta}{n}\right)^{\Profl(\Sh)}
	\numberthis \label{eq:rho-has-entered-the-picture}
\]
To obtain a bound on \pref{eq:rho-has-entered-the-picture} we first bound:
\[
	\sum_{\Sh\in\Shps(k,\ell)}
	\left(\frac{\beta^r}{n^{.5}}\right)^{\frac{|S(\Sh)|}{r}}
	\left(\frac{\beta}{n^{.5}}\right)^{\Exc_{\Sh}}
	\left(\frac{\beta}{n}\right)^{\Profl(\Sh)} 
	\numberthis \label{eq:sum-over-all-shapes}
\]

We proceed by partition all $\Sh\in\Shps(k,\ell)$ into sets where each set of $\Sh$ share the same $|S(\Sh)|$, $\Exc_{\Sh}$, and $\Profl(\Sh)$. We bound the sum for each set with the following claim proved in \pref{app:random-graph-lemmas}.

\begin{claim}\label{clm:shape-count}
Let $\mathcal{U}_{s,x,\Profl}$ denote the set of all $\Sh\in\Shps(k,\ell)$ with $|S(\Sh)| = s$, $\Exc_{\Sh} = x$, and $\Profl(\Sh) = \Profl$. Then 
\[\sum_{\Sh\in\mathcal{U}_{s,x,\Profl}}
\left(\frac{\beta^r}{n^{.5}}\right)^{\frac{|S(\Sh)|}{r}}
	\left(\frac{\beta}{n^{.5}}\right)^{\Exc_{\Sh}}
	\left(\frac{\beta}{n}\right)^{\Profl(\Sh)} \leq 
	\left(\frac{\beta^r\cdot(4k\ell)^r}{n^{.5}}\right)^{\frac{s}{r}}
	\left(\frac{\beta\cdot 2(k\ell)^3}{n^{.5}}\right)^{x} 
	\left(\frac{4\beta\cdot(2k\ell)^2}{n}\right)^{\Profl}(2k\ell)^{O(k\log{k\ell})} .\]
\end{claim}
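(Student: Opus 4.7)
The summand on the LHS depends only on the invariants $(s,x,\Profl)$, so the sum factors as $|\mathcal{U}_{s,x,\Profl}|$ times a common weight. Matching against the RHS reduces the claim to the pure counting bound
\[
|\mathcal{U}_{s,x,\Profl}| \;\le\; (4k\ell)^{s}\cdot (2(k\ell)^3)^{x}\cdot (4(2k\ell)^2)^{\Profl}\cdot (2k\ell)^{O(k\log k\ell)},
\]
which I would prove by an encoding argument in the spirit of \cite{BLM15,BC19,MOP20a}: describe an injective map from shapes in $\mathcal{U}_{s,x,\Profl}$ into tuples of combinatorial choices whose cardinality is bounded by the RHS.

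I would proceed by traversing each $\Sh\in\mathcal{U}_{s,x,\Profl}$ in the canonical walking order through its $k$ length-$2\ell$ links. At each time step the walk either reveals a fresh vertex (tree step), re-enters a previously visited vertex via a previously unused edge (excess step), or retraces an already visited edge. The first piece of the encoding is a \emph{skeleton} recording the link-endpoint identifications, the multiset of landmark vertices with their degrees and right-vertex arities, and the combinatorial pattern by which trails glue landmarks together. By Remark~\ref{rem:bound-landmark-deg} the sum of landmark degrees is $O(k+x)$, so specifying the time-step in $[2k\ell]$ of each landmark crossing and the local gluing data at each landmark fits within $(2k\ell)^{O(k\log k\ell)}$ possibilities.

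Conditioned on the skeleton, the remaining degrees of freedom are captured by three pieces of data: (i) for each of the $s$ singleton right vertices, the step at which it is first traversed together with $O(1)$ local data (arity/index pattern), at cost $\le 4k\ell$ per singleton; (ii) for each of the $x$ excess edges, the time step at which it is formed, the previously-visited vertex it closes onto, and a local edge-type choice, at cost $\le 2(k\ell)^3$ per excess edge; (iii) for each unit of $\Profl(\Sh)$, a ``forced extra visit'' encoded by its time step and destination, at cost $\le 4(2k\ell)^2$ per unit. Multiplying through gives exactly the claimed bound.

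The main obstacle is injectivity of the encoding, i.e.\ that from the skeleton together with the singleton / excess / profligacy data one can uniquely reconstruct $\Sh$. This follows from a state-machine argument that replays the walk step-by-step: given the skeleton, at each step the correct classification (tree / excess / retraversal / profligate) is determined by whether the current endpoint has appeared before and by its position in the landmark structure. The subtle point is accounting for $\Profl$, since it is defined indirectly through the maximum-weight $r$-bicycle-free subset $D^*(\Sh)$ rather than as a count of walk events. I would resolve this with a matroid-style exchange argument using that $r$-bicycle-freeness is a hereditary property of subgraphs of $\Clos(\cdot)$: this gives a canonical identification of each unit of profligacy with a concrete repeated-visit step of the walk, which can then be encoded independently. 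The remaining verification --- that the slack $(2k\ell)^{O(k\log k\ell)}$ absorbs lower-order over-counting (e.g.\ reordering of equivalent landmark crossings) --- is routine.
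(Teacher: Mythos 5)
Your reduction to the pure counting bound on $|\mathcal{U}_{s,x,\Profl}|$ is correct, and the encoding philosophy is in the right spirit, but there are two concrete gaps where your route diverges from what actually closes the argument, and the paper takes a cleaner path around both.

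First, the profligacy accounting. You flag the issue yourself: $\Profl(\Sh)=w(\Dup(\Sh))-w(D^*(\Sh))$ is a difference of weights, not a count of events, so an encoding that spends a budget of $4(2k\ell)^2$ per ``unit of $\Profl$'' needs a canonical bijection between units of $\Profl$ and discrete walk steps. Your proposed fix --- a matroid-style exchange argument --- is not actually developed, and it is not clear that the family of subsets $L\subseteq\Dup(\Sh)$ with $\Clos(L)$ $r$-bicycle-free satisfies an exchange axiom (it is downward-closed, but that is weaker). The paper avoids this entirely: it physically removes the vertex set $S(\Sh)\cup(\Dup(\Sh)\setminus D^*(\Sh))$ from the walk (rather than trying to classify steps online), and uses only the elementary inequality $\Profl(\Sh)\geq|\Dup(\Sh)\setminus D^*(\Sh)|$ --- which is immediate since every $\gamma\in\Dup(\Sh)$ has $m_W(\gamma)-1\geq 1$ --- to bound the number of break-points by $s+2\Profl+k$. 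This converts the indirect definition of $\Profl$ into a usable combinatorial bound without any exchange machinery.

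Second, your skeleton budget of $(2k\ell)^{O(k\log k\ell)}$ is not justified by \pref{rem:bound-landmark-deg} alone. That remark bounds $\sum_{v\in\Lm(W)}\deg_{G(W)}(v)$ by $12k+36\Exc_W$, so the landmark data you want to encode has size $O(k+x)$, not $O(k\log k\ell)$; since $x$ can be as large as $2k\ell$, naively charging the skeleton with $(2k\ell)^{O(k+x)}$ would blow past the claimed bound. The paper handles this by a different decomposition: after removing the bad vertices it contracts the remaining singleton-free, $r$-bicycle-free graph along degree-$2$ vertices, applies \cite[Lemma 6.18]{MOP20a} to get $O(k\log k\ell)$ contracted vertices (independent of $x$), and then shows the contracted graph has only $x+O(k\log k\ell)$ edges; the $x$-dependent part is absorbed into the per-excess-edge budget $\left(\beta\cdot 2(k\ell)^3/n^{.5}\right)^{x}$, not the $(2k\ell)^{O(k\log k\ell)}$ slack term. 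Finally, the paper gets the per-segment factor of $2$ (giving $2^{s+2\Profl+k}$) from $r$-bicycle-freeness with $r\geq 2\ell$, which your sketch does not invoke and which is what keeps the per-segment cost $O(1)$ rather than polynomial in $k\ell$. In short: your high-level framing is reasonable, but as written the profligacy encoding is a genuine hole, and the skeleton bound as stated does not follow from the ingredient you cite.
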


Using this claim we can derive 
\begin{align*}
\pref{eq:sum-over-all-shapes} 
&= \sum_{s\in[k\ell],x\in[2k\ell],\Profl\in[k\ell]} \sum_{\Sh\in\mathcal{U}_{s,x,\Profl}} \left(\frac{\beta^r}{n^{.5}}\right)^{\frac{s}{r}}
	\left(\frac{\beta}{n^{.5}}\right)^{x}
	\left(\frac{\beta}{n}\right)^{\Profl} \\
&\leq \sum_{s\in[k\ell],x\in[2k\ell],\Profl\in[k\ell]} \left(\frac{\beta^r\cdot(4k\ell)^r}{n^{.5}}\right)^{\frac{s}{r}}
	\left(\frac{\beta\cdot 2(k\ell)^3}{n^{.5}}\right)^{x} 
	\left(\frac{4\beta\cdot(2k\ell)^2}{n}\right)^{\Profl}(2k\ell)^{O(k\log{k\ell})} \\
&\leq (2k\ell)^{O(k\log{k\ell})} k\ell \cdot 2k\ell \cdot k\ell \max_{s\in[k\ell],x\in[2k\ell],\Profl\in[k\ell]} \left(\frac{\beta^r\cdot(4k\ell)^r}{n^{.5}}\right)^{\frac{s}{r}}
	\left(\frac{\beta\cdot 2(k\ell)^3}{n^{.5}}\right)^{x} 
	\left(\frac{4\beta\cdot(2k\ell)^2}{n}\right)^{\Profl}
\end{align*} 

We set the bicycle-free radius to $r := o(\log{n^{.5}}/\log{4k\ell\beta})$ so that all three terms $\frac{\beta^r\cdot(4k\ell)^r}{n^{.5}}$, $\frac{\beta\cdot 2(k\ell)^3}{n^{.5}}$, $\frac{4\beta\cdot(2k\ell)^2}{n}$ are less than $1$. Then we observe that 
\[\pref{eq:sum-over-all-shapes} \leq 2k^3\ell^3(2k\ell)^{O(k\log{k\ell})} .\]

Plugging this bound into \pref{eq:rho-has-entered-the-picture} gives us:

\[\E_{\btau}\E_{\bH|\btau}[\bU] \le
	\left((1+\eps)\rho(\BlankCol,\GModPar)\right)^{k\ell}  \cdot 2nq\beta^k2k^3\ell^3(2k\ell)^{O(k\log{k\ell})}\]

Set $k\ell = O\left(\log{n}\cdot\log{\log{n}}\right)$, and $\ell = \omega(\log{k\ell})$. Then we have the bound
\begin{align*}
\left(n\E_{\btau}\E_{\bH|\btau}[\bU]\right)^{1/k} 
&\le \left((1+\eps)\rho(\BlankCol,\GModPar) \cdot \left(4q\beta^k\cdot n^2 \cdot (k\ell) \cdot (2k\ell)^{O(\log{k\ell})/\ell}\right)^{1/k\ell}\right)^{\ell} \\
&\le \left((1+\eps)\rho(\BlankCol,\GModPar) \cdot \left(8q\beta^k\cdot n^3 \right)^{1/k\ell}\right)^{\ell} \\
&\le \left((1+\eps)\rho(\BlankCol,\GModPar) \cdot \left(1 + \frac{2\log(8q\beta^k)}{k\ell} + \frac{2\log(n^3)}{k\ell} \right)\right)^{\ell} \\
&\le \left(\left(1+\eps + O\left(\frac{1}{\log{\log{n}}}\right)\right)\rho(\BlankCol,\GModPar)\right)^{\ell}.
\end{align*} 

We now complete the proof of \pref{thm:main-spectral-bound}.

 \newcommand{\bfm}{\mathrm{m}}





\section{Weak Recovery} \label{sec:recovery}

We begin the section by briefly describing an algorithm for weak recovery.

\noindent\rule{16cm}{0.4pt}

{\bf Weak Recovery Algorithm}

\begin{enumerate}
	\item Fix $\delta > 0$ such that $\lambda_L \geq (1+\delta)^{4}$.

	\item $C = O_{\Model,\delta}(1)$ is a sufficiently large constant depending on model $\Model$ and $\delta$.

	\item For $(\log \log n)^3 \leq t \leq (\log \log n)^5$,  $v_t \in \R^{nq}$ be the eigenvector with largest eigenvalue of $\CentAdj_{\bG} ^{(t)}$ and let $\Lambda_t^t$ denote the largest eigenvalue.  Compute $v_t$, $\Lambda_t$ and $\CentAdj_{\bG}^{(t)}$ for all $t$ in this range.

	\item Find $\bfm \in [(\log \log n)^3 ,  (\log \log n)^5]$ such that for all $s \in [(\log \log n)^3, \bfm]$, we have
		\[ \norm{\CentAdj_{\bG}^{(s)} v_{\bfm}} \leq \Lambda_{\bfm}^s (1+\delta)^{\bfm-s} \]
		(see \pref{claim:whatever} for proof of existence of $\bfm$)

	 \item For each $0 \leq \ell \leq C$, set 
	 \[ w_\ell \defeq \CentAdj_{\bG}^{(\bfm - \ell)} v_{\bfm} \]  
	 and let $\overline{w}_\ell \defeq \frac{1}{\norm{w_{\ell}}} \cdot w_{\ell}$.
	
	\item Output the set of vectors $\{ \overline{u}_{\ell,\beta}\}$ in $\R^{n}$ for $0 \leq \ell \leq C$, $\beta \in [q]$ defined as,
	 \[ \overline{u}_{\ell,\beta} [i] = \overline{w}_\ell[i, \beta]\]

\end{enumerate}
\noindent\rule{16cm}{0.4pt}

\begin{theorem}	\label{thm:recovery-full}
	There exists a constant $C = O_{\Model,\delta}(1)$ depending on model $\Model$ and $\delta$ such that the following holds with probability $1- o_n(1)$:  
	For some $\ell \in \{1,2,\ldots, C\}$ and $\beta \in [q]$, the
	unit vector $\overline{u}_{\ell,\beta}$ is correlated with the coloring in the following sense:
	$\exists \tau \in T, \alpha \in [q]$ such that if we construct $\bchat^{\tau,\alpha} \in \R^n$ as
	\[ \bchat^{\tau,\alpha} [i] = \Ind[\tau(i) = \tau] (\Ind[\bc(i) = \alpha] - \prior_{\tau}(\alpha) )\]
	then 
			\[ |\iprod{ \overline{u}_{\ell,\beta}, \bchat^{\tau,\alpha}}| \geq \Omega_{\Model}(1) \cdot \sqrt{n} \]	
\end{theorem}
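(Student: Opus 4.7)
The plan is to show that the top eigenvectors $v_t$ for $t$ in the chosen range live (up to negligible error) in a constant-dimensional \emph{signal subspace} $\calS\subseteq\R^{nq}$ determined by the planting, then use the family $\{w_\ell\}_{\ell=0}^C$ to span $\calS$, and finally read off correlation with some $\bchat^{\tau,\alpha}$ from a block of a spanning vector. For each tuple $(\tau,\alpha,\beta)\in T\times[q]^2$ define the signal vector $y^{\tau,\alpha,\beta}\in\R^{nq}$ by $y^{\tau,\alpha,\beta}[i,c]=\mathbf{1}[\btau(i)=\tau]\mathbf{1}[c=\beta](\mathbf{1}[\bc(i)=\alpha]-\prior_\tau(\alpha))$, and let $\calS=\mathrm{span}\{y^{\tau,\alpha,\beta}\}$; note $\dim\calS\le qT\cdot q=O_\Model(1)$.

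First I would run a direct generalization of the path-counting calculation carried out in \pref{sec:statistics} (whose scalar case gives \pref{thm:local-statistics-concentrate}) to produce an $O_\Model(1)\times O_\Model(1)$ matrix $Q_s$ such that with probability $1-o(1)$, simultaneously for every $s$ in our range, the restriction of $\CentAdj_{\bG}^{(s)}$ (under $\langle\cdot,\cdot\rangle_\innerp$) to $\calS$ is approximated by $Q_s$, and $Q_s$ is related to $L^s$ in the sense that $\|Q_s\|=(1\pm o(1))\lambda_L^s$ and the growth behavior $Q_s\to L^s$ up to known base changes. Combined with the null-model eigenvalue bound from \pref{thm:main-spectral-bound} extended to the planted case by contiguity / a direct Chebyshev argument on pairs of self-avoiding walks, this yields the key spectral picture: on $\calS$ the spectrum of $\CentAdj_{\bG}^{(s)}$ grows like $\lambda_L^s$, while on $\calS^\perp$ it is bounded by $((1+o(1))\sqrt{\lambda_L})^s$.

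Given this picture, the existence of $\bfm$ (the step labeled \pref{claim:whatever}) follows because for every $t$ in $[(\log\log n)^3,(\log\log n)^5]$, $\Lambda_t\ge(1-o(1))\lambda_L>\sqrt{\lambda_L}(1+\delta)^2$, hence $v_t$ has trivial projection onto $\calS^\perp$; if $\bfm$ failed to exist, there would be some $s<\bfm$ for which $\|\CentAdj^{(s)}v_\bfm\|>\Lambda_\bfm^s(1+\delta)^{\bfm-s}$, which a standard averaging/pigeonhole over $t$ would contradict. Let $P_\calS$ denote $\langle\cdot,\cdot\rangle_\innerp$-orthogonal projection onto $\calS$; then $\|v_\bfm-P_\calS v_\bfm\|=o(1)$.

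Now I turn to extracting specific signal directions. By the first step, $\CentAdj_{\bG}^{(s)}\restriction\calS$ is, up to $o(1)$-error, conjugate to a fixed matrix $\wh L^s$ on a finite vector space of dimension $D\le C$. Write $P_\calS v_\bfm=\sum_{j}a_je_j$ in a basis of generalized eigenvectors of $\wh L$. Then
\[ w_\ell \;=\; \CentAdj_{\bG}^{(\bfm-\ell)}v_\bfm \;\approx\; \sum_j a_j \mu_j^{\bfm-\ell} e_j + (\text{polynomial in } \bfm{-}\ell \text{ from Jordan blocks}). \]
A Vandermonde/Cayley--Hamilton argument shows that for $C\ge D$ the vectors $w_0,\dots,w_C$ span $\mathrm{span}\{a_je_j:a_j\ne0\}$, which contains some $y^{\tau^*,\alpha^*,\beta^*}$ with nontrivial coefficient. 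Therefore at least one $\overline w_\ell$ satisfies $|\langle \overline w_\ell,y^{\tau^*,\alpha^*,\beta^*}\rangle_\innerp|\ge\Omega_\Model(1)\cdot\|y^{\tau^*,\alpha^*,\beta^*}\|$. Since $y^{\tau^*,\alpha^*,\beta^*}[i,c]=\bchat^{\tau^*,\alpha^*}[i]\cdot\mathbf{1}[c=\beta^*]$, reading off the $\beta^*$-th coordinate of each block gives $\overline u_{\ell,\beta^*}$ with $|\langle\overline u_{\ell,\beta^*},\bchat^{\tau^*,\alpha^*}\rangle|=\Omega_\Model(1)\cdot\sqrt n$, as required.

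The main obstacle I expect is making the signal-subspace identification quantitatively tight enough: one must handle the fact that the powers $\CentAdj_{\bG}^{(s)}$ do not compose (nonbacktracking walks of length $s_1$ and $s_2$ do not concatenate into nonbacktracking walks of length $s_1+s_2$), so the approximate action of $\CentAdj_{\bG}^{(s)}$ on $\calS$ must be established separately for each $s$ via path counting, and then one must argue that the residual error terms coming from walks that ``almost'' backtrack or share vertices do not ruin the Vandermonde separation across $\ell=0,\dots,C$. The chosen window $[(\log\log n)^3,(\log\log n)^5]$ is wide enough to allow the averaging needed for the existence of $\bfm$, yet the ratio $\Lambda_\bfm/\sqrt{\lambda_L}\ge(1+\delta)^2$ gives the multiplicative slack that absorbs the error on $\calS^\perp$.
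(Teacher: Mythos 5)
Your plan hinges on two unproved spectral claims about the planted matrix: that $\CentAdj_{\bG}^{(s)}$ approximately maps the signal subspace $\calS$ into itself and acts on it like a fixed finite matrix $Q_s\approx L^s$, and that the top eigenvector $v_\bfm$ of $\CentAdj_{\bG}^{(\bfm)}$ satisfies $\|v_\bfm - P_\calS v_\bfm\|=o(1)$. Neither is established — and in fact the paper's technical overview singles out exactly this as the obstacle it deliberately sidesteps: ``this requires proving a converse of [the quadratic-form lower bound] that every vector $v$ for which $\langle v,A^{(s)}v\rangle$ is large is actually correlated with the planted assignment. Instead, we bypass this issue by collating information from eigenvectors of $A^{(s)}$ for a range of values of $s$.'' The results available (\pref{thm:local-statistics-concentrate}, \pref{thm:main-spectral-bound}) give you one quadratic form $\langle g,\CentAdj^{(s)}g\rangle\approx\lambda_L^s n$ in the planted model and a spectral norm bound in the \emph{null} model; they do not give block-diagonalization of $\CentAdj^{(s)}$ with respect to $\calS\oplus\calS^\perp$ in the planted model, and since $\CentAdj^{(s)}$ has large spectral norm there, one cannot deduce that the $\calS\to\calS^\perp$ cross terms are negligible. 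Your proposed Chebyshev/contiguity patch does not address this: contiguity transfers events of probability $1-o(1)$, not operator-norm bounds on projections of a matrix that does not exist in the null model (it depends on $\bc$).

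The paper's actual proof avoids these claims entirely. It introduces a second centering $\CentBdj_{\bG}$ (recentered at $\E[\cdot\mid\btau,\bc]$), so $\CentRdj\coloneqq\CentAdj-\CentBdj$ is an explicit low-rank matrix built from the vectors $\bchat^{\tau,\alpha},\chi^{\tau,\alpha},\bmu^{\tau,\alpha}$, and $\CentBdj^{(s)}$ provably has spectral norm $\approx\sqrt{\lambda_L}^s$ in the planted model. Writing $\CentAdj^{(t)}-\CentBdj^{(t)}=\sum_{s}\CentBdj^{(s)}\nb\CentRdj\nb\CentAdj^{(t-s-1)}$, converting $\nb$ to ordinary products via \pref{lem:nbproductToNormal}, and killing the $\bmu$-terms via \pref{lem:annoy}, the large quadratic form $v_\bfm^\top\CentAdj^{(\bfm)}v_\bfm\gtrsim\lambda_L^\bfm$ is forced to come from $\sum_s\langle\CentBdj^{(s)}(\bfe_\beta\otimes\chi^{\tau,\alpha}),\,\cdot\rangle\langle\bfe_{\beta'}\otimes\bchat^{\tau',\alpha'},\CentAdj^{(\bfm-s-1)}v_\bfm\rangle$, and the $\bfm$-bootstrap (\pref{claim:whatever}) then isolates a constant window $\ell\in[\bfm-C,\bfm]$ in which one of these inner products is $\Omega(\sqrt n)$. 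Crucially, this extracts a correlated $\overline u_{\ell,\beta}$ without ever asserting that $v_\bfm$ lies in a signal subspace or that $\CentAdj^{(s)}$ stabilizes such a subspace. Your Vandermonde/Cayley--Hamilton step is an appealing idea, but as written it is built on hypotheses the paper has no tools to prove; to salvage the proposal you would need to establish a planted-model analogue of \pref{thm:main-spectral-bound} for $\CentBdj^{(s)}$ (which the paper uses implicitly) and then replace the $\calS$-invariance claim by the $\CentRdj$-decomposition argument — at which point you have essentially reconstructed the paper's proof.
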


In the rest of the section, we will outline the proof of correctness of the above described weak-recovery algorithm.  To this end, we begin by recalling the matrix $\CentAdj_{\bG} \in \R^{nq \times n q}$.  For all $i \neq j \in [n]$,
\begin{align*}
\CentAdj_{\bG}[i,j] \defeq \sum_{e\in \CompG_{n}, e \ni i j} \BPM{e}{i}{j} \cdot (\Ind[e \in \bG] - \Pr_{\Model}[e \in \bG | \btau ])
\end{align*}
and $\CentAdj_{\bG}[i,i] = 0$.  Similarly, for all $i \neq j \in [n]$ we set
\begin{align*}
\CentBdj_{\bG}[i,j] \defeq \sum_{e\in \CompG_{n}, e \ni i j} \BPM{e}{i}{j} \cdot (\Ind[e \in \bG] - \Pr_{\Model}[e \in \bG | \btau, \bc])
\end{align*}
and $\CentBdj_{\bG}[i,i] = 0$.  Finally, let 
\begin{align*}
\CentRdj_{\bG}[i,j] &\defeq \CentAdj_{\bG} [i,j] - \CentBdj_{\bG} [i,j] \\
&= \sum_{e\in \CompG_{n}, e \ni i j} \BPM{e}{i}{j} \left(\Pr_{\Model}[e \in \bG | \btau, \bc]  - \Pr_{\Model}[e \in \bG | \btau]  \right)
\end{align*}

Let $\vec{x}_{\bG}\in\R^{\abs{T}}$ encode the number of variables of each type in $\bG$ and $\vec{y}_{\bG}\in\R^{q\abs{T}}$ encode the number of variables of each type and color in $\bG$. Then block $\CentRdj_{\bG}[i,j]$ only depends on $\btau(i),\btau(j), \bc(i),\bc(j)$ and $\vec{y}_{\bG}$. More specifically

\begin{align*}
\CentRdj_{\bG}[i,j] &= \sum_{e\in \CompG_{n}, e \ni i j} \BPM{e}{i}{j} \left(\Pr_{\Model}[e \in \bG | \btau(i),\btau(j),\bc(i),\bc(j),\vec{y}_{\bG}]  - \Pr_{\Model}[e \in \bG | \btau(i), \btau(j),\vec{x}_{\bG}]  \right)
\end{align*}

\begin{remark}\label{rem:RG-condition}
We remark that with probability $1-o_n(1)$ each entry of $\vec{y}_{\bG}$ satisifies $\vec{y}_{\bG}[\tau,\alpha] \in (1\pm \epsilon)\cdot\E_{\Model}[\vec{y}_{\bG}[\tau,\alpha]]$ for some small constant $\epsilon$. From now on we only consider $\bG$ that satisifies this condition.
\end{remark}

Next we introduce notation for the non-backtracking product of two matrices.
\begin{definition}
	For two matrices $A, B \in \R^{nq \times nq}$, define $A \nb B$ to be the non-bactracking product of $A$ and $B$ by setting for all $i, j \in [n]$
	\[ (A \nb B)[i,j] = \begin{cases} \sum_{k} A[i,k]B[k,j] & \text{ if } i \neq j \\ 0 & \text{ otherwise } \end{cases} \]
	Inductively define $A^{(s)} \defeq A^{(s-1)} \nb A$
\end{definition}

Suppose $v_t \in \R^{nq}$ be the eigenvector with largest eigenvalue of $\CentAdj_{\bG}^{(t)}$, and let its eigenvalue be $\Lambda_t^t$.
By \pref{thm:local-statistics-concentrate}, we know that for each $s$, with probability $1- o_n(1)$,
\[ \lambda_{max}(\CentAdj_{\bG}^{(s)} ) \geq  \Omega_{\Model}(1) \cdot \lambda_L^s \]
Therefore,
\begin{align*}
	v_s \CentAdj_{\bG}^{(s)} v_s \geq  \Omega_{\Model}(1)  \cdot \lambda_L^s
\end{align*}

On the other hand, by the spectral norm bound for all $s \geq (\log \log n)^3$ in the null model, 
\begin{align*}
	v_s \CentBdj_{\bG}^{(s)} v_s \leq (1+o(1))^s \myrho^s
\end{align*}
For simplifying notation, we will ignore the $(1+o(1))^s$ term in the above bound, here in the rest of the section.  
Rewriting the difference we get,
\begin{align*}
	v_t^T \left( \CentAdj_{\bG}^{(t)} - \CentBdj_{\bG}^{(t)} \right) v_t = \sum_{s = 0}^{t-1} v_t^T \CentBdj_{\bG}^{(s)} \nb \CentRdj_{\bG} \nb \CentAdj_{\bG}^{(t-s-1)} v_t
\end{align*}

Now we can replace the non-backtracking product in the above expression with the usual matrix product using \pref{lem:nbproductToNormal}.  
\begin{lemma} \label{lem:nbproductToNormal}
	For all $A, B, R \in \R^{nq \times nq}$, 
	\begin{align*}
\norm{A^{(s)} \nb R \nb B^{(t)} - A^{(s)} R B^{(t)}} 
 & \leq q \norm{R}_\infty \cdot 
		\left(\norm{A^{(s)}}\norm{B^{(t-s)}} + \norm{A}_{1 \to 1} \norm{A^{(s-1)}} \norm{B^{(t)}} + \norm{B^T}_{1 \to 1} \norm{A^{(s)}} \norm{B^{(t-1)}}\right)
	\end{align*}
where $\norm{R}_\infty = \max_{\ell, \ell' \in [nq]} |R_{\ell,\ell'}|$
\end{lemma}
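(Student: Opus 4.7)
The plan is to compute the difference $A^{(s)} \nb R \nb B^{(t)} - A^{(s)} R B^{(t)}$ explicitly and then bound each piece that appears. Writing $X \nb Y = XY - \mathrm{BD}(XY)$, where $\mathrm{BD}(M)$ denotes the matrix obtained by keeping the $q \times q$ diagonal blocks of $M$ (under its $n \times n$ block structure) and zeroing out the off-diagonal blocks, two successive applications of this identity give
\[
A^{(s)} \nb R \nb B^{(t)} - A^{(s)} R B^{(t)} = -\mathrm{BD}(A^{(s)} R)\,B^{(t)} - \mathrm{BD}(A^{(s)} R B^{(t)}) + \mathrm{BD}\bigl(\mathrm{BD}(A^{(s)} R)\,B^{(t)}\bigr).
\]
Since $\mathrm{BD}$ is a contraction in operator norm (it equals $\sum_i P_i M P_i$ for the disjoint-support projections $P_i$ onto the $i$-th $q$-block), the third term is dominated in norm by the first, so it suffices to bound $\norm{\mathrm{BD}(A^{(s)} R B^{(t)})}$ and $\norm{\mathrm{BD}(A^{(s)} R)}\cdot\norm{B^{(t)}}$; these will supply the three terms on the RHS.

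The first term on the RHS of the lemma bounds $\norm{\mathrm{BD}(A^{(s)} R B^{(t)})}$. Because $\mathrm{BD}(M)$ is block-diagonal with disjoint supports, $\norm{\mathrm{BD}(M)} = \max_i \norm{P_i M P_i}$. For a fixed $i$, $P_i A^{(s)} R B^{(t)} P_i$ is a $q \times q$ matrix whose $(a,b)$ entry equals $\sum_{(k,c),(\ell,d)} A^{(s)}[(i,a),(k,c)]\,R[(k,c),(\ell,d)]\,B^{(t)}[(\ell,d),(i,b)]$. I would bound this bilinear form by pulling $\norm{R}_\infty$ out entrywise and applying Cauchy--Schwarz against a single row of $A^{(s)}$ and a single column of $B^{(t)}$, which picks up the operator norms $\norm{A^{(s)}}$ and $\norm{B^{(t)}}$. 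The extra factor of $q$ then comes from converting the entrywise bound on a $q \times q$ block into an operator-norm bound on that block.

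For $\norm{\mathrm{BD}(A^{(s)} R)} \cdot \norm{B^{(t)}}$, I would peel off a single step via $A^{(s)} = A^{(s-1)} \nb A = A^{(s-1)} A - \mathrm{BD}(A^{(s-1)} A)$, so that the final $A$ step sits adjacent to $R$ and can be absorbed via a column-sum bound, yielding the factor $\norm{A}_{1\to 1}$, while the remaining $s-1$ steps contribute $\norm{A^{(s-1)}}$; multiplying by $\norm{B^{(t)}}$ gives the second RHS term. The third RHS term arises from a symmetric peeling $B^{(t)} = B \nb B^{(t-1)}$ carried out ``from the right'' (i.e.\ applied instead to the correction $A^{(s)}\,\mathrm{BD}(R B^{(t)})$ that one would produce by associating the $\nb$'s on the other side, or equivalently by taking adjoints), which isolates the $B$ step adjacent to $R$ and yields $\norm{B^{T}}_{1\to 1}\,\norm{A^{(s)}}\,\norm{B^{(t-1)}}$.

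The main obstacle is that we cannot afford to replace $\norm{R}$ by $\norm{R}_\infty$ via a naive inequality -- the estimate $\norm{R} \le nq\cdot\norm{R}_\infty$ is too weak by a polynomial factor. This forces us to open $R$ entrywise inside each block-diagonal correction and then carefully re-sum against the neighbor structure of $A^{(s)}$ and $B^{(t)}$, using the non-backtracking recursion to spend exactly one step against an $\ell_1$-to-$\ell_1$ norm of $A$ (or $B^T$) and to bundle the remaining steps into operator norms. Distributing the correction terms between the two expansions -- ``left-peel'' producing the $\norm{A}_{1\to 1}$ term and ``right-peel'' producing the $\norm{B^{T}}_{1\to 1}$ term -- and verifying that the nested correction $\mathrm{BD}(\mathrm{BD}(A^{(s)} R) B^{(t)})$ is subsumed by the first term constitute the technical heart of the bookkeeping.
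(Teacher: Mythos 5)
Your algebraic identity $A^{(s)}\nb R\nb B^{(t)} - A^{(s)}RB^{(t)} = -\mathrm{BD}(A^{(s)}R)B^{(t)} - \mathrm{BD}(A^{(s)}RB^{(t)}) + \mathrm{BD}(\mathrm{BD}(A^{(s)}R)B^{(t)})$ is a correct consequence of the ``zero-the-outer-block-diagonal'' reading of $\nb$, but the resulting terms cannot be bounded the way you claim, and this is not the decomposition the paper's proof works with. Concretely, for $\mathrm{BD}(A^{(s)}RB^{(t)})$: after pulling $\|R\|_\infty$ out entrywise, the $(a,b)$-entry of the $i$-th diagonal block is bounded by $\|R\|_\infty\cdot\|A^{(s)}[(i,a),\cdot]\|_1\cdot\|B^{(t)}[\cdot,(i,b)]\|_1$. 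These are $\ell^1$ norms of a single row and a single column; Cauchy--Schwarz converts each to an $\ell^2$ norm (and hence to $\|A^{(s)}\|$ or $\|B^{(t)}\|$) only at a cost of $\sqrt{nq}$ per conversion. What you actually recover is on the order of $q\|R\|_\infty\|A^{(s)}\|_{\infty\to\infty}\|B^{(t)}\|_{1\to1}$, not the operator-norm bound in the lemma. The same $\ell^1$-row-norm obstruction persists for $\mathrm{BD}(A^{(s)}R)$ even after peeling one step of $A$, because block-diagonal extraction of a product $XY$ always ties both outer indices to the same $i$, leaving a full $\ell^1$ sum over the inner index and no way to isolate a small matrix to peel off.

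The paper's proof rests on a different split. It reads the difference as the sum over pairs of walks $(\ualpha,\ubeta)$ in which one of the three \emph{local} non-backtracking constraints at the $R$-junction fails: $\alpha_s=\beta_0$, $\alpha_{s-1}=\beta_0$, or $\alpha_s=\beta_1$. Each failed constraint identifies two indices \emph{adjacent to $R$}, so the corresponding term factors exactly as $A^{(s)}\calD_1 B^{(t)}$, $A^{(s-1)}\calD_2 B^{(t)}$, or $A^{(s)}\calD_3 B^{(t-1)}$, where $\calD_j$ is block-diagonal (e.g.\ $\calD_1[i,i]=R[i,i]$, $\calD_2[i,i]=\sum_j A[i,j]R[j,i]$) with norm at most $q\|R\|_\infty$ times at most one $\|A\|_{1\to1}$ or $\|B^\top\|_{1\to1}$. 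Then submultiplicativity of the operator norm gives the bound directly, with genuine operator norms on the long products. Your decomposition instead ties the \emph{outer} endpoints ($\alpha_0=\beta_0$ or $\alpha_0=\beta_t$), so $R$ is never adjacent to the diagonal constraint and no analogous factorization through a small block-diagonal matrix exists. (Incidentally, the written definition of $\nb$ in the paper is the outer-diagonal-zeroing map you used; it is inconsistent with how $\nb$ is used both in this lemma's proof and in the telescoping identity that invokes it, which require $\nb$ to enforce genuine non-backtracking at the junction. You should follow the usage rather than the literal definition.)
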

We will postpone the proof of this Lemma to later in the section, and proceed with the argument.

Notice that under the condition in \pref{rem:RG-condition}, 
\begin{equation}
\norm{\CentRdj_{\bG}}_\infty \leq O_{\Model}(1) \cdot \frac{1}{n}
\end{equation}
The maximum degree of a variable in the factor graph is $O(\log n)$ with probability $1 - o_n(1)$.  Therefore a naive bound on the spectral norm of $\CentAdj_{\bG}^{(s)}$  would be
\begin{equation}
\norm{\CentAdj_{\bG}^{(s)}} \leq O(\log n)^s \leq o(n^{1/4})
\end{equation}
for $s \leq o(\log n/\log \log n)$.  Similarly, we can bound $\norm{\CentBdj_{\bG}^{(t)}} \leq o(n^{1/4})$.  Using these bounds in \pref{lem:nbproductToNormal}, we can replace non-backtracking product by the usual product to conclude,
\begin{align*}
	v_t^T \left( \CentAdj_{\bG}^{(t)} - \CentBdj_{\bG}^{(t)} \right) v_t = \sum_{s = 0}^{t-1} v_t^T \CentBdj_{\bG}^{(s)}  \CentRdj_{\bG} \CentAdj_{\bG}^{(t-s-1)} v_t + o_n(1)
\end{align*}

We will now rewrite the matrix $\CentRdj_{\bG}$ explicitly in terms of the coloring $\bc$.  To this end, we make a few definitions.
For types $\tau, \tau' \in T$ and colors $\alpha, \alpha' \in [q]$ define $\Gamma_{\alpha,\alpha'}^{\tau,\tau'} \in \R^{[q] \times [q]}$ as,
\begin{align*}
\Gamma_{\alpha,\alpha'}^{\tau,\tau'} \defeq  \sum_{e\in \CompG_{n}, e \ni i j} \BPM{e}{i}{j} \cdot \Pr_{\Model}[e \in \bG | \btau(i) = \tau,\btau(j) = \tau', \bc(i) = \alpha,\bc(j) = \alpha', \vec{y}_{\bG}])
\end{align*}
In terms of the matrices $\{ \Gamma_{\alpha,\alpha'}^{\tau,\tau'}\}$ we can write for $i \neq j$,
\begin{align*}
\CentRdj_{\bG}[i,j] = \sum_{\tau,\tau' \in T} \sum_{\alpha,\alpha' \in [q]} \Gamma^{\tau, \tau'}_{\alpha,\alpha'} \cdot \Ind[\btau(i) = \tau] \Ind[\btau(j) = \tau'] \left(\Ind[\bc(i) = \alpha] \Ind[\bc(j) = \alpha']  - \Pr[\bc(i) = \alpha| \btau(i),\vec{y}_{\bG}] \Pr[\bc(j) = \alpha' | \btau(j),\vec{y}_{\bG}] \right)
\end{align*}

For every type $\tau$ and a color $\alpha$, let $\bchat^{\tau,\alpha}, \bc^{\tau,\alpha}, \bmu^{\tau,\alpha} \in \R^{n}$ be defined as follows:
\begin{align*}
\bchat^{\tau,\alpha}[i] & \defeq \Ind[\tau(i) = \tau] \cdot \left(\Ind[\bc(i) = \alpha] - \Pr[\bc(i) = \alpha| \btau(i) = \tau,\vec{y}_{\bG}] \right) \\
\chi^{\tau,\alpha}[i] & \defeq \Ind[\tau(i) = \tau] \cdot \left(\Ind[\bc(i) = \alpha]\right) \\
\bmu^{\tau,\alpha}[i] & \defeq \Ind[\btau(i) = \tau] \cdot \Pr[\bc(i) = \alpha| \btau(i) = \tau,\vec{y}_{\bG}] 
\end{align*}
Hence for $i \neq j$ we have,
\begin{align*}
\CentRdj_{\bG}[i,j] = \sum_{\tau,\tau' \in T} \sum_{\alpha,\alpha' \in [q]} \Gamma^{\tau, \tau'}_{\alpha,\alpha'} \cdot \left(\chi^{\tau,\alpha}[i] \bchat^{\tau',\alpha'}[j] + \bchat^{\tau,\alpha}[i] \bmu^{\tau',\alpha'}[j] \right)
\end{align*}
Define $\CentRdj[i,i]$ so that we have the equality,
\begin{align} \label{eq:Rexpression}
\CentRdj_{\bG} = \sum_{\tau,\tau',\alpha,\alpha'} \Gamma^{\tau,\tau'}_{\alpha,\alpha'} \otimes \left( \chi^{\tau,\alpha} (\bchat^{\tau',\alpha'})^T + \bchat^{\tau,\alpha} (\bmu^{\tau',\alpha'})^T\right)
\end{align}

Using \eqref{eq:Rexpression} for the matrix $\CentRdj$, we can write,
\begin{align*}
	v_t^T \left( \CentAdj_{\bG}^{(t)} - \CentBdj_{\bG}^{(t)} \right) v_t = 
	\sum_{\tau,\tau',\alpha,\alpha'}  v_t^T  \sum_{s = 1}^t  \CentBdj_{\bG}^{(s)}  \left( \Gamma^{\tau,\tau'}_{\alpha,\alpha'} \otimes \left( \chi^{\tau,\alpha} (\bchat^{\tau',\alpha'})^T + \bchat^{\tau,\alpha} (\bmu^{\tau',\alpha'})^T \right)  \right) \CentAdj_{\bG}^{(t-s-1)} v_t + o_n(1)
\end{align*}

The second term corresponding to $\bchat^{\tau,\alpha} (\bmu^{\tau',\alpha'})^T$ is negligible.  Specifically, we will prove the following Lemma.
\begin{lemma} \label{lem:annoy}
With probability $1- o_n(1)$ the following holds, for all $\tau, \alpha, \tau',\alpha'$
For all $1 \leq s,t \leq \sqrt{\log n}$,

\begin{align*}
	\norm{\CentBdj_{\bG}^{(s)} \left( \Gamma_{\alpha,\alpha'}^{\tau,\tau'} \otimes  \bchat^{\tau,\alpha} (\bmu^{\tau',\alpha'})^T \right) \CentAdj_{\bG}^{(t)}} \leq \myrho^{s+t} \cdot O_{\Model}(1) 
\end{align*}

\end{lemma}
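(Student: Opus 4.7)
The proof rests on three structural facts. First, $\Gamma^{\tau,\tau'}_{\alpha,\alpha'}$ has operator norm $O_{\Model}(1/n)$: each entry is a sum over hyperedges $e \ni i,j$ of $\BPM{e}{i}{j}\cdot\Pr[e\in\bG\mid\cdots]$, with $O(n^{k-2})$ arity-$k$ hyperedges each occurring with probability $O(n^{-(k-1)})$. Second, $\bmu^{\tau',\alpha'}$ is deterministic given $\btau$ and equals $\prior_{\tau'}(\alpha')$ on type-$\tau'$ vertices and $0$ elsewhere. Third, from $\bTrans=(I-\prior\mathbf{1}^\top)\Facm$ with $\Facm$ column-stochastic, $\bTrans^\top\mathbf{1}=0$.

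The plan is to take the SVD $\Gamma=\sum_{k=1}^{q}\sigma_k a_k b_k^\top$ (so each $\sigma_k=O(1/n)$) and write the middle matrix as $\sum_{k}\sigma_k u_k v_k^\top$ with $u_k=a_k\otimes\bchat^{\tau,\alpha}$ and $v_k=b_k\otimes\bmu^{\tau',\alpha'}$. Each summand has rank one, so the triple-product operator norm is at most $\sum_{k}\sigma_k\,\|\CentBdj^{(s)}u_k\|\cdot\|\CentAdj^{(t)\top}v_k\|$. For the left factor, since $\CentBdj_{\bG}$ is centered conditional on both $\btau$ and $\bc$, its edges are independent given $(\btau,\bc)$, and \pref{thm:main-spectral-bound} applied conditionally gives $\|\CentBdj^{(s)}\|\le((1+o(1))\myrho)^s$ with high probability; combined with $\|u_k\|\le\|\bchat^{\tau,\alpha}\|=O(\sqrt n)$ this yields $\|\CentBdj^{(s)}u_k\|=O(\myrho^s\sqrt n)$.

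For the right factor, decompose $b_k=\beta_k\mathbf{1}/\sqrt q+b_k^\perp$ with $b_k^\perp\perp\mathbf{1}$. Every nonbacktracking walk of positive length contributing to $\CentAdj^{(t)\top}$ ends with a $\bTrans^\top$ factor, and since $\bTrans^\top\mathbf{1}=0$, the $\mathbf{1}$-component of $b_k$ is annihilated: $\CentAdj^{(t)\top}(b_k\otimes\bmu^{\tau',\alpha'})=\CentAdj^{(t)\top}(b_k^\perp\otimes\bmu^{\tau',\alpha'})$. To bound the survivor, split $\CentAdj^{(t)}=\CentBdj^{(t)}+R$, where $R$ collects the summands of the nonbacktracking expansion $(\CentBdj+\CentRdj)^{(t)}$ containing at least one $\CentRdj$ factor. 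The $\CentBdj^{(t)}$-part contributes $O(\myrho^t\sqrt n)$ by the operator-norm bound of Step 2. For $R$, I will use the rank-$O_{\Model}(1)$ decomposition of $\CentRdj$ in \pref{eq:Rexpression}: each appearance of $\CentRdj$ pairs $\bmu^{\tau',\alpha'}$ against one of the centered signal vectors $\chi^{\tau'',\alpha''}$ or $\bchat^{\tau'',\alpha''}$, and a direct Chebyshev calculation over $\bc$ shows these inner products have mean $0$ and standard deviation $O(\sqrt n)$, rather than the worst-case $O(n)$, because $\bmu$ is constant within each type while $\chi,\bchat$ have mean $0$ within each type. Propagating these savings together with the $O(1/n)$ norm of the $\Gamma$ factor carried by each $\CentRdj$ gives $\|R^\top v_k\|=O(\myrho^{2t})$, which is dominated by $O(\myrho^t\sqrt n)$ throughout the range $t\le\sqrt{\log n}$.

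Combining, the triple-product norm is at most $\sum_{k}\sigma_k\cdot O(\myrho^s\sqrt n)\cdot O(\myrho^t\sqrt n)=q\cdot O(1/n)\cdot O(n\myrho^{s+t})=O_{\Model}(\myrho^{s+t})$. Note how essential the $O(1/n)$ bound on $\sigma_k$ is: without it the product would be $O(n\myrho^{s+t})$. The main obstacle will be the signal-orthogonality accounting in Step 3 — systematically expanding $(\CentBdj+\CentRdj)^{(t)}$ as a sum of nonbacktracking products and inductively tracking, for each appearance of a $\CentRdj$ factor, the $\sqrt n$-savings that comes from the approximate orthogonality of $\bmu^{\tau',\alpha'}$ to the centered signal vectors $\chi$ and $\bchat$, so that the cumulative signal blow-up is at most $O(\myrho^{2t})$ rather than the pessimistic $O(\lambda_L^t\sqrt n)$.
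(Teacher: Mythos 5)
Your high-level plan of taking the SVD $\Gamma^{\tau,\tau'}_{\alpha,\alpha'}=\sum_k u_k v_k^{\top}$ with $\sum_k\|u_k\|\|v_k\|=O_{\Model}(1/n)$ and bounding each rank-one triple-product by $\|\CentBdj_{\bG}^{(s)}(u_k\otimes\bchat)\|\cdot\|(v_k\otimes\bmu)^{\top}\CentAdj_{\bG}^{(t)}\|$ is precisely the paper's strategy, and the accounting at the end ($q\cdot O(1/n)\cdot O(n\myrho^{s+t})$) matches. However, the way you bound each of the two factors departs from the paper in a way that introduces a genuine gap.

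For the left factor, you invoke the operator-norm bound $\|\CentBdj_{\bG}^{(s)}\|\le ((1+o(1))\myrho)^s$ from \pref{thm:main-spectral-bound} and then multiply by $\|u_k\otimes\bchat\|=O(\sqrt n)$. But \pref{thm:main-spectral-bound} is stated (and proved via the trace method) only for $\ell\in[(\log\log n)^2,\frac{\log n}{(\log\log n)^2}]$, whereas \pref{lem:annoy} claims the bound for all $1\le s\le\sqrt{\log n}$. For small $s$ — say $s=O(1)$ — the operator norm of $\CentBdj_{\bG}^{(s)}$ scales like the maximum block-row weight, which is governed by the maximum degree $\Theta(\log n/\log\log n)$, not by $\myrho^s$. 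So this step fails at the bottom end of the claimed range. The paper circumvents exactly this issue by instead invoking \pref{lem:normstatistic}, which is a \emph{fixed-vector} second-moment concentration statement ($\|\CentAdj_{\bG}^{(s)}x\|\le C\myrho^s\sqrt n$ whp for a deterministic $x$), valid for all $1\le s\le\sqrt{\log n}$. The key point is that for a fixed $x$, the variance of $\|\CentAdj^{(s)}x\|^2$ is far smaller than the worst-case $\|\CentAdj^{(s)}\|^2\|x\|^2$ even when the operator norm is large, and this gap is exactly what lets the bound extend down to $s=1$. You would need this fixed-vector version (or some ad hoc argument for $s<(\log\log n)^2$) to repair the proposal.

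For the right factor, your argument is substantially more elaborate than the paper's: you decompose $\CentAdj=\CentBdj+R$, peel off the $\mathbf{1}$-component of $b_k$ using $\bTrans^{\top}\mathbf{1}=0$, and then inductively track $\sqrt{n}$-savings from approximate orthogonality of $\bmu$ to the signal vectors appearing in $\CentRdj_{\bG}$, hoping the cumulative blow-up stays at $O(\myrho^{2t})$. None of this appears in the paper's proof, which simply applies \pref{lem:normstatistic} to the fixed vector $v_j\otimes\bmu^{\tau',\alpha'}$ (using self-adjointness under $\langle\cdot,\cdot\rangle_{\innerp}$ to move the vector to the other side). Your instinct to worry that $\CentAdj_{\bG}$ with $\bG\sim\Planted$ is not centered given $\bc$ is a legitimate subtlety — the paper's invocation of \pref{lem:normstatistic} ``applied to $\Null$'' is terse — but the resolution you propose is a multi-page induction you yourself flag as the ``main obstacle,'' and it is not carried out in enough detail to confirm it closes. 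The short way through, as the paper does, is to observe that the fixed-vector bound for $\CentAdj^{(t)}$ (or its adjoint) continues to hold on the special vector $v_j\otimes\bmu$; if you insist on the $\CentBdj+R$ split, you would in any case be better served plugging in the fixed-vector version of the $\CentBdj$ bound rather than the operator-norm bound, for the same small-$t$ reason as above.
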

We postpone the proof to later in the section and proceed with the main argument.

Using \pref{lem:annoy},  we can drop all terms arising from $\bchat^{\tau,\alpha}(\bmu^{\tau',\alpha'})^T$ by losing less than $ O_{\Model}(1) \sum_{s = 0}^{t-1} \myrho^s$.  Since $\lambda_L > (1+\delta)^4$, for all $t$ larger than a fixed constant $\Theta_M(1)$, this sum $ O_{\Model}(1) \sum_{s = 0}^{t-1} \myrho^s \leq \frac{1}{10}\lambda_L^t$.
Therefore, we arrive at our inequality,
\begin{align*}
	\sum_{\tau,\tau',\alpha,\alpha'}  v_t^T  \sum_{s = 0}^{t-1}  \CentBdj_{\bG}^{(s)} \left( \Gamma^{\tau,\tau'}_{\alpha,\alpha'} \otimes \ \chi^{\tau,\alpha}  (\bchat^{\tau',\alpha'})^T \right) \CentAdj_{\bG}^{(t-s-1)} v_t \geq  0.9 \cdot \Lambda_t^t 
\end{align*}
Let us write the matrix $\Gamma_{\alpha,\alpha'}^{\tau,\tau'} = \sum_{\beta,\beta' \in [q]} \Gamma_{\alpha,\alpha'}^{\tau,\tau'} [\beta,\beta'] \cdot \bfe_\beta (\bfe_{\beta'})^T$ where $\bfe_{\beta}, \bfe_\beta'$ are standard basis vectors in $\R^q$.
Note that $\Gamma_{\alpha,\alpha'}^{\tau,\tau'}$ has entries that are $O_{\Model}(1)/n$.  
There must exist some choice of $\tau,\tau',\alpha,\alpha',\beta,\beta'$ such that,

\begin{align*}
	v_t^T  \sum_{s = 0}^{t-1}  \CentBdj_{\bG}^{(s)} \left( \bfe_{\beta} \otimes \ \chi^{\tau,\alpha}  (\bfe_{\beta'} \otimes \bchat^{\tau',\alpha'})^T \right) \CentAdj_{\bG}^{(t-s-1)} v_t \geq  \Omega_{\Model}(1) \cdot \Lambda_t^t \cdot n
\end{align*}
where $\Omega_\Model(1)$ hides a constant depending on the model $\Model$.  Rewriting the above inequality,
\begin{align*}
\Omega_{\Model}(1) \cdot \Lambda_t^t \cdot n 
& \leq 
  \sum_{s = 0}^{t-1} \Iprod{ v_t^T ,  \CentBdj_{\bG}^{(s)}  \bfe_{\beta} \otimes \ \chi^{\tau,\alpha}} \Iprod{ \bfe_{\beta'} \otimes \bchat^{\tau',\alpha'},  \CentAdj_{\bG}^{(t-s-1)} v_t} \\
  & \leq 
  \sum_{s = 0}^{t-1} \norm{\CentBdj_{\bG}^{(s)}  \bfe_{\beta} \otimes \ \chi^{\tau,\alpha}} \left | \Iprod{ \bfe_{\beta'} \otimes \bchat^{\tau',\alpha'},  \CentAdj_{\bG}^{(t-s-1)} v_t} \right | 
\end{align*}
Using \pref{lem:normstatistic} on the fixed vector $\bfe_\beta \otimes \chi^{\tau,\alpha}$ and the planted distribution we get that with probability $1-o_n(1)$,
\begin{align*}
\Omega_{\Model}(1) \cdot \Lambda_t^t \cdot n 
  & \leq 
  \sum_{s = 0}^{t-1} \myrho^s \cdot \sqrt{n} \left| \Iprod{ \bfe_{\beta'} \otimes \bchat^{\tau',\alpha'},  \CentAdj_{\bG}^{(t-s-1)} v_t} \right|
\end{align*}

For notational convenience, let us reparametrize $s \to t-s$ and conclude,
\begin{align*}
\Omega_{\Model}(1) \cdot \Lambda_t^t \cdot \sqrt{n}
  & \leq 
  \sum_{s = 1}^{t} \myrho^{t-s}  \left| \Iprod{ \bfe_{\beta'} \otimes \bchat^{\tau',\alpha'},  \CentAdj_{\bG}^{(s-1)} v_t}\right| 
\end{align*}
With high probability, the maximum degree of a variable is $O(\log n) \ll o(\log^2 n)$, and therefore $\norm{\CentAdj^{(s)}} \leq o(\log n)^{2s}$.

Since $\Lambda_t \geq \lambda_L \geq (1+\delta) \myrho $ we can bound the terms for $s = 1,\ldots, (\log\log n)^3$ as follows,
\begin{align*}
 \sum_{s = 1}^{ (\log \log n)^3 } \myrho^{t-s} \left| \Iprod{ \bfe_{\beta'} \otimes \bchat^{\tau',\alpha'},  \CentAdj_{\bG}^{(s-1)} v_t} \right|  \leq \sum_{s = 1}^{ (\log \log n)^3 } \frac{\Lambda_t^{t-s}}{(1+\delta)^{t-s}} \cdot o((\log n)^{2s})  \cdot \sqrt{n} \leq o(1) \cdot \Lambda_t^t \cdot \sqrt{n}
\end{align*}
where the last inequality holds for $t > (\log \log n)^4$.
Deleting terms for small $s$, we have the correlation inequality,
\begin{align} \label{eq:correq}
\Omega_{\Model}(1) \cdot \Lambda_t^t \cdot \sqrt{n} 
  & \leq 
  \sum_{s = (\log \log n)^3}^{t} \myrho^{t-s} \left | \Iprod{ \bfe_{\beta'} \otimes \bchat^{\tau',\alpha'},  \CentAdj_{\bG}^{(s-1)} v_t} \right |
\end{align}
for $t > (\log \log n)^4$.

For each $s$, recall that $v_s$ is the top eigenvector of $\CentAdj_{\bG}^{(s)}$, and $\Lambda^s_s$ denotes the largest eigenvalue.
\begin{claim} \label{claim:whatever}
	There exists a $\bfm \in [(\log\log n)^{4.5}, (\log\log n)^5 ]$ such that, for all $s \in [(\log \log n)^3, \bfm]$,
	\[ \norm{A^{(s)} v_{\bfm}} \leq \Lambda_{\bfm}^s (1+\delta)^{t-s}\]
\end{claim}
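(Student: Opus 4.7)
The plan is a pigeonhole argument on the sequence $\{\Lambda_t\}$ for $t$ in the range $I := [(\log\log n)^3, (\log\log n)^5]$.  Since $\CentAdj_{\bG}^{(s)}$ is self-adjoint under $\langle \cdot,\cdot \rangle_{\innerp}$ by \pref{clm:self-adjointness}, one has the crude bound $\| \CentAdj_{\bG}^{(s)} v_{\bfm}\| \le \| \CentAdj_{\bG}^{(s)}\| = \Lambda_s^s$.  Thus it suffices to produce $\bfm \in [(\log\log n)^{4.5}, (\log\log n)^5]$ such that $\Lambda_s / \Lambda_{\bfm} \le (1+\delta)^{(\bfm - s)/s}$ for every $s \in [(\log\log n)^3, \bfm]$, which is a genuine constraint only for $s$ close to $\bfm$ (the slack is exponentially large otherwise).

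I would proceed by contradiction.  Suppose that for every candidate $\bfm$ there is a witness $s(\bfm) \in [(\log\log n)^3, \bfm)$ with $\Lambda_{s(\bfm)} > \Lambda_{\bfm}(1+\delta)^{(\bfm - s(\bfm))/s(\bfm)}$.  Setting $\bfm_0 := (\log\log n)^5$ and iterating $\bfm_{i+1} := s(\bfm_i)$ while $\bfm_i \ge (\log\log n)^{4.5}$ produces a strictly decreasing sequence along which $\Lambda_{\bfm_{i+1}} > \Lambda_{\bfm_i}(1+\delta)^{(\bfm_i-\bfm_{i+1})/\bfm_{i+1}}$.  Telescoping and using the Riemann-sum estimate $\sum_j (\bfm_{j-1}-\bfm_j)/\bfm_j \ge \log(\bfm_0/\bfm_k)$ (a consequence of $\log(1+x)\le x$), the iteration must eventually exit the window at some step $k$ with $\bfm_k < (\log\log n)^{4.5}$ (since the strictly decreasing sequence of positive integers in $I$ is finite), yielding
\[
	\Lambda_{\bfm_k} \;>\; \Lambda_{\bfm_0} \cdot (\bfm_0 / \bfm_k)^{\log(1+\delta)} \;\ge\; \Lambda_{\bfm_0} \cdot (\log\log n)^{\tfrac12\log(1+\delta)}.
\]

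The final step contradicts this forced growth against an a priori upper bound $\Lambda_t \le (1+o(1))\lambda_L$ in the planted model across $t \in I$.  Such a bound can be argued by decomposing $\CentAdj_{\bG} = \CentBdj_{\bG} + \CentRdj_{\bG}$, applying \pref{thm:main-spectral-bound} to $\CentBdj_{\bG}$ to bound the bulk of its spectrum by $((1+o(1))\sqrt{\lambda_L})^t$, and using the low-rank structure of $\CentRdj_{\bG}$ from \pref{eq:Rexpression} (which is built from the hidden coloring) to locate the outlier of $\CentAdj_{\bG}^{(t)}$ at $(1+o(1))\lambda_L^t$.  Combined with the lower bound $\Lambda_{\bfm_0} \ge \Omega_{\Model}(\lambda_L)$ implicit in \pref{thm:local-statistics-concentrate}, the displayed inequality forces $\Lambda_{\bfm_k}/\lambda_L \to \infty$ as $n\to\infty$, contradicting the $(1+o(1))$ factor in the upper bound.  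The main obstacle is the sharpness of this planted-model upper bound on $\Lambda_t$: a cruder estimate like $\Lambda_t \le O(\log n)$ coming from the maximum variable degree is inadequate, because the pigeonhole only accumulates a factor of $(\log\log n)^{\Omega(\log(1+\delta))}$ over the length-$(\log\log n)^5$ window, which cannot absorb a ratio of order $\log n$.
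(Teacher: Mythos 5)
Your pigeonhole iteration is precisely the paper's: fix a candidate $\bfm$, find a violating $s < \bfm$, recurse, and telescope using $\log(1+x)\le x$ (the paper packages this as Fact~\ref{fac:silly}) to force $\Lambda_{\bfm_k} > \Lambda_{\bfm_0}\cdot(\log\log n)^{\tfrac12\log(1+\delta)}$ once the descent crosses from $(\log\log n)^5$ down past $(\log\log n)^{4.5}$. You also correctly observe that the naive max-degree bound $\Lambda_t \le O(\log n)$ is useless here, since $(\log\log n)^{\Omega(1)}$ cannot absorb a factor of $\log n$.

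Where you diverge from the paper — and where your proposal gets heavy — is the closing contradiction. The paper does \emph{not} prove or invoke the sharp planted-model bound $\Lambda_t \le (1+o(1))\lambda_L$. Instead it appeals to a much weaker structural fact: with probability $1-o_n(1)$, for every $s > (\log\log n)^2$, every row of $\CentAdj_{\bG}^{(s)}$ has at most $O(D^s)$ nonzero blocks for a constant $D$. Combined with the boundedness of the block products (\pref{cond:matrix-product-small}), this gives $\|\CentAdj_{\bG}^{(t_r)}\| \le O(D^{t_r})$, hence $\Lambda_{t_r} = O(1)$ — already enough to contradict the forced $\omega(1)$ growth, since $\Lambda_{\bfm_0}\ge(1-o(1))\lambda_L > 1$. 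The point you missed is that the $O(\log n)$ degree of a few rare vertices does not propagate multiplicatively over $s$ steps: in a sparse locally-tree-like factor graph, the number of length-$s$ nonbacktracking walks emanating from any vertex is $O(D^s)$ once $s$ is polylog, because high-degree vertices are so sparse that they are rarely revisited. This is a ball-growth statement, not a spectral one, and it is far lighter than the $\CentBdj_{\bG}+\CentRdj_{\bG}$ decomposition you sketch. Your proposed fix would also work in principle but would require carrying the spectral-norm bound for $\CentBdj_{\bG}^{(t)}$ from \pref{thm:main-spectral-bound} (stated for the null model) over to the planted conditional, plus a rank/outlier analysis of $\CentRdj_{\bG}^{(t)}$ — substantially more machinery than the claim warrants, and none of it established in the paper at the level of sharpness you'd need.

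In short: same skeleton, but you over-engineer the endgame. The intended closing fact is $\Lambda_{t_r} = O(1)$ via the $O(D^s)$-walks-per-vertex bound, not a $(1+o(1))\lambda_L$ spectral estimate.
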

Before we see the proof of above claim, let us see how it leads to an algorithm.
Applying \eqref{eq:correq} for this choice of $\bfm$, we conclude that
\begin{equation} \label{eq:correq3}
 \sqrt{n} \cdot \Omega_{\Model}(1) \cdot \Lambda_{\bfm}^{\bfm} \leq  \sum_{s = (\log \log n)^3}^{\bfm} \myrho^{\bfm-s} \left | \Iprod{ \bfe_{\beta'} \otimes \bchat^{\tau',\alpha'},  \CentAdj_{\bG}^{(s-1)} v_{\bfm}} \right | 
\end{equation}
In \eqref{eq:correq3}, we can bound the sum of all terms with $s < t^{*} - C$ as follows.
\begin{align} 
  \sum_{s = (\log \log n)^3}^{\bfm - C} \myrho^{\bfm-s} \left| \Iprod{ \bfe_{\beta'} \otimes \bchat^{\tau',\alpha'},  \CentAdj_{\bG}^{(s-1)} v_t} \right | & \leq 
    \sum_{s = (\log \log n)^3}^{\bfm - C} \myrho^{\bfm-s} \norm{\bfe_{\beta'} \otimes \bchat^{\tau',\alpha'}} \left \| \CentAdj_{\bG}^{(s-1)} v_t \right \|   \\
    & \leq  \sum_{s = (\log \log n)^3}^{\bfm - C} \myrho^{\bfm-s} \Lambda_{\bfm}^{s-1} \cdot \sqrt{n} \cdot  (1+\delta)^{\bfm-s+1}\\
    & \leq  (1+\delta) \Lambda_{\bfm}^{\bfm} \sqrt{n} \cdot  \left( \sum_{s = (\log \log n)^3}^{\bfm - C} \left(\frac{\myrho (1+\delta)}{\Lambda_{\bfm} } \right)^{\bfm - s} \right )\\
 &   =  \Lambda_{\bfm}^{\bfm} \cdot \sqrt{n} \cdot \left(\frac{1}{\delta (1+\delta)^{C-2}}\right)
 \label{eq:this-equation}
 \end{align}
Using \eqref{eq:correq3} and \eqref{eq:this-equation}, for sufficiently large $C = O_{\Model,\delta}(1)$, we conclude that
\begin{align} \label{eq:correq4}
 \Omega_{\Model}(1) \cdot \Lambda_{\bfm}^{\bfm} \cdot \sqrt{n} \leq  \sum_{s = \bfm - C+1}^{\bfm -1} \myrho^{\bfm-s} \left | \Iprod{ \bfe_{\beta'} \otimes \bchat^{\tau',\alpha'},  \CentAdj_{\bG}^{(s-1)} v_{\bfm}} \right | 
\end{align}
Note that there are only $C = O_{\Model,\delta}(1)$ terms in the sum, so one of them is large.  In particular, there exists some $\ell \in [\bfm-C, \bfm]$ such that if we set
\[ w_\ell \defeq \CentAdj_{\bG}^{(\ell)} v_{\bfm}\]
then,
\[ \left | \Iprod{ \bfe_{\beta'} \otimes \bchat^{\tau',\alpha'},  w_\ell} \right | \geq \frac{1}{C \myrho^{C}} \cdot \Lambda_{\bfm}^{\bfm} \cdot \Omega_{M,\delta}(1) \cdot \sqrt{n} \ ,\]
But note that by the choice of $\bfm$, 
\[ \norm{w_\ell} \leq \Lambda_{\bfm}^{\ell} \cdot (1+\delta)^{\bfm - \ell}  \leq \Lambda_{\bfm}^{\bfm} \cdot (1+\delta)^{C}\ .\]
So there exists some $\ell \in [\bfm - C, \bfm]$, such that the unit vector $\overline{w}_\ell = \frac{w_\ell}{\norm{w_\ell}}$ satisfies,
\[ \left | \Iprod{ \bfe_{\beta'} \otimes \bchat^{\tau',\alpha'},  \overline{w}_\ell} \right | \geq  \Omega_{M,\delta}(1) \cdot \sqrt{n} \ .\]

In other words, for some choice of $\ell \in [\bfm - C, \bfm]$ and $\beta' \in [q]$, if we construct $\overline{u} \in \R^n$ as,
\[  \overline{u}[i] = \overline{w}_\ell[i,\beta']\] 
then $| \iprod{\overline{u}, \bchat^{\tau',\alpha'}} | \geq \Omega_{\Model,\delta}(1) \cdot \sqrt{n}$.  This finishes the proof of \pref{thm:recovery-full}. \footnote{Note that the $\bchat^{\tau',\alpha'}$ here is a bit different from the $\bchat^{\tau',\alpha'}$ defined in the theorem, but by \pref{rem:RG-condition} they are within a multiplicative factor of $(1\pm \epsilon)$ from each other. Thus the inequality still hold for the $\bchat^{\tau',\alpha'}$ defined in the theorem statement.}

%
%

\begin{proof} (Proof of \pref{claim:whatever})
 The idea behind the proof is a descent/bootstrap argument to get a contradiction.  Let us start with $t = (\log \log n)^5$ as the guess for $\bfm$.  If current value of $t$ satisfies the condition of the claim, we are done.  Otherwise, there exists $s< t$ such that,

\[ \norm{\CentAdj_{\bG}^{(s)} v_t} \geq \Lambda_t^s (1+\delta)^{t-s}\]
This implies that,
\[\Lambda^s_s \geq \Lambda_t^s (1+\delta)^{t-s}\]
or equivalently,
\[ \log \Lambda_s \geq \log \Lambda_t + \log (1+\delta) \cdot (t-s)/s\]
Suppose we use $s$ as the new candidate for $\bfm$ and recurse.  Let us suppose we iteratively construct a sequence of $t_0 = (\log\log n)^5 > \ldots > t_r $ in this manner.  The value of $\log \Lambda_{t_i}$ increases along the sequence.  By Fact~\pref{fac:silly}, if we obtain a sequence of $t_0 = (\log\log n)^5 > \ldots > t_r = (\log \log n)^{4.5}$ then we will have,
\[ \log \Lambda_r \geq (\log t_0 - \log t_r)\log(1+\delta) \geq \log(1+\delta) \cdot \Omega(\log \log \log n) \]
This suggests that $\norm{\CentAdj_{\bG}^{(t_r)}}^{1/t_r} \geq \omega(1)$ for some $t_r = \Omega(\log \log n)^{4.5}$.  A contradiction, since with probability $1- o_n(1)$, we will have $\norm{\CentAdj_{\bG}^{(t_r)}}^{1/t_r}  = O(1)$.  This follows from the fact that with probability $1- o_n(1)$, degree of every vertex in $A_{\bG}^{(s)}$ is at most $O(D^s)$ for some constant $D$ for all $s > (\log \log n)^2$.  Therefore, the sequence terminates and we find a $t_r \in [(\log \log n)^{4.5}, (\log \log n)^5]$, implying the claim.

\begin{fact} \label{fac:silly}
	Given a sequence of positive integers, $a_1 \geq a_2 \geq \ldots a_r$,
	\begin{align}
	\sum_{i = 1}^r \frac{a_i - a_{i-1}}{a_{i-1}} \geq \sum_{i = 1}^r \sum_{x = a_{i-1}}^{a_i-1} \frac{1}{x} = \sum_{x = a_r}^{a_1} \frac{1}{x} \approx \ln (a_1) - \ln (a_r)
	\end{align}
\end{fact}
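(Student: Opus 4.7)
The plan is to prove the chain of inequalities written inside the statement of Fact~\ref{fac:silly}; the fact is essentially a discrete version of $\int_{a_0}^{a_r}\frac{dx}{x}=\ln(a_r/a_0)$, so the proof is a short Riemann-sum argument. Reading the statement in light of how it is actually invoked in Claim~\ref{claim:whatever}, the intended hypothesis is that $a_0,\ldots,a_r$ is a nondecreasing sequence of positive integers (the roles of the endpoints are swapped from the typeset statement, since the $t_i$'s in the bootstrap argument are decreasing and one needs to reindex from the small end). I will state the fact in that form and prove it.

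The first step is the termwise inequality $\frac{a_i-a_{i-1}}{a_{i-1}}\ge\sum_{x=a_{i-1}}^{a_i-1}\frac{1}{x}$. This is immediate: the right-hand sum has exactly $a_i-a_{i-1}$ integer terms, each of the form $\frac{1}{x}$ with $x\ge a_{i-1}$, so each term is at most $\frac{1}{a_{i-1}}$. Summing the $r$ inequalities indexed by $i$ gives the left inequality in the statement.

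The second step is to telescope the double sum. Since $a_0\le a_1\le\cdots\le a_r$ are integers, the intervals $\{a_{i-1},a_{i-1}+1,\ldots,a_i-1\}$ are disjoint and their union is $\{a_0,a_0+1,\ldots,a_r-1\}$; hence $\sum_{i=1}^{r}\sum_{x=a_{i-1}}^{a_i-1}\frac{1}{x}=\sum_{x=a_0}^{a_r-1}\frac{1}{x}$, which is the middle equality (up to the reindexing of the endpoints noted above).

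The third step is the standard integral comparison $\int_{a_0}^{a_r}\frac{dx}{x}\le\sum_{x=a_0}^{a_r-1}\frac{1}{x}\le \frac{1}{a_0}+\int_{a_0}^{a_r}\frac{dx}{x}$, which gives $\sum_{x=a_0}^{a_r-1}\frac{1}{x}=\ln a_r-\ln a_0\pm O(1/a_0)$. Since in the application we have $a_0=(\log\log n)^{4.5}$, the $O(1/a_0)$ error is negligible compared to the $\Theta(\log\log\log n)$ main term, which is all that the proof of Claim~\ref{claim:whatever} uses. There is no real obstacle here; the only thing worth flagging is the notational cleanup in reconciling the indexing direction between the fact as typeset and its use in the bootstrap argument.
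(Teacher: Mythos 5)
Your proof is correct and matches the only sensible reading of the fact: the typeset hypothesis $a_1 \geq \cdots \geq a_r$ is indeed a typo (it would make the left-hand side nonpositive and the inner sums empty), and your reindexed nondecreasing reading is exactly what the descent argument in the proof of \pref{claim:whatever} actually uses. The paper offers no separate justification beyond the displayed chain of inequalities, and your three steps — the termwise bound $\frac{a_i-a_{i-1}}{a_{i-1}} \geq \sum_{x=a_{i-1}}^{a_i-1}\frac{1}{x}$ because each of the $a_i-a_{i-1}$ summands is at most $\frac{1}{a_{i-1}}$, the telescoping of the disjoint integer intervals, and the standard integral comparison for the partial harmonic sum — are precisely the elided justifications, so this is the same argument rather than a different route.
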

\end{proof}

\begin{proof} (Proof of \pref{lem:nbproductToNormal})
	For $s \in \N$, let $\calP_s$ be the set of length $s$ non-backtracking walks in complete graph $\CompG_{n}$.
	So $\ualpha = (\alpha_0, \alpha_1,\ldots,\alpha_s) \in \calP_s$ will be a non-backtracking path in $\CompG_{n}$ with vertices $\alpha_0,\ldots,\alpha_s$.

	For $\ell, \ell' \in [n]$, we can write
	\begin{align}
	A^{(s)} R B^{(t)}[\ell,\ell'] = \sum_{ \substack{\ualpha \in \calP_s , \ubeta \in \calP_t \\ \alpha_0 = \ell, \beta_t = \ell' }}. \prod_{i=1}^s A_{\alpha_{i-1} \alpha_i}  \cdot R_{\alpha_s, \beta_0} \cdot  \prod_{j = 1}^t B_{\beta_{j-1} \beta_j} 
	\end{align}

	\begin{align}
	A^{(s)} \nb R \nb B^{(t)}[\ell,\ell'] = \sum_{ \substack{\ualpha \in \calP_s , \ubeta \in \calP_t \\ \alpha_0 = \ell, \beta_t = \ell' \\ \alpha_s \neq \beta_0, \alpha_{s-1} \neq \beta_0, \alpha_{s} \neq \beta_1 }}. \prod_{i=1}^s A_{\alpha_{i-1} \alpha_i}  \cdot R_{\alpha_s, \beta_0}  \prod_{j = 1}^t B_{\beta_{j-1} \beta_j} 
	\end{align}

	It is clear that the difference $A^{(s)} \nb R \nb B^{(t)} - A^{(s)} R B^{(t)}$ consists of three different terms.

	{\bf Term 1:  $\alpha_s = \beta_0$}
		Consider the block-diagonal matrix $\calD_1 \in \R^{nq \times nq}$ given by, 
		\[ \calD_1[i,j] = \Ind[i = j] \cdot R_{ij}.\]
		then we can write this term as $A^{(s)} \calD_1 B^{(t)}$.  Hence we get the following bound,
		\[ \norm{A^{(s)} \calD_1 B^{(t)}} \leq q \norm{R}_\infty \cdot \norm{A^{(s)}} \cdot \norm{B^{(t-s)}} \]

	{\bf Term 2: $\alpha_{s-1} = \beta_0$}
		Consider the block-diagonal matrix $\calD_2 \in \R^{nq \times nq}$ given by,
		\[ \calD_2[i,i] =   \sum_{j} A_{ij} R_{ji} .\]
		then we can write this term as $A^{(s-1)} \calD_2 B^{(t)}$.  We have the upper bound,
		\[ \norm{\calD_2} \leq q \norm{R}_\infty  \norm{A}_{1 \to 1} \]
		which implies that 
\[\norm{A^{(s-1)} \calD_2 B^{(t)}} \leq q \norm{x}_{\infty} \norm{y}_\infty \norm{A}_{1 \to 1} \norm{A^{(s-1)}} \norm{B^{(t)}} \]

	{\bf Term 3}:  $\alpha_s = \beta_1$
		Consider the block-diagonal matrix $\calD_3 \in \R^{nq \times nq}$ given by,
		\[ \calD_3[i,i] =   (\sum_j R_{ij} B_{ji}) .\]
		then we can write this term as $A^{(s)} \calD_3 B^{(t-1)}$.  Analogous to the previous case, we get an upper bound of,
\[\norm{A^{(s)} \calD_3 B^{(t-1)}} \leq q \norm{R}_\infty \norm{B^T}_{1 \to 1} \norm{A^{(s)}} \norm{B^{(t-1)}} \]

	Adding the three terms, we have the claim of the lemma.

\end{proof}

We will need the following theorem about local statistics on the expectation and concentration of local statistics in order to complete the proof of \pref{lem:annoy}.

\begin{lemma} \label{lem:normstatistic}
	Fix a vector $x \in \R^{nq}$, such that $\norm{x}_\infty = O(1)$.  With probability $1-o_n(1)$, for all $1 \leq s \leq \sqrt{\log n}$ we have,
	\begin{align}
		\| \CentAdj_{\bG}^{(s)}x \| \leq C\myrho^s \cdot \sqrt{n}
	\end{align}
	for an absolute constant $C\ge 1$.
\end{lemma}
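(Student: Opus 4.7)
The plan is to bound $\E\|\CentAdj_{\bG}^{(s)} x\|^2$ via a walk-counting scheme analogous to the one in \pref{sec:statistics}, then upgrade to a high probability bound by a variance computation and union bound over $s$. First, I would expand
\[
  \|\CentAdj_{\bG}^{(s)} x\|^2 = \sum_{u \in [n]} \sum_{\substack{\Path, \Pathp \in \NB(s) \\ \Path_v[0] = \Pathp_v[0] = u}} x[\Pathp_v[s]]^{\top}\, \expwlk(\bG,\Pathp)^{\top}\, \expwlk(\bG,\Path)\, x[\Path_v[s]],
\]
and take expectation over $\bG$. Because each factor's indicator is centered, the expectation of the scalar weight $\mathrm{wt}(\Path)\mathrm{wt}(\Pathp)$ vanishes unless every factor appearing in $\Path \cup \Pathp$ has multiplicity at least $2$, the same combinatorial constraint exploited in \pref{clm:sa-local-statistics-path}.

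Next, split the sum into the ``diagonal self-avoiding'' case $\Path = \Pathp$ with $\Path$ self-avoiding, versus all other pairs. For the diagonal self-avoiding case, the analogue of \pref{clm:sa-local-statistics-path} gives $\E[\expwlk(\bG,\Path)^{\top}\expwlk(\bG,\Path)] = \Pr_{\bG}[\Path \in \bG]\cdot \bTrans_\Path^{\top}\bTrans_\Path$ up to $(1-o(1))$, and summing over self-avoiding $\Path$ starting at each $u$, then applying the cavity-method identity behind \pref{lem:sa-local-statistics-expectation} (with $\|x\|_\infty$ taking the role of the pointwise bound on $\statvec$), yields a contribution bounded by $C' \lambda_L^s n \|x\|_\infty^2 = C'\myrho^{2s} n$. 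For all other pairs, $\Path \cup \Pathp$ is connected with excess $\Chi(\Path\cup\Pathp) \le 1$; the same counting used in \pref{lem:sa-local-statistics-variance} and \pref{clm:nsa-bounded-contribution} then bounds their total contribution by $(Cs)^{Cs}$, and since $s \le \sqrt{\log n}$ this is $n^{o(1)}$, so it is absorbed into the main term. In total, $\E\|\CentAdj_{\bG}^{(s)} x\|^2 \le C''\myrho^{2s} n$ with $C''$ depending only on the model.

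To upgrade to a with-high-probability bound with an \emph{absolute} constant $C$ in front, compute $\text{Var}(\|\CentAdj_{\bG}^{(s)} x\|^2)$ by the analogous walk-counting for quadruples of length-$s$ walks, grouped into two pairs each sharing a start vertex. The same self-avoiding/non-self-avoiding decomposition shows the variance is bounded by $n \cdot (Cs)^{Cs}$, so $\text{Var} = o((\E[\cdot])^2)$ once $\myrho \geq 1$. Chebyshev then gives $\|\CentAdj_{\bG}^{(s)} x\|^2 \le 2C''\myrho^{2s} n$ with probability $1 - o(1/\sqrt{\log n})$, and a union bound over the $\sqrt{\log n}$ values of $s$ closes the proof with $C = \sqrt{2C''}$.

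The main obstacle is the non-self-avoiding case: whereas in \pref{sec:statistics} the endpoint vectors $\statvec$ contracted conveniently into the matrix trace, here the two walks have distinct endpoints $\Path_v[s]$ and $\Pathp_v[s]$ and we only have the crude pointwise bound $\|x\|_\infty = O(1)$, so one must verify that the endpoint coordinates don't produce extra powers of $n$. The key observation is that constraining $\Path_v[0] = \Pathp_v[0]$ already forces $\Path \cup \Pathp$ to be connected, which loses exactly one factor of $n$ relative to choosing two independent walks --- this is precisely what the $\Chi$-based counting tracks, and it keeps the non-self-avoiding contributions subleading.
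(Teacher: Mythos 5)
Your proposal matches the paper's (brief) proof sketch in its essential structure: both expand $\|\CentAdj_{\bG}^{(s)}x\|^2$ over pairs of length-$s$ nonbacktracking walks sharing a start vertex, identify the $\Path=\Pathp$ self-avoiding pairs as the dominant contribution (the paper's ``walk out self-avoidingly for $s$ steps and retrace the same steps''), value this at $\lambda_L^s n = \myrho^{2s}n$, and dispatch the remaining pairs via the excess-and-shape counting of \pref{sec:statistics}; you additionally spell out the variance/Chebyshev/union-bound step that the sketch leaves implicit. One small correction: for the non-diagonal pairs surviving the multiplicity-at-least-two constraint, the operative bound is $\Chi(\Path\cup\Pathp)\le 0$ rather than $\le 1$ — if the union were a tree, a nonbacktracking walk in a bipartite tree from a fixed start is a simple path, so two such walks double-covering every factor must coincide, contradicting $\Path\ne\Pathp$ — and it is this $\Chi\le 0$ that gives the claimed $(Cs)^{Cs}=n^{o(1)}$ rather than $n(Cs)^{Cs}$.
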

\begin{proof}[Proof sketch]
	This is equivalent to proving $\langle \CentAdj_{\bG}^{(s)}x,\CentAdj_{\bG}^{(s)}x\rangle \le C\myrho^{2s}n$.  Indeed, this quantity can be rewritten as:
	\[
		\langle \left(\CentAdj_{\bG}^{(s)}\right)^2, xx^{\top}\rangle.
	\]
	Via similar calculations to the ones done in \pref{sec:statistics}, we can show that this quantity is dominated by the contribution of walks that are self-avoiding for the first $s$ steps, and retrace the same steps taken in the next $s$ steps, which in turn can be used to show that this quantity concentrates around the expected total weight of walks in the associated random tree that walk out $s$ steps and walk back $s$ steps, and hence for large enough $s$ is at most $C\myrho^{2s}n$ for an absolute constant $C$.
\end{proof}

\prasadtemp{to do, include the proof}

\begin{proof}[Proof of \pref{lem:annoy}]
	Let $\Gamma_{\alpha,\alpha'}^{\tau,\tau'} = \sum_{j \in [q]} u_j v_j^T$ be the singular decomposition of $\Gamma_{\alpha,\alpha'}^{\tau,\tau'}$.
	\begin{align} \label{eq:233}
		\norm{\CentBdj_{\bG}^{(s)} \Gamma_{\alpha,\alpha'}^{\tau,\tau'} \otimes \bchat^{\tau,\alpha} (\bmu^{\tau',\alpha'})^T \CentAdj_{\bG}^{(t)}} 
&		\leq \sum_{j} \norm{\CentBdj_{\bG}^{(s)} u_j \otimes \bchat^{\tau,\alpha}} \norm{ v_j \otimes (\bmu^{\tau',\alpha'})^T \CentAdj_{\bG}^{(t)}} 
	\end{align}
Applying \pref{lem:normstatistic} to the planted model $\Model$ with the fixed vector $u_j \otimes \bchat^{\tau,\alpha}$, we conclude that with probability $1-o_n(1)$, 
\begin{align} \label{eq:234}
\norm{\CentBdj_{\bG}^{(s)} u_j \otimes \bchat^{\tau,\alpha}} \leq  C\myrho^{s} \cdot \norm{u_j \otimes \bchat^{\tau,\alpha}} \leq  \myrho^{s} \cdot \norm{u_j} \cdot n^{1/2}
\end{align}

Similarly, applying \pref{lem:normstatistic} to the null model $\Model^{\times}$ with fixed vector $v_j \otimes \mu^{\tau',\alpha'}$, we conclude that with probability $1-o_n(1)$, for all $1 \leq t \leq \sqrt{\log n}$,
\begin{align} \label{eq:235}
\norm{ v_j \otimes (\bmu^{\tau',\alpha'})^T \CentAdj_{\bG}^{(t)} } \leq  C\myrho^{s} \cdot \norm{ v_j \otimes (\bmu^{\tau',\alpha'})^T } \leq  \myrho^{t} \cdot \norm{v_j} \cdot n^{1/2}
\end{align}

Finally, note that $\sum_{j} \norm{u_j} \norm{v_j} = \norm{\Gamma_{\alpha,\alpha'}^{\tau,\tau'}}_{Fr} = O_{\Model}(\frac{1}{n})$ where $O_{\Model}$ hides a fixed constant depending on the model.
Using \eqref{eq:234} and \eqref{eq:235} in \eqref{eq:233}, we conclude the proof.
\end{proof}

\section*{Acknowledgments}
We would like to thank Tselil Schramm for insightful discussions around belief propagation, and we would like to thank Ryan O'Donnell and Xinyu Wu for conversations that brought clarity to various parts of \cite{BC19}.  We would also like to thank Lenka Zdeborov{\'a} and anonymous reviewers for bringing several relevant references to our attention.

\bibliographystyle{alpha}
\bibliography{main,sdp-hds}

\appendix
\section{Belief propagation for $\Model$}\label{app:BP-update-rule}
We briefly describe the belief propagation (BP) algorithm that aims to estimate the mariginal distribution of $\bc(v)$, $v\in[n]$ under the Boltzmann distribution $\boltz$ with Hamiltonian $\Ham$. Define the messages $\{m^{v\to e}_c\}_{c\in [\numcols]}$ that a variable $v$ passes to some constraint $e\in\bE_i$, and the messages $\{m^{e\to u}_c\}_{c\in [\numcols]}$ that a constraint $e \in \bE_i$ passes to a variable $u$. Intuitively speaking, $m^{v\to e}_c$ is an estimate of the marginal probability that $v$ is assigned the color $c$ when the constraint $e$ is absent, and $m^{e\to u}_c$ is an estimate of the marginal probability that $u$ has color $c$ when all other constraints involving $u$ are absent. Since the distribution of $\bc(v)$ under $\mu$ depends on the constraints that contain $v$, we only focus on the messages $m^{v\to e}$, $m^{e\to u}$ such that $v\in\partial e$ (i.e. $e$ contains $v$) and $e\in\partial u$ (i.e. $e$ contains $u$).

\begin{equation}\label{eq:v-to-e-msg}
m^{v\to e}_c[t+1] = \frac{1}{Z^{v\to e}}\ColDist_{\btau(v)}(c)\prod_{f\in \partial v\setminus e} m^{f\to v}_{c}[t] \enspace,
\end{equation}
where $Z^{v\to e} = \sum_{c\in [\numcols]} \ColDist_{\btau(v)}(c)\prod_{f\in \partial v\setminus e} m^{f\to v}_{c}[t]$ .

For factors $e\in\bE_i$, the messages are defined as

\begin{equation}\label{eq:e-to-u-msg}
m^{e\to u}_c[t+1] = \frac{1}{Z^{e\to u}} \sum_{\bc_{e}\mid \bc_{e}(u) = c}\phi_i(\bc_e)\prod_{v\in e\setminus u} m^{v\to e}_{\bc_e(v)}[t]\enspace, 
\end{equation}
where $Z^{e\to u} = \sum_{c\in [\numcols]} \sum_{\bc_{e}\mid \bc_{e}(u) = c} \phi_i(\bc_e)\prod_{v\in e\setminus u} m^{v\to e}_{\bc_e(v)}[t]$.

To obtain an estimate of the marginal probability of the assignment to a variable $v$, apply the message update rules until reaching some fixed point $\{\hat{m}^{v\to e}_c, \hat{m}^{e\to u}_c\}_{c\in [\numcols]}$. The estimate is called the belief and is given by

\[b^{v}_c = \frac{1}{Z^{v}}\ColDist_{\btau(v)}(c)\prod_{f\in \partial v} \hat{m}^{f\to v}_{c}\enspace, \]

where $Z^{v} = \sum_{c\in [\numcols]} \ColDist_{\btau(v)}(c)\prod_{f\in \partial v} \hat{m}^{f\to v}_{c}$.

\section{Proof of \pref{lem:easy-case}}  \label{app:easy-cases}

Recall that:
\begin{align}	\label{eq:average-factor-nums-match}
	\ACFacDeg{i} = \sum_{(c_1,\dots,c_{\arity(i)})\in [q]^{\arity(i)}} \left(\prod_{k=1}^{\arity(i)}\ColDist_{\Cl(i)_k}(c_k)\right) \cdot \phi_i(c_1,\dots,c_{\arity(i)})
\end{align}
We first explain how to solve the distinguishing problem and then explain the recovery algorithm.
By definition of the BP update functions $\Upsilon_{v\to e}$ (equation \pref{eq:v-to-e-msg}) and $\Upsilon_{e\to v}$ (equation \pref{eq:e-to-u-msg}), the set of trivial messages $\minit$ being a BP fixed point is equivalent to:
\begin{align*}
	\ACFacDeg{i} = \sum_{\substack{(c_1,\dots,c_{\arity(i)})\in [q]^{\arity(i)}:\\ c_j=c}} \left(\prod_{k\ne j}\ColDist_{\Cl(i)_k}(c_k)\right)\cdot \phi_i(c_1,\dots,c_{\arity(i)}) & & \forall c\in [q]:\ColDist_{\Cl(i)_j}(c) > 0	\numberthis \label{eq:detailed-balance}
\end{align*}
If the set of trivial messages is not a fixed point of the belief propagation update rule: then there exist $c\in [q]$, $i\in[F]$, and $j\in[\arity(i)]$ with $\ColDist_{\Cl(i)_j}(c) > 0$ such that
\[
	\ACFacDeg{i} \ne \sum_{\substack{(c_1,\dots,c_{\arity(i)})\in [q]^{\arity(i)}:\\ c_j=c}} \left(\prod_{k\ne j}\ColDist_{\Cl(i)_k}(c_k)\right)\cdot \phi_i(c_1,\dots,c_{\arity(i)}).
\]
Let $\varfac{v}{i,j}$ be the number of type-$i$ factors with variable $v$ in the $j$-th position.  Via standard results for $\Poi(d)$ approximating $\mathrm{Binom}(n,d/n)$ we have the following:
\begin{itemize}
	\item In $\Model^{\times}$, for any variable $v$ of type $\Cl(i)_j$ and any constant $T$,\sidtemp{Find citation for Poisson approximating binomial and justify, maybe via an appendix lemma}
	\[
		\E{\varfac{v}{i,j}}^T = \E \bX^T \pm o_n(1)
	\]
	where $\bX\sim\Poi(\lambda)$ and $\lambda = \ACFacDeg{i}$.
	\item On the other hand, in the planted model $\Model$:
	\[
		\E{\varfac{v}{i,j}}^T = \E \bY^T \pm o_n(1)
	\]
	where $\bY$ is distributed as the mixture of Poisson distributions $p_1\Poi(\lambda_1)+\dots+p_{s}\Poi(\lambda_{s})$ where $s$ is the number of colors which vertex $v$ has nonzero probability of attaining, not all $\lambda_i$ are equal, and all $p_i > 0$.
\end{itemize}
By \pref{eq:average-factor-nums-match} $p_1\lambda_1+\dots+p_s\lambda_s = \lambda$.  We first recall the following well known fact about Poisson random variables. \sidtemp{Maybe having a preliminaries section and adding it there might be more appropriate?} \sidtemp{cite?} 
\begin{fact}	\label{fact:second-moment-poisson}
	If $\bA\sim\Poi(\mu)$, then $\E\bA^2 = \mu^2+\mu$.
\end{fact}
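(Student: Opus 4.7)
The plan is to compute $\E\bA^2$ by first computing the second factorial moment $\E[\bA(\bA-1)]$, since this quantity interacts cleanly with the Poisson probability mass function. Concretely, writing out the definition,
\[
\E[\bA(\bA-1)] = \sum_{k=0}^{\infty} k(k-1)\, \frac{\mu^k e^{-\mu}}{k!} = \mu^2 e^{-\mu}\sum_{k=2}^{\infty} \frac{\mu^{k-2}}{(k-2)!} = \mu^2,
\]
where the terms with $k=0$ and $k=1$ vanish and the remaining sum is the Taylor series for $e^{\mu}$. Combining this with the standard identity $\E\bA = \mu$ for a $\Poi(\mu)$ random variable, we obtain $\E\bA^2 = \E[\bA(\bA-1)] + \E\bA = \mu^2 + \mu$, as claimed.

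An equivalent approach would be to invoke the well-known fact that a Poisson random variable satisfies $\mathrm{Var}(\bA)=\mu$ and then read off $\E\bA^2 = \mathrm{Var}(\bA)+(\E\bA)^2 = \mu + \mu^2$. Either route is essentially a one-line computation and there is no real obstacle; the only care needed is in justifying the reindexing of the infinite sum, which is immediate from absolute convergence of the Poisson series. Since this fact is used only as a black box in the subsequent argument distinguishing the mixture $\bY$ from $\bX\sim\Poi(\lambda)$ via second moments, the short computation above suffices.
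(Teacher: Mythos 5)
Your computation is correct, and it is the standard proof. The paper itself does not supply a proof of this fact — it is stated as a recalled, well-known property of the Poisson distribution (the source even has internal notes querying whether to add a citation) — so there is no paper argument to compare against, and your short factorial-moment calculation is exactly the right level of justification.
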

As a consequence of \pref{fact:second-moment-poisson}: $\E\bX^2 = \lambda^2 + \lambda$, and $\E\bY^2 = p_1(\lambda_1^2+\lambda_1) + \dots + p_s(\lambda_s^2 + \lambda_s)$.
\[
	\E\bY^2 - \E\bX^2 = p_1f(\lambda_1) + \dots p_s f(\lambda_s) - f(\lambda) = p_1f(\lambda_1) + \dots p_s f(\lambda_s) - f(p_1\lambda_1+\dots+p_s\lambda_s)
\]
Since not all $\lambda_i$ are equal, all $p_i>0$ and $f$ is strictly convex, $\E\bY^2 - \E\bX^2$ is equal to a constant $\delta$ strictly greater than $0$.  Suppose $n_{i,j,2}(\bG)\coloneqq\E\sum_{v\in [n]}{\varfac{v}{i,j}}^2$, then $|\E_{\bG\sim\Null} n_{i,j,2}(\bG) - \E_{\bG\sim\Planted}n_{i,j,2}(\bG)|\ge \Omega(n)$.  Since $\E\bY^4$ and $\E\bX^4$ are constants, the variance of $n_{i,j,2}(\bG)$ is $O(n)$ for both $\bG\sim\Null$ and $\bG\sim\Planted$. This informs using the following polynomial time distinguisher:
\begin{displayquote}
	Compute $n_{i,j,2}(\bG)$ and if $|n_{i,j,2}(\bG)-\E_{\bG\sim\Null}[n_{i,j,2}(\bG)]|<|n_{i,j,2}(\bG)-\E_{\bG\sim\Planted}[n_{i,j,2}(\bG)]|$ output ``null''; otherwise output ``planted''.
\end{displayquote}


We now discuss performing recovery.  Recall the inner product $\langle \cdot, \cdot\rangle_{\innerp}$ from \pref{sec:spectral} which is defined as follows:
First, we define a $nq\times nq$-dimensional positive diagonal matrix $\innerp_{\btau}$ where the $(v,v)$ block is equal to:
\[
	\innerp_{\btau,(v,v)}[c,c] \coloneqq
	\begin{cases}
		\ColDist_{\btau(v)}(c) & \text{if $\ColDist_{\btau(v)}(c)>0$} \\
		1 & \text{otherwise.}
	\end{cases}
\]
The inner product on $\R^{nq}$ is then:
\[
	\langle x, y\rangle_{\innerp} \coloneqq x^{\top}\innerp_{\btau}^{-1}y.
\]
And let $\|\cdot\|$ denote the norm induced by the above inner product.  Let $\bc$ be the hidden coloring.  Our goal in recovery is to output a vector $v$ such that $\langle v, \bc - \E\bc|\btau\rangle_{\innerp}\ge \eps \cdot \|v\| \cdot \|\bc-\E\bc|\btau\|$.
Let $c$ and $c'$ be two colors such that:
\[
	d_c = \sum_{\substack{(c_1,\dots,c_{\arity(i)})\in [q]^{\arity(i)}:\\ c_j=c}} \left(\prod_{k\ne j}\ColDist_{\Cl(i)_k}(c_k)\right)\cdot \phi_i(c_1,\dots,c_{\arity(i)}) > \sum_{\substack{(c_1,\dots,c_{\arity(i)})\in [q]^{\arity(i)}:\\ c_j=c'}} \left(\prod_{k\ne j}\ColDist_{\Cl(i)_k}(c_k)\right)\cdot \phi_i(c_1,\dots,c_{\arity(i)}) = d_{c'}
\]
The distribution of the number of type-$i$ factors that a color $c$ vertex is part of is $\Poi(d_c)$ and similarly is $\Poi(d_{c'})$ for a color $c'$ vertex.  The following algorithm can then be shown to produce a vector $v$ meeting the aforementioned goal.
\begin{displayquote}
	For each vertex $u$ of type $\TypeDist(\Cl(i)_j)$, let $m_u$ be the number of type-$i$ factors it is part of in the $j$-th position.  If $m_u$ has a higher probability of being sampled from $\Poi(d_c)$ than $\Poi(d_{c'})$ then assign the $u$-th block of vector $v$ to be the indicator of color $c$.  Otherwise assign the $u$-th block of vector $v$ to be the indicator of color $c'$.
\end{displayquote}
Since $d_c\ne d_{c'}$ there is a constant $\eps > 0$ such that with high probability $\left(\frac{1}{2}+\eps\right)\ColDist_{\Cl(i)_j}(c)\TypeDist(\Cl(i)_j)n$ variables of color $c$ and type $\Cl(i)_j$ are assigned the correct color and also $\left(\frac{1}{2}+\eps\right)\ColDist_{\Cl(i)_j}(c)\TypeDist(\Cl(i)_j)n$ variables of color $c'$ and type $\Cl(i)_j$ are assigned the correct color.  Consequently:
\[
	\langle v, \bc - \E\bc|\btau\rangle_{\innerp}\ge \eps' \cdot \|v\| \cdot \|\bc-\E\bc|\btau\|
\]
for some $\eps'>0$.

\section{The partial derivative matrix}\label{app:calculus}

Recall the BP update function $\Gamma$ defined by equations \pref{eq:v-to-e-msg} and \pref{eq:e-to-u-msg}.

We observe that by definition
\[\bTrans_{\FacType(e_j),\index(v_j)\mid\index(v_{j+1})} = \frac{\partial \Gamma(m)^{v_j\to e_{j-1}}}{\partial m^{e_j\to v_j}}\vline_{\minit} \cdot \frac{\partial \Gamma(m)^{e_j\to v_j}}{\partial m^{v_{j+1}\to e_j}}\vline_{\minit} .\]
Thus we first compute the two derivative matrices. For any pairs of colors $c,d\in [\numcols]$,

\begin{align*}
\frac{\partial \Gamma(m)_c^{v_j\to e_{j-1}}}{\partial m_d^{e_j\to v_j}}
=& \frac{1}{Z^{v_j\to e_{j-1}}} \ColDist_{\btau(v_j)}(c)\prod_{a\in\partial v_j\setminus\{e_{j-1},e_{j}\}} m^{a\to v_j}_c \cdot\mathbf{1}_{d=c} - \frac{m^{v_j\to e_{j-1}}_c}{Z^{v_j\to e_{j-1}}}\cdot\ColDist_{\btau(v_j)}(d)\prod_{a\in\partial v_j\setminus\{e_{j-1},e_j\}} m^{a\to v_j}_d \\
=& \frac{1}{m^{e_j\to v_j}_d}\cdot\frac{1}{Z^{v_j\to e_{j-1}}} \ColDist_{\btau(v_j)}(c)\prod_{a\in\partial v_j\setminus e_{j-1}} m^{a\to v_j}_c \cdot\mathbf{1}_{d=c} - \frac{m^{v_j\to e_{j-1}}_c}{m^{e_j\to v_j}_d}\cdot\frac{\ColDist_{\btau(v_j)}(d)}{Z^{v_j\to e_{j-1}}}\prod_{a\in\partial v_j\setminus e_{j-1}} m^{a\to v_j}_d \\
=&  \frac{m^{v_j\to e_{j-1}}_c}{m^{e_j\to v_j}_d} \cdot\mathbf{1}_{d=c} - \frac{m^{v_j\to e_{j-1}}_c}{m^{e_j\to v_j}_d}\cdot m^{v_j\to e_{j-1}}_d \quad 
\end{align*}
The last equality is derived from the fixed point identity $m^{v\to e}_c= \frac{1}{Z^{v\to e}} \ColDist_{\btau(v)}(c) \prod_{a\in\partial v\setminus e} m^{a\to v}_c$.

Evaluating the derivative at the factorized fixed point gives the transformation matrix
\[
	\frac{\partial \Gamma(m)_c^{v_j\to e_{j-1}}}{\partial m_d^{e_j\to v_j}}\vline_{\minit} = \support(\ColDist_{\btau(v_j)})\cdot\left(\ColDist_{\btau(v_j)}(c)\cdot \mathbf{1}_{d=c} -  \ColDist_{\btau(v_j)}(c)\cdot\ColDist_{\btau(v_j)}(d) \right) \enspace,
\]
where $\support(\ColDist_{\btau(v_j)})$ denote the size of $\ColDist_{\btau(v_j)}$'s support. 
To write the matrix compactly we define $\Dcolor_{\tau} := \text{Diag}(\ColDist_{\tau})$, and derive from the above computation that $ \frac{\partial \Gamma(m)^{v_j\to e_{j-1}}}{\partial m^{e_j\to v_j}}\vline_{\minit} = \support(\ColDist_{\btau(v_j)})\cdot\left(\Dcolor_{\btau(v_j)} -  \ColDist_{\btau(v_j)}\ColDist_{\btau(v_j)}^T \right)$.

For any edge of the form $v_{j+1}\tot{e_j}v_{j}$ on this path where $\FacType(e_j) = \phi_i$  we have,

\begin{align*}
\frac{\partial \Gamma(m)_c^{e_j\to v_j}}{\partial m_d^{v_{j+1}\to e_j}} =& \frac{1}{Z^{e_j\to v_j}} \sum_{\bc_{e_j}\mid \bc_{e_j}(v_j,v_{j+1}) = (c,d)} \phi_i(\bc_{e_j}) \prod_{w\in e_j\setminus \{v_j,v_{j+1}\}} m^{w\to e_j}_{\bc_{e_j}(w)} \\
 & - \frac{m^{e_j\to v_j}_c}{Z^{e_j\to v_j}}\sum_{c'\in C} \sum_{\bc_{e_j}\mid \bc_{e_j}(v_j,v_{j+1}) = (c',d)} \phi_i(\bc_{e_j}) \prod_{w\in e_j\setminus \{v_j,v_{j+1}\}} m^{w\to e_j}_{\bc_{e_j}(w)} \\
=&  \frac{1}{Z^{e_j\to v_j}} \cdot \frac{1}{m^{v_{j+1}\to e_j}_d} \sum_{\bc_{e_j}\mid \bc_{e_j}(v_j,v_{j+1}) = (c,d)} \phi_i(\bc_{e_j}) \prod_{w\in e_j\setminus v_j} m^{w\to e_j}_{\bc_{e_j}(w)} \\
 & - \frac{m^{e_j\to v_j}_c}{Z^{e_j\to v_j}} \cdot \frac{1}{m^{v_{j+1}\to e_j}_d}\sum_{c'\in C}  \sum_{\bc_{e_j}\mid \bc_{e_j}(v_j,v_{j+1}) = (c',d)}  \phi_i(\bc_{e_j}) \prod_{w\in e_j\setminus v_j} m^{w\to e_j}_{\bc_{e_j}(w)} 
\end{align*}

Recall we defined a distribution $\ldist_{i}$ over $\bc_{e_j}$ and stochastic matrices $\Facm_{\FacType(e_j),\index(v_{j+1})\mid\index(v_j)}$ before.


Evaluating the derivative at the factorized fixed point gives the transformation matrix
\begin{align*}
\frac{\partial \Gamma(m)^{e_j\to v_j}}{\partial m^{v_{j+1}\to e_j}}\vline_{\minit}
=& \frac{1}{\ColDist_{\btau(v_{j+1})}(d)\cdot\support(\ColDist_{\btau(v_j)})}\left(\Pr_{\bc_{e_j}\sim D_{\theta}}[\bc_{e_j}(v_{j+1}) = d \mid \bc_{e_j}(v_j) = c]\right. \\ 
 &- \left.\frac{1}{\support(\ColDist_{\btau(v_j)})} \sum_{c'\in C}\Pr_{\bc_{e_j}\sim D_{\theta}}[\bc_{e_j}(v_{j+1}) = d\mid \bc_{e_j}(v_j) = c']\right) \enspace,
\end{align*}
Then $\frac{\partial m^{e_j\to v_j}}{\partial m^{v_{j+1}\to e_j}}\vline_{(m)^{\fix}} = \frac{1}{\support(\ColDist_{\btau(v_j)})}\left(\mathbf{I} - \frac{1}{\support(\ColDist_{\btau(v_j)})} \mathbf{1} \mathbf{1}^{\top}\right)\Facm_{\FacType(e_j),\index(v_{j+1})\mid \index(v_j)}^T \Dcolor_{\btau(v_{j+1})}^{\dagger}$ where the $\dagger$ in the superscript denotes the pseudoinverse of the matrix.

Then we can write the transformation matrix for the step $v_{j+1}\tot{e_j}e_{j-1}$ as
\begin{align*}
\bTrans_{\FacType(e_j),\index(v_j)\mid\index(v_{j+1})}
&= \left(\Dcolor_{\btau(v_j)} -  \ColDist_{\btau(v_j)}\ColDist_{\btau(v_j)}^{\top} \right) \left(\mathbf{I} - \frac{1}{\support(\ColDist_{\btau(v_j)})} \mathbf{1} \mathbf{1}^{\top}\right)\Facm_{\FacType(e_j),\index(v_{j+1})\mid \index(v_j)}^{\top} \Dcolor_{\btau(v_{j+1})}^{\dagger} \\
&= \Dcolor_{\btau(v_j)}\left(\mathbf{I} - \mathbf{1}\ColDist_{\btau(v_j)}^{\top} \right) \left(\mathbf{I} - \frac{1}{\support(\ColDist_{\btau(v_j)})} \mathbf{1} \mathbf{1}^{\top}\right)\Facm_{\FacType(e_j),\index(v_{j+1})\mid \index(v_j)}^{\top} \Dcolor_{\btau(v_{j+1})}^{\dagger}\\
&= \Dcolor_{\btau(v_j)}\left(\mathbf{I} - \mathbf{1}\ColDist_{\btau(v_j)}^{\top} \right)\Facm_{\FacType(e_j),\index(v_{j+1})\mid \index(v_j)}^{\top} \Dcolor_{\btau(v_{j+1})}^{\dagger} \\
&= \left(\mathbf{I} - \ColDist_{\btau(v_j)}\mathbf{1}^{\top} \right)\Dcolor_{\btau(v_j)}\Facm_{\FacType(e_j),\index(v_{j+1})\mid \index(v_j)}^{\top} \Dcolor_{\btau(v_{j+1})}^{\dagger} \\
&= \left(\mathbf{I} - \ColDist_{\btau(v_j)}\mathbf{1}^{\top} \right)\Facm_{\FacType(e_j) ,\index(v_j)\mid \index(v_{j+1})} 
\enspace.
\end{align*}
This establishes the first part of \pref{clm:transitional-matrix}.  To establish the second part, consider the following chain of equalities where the first equality is one we know from the above chain.
\begin{align*}
	\bTrans_{\FacType(e_j),\index(v_j)\mid\index(v_{j+1})} &= \left(\mathbf{I} - \ColDist_{\btau(v_j)}\mathbf{1}^{\top} \right)\Dcolor_{\btau(v_j)}\Facm_{\FacType(e_j),\index(v_{j+1})\mid \index(v_j)}^T \Dcolor_{\btau(v_{j+1})}^{\dagger} \\
	&= \Dcolor_{\btau(v_j)} \left(\mathbf{I} - \mathbf{1}\ColDist_{\btau(v_j)}^{\top} \right)\Facm_{\FacType(e_j),\index(v_{j+1})\mid \index(v_j)}^{\top} \Dcolor_{\btau(v_{j+1})}^{\dagger} \\
	&= \Dcolor_{\btau(v_j)} \left(\Facm_{\FacType(e_j),\index(v_{j+1})\mid \index(v_j)} \left(\mathbf{I} - \ColDist_{\btau(v_j)}\mathbf{1}^{\top} \right) \right)^{\top} \Dcolor_{\btau(v_{j+1})}^{\dagger} \\
	&= \Dcolor_{\btau(v_j)} \left(\Facm_{\FacType(e_j),\index(v_{j+1})\mid \index(v_j)} - \ColDist_{\btau(v_{j+1})}\mathbf{1}^{\top} \right)^{\top} \Dcolor_{\btau(v_{j+1})}^{\dagger} \\
	&= \Dcolor_{\btau(v_j)} \left(\left(\mathbf{I}-C_{\btau(v_{j+1})}\mathbf{1}^{\top}\right)\Facm_{\FacType(e_j),\index(v_{j+1})\mid \index(v_j)} \right)^{\top} \Dcolor_{\btau(v_{j+1})}^{\dagger} \\
	&= \Dcolor_{\btau(v_j)} \BPM{\FacType(e_j)}{\index(v_{j+1})}{\index(v_j)}^{\top} \Dcolor_{\btau(v_{j+1})}^{\dagger}
\end{align*}
If $\ColDist_{\btau(v_j)}(c) = 0$, then the $c$-th column of $\BPM{\FacType(e_j)}{\index(v_{j+1})}{\index(v_j)}$ and the $c$-th row of $\BPM{\FacType(e_j)}{\index(v_{j})}{\index(v_{j+1})}$ are $0$, and hence the second part of \pref{clm:transitional-matrix} follows as well.
\section{Random graph lemmas}\label{app:random-graph-lemmas}

\subsection{Proof of \pref{lem:is-r-bike-free}}
Let $\GModPar$ be a null model, and let $\bH\sim \GModPar$.  We use $\arity_{\max}$ to denote $\max_{i\in[F]} \arity(i)$ and ${\ACFacDeg{}}_{\max}$ to denote $\max_{i\in[F]}\ACFacDeg{i}$.

Define the notion of $(\tau,\Cl)$-consistent:
\begin{definition}
	Given $\tau$ and $\Cl$, we say a subgraph $\Gamma$ of a bipartite factor graph with right vertex set $R(\Gamma)$ is \emph{$(\tau,\Cl)$-consistent} if every $\gamma = (v_1,\dots,v_{\arity(i)})\in R(\Gamma)$ satisfies $(\tau(v_1),\dots,\tau(v_{\arity(i)})) = \Cl(\FacType(\gamma))$.
\end{definition}


It is easy to see the following.
\begin{observation}	\label{obs:prob-of-subgraph}
	Suppose $\Gamma$ is a subgraph of $\Bip(\CompG_{n})$.  Then the probability that $\Gamma$ is a subgraph of $\bH$ is equal to $\prod_{\gamma\in R(\Gamma)}\frac{\ACFacDeg{\FacType(\gamma)}}{n^{\arity(\FacType(\gamma))-1}}\Ind[\Gamma\text{ is $(\tau,\Cl)$-consistent}]$.
\end{observation}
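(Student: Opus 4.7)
\medskip

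\noindent\textbf{Proof plan for \pref{obs:prob-of-subgraph}.}
The plan is to reduce the statement to the basic independence structure of the null model $\GModPar$. Recall from \pref{def:null-model} that in $\GModPar$, each potential factor $\gamma = (i,(v_1,\dots,v_{\arity(i)}))$ of $\CompG_{n}$ is included in $\bH$ independently of all other factors, with probability
\[
\Pr[\gamma\in\bH] = \frac{\ACFacDeg{i}}{n^{\arity(i)-1}}
\]
whenever the type tuple $(\btau(v_1),\dots,\btau(v_{\arity(i)}))$ equals $\Cl(i)$, and with probability $0$ otherwise (since $\phi_i$ was defined, after the reduction in \pref{sec:notations}, to vanish off of the prescribed class tuple).

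The event $\{\Gamma \subseteq \bH\}$ is exactly $\bigcap_{\gamma\in R(\Gamma)}\{\gamma\in\bH\}$: it makes no demands on the left vertex set beyond what is already recorded in the incidences to $R(\Gamma)$, and no demands on the other potential right vertices. First I would split into two cases. If $\Gamma$ is not $(\tau,\Cl)$-consistent, then some $\gamma\in R(\Gamma)$ has type tuple of its left neighbors not matching $\Cl(\FacType(\gamma))$, so $\Pr[\gamma\in\bH]=0$ and the intersection probability is $0$, matching the right-hand side of the claimed identity. Otherwise, every $\gamma\in R(\Gamma)$ passes the type check, and by independence across distinct potential factors in $\GModPar$,
\[
\Pr[\Gamma\subseteq\bH] \;=\; \prod_{\gamma\in R(\Gamma)} \Pr[\gamma\in\bH] \;=\; \prod_{\gamma\in R(\Gamma)} \frac{\ACFacDeg{\FacType(\gamma)}}{n^{\arity(\FacType(\gamma))-1}},
\]
which, together with the indicator factor, is exactly the claimed expression.

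No step here is a real obstacle; the only subtle point is ensuring that we correctly interpret what ``$\Gamma$ is a subgraph of $\bH$'' means, and in particular that we are not inadvertently requiring the complement events (other potential factors absent). Since we only ask for containment, we need only the intersection of inclusion events, and independence of the factor indicators in $\GModPar$ does the rest.
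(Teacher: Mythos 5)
Your proof is correct and is essentially the standard unpacking the authors have in mind — the paper states this observation with no proof, merely asserting ``it is easy to see the following.'' Your argument correctly identifies the two ingredients: (i) the factor indicators $\Ind[\gamma\in\bH]$ are independent across distinct potential right vertices in $\Null$, and (ii) each indicator has inclusion probability $\ACFacDeg{\FacType(\gamma)}/n^{\arity(\FacType(\gamma))-1}$ precisely when the types of the incident left vertices match $\Cl(\FacType(\gamma))$, and probability zero otherwise (because $\phi_i$ vanishes off its class tuple after the normalization in \pref{sec:notations}). The case split on $(\tau,\Cl)$-consistency and the remark that ``$\Gamma$ a subgraph of $\bH$'' is an intersection of inclusion events only — not requiring the absence of any other factor — are exactly the right points to make explicit.
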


\begin{definition}[Partially labeled graph]
	A \emph{partially labeled graph} $\Gamma = (L, R, P, p, E)$ is given by a left vertex set $L$, a right vertex set $R$, a distinguished set of left vertices $P$ along with an injective labeling of the distinguished vertices $p:P\to[n]$, and edge set $E$.
\end{definition}

\begin{definition}[Occurrence of partially labeled graph]
	An \emph{occurrence} of a partially labeled graph $\Gamma = (L, R, P, p, E)$ in $\Bip(\bH)$ is a pair of injective functions $f_L:L\to L(\bH)$ and $f_R:R\to R(\bH)$ such that for all $v\in P$ satisfies $f_L(v) = p(v)$, and if $\{u,v\} \in E$, then $\{f_L(u), f_R(v)\}\in E(\Bip(\bH))$.
\end{definition}

Given a partially labeled subgraph $\Gamma$ we are interested in bounding the expected number of occurrences of $\Gamma$ in $\Bip(\bH)$.
\begin{lemma}	\label{lem:exp-occurrences}
	Given partially labeled graph $\Gamma = (L, R, P, p)$ with no isolated right vertices, the expected number of occurrences of~$\Gamma$ in $\Bip(\bH)$ is at most
	\[
		n^{|L|+|R|-|P|-|E|} (F\arity_{\max}{\ACFacDeg{}}_{\max})^{|E|}.
	\]
\end{lemma}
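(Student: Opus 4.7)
The plan is to expand the expectation via linearity over all admissible pairs $(f_L, f_R)$ and then decouple the sum by relaxing injectivity and appealing to the independence of factor inclusions in $\GModPar$. Concretely, I would write
\[
	\mathbb{E}[\#\text{occurrences of }\Gamma] = \sum_{f_L} \sum_{f_R} \Pr\left[f_R(\gamma) \in \bH \text{ for all } \gamma \in R\right],
\]
where $f_L$ ranges over injective maps $L \to [n]$ extending $p$, and $f_R$ ranges over injective maps from $R$ to right vertices of $\CompG_n$ compatible with the edge set $E$ (meaning for each $\{u,\gamma\}\in E$, the factor $f_R(\gamma)$ contains $f_L(u)$ in its variable tuple). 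Since in the null model every factor in $\CompG_n$ is included independently, the inner probability factors as $\prod_{\gamma\in R}\Pr[f_R(\gamma)\in\bH]$, and by \pref{obs:prob-of-subgraph} each factor's inclusion probability is at most $\ACFacDeg{}_{\max}/n^{k-1}$ where $k$ is its arity.

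The first factor $n^{|L|-|P|}$ comes from choosing $f_L$: there are at most $n^{|L|-|P|}$ injections $L\to[n]$ extending the fixed labeling $p$ on $P$. Next I would drop the injectivity requirement on $f_R$ (which only weakens the inequality) and bound the sum over $f_R$ by a product over $\gamma \in R$. For each fixed $f_L$ and each $\gamma$, the allowed factors $f_R(\gamma)$ decompose as follows: pick a type $i\in[F]$ ($F$ choices), an injection placing the $\deg_\Gamma(\gamma)$ already-labeled neighbors $f_L(N_L(\gamma))$ into the $k_i$ positions of the tuple (at most $\arity_{\max}^{\deg_\Gamma(\gamma)}$ choices), and then fill the remaining $k_i - \deg_\Gamma(\gamma)$ positions with arbitrary variables from $[n]$ (at most $n^{k_i-\deg_\Gamma(\gamma)}$ choices). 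Multiplying by the inclusion probability $\ACFacDeg{}_{\max}/n^{k_i-1}$ yields a per-$\gamma$ bound of $F \cdot \arity_{\max}^{\deg_\Gamma(\gamma)} \cdot \ACFacDeg{}_{\max} \cdot n^{1-\deg_\Gamma(\gamma)}$.

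Taking the product over $\gamma$ and using $\sum_{\gamma\in R}\deg_\Gamma(\gamma)=|E|$ gives a factor of $F^{|R|}\arity_{\max}^{|E|}\ACFacDeg{}_{\max}^{|R|} n^{|R|-|E|}$. Multiplying by the $n^{|L|-|P|}$ from $f_L$ produces $n^{|L|+|R|-|P|-|E|} F^{|R|}\arity_{\max}^{|E|}\ACFacDeg{}_{\max}^{|R|}$. The final step is to convert the exponents $|R|$ on $F$ and $\ACFacDeg{}_{\max}$ into $|E|$: since $\Gamma$ has no isolated right vertices, every $\gamma$ contributes at least one edge, so $|R|\le |E|$, and (assuming without loss of generality that $F\ACFacDeg{}_{\max}\ge 1$, which we may enforce by enlarging the constants) this yields $F^{|R|}\ACFacDeg{}_{\max}^{|R|}\le (F\ACFacDeg{}_{\max})^{|E|}$, producing the advertised bound.

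The proof is essentially bookkeeping; the only mild subtlety is the relaxation of injectivity on $f_R$, which is harmless for an upper bound, and tracking how the $n$-powers cancel between the number of embeddings ($n^{k_i-\deg_\Gamma(\gamma)}$ per $\gamma$) and the inclusion probability ($n^{-(k_i-1)}$ per $\gamma$). No significant obstacle is expected.
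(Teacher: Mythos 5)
Your proof is correct and follows essentially the same approach as the paper's: fix $f_L$ ($n^{|L|-|P|}$ choices), then bound the sum over $f_R$ by enumerating, per right vertex, its type ($F$ choices), the placement of its already-labeled neighbors ($\le\arity_{\max}^{\deg}$), and the remaining free variables ($n^{k_i-\deg}$), multiplied by the inclusion probability $\ACFacDeg{}_{\max}/n^{k_i-1}$, and finally use $|R|\le|E|$. The only presentational difference is that the paper fixes a global type map $t_{f_R}$ up front (giving the $F^{|R|}$ factor) while you sum types per $\gamma$, which is the same count; your explicit flag of the implicit assumption $F\ACFacDeg{}_{\max}\ge 1$ needed for the final $|R|\le|E|$ step is a small improvement in rigor over the paper, which leaves it tacit.
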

\begin{proof}
	There are at most $n^{|L|-|P|}$ choices for $f_L$.  For each potential choice of $f_R$, we can associate $t_{f_R}:R\to[F]$ such that $t_{f_R}(r)$ is the type of $f_R(r)$.  There are at most $F^{|R|}$ possible values for $t_{f_R}$.  For each fixed choice of $f_L$ and $t$, we wish to bound the expected number of $f_R$ such that $(f_L,f_R)$ is an occurrence and $t_{f_R}=t$.  The number of such potential $f_R$ is bounded by
	\[
		\prod_{i\in[R]}\arity(t(i))^{\deg_{\Gamma}(i)} n^{\arity(t(i))-\deg_{\Gamma}(i)}
	\]
	and the probability that $(f_L, f_R)$ is a valid occurrence for a given such $f_R$ is at most
	\[
		\prod_{i\in[R]}\frac{\ACFacDeg{i}}{n^{\arity(t(i))-1}},
	\]
	which gives us a bound of
	\[
		\prod_{i\in[R]} \left(\frac{\arity(t(i))}{n}\right)^{\deg_{\Gamma}(i)}\cdot n\cdot\ACFacDeg{i} \le \left(\frac{\arity_{\max}}{n}\right)^{|E|}n^{|R|}{\ACFacDeg{}}_{\max}^{|R|}.
	\]
	Combining this with the bound on total number of $f_L$ and $t_{f_R}$ gives us a bound of:
	\[
		n^{|L|-|P|} F^{|R|}\left(\frac{\arity_{\max}}{n}\right)^{|E|}n^{|R|}{\ACFacDeg{}}_{\max}^{|R|} = n^{|L|+|R|-|P|-|E|}F^{|R|}\arity_{\max}^{|E|}{\ACFacDeg{}}_{\max}^{|R|}.
	\]
	Since there are no isolated vertices, $|E|\ge|R|$ and hence the above is at most $n^{|L|+|R|-|P|-|E|}(F\arity_{\max}{\ACFacDeg{}}_{\max})^{|E|}$.
\end{proof}

\begin{definition}
	For a graph $\Gamma$ and a subset of its vertices $S$ we use $B_{\Gamma}(S,r)$ to denote the radius-$r$ ball around set $S$ within $\Gamma$.  We also abuse notation and use $B_{\bH}(S,r)$ to mean $B_{\Bip(\bH)}(S,r)$.
\end{definition}

\begin{lemma}	\label{lem:unlikely-subgraphs}
	Given a set of vertices $S$ in $\Bip(\bH)$, the probability that $|E(B_{\bH}(S,r))|-|V(B_{\bH}(S,r))|+|S|\ge t$ is at most $\left(\frac{(F\arity_{\max}{\ACFacDeg{}}_{\max})^{2(r+1)}(36t^3r^2)^5|S|}{n}\right)^t$.
\end{lemma}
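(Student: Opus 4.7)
My plan is to carry out a standard first-moment / witness-subgraph argument. First, I will show that whenever the event $|E(B_{\bH}(S,r))|-|V(B_{\bH}(S,r))|+|S|\ge t$ holds, one can extract a ``small'' witness subgraph $\Gamma\subseteq B_{\bH}(S,r)$ with $S\subseteq V(\Gamma)$, $|E(\Gamma)|-|V(\Gamma)|+|S|=t$, and $|E(\Gamma)|\le 2(r+1)t$. Concretely, I contract $S$ to a single super-vertex $*$ inside $B_{\bH}(S,r)$; the quotient is connected because every vertex of the ball is reachable from $S$ by a path of length $\le r$, and under the hypothesis its cycle rank is at least $t$. Fixing $t$ linearly independent cycles through $*$ and taking the minimal connected subgraph that contains them and is rooted at $*$, then lifting back to the original graph, produces $\Gamma$: each of the $t$ chosen cycles corresponds to a closed walk of length at most $2(r+1)$ through $S$ (a path of length $\le r$ from $S$, a cycle-closing edge, and a path of length $\le r$ back to $S$), so the edge count is at most $2(r+1)t$.

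Second, for each potential shape of $\Gamma$ (a partially labeled graph with $|S|$ root labels in $[n]$) I will bound the expected number of occurrences via \pref{lem:exp-occurrences}. Since $|E(\Gamma)|-|V(\Gamma)|+|S|=t$ and $|E(\Gamma)|\le 2(r+1)t$, the expected occurrence count is at most
\[
	n^{|V(\Gamma)|-|S|-|E(\Gamma)|}(F\arity_{\max}{\ACFacDeg{}}_{\max})^{|E(\Gamma)|} \;\le\; n^{-t}(F\arity_{\max}{\ACFacDeg{}}_{\max})^{2(r+1)t}.
\]

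Third, I will bound the number of distinct witness shapes by iteratively encoding the $t$ tendrils in a canonical order. Each tendril is specified by a starting vertex in $S$ (at most $|S|$ choices), two path lengths in $\{0,\dots,r\}$, the sequence of right-vertex types together with which bipartite neighbors are used at each step (each right vertex has arity at most $\arity_{\max}$, contributing a factor polynomial in $r$ per step), and a pointer to which already-built vertex (if any) a step merges into (at most $O(tr)$ choices). A short computation bundles these counts into at most $(36t^3r^2)^{5t}$ shapes per choice of the $t$ root labels, i.e.\ at most $|S|^{t}(36t^3r^2)^{5t}$ shapes in total. Combining with the expected-occurrence bound above, the union bound / Markov's inequality yields exactly the claimed
\[
	\Pr\!\left[|E(B_{\bH}(S,r))|-|V(B_{\bH}(S,r))|+|S|\ge t\right] \;\le\; \left(\frac{(F\arity_{\max}{\ACFacDeg{}}_{\max})^{2(r+1)}(36t^3r^2)^5|S|}{n}\right)^{t}.
\]

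The main obstacle I anticipate is the book-keeping in the shape count: distinct canonical orderings of the same $\Gamma$ can easily lead to over-counting, and one must fix the canonical order (e.g.\ by lexicographic depth-first processing of tendrils starting from a canonical ordering of $S$) so that each shape is enumerated exactly once, while still being loose enough to absorb into the clean factor $(36t^3r^2)^5$ given in the statement. The first and second steps, by contrast, are direct consequences of \pref{lem:exp-occurrences} and the construction of the contraction/quotient witness.
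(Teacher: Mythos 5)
Your witness-extraction and first-moment steps match the paper's: both produce a witness subgraph with at most $2(r+1)t$ edges and $|E|-|V|+|S|=t$, and both feed it into \pref{lem:exp-occurrences} to get $n^{-t}(F\arity_{\max}{\ACFacDeg{}}_{\max})^{2(r+1)t}$ per shape. Your quotient-and-fundamental-cycles construction is a rephrasing of the paper's ``candidate witness'' and ``trim'': the paper attaches a dummy root $w$ to all of $P$ and takes the minimal subtree of the augmented forest containing the endpoints of the $t$ excess edges, which is exactly your minimal-connected-subgraph-through-$*$ operation after contracting $S$. Both give a tree with at most $2t+1$ leaves plus $t$ extra edges.

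The genuine gap is in step 3, and it is more than the canonical-ordering concern you flag. Your encoding assigns a per-step degree of freedom (``which bipartite neighbors are used at each step,'' ``a pointer to which already-built vertex $\ldots$ a step merges into''), and compounded over the up-to-$2r+1$ steps of each of the $t$ tendrils this yields a shape count of order $(\mathrm{poly}(t,r))^{\Theta(rt)}$, far exceeding the required $(\mathrm{poly}(t,r))^{O(t)}\cdot|S|^t$; a canonical order eliminates double-counting of a fixed shape but does not reduce the number of degrees of freedom. The paper avoids this by observing that deleting the $t$ excess edges (and the unused vertices of $S$) from the witness leaves a tree on $O(tr)$ vertices with at most $2t+1$ leaves, then invoking \pref{lem:count-trees-few-leaves} to bound the number of such trees by $(4\cdot(2t+1)\cdot(2tr+1))^{4t+3}=(tr)^{O(t)}$. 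That few-leaves tree count is the idea your sketch is missing; the tendril-level version of it is that each new tendril departs from, and re-enters, the previously built subtree at a \emph{constant} number of branch points (a tree-path property), so it contributes only $O(1)$ pointers and $O(1)$ lengths in $\{0,\dots,r\}$, not one choice per step. Finally, the right-vertex types and the slot indices of left neighbors within a factor should not appear in the shape count at all: they are already enumerated inside \pref{lem:exp-occurrences} via the type map $t_{f_R}$ and the $\arity(t(i))^{\deg_{\Gamma}(i)}$ factor, so including ``the sequence of right-vertex types'' in the shape enumeration is a second, independent over-count.
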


In preparation to prove \pref{lem:unlikely-subgraphs} we will need the following statement about counts of trees with a bounded number of leaves.  The statement along with a proof can be found in \cite[Lemma 6.33]{BMR19}.
\begin{lemma}	\label{lem:count-trees-few-leaves}
	The number of nonisomorphic trees on $v$ vertices and $L$ leaves is bounded by $(4Lv)^{2L+1}$.
\end{lemma}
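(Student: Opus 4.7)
The plan is to prove the bound by encoding each (isomorphism class of a) tree with $v$ vertices and $L$ leaves via a compact pair: its \emph{topological reduction} $T^{*}$ together with an edge-weight vector. Given a tree $T$, let $T^{*}$ be the tree obtained by suppressing every degree-$2$ vertex (i.e.\ replacing each maximal path whose internal vertices have degree $2$ by a single edge). Then $T^{*}$ has $L$ leaves and every internal vertex has degree $\ge 3$. Attach to each edge $e$ of $T^{*}$ the integer $w(e) \ge 1$ equal to the number of edges of $T$ in the path that $e$ represents. The pair $(T^{*}, w)$ clearly determines $T$ up to isomorphism (it is an injection into pairs).

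The first step is to bound the number of possible $T^{*}$. If $T^{*}$ has $s$ internal vertices, then $3s + L \le \sum_{v \in V(T^{*})} \deg(v) = 2(s+L-1)$, so $s \le L-2$ and $|V(T^{*})| \le 2L-2$. The number of non-isomorphic trees on $n$ vertices is at most the number of labeled trees on $n$ vertices, which by Cayley's formula is $n^{n-2}$. Summing over $n \le 2L-2$, the count of possible $T^{*}$ is at most $(2L-2)^{2L-4} \le (2L)^{2L-2}$.

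The second step is to count weight vectors. Since $T^{*}$ is a tree with $\le 2L-2$ vertices it has $e \le 2L-3$ edges, and the weights satisfy $\sum_{f \in E(T^{*})} w(f) = v-1$ with each $w(f) \ge 1$. The number of such positive-integer compositions is $\binom{v-2}{e-1} \le v^{2L-3}$.

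Combining, the number of non-isomorphic trees on $v$ vertices with $L$ leaves is at most
\[
    (2L)^{2L-2} \cdot v^{2L-3} \;\le\; (2Lv)^{2L-2} \;\le\; (4Lv)^{2L+1},
\]
which is the claimed bound. The argument is entirely elementary — the only mild subtlety is handling degenerate cases (very small $v$ or $L$, e.g.\ a single edge with $L = 2$, $v = 2$), which are easily checked directly against $(4Lv)^{2L+1}$; no step here is an obstacle.
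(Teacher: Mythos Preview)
Your proof is correct. The paper does not actually prove this lemma in-text; it simply quotes the statement and cites \cite[Lemma 6.33]{BMR19} for a proof. Your argument via topological reduction $T^{*}$ together with an edge-weight vector is a clean, self-contained route that removes the need for an external reference, and in fact yields the slightly sharper bound $(2Lv)^{2L-2}$ along the way.

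One cosmetic slip worth tightening: when you write ``Summing over $n \le 2L-2$, the count of possible $T^{*}$ is at most $(2L-2)^{2L-4}$'', the sum $\sum_{n=2}^{2L-2} n^{n-2}$ is not bounded by its largest term alone (e.g.\ for $L=3$ it is $1+3+16=20 > 16$). What is true is that the sum is at most $(2L-2)\cdot(2L-2)^{2L-4} = (2L-2)^{2L-3}$, so your next inequality $\le (2L)^{2L-2}$ is unaffected. Everything downstream is fine.
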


\begin{proof}[Proof of \pref{lem:unlikely-subgraphs}]
	Let us call a partially labeled subgraph $\Gamma = (L, R, P, p, E)$ a \emph{candidate witness} if
	\begin{itemize}
		\item $p(P) = S$,
		\item $\Gamma$ can be expressed as $F\cup B$ where $F$ is a forest and $B=\{\{u_1,v_1\},\dots,\{u_t,v_t\}\}$ is a set of $t$ additional edges,
		\item $F$ has $|P|$ connected components where each connected component contains exactly one $v\in P$ and has depth $r$ when rooted at $v$.
	\end{itemize}
	If $|E(B_{\bH}(S,r))|-|V(B_{\bH}(S,r))|+|S|\ge t$, then there must be an occurrence of some candidate witness $\Gamma = (L, R, P, p, E)$ within $\bH$.  We will first find a ``simple'' subgraph of $\Gamma=F\cup B$, which we call the \emph{trim} of $\Gamma$.  First let us augment $F$ to $\wt{F}$ by adding a single vertex $w$ and connecting it to all vertices in $P$ -- note that $\wt{F}$ is a tree.  Now let $\mathrm{Trim}(\wt{F})$ be the tree obtained by only choosing vertices that lie on paths from vertices in $L\coloneqq\{u_1,\dots,u_t,v_1,\dots,v_t\}$ to $w$.  Since the depth of $\mathrm{Trim}(\wt{F})$ is $r+1$ and has at most $2t$ leaves when rooted at $w$, the number of vertices in $\mathrm{Trim}(\wt{F})$ is at most $2tr+1$.  Let $\mathrm{Trim}(\Gamma) = (L', R', P, p, E')$ be the graph obtained by deleting $w$ from $\mathrm{Trim}(\wt{F})$, adding edges $\{u_1,v_1\},\dots,\{u_t,v_t\}$, and adding vertices in $P\setminus V(\mathrm{Trim}(\wt{F}))$.  Since $\mathrm{Trim}(\Gamma)$ is a subgraph of $\Gamma$ there must be an occurrence of $\mathrm{Trim}(\Gamma)$ in $\bH$.

	$\mathrm{Trim}(\Gamma)$ has at most $2tr$ vertices and $|E(\mathrm{Trim}(\Gamma))|-|L(\mathrm{Trim}(\Gamma))|-|R(\mathrm{Trim}(\Gamma))|+|P| \geq t$.  Thus, from \pref{lem:exp-occurrences} the probability that $\mathrm{Trim}(\Gamma)$ occurs in $\bH$ is bounded by $\left(\frac{(F\arity_{\max}{\ACFacDeg{}}_{\max})^{2(r+1)}}{n}\right)^{t}$.  Thus:
	\begin{align*}
		\Pr[|E(B_{\bH}(S,r))|-|V(B_{\bH}(S,r))|+|S|\ge t] &\le \Pr[\text{there is a candidate witness }\Gamma\text{ in $\bH$}]\\
		&\le \Pr[\text{there is a trim of a candidate witness }\mathrm{Trim}(\Gamma)\text{ in $\bH$}]\\
		&\le \sum_{\substack{\Gamma'\text{ trim of a candidate witness}}} \Pr[\Gamma'\text{ in $\bH$}]\\
		&\le \sum_{\substack{\Gamma'\text{ trim of a candidate witness}}} \left(\frac{(F\arity_{\max}{\ACFacDeg{}}_{\max})^{2(r+1)}}{n}\right)^{t}	\numberthis \label{eq:trim-sum}.
	\end{align*}

	Next we bound the number of terms in the above summation.  Since each $\Gamma'$ in the above sum can be specified by taking a tree on at most $2tr+1$ vertices and at most $2t+1$ leaves, deleting one vertex, and labeling each neighbor of this deleted vertex with an element of $P$, from \pref{lem:count-trees-few-leaves} and the fact the maximum degree in a tree is bounded by the number of leaves the number of terms is at most:
	\[
		(4(2t+1)(2tr+1))^{4t+3}\cdot(2tr+1)\cdot |P|^{2t+1} \le ((36t^3r^2)^5|P|)^t.
	\]
	Plugging this into \pref{eq:trim-sum} and using $|S|=|P|$ gives:
	\[
		\Pr[|E(B_{\bH}(S,r))|-|V(B_{\bH}(S,r))|+|S|\ge t] \le \left(\frac{(F\arity_{\max}{\ACFacDeg{}}_{\max})^{2(r+1)}(36t^3r^2)^5|S|}{n}\right)^t
	\]
\end{proof}

\begin{corollary}	\label{cor:bike-free}
	With probability $1-o_n(1)$, $\Bip(\bH)$ is $r$-bicycle free for $r = \frac{\log n}{\log \log n}$.
\end{corollary}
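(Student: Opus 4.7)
\medskip

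\noindent\textbf{Proof proposal for \pref{cor:bike-free}.}
The plan is to reduce being $r$-bicycle free to a statement about excess of local balls, control that excess using \pref{lem:unlikely-subgraphs} as a black box, and close with a union bound over vertices.

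First, a vertex $v$ has its radius-$r$ ball $B_{\bH}(\{v\},r)$ containing at least two cycles if and only if the first Betti number of the (connected) induced subgraph $B_{\bH}(\{v\},r)$ is at least $2$, i.e.\ $|E(B_{\bH}(\{v\},r))| - |V(B_{\bH}(\{v\},r))| + 1 \ge 2$. Equivalently, $|E(B_{\bH}(\{v\},r))| - |V(B_{\bH}(\{v\},r))| + |S| \ge 2$ for the singleton set $S=\{v\}$. So \pref{lem:unlikely-subgraphs} applied with $|S|=1$ and $t=2$ gives, for each vertex $v$,
\[
	\Pr\bigl[B_{\bH}(\{v\},r)\text{ contains two cycles}\bigr] \;\le\; \left(\frac{(F\arity_{\max}{\ACFacDeg{}}_{\max})^{2(r+1)}(36\cdot 2^3 r^2)^5}{n}\right)^{\!2}.
\]

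Next, plug in $r=\frac{\log n}{\log\log n}$. Since $F,\arity_{\max},{\ACFacDeg{}}_{\max}$ are constants, there is a constant $K$ with $(F\arity_{\max}{\ACFacDeg{}}_{\max})^{2(r+1)} \le K^{r} = n^{O(1/\log\log n)} = n^{o(1)}$, and $(288\, r^2)^5$ is polylogarithmic in $n$, hence also $n^{o(1)}$. Therefore the per-vertex bad probability is at most $(n^{-1+o(1)})^2 = n^{-2+o(1)}$.

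Finally, we union bound over all vertices of $\Bip(\bH)$. The left vertex set has size $n$, and the right vertex set has size at most $\sum_i |\bE_i|$, which is $O(n)$ with probability $1-o_n(1)$ (each $|\bE_i|$ is a sum of at most $n^{\arity(i)}$ independent indicators each of probability $O(n^{1-\arity(i)})$, so $\E|\bE_i|=O(n)$ and concentration is standard). Conditioning on this $O(n)$ bound on the total vertex count, the union bound gives the probability that \emph{some} vertex has two cycles in its radius-$r$ ball is at most $O(n)\cdot n^{-2+o(1)} = n^{-1+o(1)} = o_n(1)$, as claimed. The only mild subtlety is handling right (factor) vertices, but they can be folded into the same bound either by applying the union bound directly to them (the analogue of \pref{lem:unlikely-subgraphs} works verbatim with $S=\{\gamma\}$), or by noting that any factor vertex has a left neighbor, so bicycle-freeness of all radius-$(r+1)$ balls around left vertices implies bicycle-freeness of all radius-$r$ balls around right vertices, and $r+1$ is still $\frac{\log n}{\log\log n}(1+o(1))$ so the calculation above goes through unchanged. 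I do not anticipate any genuinely hard step here; the real work was done in \pref{lem:unlikely-subgraphs}.
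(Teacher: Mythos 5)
Your proof is correct and takes essentially the same approach as the paper: reduce bicycle-freeness to the excess of local balls, apply \pref{lem:unlikely-subgraphs} with $|S|=1$, $t=2$ to get a per-vertex bad probability of $n^{-2+o(1)}$, and union bound over the $n$ left vertices using radius $r+1$ so that right vertices are covered automatically. One small caveat: your ``Option A'' of union bounding directly over right vertices with $S=\{\gamma\}$ is not quite justified verbatim, since in \pref{lem:unlikely-subgraphs} the distinguished labeled set $P$ of a candidate witness is required by \pref{lem:exp-occurrences} to consist of \emph{left} vertices; but your ``Option B'' is exactly the paper's argument and closes the gap cleanly.
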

\begin{proof}
	This is a simple consequence of \pref{lem:unlikely-subgraphs}.  Indeed, by \pref{lem:unlikely-subgraphs} the probability that the radius-$(r+1)$ neighborhood of a single vertex $v\in[n]$ contains more than one cycle is at most $\frac{1}{n^{2-o_n(1)}}$, and hence by a union bound over all vertices the probability of any left vertex containing more than one cycle in its radius-$r+1$ neighborhood is bounded by $\frac{1}{n^{1-o_n(1)}}$.  Since every right vertex is incident to a left vertex, the statement we wish to prove follows.
\end{proof}

\subsection{Proof of \pref{lem:singleton-small-exp}}
We will need the following combinatorial lemma that appears in \cite[Lemma A.2]{FM17}.
\begin{lemma}	\label{lem:bic-frame-to-excess}
	If $e$ distinct edges of a graph $\Gamma$ belong to a $r$-bicycle frame, then $\Exc(\Gamma)\ge\frac{e}{r}$.
\end{lemma}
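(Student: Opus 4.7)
The plan is to prove \pref{lem:bic-frame-to-excess} by a greedy covering argument, combined with a cycle-space counting of excess. Interpret the hypothesis as: there is a set $S \subseteq E(\Gamma)$ with $|S|=e$ such that every edge of $S$ lies in some $r$-bicycle frame subgraph of $\Gamma$. The goal is then $\Exc(\Gamma) \ge |S|/r$.

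First I would build an increasing sequence of subgraphs $H_0 \subseteq H_1 \subseteq \cdots \subseteq H_J \subseteq \Gamma$ greedily as follows: $H_0 = (\emptyset,\emptyset)$, and while there exists $s \in S \setminus E(H_{j-1})$, pick such an edge $s_j$, pick any $r$-bicycle frame $F_j \subseteq \Gamma$ with $s_j \in E(F_j)$ (such $F_j$ exists by hypothesis), and set $H_j = H_{j-1}\cup F_j$. Terminate when $S\subseteq E(H_J)$. Since each $F_j$ has at most $r$ edges, $|E(H_J)| \le Jr$, and since $S\subseteq E(H_J)$, this forces $J \ge e/r$.

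The crux is then the claim that $\Exc(H_j) \ge \Exc(H_{j-1})+1$ for each $j$. Together with $\Exc(H_0)=0$ this gives $\Exc(H_J) \ge J \ge e/r$, and then $\Exc(\Gamma) \ge \Exc(H_J)$ by \pref{lem:exc-cant-grow} finishes the argument. To prove this claim, I would interpret the excess $\Exc(G)=|E(G)|-|V(G)|+c(G)$ as the dimension of the cycle space $\mathcal{Z}(G)\subseteq \mathbb{F}_2^{E(G)}$. Because $H_{j-1}\subseteq H_j$, the cycle space $\mathcal{Z}(H_{j-1})$ embeds naturally into $\mathcal{Z}(H_j)$ (extending by zero on the new edges). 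So it suffices to exhibit an element of $\mathcal{Z}(H_j)$ that is not in the image of $\mathcal{Z}(H_{j-1})$.

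For this I use the only nontrivial structural fact about frames: an $r$-bicycle frame $F_j$ is bridgeless. Indeed, removing a bridge from a graph keeps the excess unchanged (an edge decrease is offset by a component increase), so if $F_j$ had a bridge, deleting it would yield a proper subgraph with excess $\ge 2$ and at most $r$ edges, contradicting minimality of the frame. Hence the new edge $s_j \in E(F_j)\setminus E(H_{j-1})$ lies on some cycle $C$ of $F_j$; this $C$ is an element of $\mathcal{Z}(H_j)$ whose support includes $s_j \notin E(H_{j-1})$, and therefore $C$ cannot be in the image of $\mathcal{Z}(H_{j-1})$. So $\dim \mathcal{Z}(H_j) > \dim \mathcal{Z}(H_{j-1})$, i.e. $\Exc(H_j)\ge \Exc(H_{j-1})+1$.

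I do not expect any real obstacle: the bridgeless observation for a frame is the only structural input, and the rest is the standard cycle-space bookkeeping of excess. The only mild subtlety is to be careful that $\mathcal{Z}$ is monotone under subgraph inclusion (so that the comparison $C \notin \mathcal{Z}(H_{j-1})$ makes sense inside $\mathcal{Z}(H_j)$), which is immediate once one fixes the ambient edge set as $E(H_j)$ and extends cycles of $H_{j-1}$ by zero on $E(H_j)\setminus E(H_{j-1})$.
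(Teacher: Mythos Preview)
Your proof is correct. The paper does not actually prove this lemma; it simply cites it as \cite[Lemma A.2]{FM17}, so there is no in-paper argument to compare against. Your greedy covering argument combined with the cycle-space interpretation of excess is a clean self-contained proof: the bridgelessness of an $r$-bicycle frame follows exactly as you say from minimality (deleting a bridge preserves excess while strictly shrinking the edge set), and the cycle through the newly covered edge $s_j$ witnesses the strict increase in $\dim \mathcal{Z}(H_j)$ at each step. The count $e \le |E(H_J)| \le Jr$ then gives $J \ge e/r$, and monotonicity of excess (\pref{lem:exc-cant-grow}) finishes it.
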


Our proof of the statement below follows the same strategy as the proof of a similar statement appearing in \cite{FM17}.
\begin{lemma}	\label{lem:singleton-small-exp}
	Suppose $S$ and $L$ are disjoint sets of right vertices of $\CompG_{n}$ of size at most $\log^2 n$, $\Ind_{\gamma}$ is the indicator random variable for whether $\gamma$ is in $\bH$, $\mu_{\gamma}$ is the probability that $\gamma$ is in $\bH$, and $\nobikes$ denotes the event that $\bH$ is $r$-bicycle free for $r = \frac{\log n}{\log\log n}$.  Then:
	\[
		\left|\E\left[\prod_{\gamma\in S}(\Ind_{\gamma}-\Mean_{\gamma})\prod_{\gamma\in L}\Ind_{\gamma} \Ind[\nobikes]\right]\right| \le \prod_{\gamma\in S\cup L}\Mean_{\gamma}\cdot 2^{|S|} \left(\frac{1}{n^{.5}}\right)^{\frac{|S|}{r}-\Exc(\Clos(S\cup L))}.
	\]
\end{lemma}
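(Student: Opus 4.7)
My plan follows the moment-method strategy used in proving analogous singleton-cancellation lemmas in \cite{FM17} and \cite{BMR19}. I will expand the centered product using inclusion-exclusion, use the edge independence in $\Null$ to produce a binomial sum whose leading term cancels, and then bound the remainder using the conditional probability of a bicycle emerging, which is in turn controlled by the bicycle-frame lemma \pref{lem:bic-frame-to-excess}.

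First, I expand $\prod_{\gamma \in S}(\Ind_\gamma - \Mean_\gamma) = \sum_{T \subseteq S}(-1)^{|S \setminus T|}\prod_{\gamma \in S \setminus T}\Mean_\gamma \prod_{\gamma \in T}\Ind_\gamma$. Substituting, taking expectations, and using the disjointness of $S$ and $L$ together with the factorization $\prod_{\gamma \in S \setminus T}\Mean_\gamma \cdot \Pr[T \cup L \subseteq \bH] = \prod_{\gamma \in S \cup L}\Mean_\gamma$ (valid because edges are independent under $\Null$), we obtain
\[
\E\!\left[\prod_{\gamma \in S}(\Ind_\gamma - \Mean_\gamma)\prod_{\gamma \in L}\Ind_\gamma\,\Ind[\nobikes]\right] = \prod_{\gamma \in S \cup L}\Mean_\gamma \cdot \sum_{T \subseteq S}(-1)^{|S \setminus T|}\Pr\bigl[\nobikes \,\bigl|\, T \cup L \subseteq \bH\bigr].
\]
The binomial identity $\sum_{T \subseteq S}(-1)^{|S \setminus T|} = 0$ for $S \ne \emptyset$ lets the leading-order contribution cancel, so the absolute value of the LHS is bounded by $2^{|S|}\prod_{\gamma \in S \cup L}\Mean_\gamma \cdot \max_T \Pr[\bH \text{ has an } r\text{-bicycle}\mid T \cup L \subseteq \bH]$. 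The edge case $S = \emptyset$ is handled separately by the trivial bound $\E[\prod_L \Ind_\gamma \,\Ind[\nobikes]] \le \prod_L \Mean_\gamma$, which matches the RHS since then $|S|/r - \Exc(\Clos(L)) \le 0$.

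Second, I bound $\Pr[\bH \text{ has an } r\text{-bicycle} \mid T \cup L \subseteq \bH]$ by a union bound $\sum_B \Pr[B \subseteq \bH \mid T \cup L \subseteq \bH] = \sum_B \prod_{\gamma \in B \setminus (T \cup L)}\Mean_\gamma$ over $r$-bicycle frames $B$ in $\CompG_{n}$. The key observation is that for every such frame $B$, the $|B|$ edges of $B$ lie within the graph $\Clos(S \cup L \cup B)$, so by \pref{lem:bic-frame-to-excess}, $\Exc(\Clos(S \cup L \cup B)) \ge |B|/r$. Writing the excess in two pieces, $\Exc(\Clos(S \cup L \cup B)) = \Exc(\Clos(S \cup L)) + \Delta_B$ where $\Delta_B \ge 0$ is the additional excess contributed by $B$, this rearranges to $|B \setminus (S \cup L)| - (\text{new vertices introduced by } B) \ge |S|/r - \Exc(\Clos(S \cup L))$ (up to lower-order terms). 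Since each new edge contributes $\Mean_\gamma \le O(1/n)$ while each new vertex contributes a counting factor of $n$, the net factor per unit of required excess is $1/\sqrt{n}$ after optimizing, yielding the target bound $n^{-(|S|/r - \Exc(\Clos(S \cup L)))/2}$ up to subpolynomial corrections which are absorbed using $|S|, |L| \le \log^2 n$ and $r = \log n/\log\log n$.

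The main obstacle will be the bookkeeping required to count bicycle frames by their overlap pattern with $\Clos(S \cup L)$ and their induced excess contribution $\Delta_B$. I anticipate adapting the trimming-and-tree-counting argument from the proof of \pref{lem:unlikely-subgraphs} (which in turn relies on \pref{lem:count-trees-few-leaves}): starting from a potential frame $B$, augment $\Clos(S \cup L)$ by $B$'s edges, trim to a minimal witness subgraph of low leaf count, and bound the number of nonisomorphic trimmed witnesses. The adaptation requires treating edges of $B$ lying inside $\Clos(S \cup L)$ as ``free'' while edges outside must each be paid for by a fresh $1/n$ factor, and the bicycle-frame lemma ensures that the total number of outside edges minus outside vertices is at least $|S|/r - \Exc(\Clos(S \cup L))$, which is precisely the exponent appearing in the stated bound.
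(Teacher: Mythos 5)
Your inclusion-exclusion expansion matches the paper's first step. But the bound you extract from it is too lossy, and this gap is fatal. After canceling the leading order, you claim
\[
\left|\sum_{T\subseteq S}(-1)^{|S\setminus T|}\Pr[\nobikes\mid T\cup L\subseteq\bH]\right| \le 2^{|S|}\max_{T}\Pr[\bH\text{ has an }r\text{-bicycle}\mid T\cup L\subseteq\bH],
\]
which is a triangle-inequality bound that treats the $2^{|S|}$ terms as unrelated. Taking $T=\emptyset$, the quantity $\Pr[\bH\text{ has an }r\text{-bicycle}\mid L\subseteq\bH]$ only scales like $n^{-1+o(1)}$ (this is exactly the content of \pref{cor:bike-free}), and it does \emph{not} improve as $|S|$ grows. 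When $|S|/r$ is a large constant or grows --- which is allowed, since $|S|$ can be as large as $\log^2 n$ while $r=\log n/\log\log n$, so $|S|/r$ can be as large as $\Theta(\log n\cdot\log\log n)$ --- the target bound $n^{-(|S|/r - \Exc(\Clos(S\cup L)))/2}$ is much smaller than $n^{-1}$, and your bound fails.

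Your second paragraph tries to recover this by asserting that the bicycle-frame lemma gives ``$|B\setminus(S\cup L)| - (\text{new vertices})\ge |S|/r - \Exc(\Clos(S\cup L))$,'' but \pref{lem:bic-frame-to-excess} only yields the bound with $|B|/r$ on the right. You would need $|B|\ge|S|$ to conclude, and this is precisely what is not true in general: a small bicycle frame disjoint from $S\cup L$ (a handful of edges forming a theta-graph anywhere in $\bH$) is a valid witness for the bicycle event, and there are many such frames. These dominate the union bound and cap your exponent at roughly $1$, independent of $|S|$.

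The paper's proof gets the $|S|$-dependence from a more refined use of the cancellation. It conditions on $\bg_0 = \bH\setminus(S\cup L)$, views the signed sum as a discrete alternating sum of the Boolean function $f_{\bg_0}(K) = \Ind[\Clos(\bg_0\cup K\cup L)\text{ is }r\text{-bicycle-free}]$ over $K\subseteq S$, and observes that this sum vanishes unless $f_{\bg_0}$ depends on \emph{every} $s\in S$. (If some $s$ is non-pivotal, the pair $J$, $J\cup\{s\}$ cancels exactly.) Pivotality of $s$ means some edge incident to $s$ is contained in an $r$-bicycle frame of $\Clos(\bg_0\cup S\cup L)$; with $|S|$ pivotal elements you get at least $|S|$ such edges, and \emph{then} \pref{lem:bic-frame-to-excess} forces $\Exc\ge |S|/r$ for the local neighborhood, yielding the correct exponent via \pref{lem:unlikely-subgraphs}. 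This step --- passing from ``a bicycle exists'' to ``every $s\in S$ is pivotal for the bicycle indicator'' --- is the missing idea in your proposal, and without it the stated bound cannot be reached.
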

\begin{proof}
	\begin{align*}
		\left|\E\left[\prod_{\gamma\in S}(\Ind_{\gamma}-\Mean_{\gamma})\prod_{\gamma\in L}\Ind_{\gamma} \Ind[\nobikes]\right]\right| &= \left|\sum_{J\subseteq S} \E\left[ \prod_{\gamma\in J}\Ind_{\gamma}\cdot\prod_{\gamma\in S\setminus J} (-\Mean_{\gamma})\cdot\prod_{\gamma\in L}\Ind_{\gamma}\Ind[\nobikes]\right]\right|\\
		&= \left|\sum_{J\subseteq S} (-1)^{|S|-|J|}\prod_{\gamma\in S\setminus J}\Mean_{\gamma}\E\left[\prod_{\gamma\in J\cup L} \Ind_{\gamma} \Ind[\nobikes]\right]\right|\\
		&= \left|\sum_{J\subseteq S} (-1)^{|J|}\prod_{\gamma\in S\cup L}\Mean_{\gamma}\Pr\left[\nobikes|\gamma\in\bH~\forall\gamma\in J\cup L\right]\right| \\
		&= \prod_{\gamma\in S\cup L}\Mean_{\gamma}\left|\sum_{J\subseteq S} (-1)^{|J|}\Pr\left[\nobikes|\gamma\in\bH~\forall\gamma\in J\cup L\right]\right|	\numberthis \label{eq:exp-signed-sum}
	\end{align*}
	Now we focus our attention on understanding the quantity $\left|\sum_{J\subseteq S} (-1)^{|J|}\Pr\left[\nobikes|\gamma\in\bH~\forall\gamma\in J\cup L\right]\right|$.  Let $\bg_0$ be $\bH\setminus (S\cup L)$.
	\begin{align}
		\left|\sum_{J\subseteq S} (-1)^{|J|} \Pr\left[\nobikes|\gamma\in\bH~\forall\gamma\in J\cup L\right]\right|
		&= \left| \E_{\bg_0} \sum_{J\subseteq S} (-1)^{|J|}\Pr[\nobikes|\gamma\in\bH~\forall\gamma\in J\cup L, \bg_0] \right|	\label{eq:introducing-g0}
	\end{align}
	For $K\subseteq S$, define $f_{\bg_0}(K)$ as $1$ if $\Clos(\bg_0\cup K\cup L)$ has no $r$-bicycles and $0$ otherwise.  Suppose there is $s\in S$ that $f_{\bg_0}$ does not depend on -- that is, for any $K\subseteq S$, $f_{\bg_0}(K) = f_{\bg_0}(K\Delta \{s\})$, then for every $J$ which contains $s$:
	\[
		\Pr[\nobikes|\gamma\in\bH~\forall\gamma\in J\cup L, \bg_0] = \Pr[\nobikes|\gamma\in\bH~\forall\gamma\in J\cup L\setminus\{s\}, \bg_0].
	\]
	This means \pref{eq:introducing-g0} is equal to:
	\begin{align*}
		\pref{eq:introducing-g0} &= \left|\E_{\bg_0} \Ind[f_{\bg_0}\text{ depends on every $s\in S$}] \sum_{J\subseteq S} (-1)^{|J|}\Pr\left[\nobikes|\gamma\in\bH~\forall\gamma\in J\cup L, \bg_0\right]\right| \\
		&\le 2^{|S|}\cdot\Pr_{\bg_0}[f_{\bg_0}\text{ depends on every $s\in S$}].	\numberthis \label{eq:prob-bound-dep}
	\end{align*}
	Let $E_S$ be the set of all edges incident to $S$.  If $f_{\bg_0}$ depends on every $s\in S$, then the function $h_{\bg_0}$ defined on subsets of $E_S$ which is $1$ on input $K\subseteq E_S$ if $\Clos(\bg_0\cup L)\cup K$ has no $r$-bicycles depends on at least $|S|$ edges in $E_S$.  That means:
	\begin{align*}
		\pref{eq:prob-bound-dep} &\le 2^{|S|}\cdot\Pr_{\bg_0}[h_{\bg_0}\text{ depends on at least $|S|$ edges in $E_S$}]\\
		&\le 2^{|S|}\cdot\Pr_{\bg_0}[\text{At least $|S|$ edges in $E_S$ part of $r$-bicycle frame in $\Clos(\bg_0\cup S\cup L)$}]
		\intertext{which, via \pref{lem:bic-frame-to-excess}, can be bounded by}
		&\le 2^{|S|}\cdot\Pr_{\bg_0}\left[\Exc(B_{\Clos(\bg_0\cup S\cup L)}(\Clos(S), r)) \ge \frac{|S|}{r} \right] \\
		&\le 2^{|S|}\cdot\Pr_{\bg_0}\left[\Exc(B_{\Clos(\bg_0)}(\Clos(S\cup L), r)) \ge \frac{|S|}{r}-\Exc(\Clos(S\cup L))\right] \\
		&\le 2^{|S|}\cdot\Pr_{\bH}\left[\Exc(B_{\bH}(\Clos(S\cup L), r)) \ge \frac{|S|}{r} - \Exc(\Clos(S\cup L)) \right]
		\intertext{By \pref{lem:unlikely-subgraphs}, the bounds on size of $|S|$ and $|L|$, and the value of $r$, we can conclude that the above is at most:}
		&\le 2^{|S|} \min\left\{\left(\frac{1}{n^{.5}}\right)^{\frac{|S|}{r}-\Exc(\Clos(S\cup L))}, 1\right\}\\
		&\le 2^{|S|} \left(\frac{1}{n^{.5}}\right)^{\frac{|S|}{r}-\Exc(\Clos(S\cup L))}.
	\end{align*}
	Plugging this back into \pref{eq:exp-signed-sum} gives us the desired statement.
\end{proof}

\subsection{Proof of \pref{clm:shape-count}}
\begin{proof}
Consider any $\Sh\in\mathcal{U}_{s,x,\Profl}$. Recall that $S(\Sh)$ is the set of singleton vertices in $R(\Sh)$ and $\Dup(\Sh)$ the set of duplicative vertices in $R(\Sh)$. $\Dup^*(\Sh)$ is the maximum weight subset of $\Dup(W)$ that makes $\Sh$ $r$-bicycle free, and $\Profl(\Sh) = w(\Dup(\Sh)) - w(\Dup^*(\Sh))$. Thus we deduce that $\Profl(\Sh) \geq \abs{\Dup(\Sh)\setminus\Dup^*(\Sh)} $ We apply the following procedures to the walk $\Sh$.
\begin{enumerate}
\item Break the walk $\Sh$ into $\leq s + \Profl + \abs{\Dup(\Sh)\setminus\Dup^*(\Sh)} + k$ segments by first removing the vertices $S(\Sh)$ and $\Dup(\Sh)\setminus\Dup^*(\Sh)$ from $\Sh$ and second breaking the remaining segments at endpoints of the $k$ links in $\Sh$. Denote the new union of walks $\Sh_1$.
\item Since $S(\Sh)$ and $\Dup(\Sh)\setminus\Dup^*(\Sh)$ are removed from $\Sh$, $\Sh_1$ is singleton free, and the graph on $\Sh_1$, denoted by $\bG(\Sh_1)$, is $r$-bicycle free. 
\item We contract the graph $\bG(\Sh_1)$ by merging all adjacent edges that share a degree-2 vertex. We denote the resulting graph $\bG(\Sh_1)_c$, and we note that the vertices left in $\bG(\Sh_1)_c$ are those with degree $\geq 3$ in $\bG(\Sh_1)$. 
\end{enumerate}

We make the following observations on the size of $\bG(\Sh_1)$ and $\bG(\Sh_1)_c$. 

The number of vertices in $\bG(\Sh_1)$ is $\abs{V(\Sh)} - \abs{S(\Sh)} - \abs{\Dup(\Sh)\setminus\Dup^*(\Sh)}$. The number of edges in $\bG(\Sh_1)$ is $\leq \abs{E(\Sh)} - 2\abs{S(\Sh)} - 2\abs{\Dup(\Sh)\setminus\Dup^*(\Sh)}$.

To bound the number of vertices in $\bG(\Sh_1)_c$ is we apply the following lemma from \cite{MOP20a}. 

\begin{lemma}[{Lemma 6.18 in \cite{MOP20a}}]
Let $C$ be a $(k,2\ell)$-nonbacktracking, internally $2\ell$-bicycle-free linkage. Assume $\log{k\ell} = o(\ell)$. Then $\bG(C)$ has at most $O(k\log{k\ell})$ vertices of degree exceeding $2$.
\end{lemma}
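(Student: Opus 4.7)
My plan is to prove the lemma in three parts: reduce to bounding the excess and the number of leaves of $\bG(C)$, handle the leaves by a direct structural argument, and then use bicycle-freeness to control the excess. Let $D_3$ denote the set of vertices of $\bG(C)$ with degree at least $3$.

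\textbf{Step 1: Reduce to leaves plus excess.} In any graph, iteratively removing degree-$1$ vertices and then contracting every maximal degree-$2$ path to a single edge produces a multigraph $\bG(C)^*$ of minimum degree $\ge 3$, whose vertex set corresponds bijectively to the interior degree-$\ge 3$ vertices of $\bG(C)$. Since $\bG(C)^*$ has $3|V(\bG(C)^*)| \le 2|E(\bG(C)^*)|$, one gets $|V(\bG(C)^*)| \le 2\,\mathrm{excess}(\bG(C)^*) = 2\,\mathrm{excess}(\bG(C))$ (excess is invariant under leaf removal and degree-$2$ contraction). Combining this with the tree-counting Fact~6.35 already used in the paper to handle degree-$\ge 3$ vertices in the tree residue, one obtains
\[
|D_3| \;\le\; O\bigl(L(\bG(C))\bigr) + O\bigl(\mathrm{excess}(\bG(C))\bigr),
\]
where $L(\bG(C))$ is the number of leaves of $\bG(C)$.

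\textbf{Step 2: Count the leaves.} A leaf $v$ of $\bG(C)$ satisfies $\deg_{\bG(C)}(v)=1$, so only one edge of the linkage touches $v$. Because each nonbacktracking link enters and leaves every interior vertex along two distinct incident edges, $v$ cannot be interior to any link and must therefore be an endpoint of one of the $k$ links, giving $L(\bG(C)) \le k+1$.

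\textbf{Step 3: Bound the excess (the crux).} Construct a spanning forest $F$ of $\bG(C)$ by scanning the linkage edge-by-edge and adding each new edge to $F$ unless doing so would create a cycle; the excess is the number of ``skipped'' edges. Each skipped edge $e=\{u,v\}$ is associated with a unique cycle $C_e$ in $F\cup\{e\}$, whose diameter is at most the length of the portion of the linkage since the last time the current link approached $v$. By the internal $2\ell$-bicycle-free hypothesis, no ball of radius $2\ell$ contains two cycles, so the cycles $\{C_e\}_e$ cannot concentrate: two skipped edges whose associated cycles lie in a common $2\ell$-ball would contradict bicycle-freeness. This already yields excess $O(k)$ in the naive accounting; the slightly weaker $O(k\log(k\ell))$ bound in the statement should follow from a dyadic charging scheme that assigns each skipped edge to a pair (link index, dyadic scale $2^j \le k\ell$) representing the scale at which the cycle was closed, so that within each link at most $O(\log(k\ell))$ scales can contribute a cycle without forcing two cycles into a common $2\ell$-ball. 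Summing over links gives $\mathrm{excess}(\bG(C)) \le O(k\log(k\ell))$.

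\textbf{Main obstacle.} Steps 1 and 2 are routine. The entire difficulty lies in Step 3: turning the local bicycle-free constraint into a global excess bound. The hypothesis $\log(k\ell) = o(\ell)$ enters essentially here — it ensures that the dyadic scales up to $k\ell$ all fit comfortably inside a single $2\ell$-radius ball, so that the ``one cycle per ball'' constraint really does collapse the dyadic tree of possible skip events down to a single allowed skip per dyadic level per link. Making this charging rigorous, and ruling out conspiracies in which a single link's excess edges are paired with excess from earlier links to bypass the bicycle-free ceiling, is where the genuine work is.
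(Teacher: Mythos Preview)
The paper does not prove this lemma at all; it is quoted verbatim from \cite{MOP20a} (their Lemma~6.18) and used as a black box inside the proof of the shape-counting claim in the appendix. There is therefore no in-paper argument to compare your proposal against.

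Assessing your sketch on its own: Steps~1 and~2 are correct and standard. Step~3 is, as you yourself concede, the entire content of the lemma, and what you have is not a proof. The specific gap is the diameter claim for $C_e$: you assert that the fundamental cycle closed by a skipped edge $e=\{u,v\}$ has diameter bounded by the walk-distance since $v$ was last approached, but the tree path in $F$ between $u$ and $v$ need not coincide with any contiguous walk segment once earlier edges have themselves been skipped, so $C_e$ can be long. Without short cycles the ``one cycle per $2\ell$-ball'' step has no force, and the dyadic charging collapses. The argument in the \cite{MOP20a} line instead tracks directly how a length-$2\ell$ nonbacktracking segment can move inside a $2\ell$-bicycle-free graph (within any such ball the graph is unicyclic, so the segment has essentially no branching), and the $\log(k\ell)$ factor arises from that combinatorics rather than from dyadic scales on cycle lengths. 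Your Steps~1--2 reduction is the right skeleton, but Step~3 needs to be replaced by the actual structural argument.
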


Applying the lemma to the walk $\Sh_1$, we obtain that the number of degree $\geq 3$ vertice in $\Sh_1$ is $O(k\log{k\ell})$. Thus the number of vertices in $\bG(\Sh_1)_c$ is $O(k\log{k\ell})$. The number of edges in $\bG(\Sh_1)_c$ is 

\begin{align*}
&~~~\abs{E(\bG(\Sh_1))} - \left(\abs{V(\bG(\Sh_1))} - \abs{V(\bG(\Sh_1)_c)}\right) \\
&\le \abs{E(\Sh)} - 2\abs{S(\Sh)} - 2\abs{\Dup(\Sh)\setminus\Dup^*(\Sh)} - \abs{V(\Sh)} + \abs{S(\Sh)} + \abs{\Dup(\Sh)\setminus\Dup^*(\Sh)} + O(k\log{k\ell}) \\
&= (\abs{E(\Sh)} - \abs{V(\Sh)})  - \abs{S(\Sh)} - \abs{\Dup(\Sh)\setminus\Dup^*(\Sh)} + O(k\log{k\ell}) \\
&\le  x + O(k\log{k\ell}).
\end{align*}

Now to count the number of distinct $\Sh\in\mathcal{U}_{s,x,\Profl}$, it suffices to count 1. the number of distinct sets of breaking points ($S(\Sh),\Dup(\Sh)\setminus\Dup^*(\Sh),$ and $k$ link endpoints), 2. the number of distinct graphs $\bG(\Sh_1)$, 3. given the breaking points and $\bG(\Sh_1)$, the number of distinct walk segments in $\bG(\Sh_1)$ with those breaking points. We count each of the three quantities separately and multiply them together to obtain an upper bound on $\abs{\mathcal{U}_{s,x,\Profl}}$.

The number of distinct sets of breaking points: these breaking points breaks $\Sh$ into at most $s + 2\Profl + k$ segments. So there are $(2k\ell)^{s + 2\Profl + k}$ ways to choose these breaking points. 

The number of distinct graphs $\bG(\Sh_1)$: $\bG(\Sh_1)$ can be contracted to a graph $\bG(\Sh_1)_c$ on $O(k\log{k\ell})$ vertices and $ x + O(k\log{k\ell})$ edges. Each edge in $\bG(\Sh_1)_c$ represents a length-$\leq 2k\ell$ simple path in $\bG(\Sh_1)$. Thus there are $O(k\log{k\ell})^{2(x + O(k\log{k\ell}))}\cdot (2k\ell)^{x + O(k\log{k\ell})}$ distinct graphs $\bG(\Sh_1)$.

Given the breaking points and $\bG(\Sh_1)$, the number of distinct walk segments in $\bG(\Sh_1)$ with those breaking points: since $\bG(\Sh_1)$ is $r$-bicycle free with $r \geq 2\ell$ and each segment is of length $\leq 2\ell$, there are only $2$ distinct length $\leq 2\ell$ walk between any two vertices in $\bG(\Sh_1)$. Thus the number of distinct walk segments are $2^{s + 2\Profl + k}$.

Combine the three bound together we obtain that 
\[\abs{\mathcal{U}_{s,x,\Profl}} \leq (2k\ell)^{s + 2\Profl + x + O(k\log{k\ell})} O(k\log{k\ell})^{2(x + O(k\log{k\ell}))}  2^{s + 2\Profl + k} .\]

From this bound we quickly derive that 
\begin{align*}
 &~\sum_{\Sh\in\mathcal{U}_{s,x,\Profl}}
\left(\frac{\beta^r}{n^{.5}}\right)^{\frac{|S(\Sh)|}{r}}
	\left(\frac{\beta}{n^{.5}}\right)^{\Exc_{\Sh}}
	\left(\frac{\beta}{n}\right)^{\Profl(\Sh)}  \\
&=\abs{\mathcal{U}_{s,x,\Profl}} \cdot \left(\frac{\beta^r}{n^{.5}}\right)^{\frac{s}{r}}
	\left(\frac{\beta}{n^{.5}}\right)^{x} 
	\left(\frac{\beta}{n}\right)^{\Profl} \\
&= \left(\frac{\beta^r\cdot(4k\ell)^r}{n^{.5}}\right)^{\frac{s}{r}}
	\left(\frac{\beta\cdot 2k\ell \cdot O(k\log{k\ell})^2}{n^{.5}}\right)^{x} 
	\left(\frac{4\beta\cdot(2k\ell)^2}{n}\right)^{\Profl} 2^k(2k\ell \cdot O(k\log{k\ell})^2)^{O(k\log{k\ell})} \\
&\leq \left(\frac{\beta^r\cdot(4k\ell)^r}{n^{.5}}\right)^{\frac{s}{r}}
	\left(\frac{\beta\cdot 2(k\ell)^3}{n^{.5}}\right)^{x} 
	\left(\frac{4\beta\cdot(2k\ell)^2}{n}\right)^{\Profl}(2k\ell)^{O(k\log{k\ell})} 
\end{align*}
The last inequality follows since we pick $\ell$ such that $\log{k\ell} = o(\ell)$.
\end{proof}

\end{document}